\documentclass[11pt]{article}

\usepackage[T1]{fontenc}
\usepackage[utf8]{inputenc}
\usepackage{lmodern}
\usepackage{microtype}
\usepackage{amsmath,amssymb,amsthm}
\usepackage{mathtools}
\usepackage{graphicx}
\usepackage[round,authoryear]{natbib}
\usepackage{geometry}
\usepackage{xcolor}
\usepackage{hyperref}
\hypersetup{
  colorlinks=true,
  linkcolor=blue!50!black,
  citecolor=blue!50!black,
  urlcolor=blue!50!black,
  pdftitle={Countervailing Curation: Strategic Disclosure and the Design of Attention},
  pdfauthor={Qian Cao and Yifei Sun},
  pdfsubject={Strategic disclosure and attention design},
  pdfkeywords={verifiable disclosure, strategic curation, attention design,
    selection-naive inference}
}

\newtheorem{theorem}{Theorem}
\newtheorem{proposition}{Proposition}
\newtheorem{lemma}{Lemma}
\newtheorem{corollary}{Corollary}
\newtheorem{assumption}{Assumption}
\theoremstyle{definition}

\theoremstyle{remark}
\newtheorem{remark}{Remark}
\newtheorem*{examplebox}{Example}

\newcommand{\R}{\mathbb{R}}
\newcommand{\E}{\mathbb{E}}
\newcommand{\Prob}{\mathbb{P}}
\newcommand{\ind}[1]{\mathbf{1}\{#1\}}

\title{\textbf{Countervailing Curation}%
\thanks{\raggedright Yifei Sun gratefully acknowledges financial support from
the National Natural Science Foundation of China (Grant No.~72273029). Qian Cao
gratefully acknowledges financial support from the Fundamental Research Funds
for the Central Universities at the University of International Business and
Economics (Grant No.~7625040381). All remaining errors are our own.}\\
\large Strategic Disclosure and the Design of Attention}
\author{Qian Cao\thanks{\raggedright Global Value Chain, University of International
Business and Economics, No.~10 Huixin East Street, Chaoyang District,
Beijing 100029, China. Email:
\texttt{caoqian\_econ@163.com}.}
\and
Yifei Sun\thanks{\raggedright Corresponding author. School of Economics, University of
International Business and Economics, No.~10 Huixin East Street, Chaoyang
District, Beijing 100029, China. Email:
\texttt{sunyifei@uibe.edu.cn}.}}
\date{August 2026}

\begin{document}
\maketitle

\begin{abstract}
\noindent
Opposing advocates may be unable to fabricate evidence but can choose which
true observations to present. We study how a court, editor, or platform
should divide potential exposure between an advocate who prefers a higher
decision and one who prefers a lower decision. Each advocate controls a
separate evidence pool, exposure is committed before the state and evidence
are known, and a selection-naive receiver averages displayed observations.
Under baseline linear preferences and common knowledge of the realized
pools, all Nash equilibria in every finite pool induce the same action and
common disclosure bar: the upward advocate reveals observations above it and
the downward advocate those below it. Attention therefore changes selection
as well as weight. Under a common evidence law, more exposure makes an
advocate speak less often and more extremely while moving the decision in its
preferred direction. In large pools, the unique state-contingent exposure
share that reproduces the complete-evidence decision balances the advocates'
directional tail moments, not their observed speech. Because this share
generally depends on the unknown state, we characterize the optimal ex ante
compromise and solve a primitive two-state economy with a unique second-best
policy. For finite pools, we derive the exact risk-minimizing adjustment,
separating selection-induced cutoff bias, the receiver's fixed benchmark,
and sampling variance. The attention-to-cutoff feedback survives partial
weighting of empty slots and vanishes at full imputation.
\end{abstract}

\medskip
\noindent\textbf{Keywords:} verifiable disclosure; strategic curation;
attention design; selection-naive inference.

\smallskip
\noindent\textbf{JEL:} C72, D82, D83, L82.

\section{Introduction}
\label{sec:intro}

Public decisions often depend not on the raw evidence but on a record
assembled by advocates. Litigants select exhibits from a case file,
political organizations circulate favorable studies, and opposed research
desks publicize different parts of a common body of evidence. No disclosed
item need be false. The distortion comes from which true observations enter
the public record and which remain unseen. A court, editor, regulator, or
platform may be unable to compel every disclosure, but it can often decide
how much opportunity each side has to be heard. Should those opportunities
be divided equally? Can unequal exposure make the resulting decision more
accurate? The difficulty is that exposure changes not only how much weight a
side receives, but also what that side chooses to show.

We study this problem as a two-sender disclosure game with an institutional
layer. One advocate wants a higher decision and the other wants a lower one.
Each controls a separate pool of verifiable observations and may suppress,
but not alter or fabricate, its own evidence. Before the correct action and
the evidence are known, a platform chooses how to divide a fixed set of
display opportunities between the two sides. Relative prominence within
each side and the receiver's fixed benchmark are predetermined. After the
evidence is realized, each advocate decides which assigned opportunities to
use; an unused opportunity expires. The receiver averages the displayed
observations together with the benchmark and does not interpret an
empty slot strategically. The platform therefore chooses the relative
opportunity afforded to the two sides, the advocates choose what is heard,
and the receiver treats the selected record as if it were an exogenous
sample.

A small example shows why the distinction between opportunities and observed
speech matters.

\begin{examplebox}
Suppose both sides draw potential evidence from the same population. An
observation equals $9$ with probability $1/5$ and $4$ with probability
$4/5$, so the mean of the full evidence is $5$. The upward advocate presents
the $9$s, while the downward advocate presents the $4$s. If an institution
forces the final display to contain equal numbers of the two values, the
receiver chooses $(9+4)/2=6.5$, not $5$.

Now give the two advocates equal potential exposure before their evidence is
realized. For any candidate decision $c\in(4,9)$, the upward advocate's
expected pull is proportional to $(1/5)(9-c)$ and the downward advocate's
expected pull to $(4/5)(c-4)$. Their common disclosure standard therefore
solves
\(
\tfrac12\cdot\tfrac15(9-c)
=\tfrac12\cdot\tfrac45(c-4),
\)
whose unique solution is $c=5$. In a large pool, one fifth of the upward
side's opportunities and four fifths of the downward side's opportunities
are filled. The displayed evidence is therefore $20\%$ nines and $80\%$
fours, with average $5$. Equal opportunities produce unequal observed speech
and the correct decision; equal observed speech produces the wrong one.
\end{examplebox}

The arithmetic is discrete only for transparency. Smooth approximations
satisfy our maintained assumptions and preserve the comparison; Appendix
\ref{sec:finitesample-appendix} gives an explicit smooth version. More
important, the value $5$ was not supplied as a screening rule. It emerges
because each side selects evidence relative to the decision produced by both
sides. At a candidate action $c$, an observation $x$ raises the displayed
average if $x>c$ and lowers it if $x<c$. The upward advocate therefore presents
observations above $c$, and the downward advocate presents observations below
$c$. In equilibrium, $c$ must equal the average generated by those choices.

Our first result establishes this logic for every finite evidence pool. All
Nash equilibria, pure or mixed, produce one action, denoted $c_n$, in every
realization. The same number is the receiver's decision, the upward
advocate's lower disclosure bar, and the downward advocate's upper disclosure
bar. Every mixed equilibrium is supported on the same tail-disclosure
profiles; strategies may differ only in the treatment of observations
exactly equal to $c_n$, which do not affect the decision. The disclosure
standard is therefore an equilibrium outcome, not an editorial criterion
imposed by the model.

This endogenous bar creates a response that mechanical reweighting would
miss. Giving an advocate more exposure does not merely give its existing
speech more weight. It moves the public decision and changes which evidence
both sides are willing to present. When the two sides draw from a common
evidence distribution, more upward exposure raises the disclosure bar. The
upward advocate then speaks less often and selects more extreme evidence,
while the downward advocate speaks more often. Despite speaking less, the
upward advocate moves the decision upward. We call this the representation
paradox.

This feedback does not require the receiver to ignore empty slots completely.
If an empty slot is imputed at the fixed benchmark but receives less weight
than a displayed observation, more attention still raises the disclosure bar
and makes the favored advocate speak less often. Full imputation is the sharp
boundary: the disclosure bar is then fixed at the benchmark and no longer
responds to attention. Section~\ref{sec:missing-processing} states this
robustness result.

These observations organize the analysis in four steps. First, we solve the
finite disclosure game and establish the common endogenous bar. Second, we
ask how exposure would be assigned if the platform knew the correct action.
Third, we impose the paper's actual timing, characterize the general ex ante
problem, and solve a primitive two-state economy with a unique second-best
policy. Finally, we return to finite evidence and characterize the exact
optimal adjustment to the large-pool policy.

For the known-state benchmark, a unique attention share makes the
large-pool decision equal the decision based on every observation. The
share balances the expected favorable evidence available to the two
sides. It generally does not equalize the numbers of displayed observations,
their realized influence, or their visible shares. Under a common evidence
law, equal potential exposure is optimal even when the evidence distribution
is skewed; with different evidence technologies, the side with the greater
capacity to produce favorable evidence must receive less exposure. Correcting
the decision does not reconstruct the missing record: both sides continue to
suppress evidence, and the selected record remains statistically incomplete.

That correcting share is an informative benchmark, but it is not
generally feasible. The platform commits exposure before it learns the
correct action, while the balance between the two sides can vary with that
action. We characterize exactly when one fixed share works in every
state. Otherwise, the platform chooses an ex ante compromise: every optimum
lies between the smallest and largest state-contingent shares and balances
expected marginal distortions across states. Monotonicity of each
state-specific action does not by itself make the aggregate problem strictly
convex. We therefore solve a primitive two-state economy in which the unique
second-best share responds monotonically to the prior and information received
before exposure is assigned has a strictly positive value.

Finite evidence creates a further design problem. A share that is
correct in the limit need not minimize exact finite-pool risk. Sampling error
changes both the observations available in each tail and the endogenous bar
used to select them. We derive the cutoff's sampling distribution, bias, and
risk. The optimal finite-pool adjustment separates three forces: bias created
by the endogenous cutoff, the pull of the receiver's fixed benchmark, and
sampling variance. Under local smoothness, every exact risk-minimizing
share has the same order-$1/n$ adjustment. Proving this result requires
tracking observations as they cross the moving disclosure bar; it cannot be
obtained by formally differentiating a large-pool approximation.

Taken together, the results identify a design margin that is absent when
messages are treated as fixed. Opposed biases do not automatically cancel:
equal realized representation can be systematically wrong. What the
institution can control is the opportunity to select. By committing that
opportunity in advance, it changes the disclosure standard, the evidence
that enters the public record, and ultimately the accuracy of the decision.
Attention is therefore an input into strategic selection, not merely a
weight placed on a fixed set of messages.

\subsection{Related literature}
Our closest strategic comparison is \citet{kartikleesuen2026} (KLS). Their
biased senders decide whether to disclose a single verifiable signal to a
Bayesian receiver, and disclosure costs determine the marginal sender. Here
each advocate controls many observations, disclosure is costless, and the
receiver neglects selection. The marginal object is therefore an evidence
item: it is presented when its value lies on the advocate's preferred side
of the endogenous action. The common structure is threshold disclosure by
opposed advocates; the distinct mechanism is the feedback from committed
exposure to the selected sample and its disclosure threshold.

\citet{aukawai2020} study senders who choose information policies about
separate proposals and compete for the receiver's patronage. Our advocates
instead disclose hard observations about one common state, want to move the
same scalar action in opposite directions, and do not commit to experiments.
The institution, rather than a sender, commits the design variable: relative
exposure across evidence pools. Classic disclosure relies on skeptical
inference from silence \citep{grossman1981,milgrom1981};
\citet{rappoport2025} characterizes how beliefs about available evidence
govern that skepticism. Competition can also promote disclosure
\citep{milgromroberts1986,lipmanseppi1995}, advocacy models study evidence
acquisition and presentation to a Bayesian arbiter
\citep{dewatriponttirole1999,shin1998}, and competitive persuasion lets
senders choose information structures \citep{gentzkowkamenica2017}. Recent
work studies informative evidence in truth-leaning equilibria
\citep{lichtigweksler2023}. We hold the receiver's visible-average rule fixed
to isolate how committed exposure changes advocates' evidence selection.

More broadly, institutions can shape strategic behavior by designing the
information agents observe. \citet{lusong2025} study a principal who chooses
task order and rewards when network-based peer monitoring determines incentive
costs. Our platform instead allocates exposure across advocates, thereby
changing their selection of verifiable evidence rather than their effort
incentives.

The closest statistical comparison is
\citet{ditillioottavianisorensen2021}, who order selected and random samples
and apply that ordering to strategic settings. We derive selection from a
simultaneous game between holders of exclusive evidence pools and ask how an
institution should allocate exposure before the state is known. This delivers
a finite equilibrium outcome, an ex ante design problem, and an exact
finite-pool policy correction. Under a common evidence law, the limiting
cutoff is an expectile and the one-half expectile equals the mean. We use
those facts as benchmarks rather than claim them as statistical discoveries.

The term \emph{curation} is already used for selection from a larger
available set: \citet{casner2020} studies a platform's choice of which sellers
to admit, while \citet{mohseni2022} model journalistic selection from a body
of scientific evidence. We endogenize such selection in a disclosure game
between opposed evidence holders and make their potential exposure the
institution's design instrument.

On the receiver side, \citet{besbesscarsini2018} transform a biased mean of
reviews generated by endogenous purchase, and \citet{acemoglu2022} compare
learning from review histories and summary statistics. This paper instead
holds the receiver's statistic fixed while opponents select the evidence that
enters it. On the sender side, \citet{gao2025} studies selective deletion by
one sender with a large dataset and a sophisticated receiver. Opposed
exclusive pools and a selection-naive average generate a different disclosure
equilibrium. Evidence that receivers underreact to nondisclosure and senders
exploit selection supports this behavioral channel
\citep{jin2021,farina2024}; attention incentives can also reduce transmission
without increasing literal falsehoods \citep{serragarcia2026}.

In \citet{mullainathanshleifer2005}, slanted outlets can offset one another
because consumers have divergent priors. Here offsetting arises from
verifiable tails whose common cutoff responds to attention. Relatedly,
\citet{song2025opinionleaders} shows that an opinion leader's desire to attract
followers can make each leader rely more on private information while
worsening long-run aggregation. Our channel is static: the platform assigns
potential exposure before evidence is realized, and exposure changes which
verifiable observations advocates disclose. Models of selective sharing and
sequential feedback instead emphasize belief dynamics, network exposure, or
experimentation \citep{bowen2023,benabou2025}. Our heterogeneous-row extension
is static and makes no claim about long-run network learning.

Naivety changes equilibrium language in cheap-talk models with credulous or
behaviorally perturbed receivers \citep{ottavianisquintani2006,chen2011};
$\chi$-cursed equilibrium instead attenuates perceived correlation between
others' information and actions \citep{eysterrabin2005}. Limited processing
can also make the presentation of disclosed accounting information affect
prices \citep{hirshleiferteoh2003}. Here all senders are strategic and all
messages are hard evidence. Proposition~\ref{prop:selection-naive} gives the
receiver an exact Gaussian working model: it optimizes, but misspecifies the
display set as exogenous. Section~\ref{sec:missing-processing} shows that the
cutoff mechanism survives benchmark imputation of observed empty slots unless
those slots receive the full weight of displayed evidence. Its $\lambda$ is a
relative precision, not a population share, cursedness parameter, or claim of
equilibrium inference from suppression; the same formula also represents
operational neutral backfilling. Appendix~\ref{sec:neutral-backfill} gives the
complete finite-pool and implementation analysis.

Finally, under a common evidence law, $c^*(r,\theta)$ is its $r$-expectile
\citep{neweypowell1987}. In particular, the identity
$c^*(1/2,\theta)=\theta$ follows from the fact that the one-half expectile is
the mean. We use this identity as a benchmark rather than as a new
statistical result \citep{ehm2016,leeullahwang2019}. The economic problem
concerns the endogenous cutoff with camp-specific laws and the choice of
attention before the state is known.

\section{Model}
\label{sec:game}

\subsection{Players, evidence, and timing}
\label{sec:environment}

The game has a designer, two senders, and a receiver. The platform is the
designer: it allocates potential exposure before the state and evidence are
known. The two senders are opposed advocates who control separate pools of
verifiable evidence. After the evidence is realized, each sender chooses
which of its assigned display opportunities to fill. The receiver sees the
displayed observations and takes a one-dimensional action; a withheld item
leaves its promised slot empty. The designer therefore controls exposure,
the senders control disclosure, and the receiver acts on the selected record.

\paragraph{A running interpretation.}
An editor commissions a panel of studies before their results are known and
reserves source-tagged display opportunities for two research desks with
opposed policy positions. Each potential study has a predetermined prominence
relative to the other studies assigned to the same desk. Before any result is
known, the editor chooses how to divide total prominence between the two
desks. After its estimate is realized, the desk that controls the study
may publish it in its assigned position or leave that position empty, but it
cannot transfer the promised prominence to another result. Readers summarize
the published estimates using their displayed prominence. In this
interpretation, the state is the underlying policy effect, an item is a study
estimate, the anchor is fixed outside prior information, and the platform's
attention policy is the division of potential prominence between the two
desks. The same
formal structure can represent other advocacy settings, but this evidence
panel provides a literal mapping of the timing, ownership, aggregation rule,
and nontransferability assumed below.

The receiver's decision concerns an unknown state $\theta$ in the interior
of $\mathcal X=[\underline x,\overline x]$. Denote the receiver's action by
$a\in\mathcal X$. The state is the correct action: if $\theta$ were known,
the receiver would choose $a=\theta$. Depending on the application, $a$ may
be a policy choice, forecast, rating, damages award, or another
one-dimensional decision whose accuracy is evaluated against the state.

\paragraph{Preferences.}
The senders are advocates rather than neutral experts. Index them by
$k\in\{1,2\}$ and let $b_k\in\{-1,1\}$ record the direction of sender
$k$'s bias. We set $b_1=1$ and $b_2=-1$: Sender 1 prefers a higher receiver
action, whereas Sender 2 prefers a lower one. Neither sender values
disclosure in itself. Evidence matters only through its effect on the
receiver's action. We normalize sender $k$'s payoff to
\[
  u_k(a)=b_k a,
  \qquad b_1=1,\quad b_2=-1.
\]
For deterministic disclosure choices, the normalization is ordinal, not a
claim that the senders literally have zero-sum cardinal utility.
Proposition~\ref{prop:ordinal} shows that pure disclosure choices are
unchanged when $u_1(a,\theta)$ is strictly increasing in $a$ and
$u_2(a,\theta)$ is strictly decreasing in $a$. The exact mixed-equilibrium
characterization in Theorem~\ref{thm:finite} uses the baseline linear payoffs,
or positive affine transformations of them, because nonlinear transformations
can change how senders rank lotteries over actions. The substantive
assumption behind the pure disclosure result is directional conflict over the
relevant action range. It fits,
for example, litigants arguing for higher or lower damages, research desks
advocating expansion or contraction, and interest groups pushing a policy in
opposite directions.

The receiver and the platform instead value accuracy. We use
$-(a-\theta)^2$ to evaluate their ex post welfare. The platform's design
problem is therefore not to choose the evidence, the receiver's prior, or
the detailed ranking of items within a camp. It chooses the share of total
potential exposure assigned to each of two sources that will select evidence
in opposite directions.

\paragraph{Evidence and control.}
Information about the state comes from a finite pool of observations
$X_1,\ldots,X_n$, with every $X_i$ taking a value in $\mathcal X$. Let
$I_k$ denote the set of observations controlled by sender $k$; the two sets
$I_1$ and $I_2$ partition the evidence pool. Ownership determines disclosure
rights: a sender may reveal or suppress an observation in its own pool, but
it cannot alter the observation, fabricate a new one, or reveal evidence
owned by the other sender. The finite disclosure game treats this realized
pool as arbitrary. Section~\ref{sec:largepool} introduces a stochastic
evidence environment only when we study large pools and ex ante design.

\paragraph{The attention policy.}
The institutional environment fixes the relative prominence of observations
within each camp. Write $\omega_{ni}^k>0$ for observation $i$'s within-camp
weight, collect these weights in the schedule $\omega_n$, and normalize
$\sum_{i\in I_k}\omega_{ni}^k=1$ for $k\in\{1,2\}$. Before observing the
state or any evidence value, the platform chooses one scalar policy: Sender
1's share $r_n\in(0,1)$ of total potential exposure. The resulting display
weight of observation $i$ is
\begin{equation}
 v_{ni}
 =
 \begin{cases}
   r_n\omega_{ni}^1,&i\in I_1,\\
   (1-r_n)\omega_{ni}^2,&i\in I_2.
 \end{cases}
 \label{eq:attention-schedule}
\end{equation}
Thus $\sum_i v_{ni}=1$ and $r_n=\sum_{i\in I_1}v_{ni}$. If observation
$i$ is disclosed, $v_{ni}$ determines its influence on the receiver; if it
is withheld, that opportunity expires. The platform chooses the division
between camps, not the within-camp schedule $\omega_n$.

The receiver's decision rule contains two additional fixed primitives: a
benchmark $a_0\in\mathrm{int}\,\mathcal X$ and its weight $\eta_n>0$.
They can represent prior information or a default and keep the receiver's
action well defined even if both senders suppress their entire pools. The
platform does not choose either $a_0$ or $\eta_n$; the finite-pool analysis
shows how their exogenous pull affects its optimal attention share.

\paragraph{Timing, disclosure, and the receiver's action.}
The within-camp schedule $\omega_n$, the benchmark $a_0$, and its weight
$\eta_n$ are fixed first. The platform then commits to the attention share
$r_n$. Nature next draws the state and the evidence pool. Both
senders observe the full realized evidence profile; ownership restricts
what they can disclose, not what evidence they know. They need not observe
$\theta$. With directional preferences, their disclosure choices depend on
whether an item moves the receiver action up or down, not on knowledge of the
correct action. They simultaneously choose disclosure sets
$S_1\subseteq I_1$ and $S_2\subseteq I_2$. Finally, the receiver observes
the displayed set $S=S_1\cup S_2$ and its realized values
$\{X_i:i\in S\}$.

The receiver gives a disclosed observation its promised weight and gives an
empty slot no weight. In particular, exposure attached to a suppressed item
is neither reassigned within its camp nor transferred to the opponent. The
resulting receiver action, denoted $a_n(S)$, is the anchored average
\begin{equation}
  a_n(S)
  =
  \frac{\eta_n a_0+\sum_{i\in S}v_{ni}X_i}
       {\eta_n+\sum_{i\in S}v_{ni}},
  \qquad S=S_1\cup S_2.
  \label{eq:action}
\end{equation}
The numerator combines the benchmark with the displayed values, while the
denominator normalizes only the weight that is actually filled. Consequently,
$a_n(S)\in\mathcal X$ even when little evidence is shown. For now,
\eqref{eq:action} can be read as a reduced-form decision rule.
Proposition~\ref{prop:selection-naive} shows that it is also the
quadratic-loss Bayes action under a subjective Gaussian model in which the
displayed sample is treated as exogenous. The receiver responds to the
evidence it sees but draws no inference from which slots are empty.

The senders' disclosure incentives can be seen from one calculation. For a
current display $S$ and an undisclosed observation $i$, adding that observation
changes the receiver's action by
\begin{equation}
  a_n(S\cup\{i\})-a_n(S)
  =
  \frac{v_{ni}\bigl(X_i-a_n(S)\bigr)}
       {\eta_n+\sum_{j\in S}v_{nj}+v_{ni}}.
  \label{eq:marginal-item}
\end{equation}
Because the denominator is positive, the sign of the change depends only on
whether $X_i$ lies above or below the current action. Sender 1
therefore wants favorable observations above the action in the display and
unfavorable observations below it out; Sender 2 has the reverse
incentive. The relevant standard is endogenous: the action used to classify
an observation as favorable is itself produced by both senders' disclosure
choices. Their mutual best responses therefore generate the common cutoff
characterized below.

For a realized evidence pool, a pure strategy is a disclosure set, and a
mixed strategy is a probability distribution over the finite collection of
such sets. Unless stated otherwise, equilibrium means Nash equilibrium, pure
or mixed, of this simultaneous, complete-information disclosure subgame.
Theorem~\ref{thm:finite} characterizes the support of every mixed equilibrium;
the private-evidence extension separately states its behavioral Bayesian
equilibrium concept. Because unused
slots expire, $r_n$ is Sender 1's share of potential exposure, not
its realized share of the visible record. Our benchmark is the
\emph{complete-disclosure action} $a_n(I_1\cup I_2)$. It holds fixed the
platform's chosen share, the predetermined within-camp schedule, and the
receiver's benchmark, and removes only the senders' strategic suppression.
Section~\ref{sec:missing-processing} allows unused slots to be partially
backfilled with neutral content at $a_0$; Appendix~\ref{sec:neutral-backfill}
gives the full analysis, while Appendix~\ref{sec:fungible} studies full
within-camp reallocation of unused capacity.

The finite disclosure game places no distributional restriction on the
realized evidence profile. In the baseline, that profile is common knowledge
between the two professional senders (litigants after discovery, opposing
research desks monitoring a common professional record), while the receiver
observes only what is disclosed. The senders need not know the latent state.
Section~\ref{sec:largepool} introduces stochastic assumptions only when they
are needed for population limits and attention design.
Appendix~\ref{sec:private} removes common knowledge of the evidence profile
under camp-uniform attention and proves convergence for every joint
large-pool, vanishing-anchor sequence; that extension makes the state common
knowledge as a separate simplifying assumption.

Four modeling choices deserve comment. First, attention is committed
capacity rather than an ex post ranking choice. In the evidence-panel
interpretation, the editor preassigns prominence to source-tagged study
slots, and a suppressed result leaves its promised position empty. Reserved
columns, preallocated alerts, and other source-specific placements have the
same feature. Full reallocation of unused capacity defines a different game,
which is studied in Proposition~\ref{prop:fungible}; partial neutral
backfilling is introduced in Proposition~\ref{prop:missing-processing} and
developed in Appendix Proposition~\ref{prop:neutral-backfill}. Second, the strategic players are
senders who control positive masses of evidence rather than atomistic
speakers. An atomistic speaker moves the average by only $O(v_i)$, whereas a
sender internalizes the effect of its entire disclosure policy. Third,
evidence is verifiable, so the strategic choice is suppression rather than
fabrication. Fourth, the receiver conditions on displayed values but not on
what is absent, in line with the experimental evidence in
\citet{jin2021,farina2024}. Proposition~\ref{prop:selection-naive} shows that
this behavior is Bayesian under a misspecified selection model.

The receiver is selection-naive in a precise sense: it processes the values
it sees but treats the displayed set as exogenous. Appendix
Proposition~\ref{prop:selection-naive} gives an exact Gaussian working model
under which~\eqref{eq:action} is the quadratic-loss Bayes action. The
objective evidence laws need not be Gaussian, and a correctly specified
Bayesian would instead use the equilibrium information in disclosure and
silence.

\subsection{Equilibrium disclosure}

We now solve the disclosure game for an arbitrary finite evidence pool. The
marginal calculation in~\eqref{eq:marginal-item} suggests a common cutoff,
but it does not by itself prove that the senders' best responses meet or
that randomization cannot generate additional equilibrium outcomes. Both
questions can be settled with one scalar balance condition.

For a candidate public action $c\in\mathcal X$, define
\begin{equation}
  H_n(c)=\eta_n(a_0-c)+\sum_{i\in I_1}v_{ni}(X_i-c)^+
           -\sum_{i\in I_2}v_{ni}(c-X_i)^+,
  \label{eq:Hn}
\end{equation}
where $z^+=\max\{z,0\}$. This function measures net pressure on the
receiver's action at $c$. The anchor and Sender 1's observations above $c$
pull the action upward, whereas Sender 2's observations below $c$ pull it
downward. Theorem~\ref{thm:finite} shows that these forces balance in
equilibrium. The receiver's action and both senders' marginal disclosure
standard are therefore the same number---the unique zero of $H_n$.

For a candidate cutoff $c$, let $\mathcal T_1(c)$ and $\mathcal T_2(c)$
denote the disclosure sets that contain the strict favorable tail and differ
from it, if at all, only at observations equal to $c$:
\begin{equation}
 \begin{aligned}
 \mathcal T_1(c)
 &=\left\{S_1:\{i\in I_1:X_i>c\}\subseteq S_1
          \subseteq\{i\in I_1:X_i\geq c\}\right\},\\
 \mathcal T_2(c)
 &=\left\{S_2:\{i\in I_2:X_i<c\}\subseteq S_2
          \subseteq\{i\in I_2:X_i\leq c\}\right\}.
 \end{aligned}
 \label{eq:tail-families}
\end{equation}

\begin{theorem}[Common-cutoff equilibrium]
\label{thm:finite}
For every finite evidence pool and every realized evidence profile, there is
a unique equilibrium action. Specifically, $H_n$ has a unique zero
$c_n\in(\underline x,\overline x)$. A mixed-strategy profile
$(\sigma_1,\sigma_2)$ is a Nash equilibrium if and only if
\begin{equation}
 \sigma_1\bigl(\mathcal T_1(c_n)\bigr)=1
 \quad\text{and}\quad
 \sigma_2\bigl(\mathcal T_2(c_n)\bigr)=1.
 \label{eq:mixed-equilibrium-support}
\end{equation}
Consequently, every equilibrium produces $a_n=c_n$ with probability one and
the same disclosure record away from observations equal to $c_n$. In
particular, the strict-tail profile is a pure-strategy equilibrium.
\end{theorem}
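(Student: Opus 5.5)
The plan is to reduce everything to the scalar function $H_n$ and to one identity relating any display to it. Afterwards I use that, under the baseline payoffs $u_1=a$ and $u_2=-a$, the disclosure subgame is a finite two-player zero-sum game.

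\emph{Step 1 (unique zero).} $H_n$ is continuous and piecewise linear. On each linear piece its slope is
\[
-\Bigl(\eta_n+\sum_{i\in I_1:X_i>c}v_{ni}+\sum_{i\in I_2:X_i<c}v_{ni}\Bigr)\le-\eta_n<0,
\]
so $H_n$ is strictly decreasing. At $c=\underline x$ every term $(c-X_i)^+$ vanishes, so $H_n(\underline x)\ge\eta_n(a_0-\underline x)>0$ because $a_0$ is interior. Symmetrically, $H_n(\overline x)\le\eta_n(a_0-\overline x)<0$. Hence $H_n$ has a unique zero $c_n\in(\underline x,\overline x)$.

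\emph{Step 2 (key identity).} From \eqref{eq:action}, for any display $S=S_1\cup S_2$ and any $c$,
\[
a_n(S)-c=\frac{N_c(S)}{\eta_n+\sum_{i\in S}v_{ni}},\qquad N_c(S)=\eta_n(a_0-c)+\sum_{i\in S}v_{ni}(X_i-c).
\]
Evaluate at $c=c_n$ and compare $N_{c_n}(S)$ with $H_n(c_n)=0$. The difference $N_{c_n}(S)-H_n(c_n)$ consists of two kinds of terms:
\begin{itemize}
\item nonpositive terms, $-v_{ni}(X_i-c_n)$ for Sender 1 items with $X_i>c_n$ that are omitted, and $v_{ni}(X_i-c_n)<0$ for Sender 1 items with $X_i<c_n$ that are included;
\item nonnegative terms, from Sender 2's analogous deviations from its strict lower tail.
\end{itemize}
Two consequences follow. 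If $S_2\in\mathcal T_2(c_n)$, then $a_n(S_1\cup S_2)\le c_n$, with equality if and only if $S_1\in\mathcal T_1(c_n)$. Symmetrically, if $S_1\in\mathcal T_1(c_n)$, then $a_n\ge c_n$, with equality if and only if $S_2\in\mathcal T_2(c_n)$. Items with $X_i=c_n$ contribute zero, which is why the tail families allow arbitrary treatment of them.

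\emph{Step 3 (sufficiency).} Suppose $\sigma_2$ is supported on $\mathcal T_2(c_n)$. By Step 2, every pure $S_1$ yields $a_n\le c_n$ pointwise, and exactly the sets in $\mathcal T_1(c_n)$ attain $c_n$. So under the linear payoff, $\sigma_1$ is a best response if and only if it is supported on $\mathcal T_1(c_n)$. The argument for Sender 2 is symmetric. Hence every profile satisfying \eqref{eq:mixed-equilibrium-support} is an equilibrium, and it delivers $a_n=c_n$ with probability one. In particular, the strict-tail profile is a pure equilibrium, so the value of the zero-sum game is $c_n$.

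\emph{Step 4 (necessity; the main obstacle).} The risk is that randomization could create equilibria outside the tail families. I would rule this out with the zero-sum structure. In a finite zero-sum game, the equilibrium set is the product of the players' optimal-strategy sets, and every optimal strategy guarantees the value. So if $(\sigma_1,\sigma_2)$ is any equilibrium, $\sigma_1$ must secure at least $c_n$ against every pure $S_2$, in particular against Sender 2's strict lower tail $S_2^*\in\mathcal T_2(c_n)$. By Step 2, $a_n(S_1\cup S_2^*)\le c_n$ for all $S_1$, with strict inequality off $\mathcal T_1(c_n)$. Therefore $\sigma_1(\mathcal T_1(c_n))=1$. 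The symmetric argument against $S_1^*$ forces $\sigma_2(\mathcal T_2(c_n))=1$.

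The action and outcome claims then follow from Step 2 applied to each realized pair in $\mathcal T_1(c_n)\times\mathcal T_2(c_n)$.

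The care needed is in the strict-inequality bookkeeping of Step 2, which drives both directions. It is also why the mixed characterization needs linear (or positive affine) payoffs: the zero-sum interchangeability argument fails otherwise.
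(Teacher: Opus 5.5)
Your proof is correct and follows the paper's route. The two arguments share the monotonicity and endpoint argument for $c_n$ and the guarantee identity: each tail family secures $c_n$ for its owner, and a set off the family does strictly worse when the opponent plays its tail. In the necessity step, your appeal to the fact that equilibrium strategies in a zero-sum game secure the value is exactly what the paper derives inline: it uses the two deviation inequalities $V\ge c_n$ and $V\le c_n$ and then tests $\sigma_1$ against a fixed $\bar S_2\in\mathcal T_2(c_n)$. The only difference is that the paper also gives a separate best-response argument (via $G_1,G_2$) for pure equilibria, which your mixed characterization makes unnecessary.
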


The theorem does more than exhibit a threshold equilibrium. It closes the
equilibrium-selection problem for the baseline game. Multiplicity in the
treatment of observations exactly at the cutoff is economically irrelevant:
even mixed equilibria induce the same decision in every realization and the
same visible evidence away from ties. Randomization creates no additional
action risk, so the platform can evaluate an attention policy without an
equilibrium-selection rule. If no observation equals $c_n$, both tail
families are singletons and the equilibrium strategy profile is itself
unique.

The mixed result uses the baseline linear payoffs, under which the disclosure
game is zero-sum in the receiver's action. Appendix
Proposition~\ref{prop:ordinal} gives the precise ordinal extension: arbitrary
strictly monotone sender preferences preserve the pure-strategy equilibrium
record and action. They need not preserve mixed equilibria, because a
nonlinear transformation can change the ranking of lotteries over actions.

The balance function $H_n$ has a direct economic meaning. At a candidate
action $c$, Sender 1 can add the weighted surplus
$\sum v_{ni}(X_i-c)^+$, Sender 2 can add the weighted shortfall
$\sum v_{ni}(c-X_i)^+$, and the anchor contributes $\eta_n(a_0-c)$. Their net
pressure vanishes at the equilibrium action. This single equation carries
the comparative statics below. The zero-sum normalization is convenient,
but not essential for the pure-strategy conclusions when the realized pool
is common knowledge; Appendix Proposition~\ref{prop:ordinal} gives the
precise ordinal extension and explains why mixed or private evidence requires
a narrower preference restriction.

\section{How Attention Changes Disclosure}
\label{sec:largepool}

Theorem~\ref{thm:finite} did not require a probability model for evidence.
To study systematic disclosure and design across states, we now specify how
the two pools are generated.

\begin{assumption}[Evidence]
\label{ass:evidence}
Conditional on $\theta\in\mathrm{int}\,\mathcal X$, the items
$\{X_{i}\}_{i\in I_1}$ are i.i.d.\ $F_{1,\theta}$ and
$\{X_{i}\}_{i\in I_2}$ are i.i.d.\ $F_{2,\theta}$, mutually independent.
Each $F_{k,\theta}$, $k\in\{1,2\}$, is atomless with a density
$f_{k,\theta}$ that is continuous and strictly positive on
$\mathrm{int}\,\mathcal X$. If $X_k$ denotes a generic draw from
$F_{k,\theta}$, then $\E_{\theta}[X_k]=\theta$.
\end{assumption}

The common-mean restriction makes the state the full-disclosure benchmark:
if either pool were shown without selection, its average would converge to
$\theta$. The camps may nevertheless have different variances, skewness, and
tail behavior. Atomlessness and positive interior densities avoid gaps and
ties at the limiting disclosure standard; they are not needed for the exact
finite-pool equilibrium.

For a candidate public action $c$, the population counterpart of the finite
balance function is
\begin{equation}
  H(c;r,\theta)
  =r\,\E_\theta[(X_1-c)^+]
   -(1-r)\,\E_\theta[(c-X_2)^+].
  \label{eq:H}
\end{equation}
The two terms are the expected upward and downward pressure generated by the
tail observations each sender is willing to disclose. We will also use their
values at the correct action,
\[
  \Gamma_1(\theta)=\E_\theta[(X_1-\theta)^+]>0,
  \qquad
  \Gamma_2(\theta)=\E_\theta[(\theta-X_2)^+]>0.
\]
Thus $\Gamma_1(\theta)$ is the expected favorable deviation available to
Sender 1, and $\Gamma_2(\theta)$ is the corresponding favorable deviation
available to Sender 2.

The finite-pool result is exact, but its cutoff depends on the realized
record. Large pools reveal the systematic force created by attention. We let
the anchor vanish, require no single observation to carry appreciable weight,
and hold fixed Sender 1's limiting attention share. Formally,
$\eta_{n}\to0$, $\max_{i}v_{ni}\to0$, and $r_{n}\to r\in(0,1)$.

\begin{proposition}[Large-pool strategic narrative]
\label{thm:largepool}
Under Assumption~\ref{ass:evidence}, $H(\cdot\,;r,\theta)$ is strictly
decreasing with a unique interior zero $c^{*}(r,\theta)$, and
$c_{n}\xrightarrow{p}c^{*}(r,\theta)$. The disclosed fractions of
Sender 1's and Sender 2's pools converge in probability to
$1-F_{1,\theta}(c^{*})\in(0,1)$ and $F_{2,\theta}(c^{*})\in(0,1)$.
Moreover $c^{*}$ is continuously differentiable in $r$ and
\begin{equation}
 \frac{\partial c^{*}(r,\theta)}{\partial r}
 =
 \frac{\E_{\theta}[(X_1-c^{*})^{+}]
       +\E_{\theta}[(c^{*}-X_2)^{+}]}
      {r\,\Prob_{\theta}(X_1>c^{*})
       +(1-r)\,\Prob_{\theta}(X_2<c^{*})}
 >0.
 \label{eq:cs}
\end{equation}
\end{proposition}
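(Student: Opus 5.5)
The proof has four parts: properties of the population balance function $H$, convergence of the finite balance function $H_n$ to $H$, consistency of its zero $c_n$, and then disclosed fractions and differentiation.

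\emph{Properties of $H$.} Write $H(c;r,\theta)=r\,\Psi_1(c)-(1-r)\Psi_2(c)$, where
\[
\Psi_1(c)=\E_\theta[(X_1-c)^+], \qquad \Psi_2(c)=\E_\theta[(c-X_2)^+].
\]
Both are convex and Lipschitz. Since the laws are atomless, they are differentiable with $\Psi_1'(c)=-\Prob_\theta(X_1>c)$ and $\Psi_2'(c)=\Prob_\theta(X_2<c)$. The densities are continuous, so these derivatives are continuous.

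On $\mathrm{int}\,\mathcal X$ the densities are strictly positive. Hence $\Prob_\theta(X_1>c)\in(0,1)$ and $\Prob_\theta(X_2<c)\in(0,1)$ there, and
\[
\partial_c H=-\bigl[r\,\Prob_\theta(X_1>c)+(1-r)\Prob_\theta(X_2<c)\bigr]<0 .
\]
At the endpoints, $H(\underline x)=r(\theta-\underline x)>0$ and $H(\overline x)=-(1-r)(\overline x-\theta)<0$, using the common mean $\theta$. So there is a unique zero $c^*\in\mathrm{int}\,\mathcal X$.

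The implicit function theorem applies to $(r,c)\mapsto H$, which is $C^1$ with $\partial_c H\neq0$. It gives $\partial c^*/\partial r=-\partial_rH/\partial_cH$. Here $\partial_rH=\Psi_1(c^*)+\Psi_2(c^*)>0$, which is exactly \eqref{eq:cs}. The disclosed fractions $1-F_{1,\theta}(c^*)$ and $F_{2,\theta}(c^*)$ lie in $(0,1)$ because $c^*$ is interior.

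\emph{Convergence of $H_n$ (the main obstacle).} I would show $\sup_{c\in\mathcal X}|H_n(c)-H(c;r,\theta)|\to0$ in probability. Decompose
\[
H_n(c)=\eta_n(a_0-c)+r_n\sum_{i\in I_1}\omega^1_{ni}(X_i-c)^+-(1-r_n)\sum_{i\in I_2}\omega^2_{ni}(c-X_i)^+ .
\]
The anchor term is $O(\eta_n)$ uniformly because $\mathcal X$ is bounded, and $r_n\to r$. Each weighted sum is a weighted average of bounded i.i.d.\ variables. Its variance is at most $C\sum_i(\omega_{ni}^k)^2\le C\max_i\omega^k_{ni}$.

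That bound needs $\max_i\omega_{ni}^k\to0$. This follows from $\max_i v_{ni}\to0$, since $r_n$ and $1-r_n$ stay bounded away from $0$. Chebyshev's inequality then gives pointwise convergence in probability. Pointwise convergence upgrades to uniform convergence because every function involved is $1$-Lipschitz in $c$ on the compact set $\mathcal X$; a finite grid argument finishes it.

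\emph{Consistency of $c_n$.} Fix $\epsilon>0$. Strict monotonicity of $H$ gives $\delta=\min\{H(c^*-\epsilon),-H(c^*+\epsilon)\}>0$. On the event $\sup|H_n-H|<\delta$, $H_n(c^*-\epsilon)>0>H_n(c^*+\epsilon)$. $H_n$ is strictly decreasing because $\eta_n>0$, so its unique zero $c_n$ from Theorem~\ref{thm:finite} lies in $(c^*-\epsilon,c^*+\epsilon)$. Hence $c_n\xrightarrow{p}c^*$.

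\emph{Disclosed fractions.} By Theorem~\ref{thm:finite}, Sender 1's disclosed set lies between $\{X_i>c_n\}$ and $\{X_i\ge c_n\}$, with the corresponding statement for Sender 2. I read "disclosed fraction" as the weighted count $\sum_{i\in I_1}\omega^1_{ni}\ind{X_i>c_n}$; for uniform weights this is the plain count. The same variance bound plus a monotone sandwich over a grid, as in the Glivenko–Cantelli argument using continuity of $F_{1,\theta}$, gives uniform convergence of the weighted empirical CDF. Ties at $c_n$ carry vanishing weight because the laws are atomless and $\max_i\omega^k_{ni}\to0$. Combined with $c_n\xrightarrow{p}c^*$ and continuity of $F_{k,\theta}$, this yields the stated limits $1-F_{1,\theta}(c^*)$ and $F_{2,\theta}(c^*)$.
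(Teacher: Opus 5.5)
Your proposal is correct and follows essentially the same route as the paper. Both arguments use strict monotonicity and the endpoint signs of $H$, uniform convergence of $H_n$ through the variance bound $\sum_i v_{ni}^2\le\max_i v_{ni}$ and a Lipschitz grid (the paper's Lemma~\ref{lem:uniform}), the straddling argument at $c^*\pm\varepsilon$ for $c_n$, a sandwich of the tail indicators for the disclosed fractions, and the implicit function theorem for \eqref{eq:cs}. Your weighted reading of ``disclosed fraction'' is harmless: $\max_i\omega^k_{ni}\ge 1/|I_k|$ forces $|I_k|\to\infty$, so the plain count fraction converges by the same sandwich.
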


The limit $c^*(r,\theta)$ is the public narrative generated by the two
selection rules. Sender 1 shows observations above it and
Sender 2 shows observations below it; both continue to suppress a positive
fraction of their pools. Equation~\eqref{eq:cs} gives the strategic response
to attention. Its numerator is the total directional surplus available on
the two sides of the cutoff. Its denominator is the mass of evidence visible
at the margin. Attention therefore has more leverage when the public record
is thin.

Implementing the correct action does not reconstruct the missing record.
Appendix Lemma~\ref{lem:insufficient} shows that, under a mild identification
condition, equilibrium disclosure is not sufficient for the state: two full
pools can generate the same visible evidence while implying different
likelihoods for what was withheld. Countervailing attention restores the
decision, not the information that strategic selection removed.

\section{Countervailing Attention}
\label{sec:countervailing}
We now return to camp-specific evidence laws. Before confronting the
platform's information constraint, ask which attention share would implement
the truth if the state were known. Because senders change their disclosure
standards when attention changes, the answer balances their favorable tail
moments rather than their realized speech.

\begin{theorem}[State-contingent optimal attention]
\label{thm:design}
Suppose the platform knows the state $\theta$ when it assigns attention.
There is a unique share that implements the receiver's correct action in the
large-pool limit. It is characterized by
\[
  r\,\Gamma_1(\theta)=(1-r)\,\Gamma_2(\theta),
  \qquad\text{i.e.}\qquad
  r=r^{*}(\theta)
   =\frac{\Gamma_2(\theta)}{\Gamma_1(\theta)+\Gamma_2(\theta)} .
\]
At this share, and only at this share, the equilibrium and
complete-disclosure actions have the same limit $\theta$. Consequently,
$r^{*}(\theta)$ uniquely minimizes the receiver's limiting loss
$(c^{*}(r,\theta)-\theta)^2$. If
$X_1=\theta+\varepsilon_1$ and $X_2=\theta+\varepsilon_2$ are
location families with $\theta$-free innovations (supports permitting),
then $r^{*}$ is independent of the realized state and can be set ex ante.
\end{theorem}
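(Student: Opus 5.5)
The plan is to reduce everything to the population balance function $H(\cdot\,;r,\theta)$ of~\eqref{eq:H} and to Proposition~\ref{thm:largepool}. There are four steps.

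\textbf{Step 1: the complete-disclosure benchmark converges to $\theta$.} Write the complete-disclosure action as
\[
a_n(I_1\cup I_2)=\frac{\eta_n a_0+r_n\sum_{I_1}\omega^1_{ni}X_i+(1-r_n)\sum_{I_2}\omega^2_{ni}X_i}{\eta_n+1}.
\]
Since $\max_i v_{ni}\to0$, we have $\sum_{i\in I_k}(\omega^k_{ni})^2\to0$ within each camp. The values are bounded in $\mathcal X$ and $\E_\theta[X_k]=\theta$. A weighted weak law of large numbers (Chebyshev suffices, because the variances are bounded) therefore gives $\sum_{I_k}\omega^k_{ni}X_i\xrightarrow{p}\theta$ for each $k$. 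With $\eta_n\to0$, this yields $a_n(I_1\cup I_2)\xrightarrow{p}\theta$ for every $r\in(0,1)$.

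\textbf{Step 2: the equilibrium limit equals $\theta$ exactly when $H(\theta;r,\theta)=0$.} Proposition~\ref{thm:largepool} gives $c_n\xrightarrow{p}c^*(r,\theta)$, where $c^*$ is the unique zero of the strictly decreasing function $H(\cdot\,;r,\theta)$. So the two limits coincide if and only if $c^*(r,\theta)=\theta$. By uniqueness of the zero, this holds if and only if
\[
H(\theta;r,\theta)=r\Gamma_1(\theta)-(1-r)\Gamma_2(\theta)=0.
\]
This condition is affine in $r$ with slope $\Gamma_1+\Gamma_2>0$. It has the unique root $r^*(\theta)=\Gamma_2/(\Gamma_1+\Gamma_2)$, which lies in $(0,1)$.

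To confirm this root is interior, I need $\Gamma_k(\theta)>0$. This follows because $f_{k,\theta}$ is positive on the interior: $X_k$ is nondegenerate with mean $\theta$, so it has positive mass on both sides of $\theta$.

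\textbf{Step 3: $r^*(\theta)$ is the unique minimizer of the limiting loss.} The loss $(c^*(r,\theta)-\theta)^2$ is nonnegative, and by Step~2 it is zero only at $r^*(\theta)$. Hence $r^*(\theta)$ is the unique minimizer. As a byproduct, the monotonicity in~\eqref{eq:cs} shows the loss is strictly decreasing below $r^*$ and strictly increasing above it.

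\textbf{Step 4: location families make $r^*$ state-free.} Suppose $X_k=\theta+\varepsilon_k$ with $\theta$-free innovations. Then
\[
\Gamma_1(\theta)=\E[\varepsilon_1^+],\qquad \Gamma_2(\theta)=\E[\varepsilon_2^-],
\]
and neither depends on $\theta$. So $r^*$ is constant in $\theta$ and can be committed ex ante.

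The main obstacle is Step~1. There I must verify that the uniform-negligibility condition on the weights $v_{ni}$ transfers to the within-camp weights $\omega^k_{ni}=v_{ni}/r_n$ and $v_{ni}/(1-r_n)$. This holds because $r_n\to r\in(0,1)$ keeps both denominators bounded away from zero. Everything else is a direct consequence of the uniqueness of the zero of $H$ established earlier.
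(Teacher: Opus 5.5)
Your proposal is correct and follows the paper's proof. Both arguments evaluate $H(\theta;r,\theta)=r\Gamma_1(\theta)-(1-r)\Gamma_2(\theta)$, use strict monotonicity of $H$ in $c$ to conclude $c^*(r,\theta)=\theta$ if and only if $r=r^*(\theta)$, and note that a location family's directional tail moments are innovation moments free of $\theta$; your Chebyshev verification of the complete-disclosure limit (which the paper takes as given), and your use of the unique affine root rather than \eqref{eq:cs} for strict positivity of the loss elsewhere, are harmless elaborations.
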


Equal attention is optimal only when $\Gamma_1(\theta)=\Gamma_2(\theta)$, which
holds under a common evidence distribution however skewed
(Corollary~\ref{thm:countervailing}), but fails whenever one camp's evidence
is more dispersed: the camp with the larger favorable-tail moment must
receive \emph{less} than half of attention. Equal numbers of sources,
equal realized speaker counts, and equal visible mass are generally the
wrong design targets---the Example in the Introduction shows realized
voice balance failing exactly where attention balance is exact.

To isolate the offsetting force, suppose in this section that the camps draw
from the same evidence distribution. This benchmark removes differences in
evidence quality: any asymmetry in the public narrative is generated by
attention and endogenous disclosure alone. Formally,
$F_{1,\theta}=F_{2,\theta}=F_{\theta}$. We write $c^{*}(r)$ for
$c^{*}(r,\theta)$, $q(c)=1-F_{\theta}(c)$, and
$M_{0}(\theta)=\E_{\theta}(X-\theta)^{+}=\E_{\theta}(\theta-X)^{+}>0$,
where the equality of the two truncated moments is mean-unbiasedness.

\begin{corollary}[Countervailing selection]
\label{thm:countervailing}
Under a common evidence distribution,
\[
  c^{*}(1/2)=\theta,
  \qquad
  c^{*}(r)>\theta\ \text{ if }r>1/2,
  \qquad
  c^{*}(r)<\theta\ \text{ if }r<1/2 .
\]
Equal attention uniquely gives the equilibrium and complete-disclosure
actions the same large-pool limit, even though both senders withhold
nonvanishing shares and the receiver does not infer from suppression.
\end{corollary}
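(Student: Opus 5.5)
The plan is to specialize the population balance function~\eqref{eq:H} to a common law and then read off the sign of $c^*(r)-\theta$ from the monotonicity in Proposition~\ref{thm:largepool}. With $F_{1,\theta}=F_{2,\theta}=F_\theta$, evaluate $H$ at $c=\theta$:
\[
H(\theta;r,\theta)=r\,\E_\theta[(X-\theta)^+]-(1-r)\,\E_\theta[(\theta-X)^+]=(2r-1)M_0(\theta).
\]
The equality of the two truncated moments used here follows from $\E_\theta[X]=\theta$ in Assumption~\ref{ass:evidence}: we have $\E_\theta[(X-\theta)^+]-\E_\theta[(\theta-X)^+]=\E_\theta[X-\theta]=0$. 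Positivity $M_0(\theta)>0$ holds because the density is strictly positive on the interior, so $X>\theta$ with positive probability.

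Next we invoke Proposition~\ref{thm:largepool}: $H(\cdot\,;r,\theta)$ is strictly decreasing in $c$, with unique zero $c^*(r)$. If $r=1/2$, then $H(\theta)=0$, so $c^*(1/2)=\theta$. If $r>1/2$, then $H(\theta)>0$, and since $H$ is strictly decreasing, its zero lies strictly to the right, giving $c^*(r)>\theta$. Symmetrically, $r<1/2$ gives $H(\theta)<0$ and $c^*(r)<\theta$. Alternatively, the strict ordering follows from $c^*(1/2)=\theta$ together with $\partial c^*/\partial r>0$ in~\eqref{eq:cs}, but the direct sign argument avoids needing differentiability.

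For the final sentence, the complete-disclosure action $a_n(I_1\cup I_2)$ has limit $\theta$ under the large-pool conditions $\eta_n\to0$, $\max_i v_{ni}\to0$, $r_n\to r$. Its numerator is $r_n$ times a weighted camp-1 average plus $(1-r_n)$ times a weighted camp-2 average plus $\eta_n a_0$, and each weighted average converges in probability to $\theta$ by a weak law for triangular arrays with vanishing maximal weights. Here $\sum_i(\omega^k_{ni})^2\le\max_i\omega^k_{ni}\to0$ and the variables are bounded on $\mathcal X$. The denominator tends to one. Since $c_n\to c^*(r)$, the two limits coincide if and only if $c^*(r)=\theta$, which by the trichotomy above holds exactly when $r=1/2$. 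Equivalently, this is Theorem~\ref{thm:design} with $\Gamma_1=\Gamma_2=M_0$, giving $r^*=1/2$, and I would simply cite it.

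The remaining claim is that both senders withhold nonvanishing shares. By Proposition~\ref{thm:largepool}, the disclosed fractions converge to $1-F_\theta(\theta)$ and $F_\theta(\theta)$, both in $(0,1)$ by the positive interior density. No step is really an obstacle. The only point needing care is justifying the complete-disclosure limit, or recognizing that Theorem~\ref{thm:design} already provides it.
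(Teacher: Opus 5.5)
Your proof is correct and follows the paper's argument: the paper likewise computes $H(\theta;r,\theta)=(2r-1)M_0(\theta)$ and reads off the location of the zero from the strict monotonicity in Proposition~\ref{thm:largepool}. Your treatment of the final sentence (a complete-disclosure limit of $\theta$ for every $r$, and limiting disclosed fractions $1-F_\theta(\theta)$ and $F_\theta(\theta)$ in $(0,1)$) correctly fills in what the paper leaves implicit via Theorem~\ref{thm:design} and Proposition~\ref{thm:largepool}.
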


At equal attention the senders select opposite tails around the truth.
Their disclosed masses need not be equal, and the distribution need not be
symmetric. Mean-unbiasedness is enough: the expected upward deviation from
the mean equals the expected downward deviation. Equal attention gives
those deviations equal force. Moving attention away from one half changes
both a camp's weight and the common disclosure standard, producing a
strictly biased narrative.

Under a common evidence law, the limiting narrative is the
$r$-expectile of that law. We use this familiar statistical identity only as
a benchmark: the economic content is that strategic tail disclosure
endogenously generates the expectile and that attention chooses its index.
At equal attention, the two weighted visible tails also fit together into the
uncensored marginal distribution, although the identities of the suppressed
items remain missing. Appendix Proposition~\ref{prop:restoration} and the
accompanying remarks state these comparisons precisely.

\subsection{The representation paradox}

At equilibrium, Sender 1 discloses each item with probability
$p_1(r)=q(c^{*}(r))$ and Sender 2 with probability
$p_2(r)=1-q(c^{*}(r))=F_{\theta}(c^{*}(r))$.

\begin{proposition}[More attention, less speech, more influence]
\label{prop:paradox}
For every $r\in(0,1)$,
\[
  \frac{dc^{*}}{dr}>0,
  \qquad
  \frac{dp_1}{dr}<0,
  \qquad
  \frac{dp_2}{dr}>0,
  \qquad
  \frac{d}{dr}\,\E_{\theta}\big[X\mid X>c^{*}(r)\big]>0 .
\]
Increasing Sender 1's attention share raises the public action,
lowers the probability that any given upward item is disclosed, raises the
extremity of what Sender 1 does disclose, and raises the disclosure
probability of the opposing camp.
\end{proposition}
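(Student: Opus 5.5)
The plan is to reduce all four signs to the single fact $dc^*/dr>0$ and then use the chain rule. That fact comes from Proposition~\ref{thm:largepool}. Specializing \eqref{eq:cs} to $F_{1,\theta}=F_{2,\theta}=F_\theta$ gives
\[
\frac{dc^*}{dr}=\frac{\E_\theta[(X-c^*)^+]+\E_\theta[(c^*-X)^+]}{r\,q(c^*)+(1-r)F_\theta(c^*)}.
\]
Both the numerator and the denominator are strictly positive. The reason is that $c^*$ is interior and the density $f_\theta$ is strictly positive on $\mathrm{int}\,\mathcal X$, so $q(c^*)$ and $F_\theta(c^*)$ lie in $(0,1)$ and both truncated moments are positive. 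Continuous differentiability of $c^*$ in $r$ is also part of that proposition, so the compositions below are differentiable.

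For the disclosure probabilities, set $p_1(r)=q(c^*(r))=1-F_\theta(c^*(r))$. Since $F_\theta$ has a continuous density $f_\theta$ that is strictly positive at the interior point $c^*$, the chain rule gives
\[
\frac{dp_1}{dr}=-f_\theta(c^*)\,\frac{dc^*}{dr}<0
\qquad\text{and}\qquad
\frac{dp_2}{dr}=f_\theta(c^*)\,\frac{dc^*}{dr}>0.
\]

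For extremity, write $m(c)=\E_\theta[X\mid X>c]=c+\E_\theta[(X-c)^+]/q(c)$, which is valid for interior $c$ with $q(c)>0$. Differentiating $\int_c^{\overline x}x f_\theta(x)\,dx\,/\,q(c)$ gives
\[
m'(c)=\frac{f_\theta(c)}{q(c)}\bigl(m(c)-c\bigr)=\frac{f_\theta(c)\,\E_\theta[(X-c)^+]}{q(c)^2}>0.
\]
The sign is strict because $f_\theta(c)>0$ and there is positive mass above $c$. Composing with $c^*(r)$ and using $dc^*/dr>0$ yields the fourth inequality.

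The only delicate point is the strictness of the signs, and in particular $f_\theta(c^*)>0$. Assumption~\ref{ass:evidence} gives positivity only on the interior of $\mathcal X$, so I need $c^*$ to be interior. Proposition~\ref{thm:largepool} provides exactly this, since it places the zero strictly inside and the disclosed fractions in $(0,1)$.
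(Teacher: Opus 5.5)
Your proposal is correct and follows essentially the same route as the paper's proof: take $dc^*/dr>0$ from \eqref{eq:cs}, get the signs of $p_1'$ and $p_2'$ from the chain rule with $f_\theta(c^*)>0$, and compose the threshold derivative $\frac{d}{dc}\E_\theta[X\mid X>c]=\frac{f_\theta(c)}{q(c)}\bigl(\E_\theta[X\mid X>c]-c\bigr)>0$ with $c^*(r)$. Your explicit check that $c^*$ is interior, which is what makes $f_\theta(c^*)>0$ and the signs strict, is a point the paper leaves implicit.
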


The paradox separates participation from content. Attention causes the
dominant sender to raise its editorial standard, so its visible tail
becomes smaller but more extreme, while the opposing sender relaxes its
standard; the content effect dominates the participation response.

\paragraph{Empirical implication.}
\label{rem:empirical-content}
Let $z$ be an experimental or conditionally exogenous shifter of the upward
attention share, with $r'(z)>0$, and hold fixed the state and the common
evidence law. Define the limiting action
$a(z)=c^*(r(z),\theta)$, the per-item disclosure rates
$s_1(z)=p_1(r(z))$ and $s_2(z)=p_2(r(z))$, and the mean value of disclosed
upward evidence
\[
 e_1(z)=\E_\theta[X_1\mid X_1>c^*(r(z),\theta)].
\]
Proposition~\ref{prop:paradox} implies the joint sign restrictions
\[
 \frac{da}{dz}>0,\qquad
 \frac{ds_1}{dz}<0,\qquad
 \frac{ds_2}{dz}>0,\qquad
 \frac{de_1}{dz}>0.
\]
They also survive averaging over states when the distribution of states and
evidence technologies is invariant to $z$. A mechanical benchmark with
attention-invariant disclosure rules instead has
$ds_1/dz=ds_2/dz=0$.

These restrictions are not point identification of strategic curation.
An editorial rule that responds directly to $z$, or an attention change
that alters the state, evidence composition, or the receiver's processing
rule, can
mimic them. Moreover, estimating $s_1$ and $s_2$ requires the platform to
observe the eligible item pool and each disclosure decision, not merely the
selected public record. Thus failure of a sign restriction rejects the
joint maintained model under a valid attention intervention; satisfaction
of all four restrictions is evidence consistent with the mechanism, not a
unique test of it.

The same mechanism has two useful boundary implications. First, endogenous
movement of the disclosure standard amplifies the bias created by imbalanced
attention relative to a benchmark with standards fixed at the truth. Second,
as one camp's attention vanishes, the narrative approaches the favored
boundary, but a small opposing share pulls it back at a square-root rate.
These results sharpen the mechanism without changing the design problem, so
their formal statements and the uniform example are collected in the
Appendix (Propositions~\ref{prop:amplification},
\ref{prop:extremization}, and~\ref{thm:leverage}).

\subsection{Processing empty slots: a robustness boundary}
\label{sec:missing-processing}

The baseline receiver gives an empty slot zero weight. We now ask whether the
attention-to-disclosure feedback depends on complete neglect of missingness.
Suppose every assigned slot is publicly known and the receiver imputes a
suppressed item at the fixed benchmark $a_0$, but gives that imputation only
a fraction $\lambda\in[0,1]$ of the weight of displayed evidence. Equivalently,
an institution fills the fraction $\lambda$ of every unused slot with neutral
content valued at $a_0$. Given the displayed set $S$, the induced action is
\begin{equation}
 a_{n,\lambda}(S)
 =
 \frac{\eta_n a_0+\sum_{i\in S}v_{ni}X_i
       +\lambda a_0\sum_{i\notin S}v_{ni}}
      {\eta_n+\sum_{i\in S}v_{ni}
       +\lambda\sum_{i\notin S}v_{ni}}.
 \label{eq:partial-action}
\end{equation}
Thus $\lambda=0$ is the baseline and $\lambda=1$ gives every empty slot the
same weight as a displayed item.

The sender's marginal comparison remains one-dimensional. When the current
action is $y$, replacing an imputed empty slot by a displayed value $x$ raises
the action if and only if
\[
 x>t\equiv\lambda a_0+(1-\lambda)y.
\]
The common disclosure bar is therefore $t$. With partial imputation it lies
between the receiver's benchmark and action; only in the baseline does the
bar coincide with the action itself.

\begin{proposition}[Robustness to processing empty slots]
\label{prop:missing-processing}
Fix $\lambda\in[0,1]$ and maintain the baseline sender preferences.

\smallskip
\noindent\textbf{(i) Partial imputation.}
For every $\lambda<1$ and every finite realized pool, a pure-strategy
equilibrium exists, all pure equilibria induce the same action, and---apart
from cutoff ties---Sender 1 discloses observations above a unique common bar
$t_{n,\lambda}$ while Sender 2 discloses observations below it. Under the
large-pool conditions of Proposition~\ref{thm:largepool}, both the limiting
action and limiting disclosure bar are strictly increasing in Sender 1's
attention share. More upward attention therefore makes Sender 1 disclose
less often and more extremely, makes Sender 2 disclose more often, and moves
the action upward.

\smallskip
\noindent\textbf{(ii) Full-imputation boundary.}
As $\lambda$ approaches one, the limiting disclosure bar approaches $a_0$
and its response to attention vanishes. At $\lambda=1$, the unique bar is
$a_0$ for every attention share. The action remains strictly increasing in
Sender 1's share, but disclosure frequencies no longer respond to attention.
\end{proposition}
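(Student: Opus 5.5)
The plan is to reduce the $\lambda<1$ game to the baseline game of Theorem~\ref{thm:finite} with a modified anchor weight, and then treat $\lambda=1$ directly. For fixed $S$, write $y=a_{n,\lambda}(S)$ and use $\sum_i v_{ni}=1$. Clearing the denominator in \eqref{eq:partial-action} gives
\[
 \eta_n(a_0-y)+\lambda(a_0-y)+\sum_{i\in S}v_{ni}\bigl[(X_i-y)-\lambda(a_0-y)\bigr]=0 .
\]
The bracket equals $X_i-t$ with $t=\lambda a_0+(1-\lambda)y$. For $\lambda<1$ we have $a_0-y=(a_0-t)/(1-\lambda)$, so the condition becomes
\[
 \eta_n'(a_0-t)+\sum_{i\in S}v_{ni}(X_i-t)=0,
 \qquad \eta_n'\equiv\frac{\eta_n+\lambda}{1-\lambda}.
\]
Hence $t(S)=\bigl(\eta_n'a_0+\sum_{S}v_{ni}X_i\bigr)/\bigl(\eta_n'+\sum_S v_{ni}\bigr)$. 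This is exactly the baseline action \eqref{eq:action} with anchor weight $\eta_n'$ in place of $\eta_n$. Moreover $y=(t-\lambda a_0)/(1-\lambda)$ is a strictly increasing affine function of $t$.

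Because the map from $t$ to $y$ is affine, each sender's payoff $b_k y$ is a positive affine transformation of $b_k t$ plus a constant. The $\lambda$-game is therefore strategically equivalent to the baseline game with anchor weight $\eta_n'$, including rankings of lotteries. Theorem~\ref{thm:finite} then applies verbatim. It delivers existence of a pure equilibrium and a unique bar $t_{n,\lambda}$, namely the zero of
\[
 \eta_n'(a_0-t)+\sum_{I_1}v_{ni}(X_i-t)^+-\sum_{I_2}v_{ni}(t-X_i)^+ .
\]
It also gives tail disclosure around $t_{n,\lambda}$ and a unique action $(t_{n,\lambda}-\lambda a_0)/(1-\lambda)$. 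This proves the finite-pool part of (i).

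For the large-pool part, $\eta_n'\to\kappa\equiv\lambda/(1-\lambda)\ge 0$, so the anchor does not vanish when $\lambda>0$. The limiting balance function is
\[
 H_\lambda(t;r,\theta)=\kappa(a_0-t)+H(t;r,\theta).
\]
Since $H$ is strictly decreasing and $\kappa\ge0$, $H_\lambda$ is strictly decreasing and has a unique interior zero $t^*_\lambda(r,\theta)$.

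I expect the main obstacle to be convergence $t_{n,\lambda}\xrightarrow{p}t^*_\lambda$ with a non-vanishing anchor, since Proposition~\ref{thm:largepool} is stated for $\eta_n\to0$. I would re-run its argument. The weighted tail sums are monotone and $1$-Lipschitz in $t$, so pointwise weak laws, using $\max_i v_{ni}\to0$ and independence, upgrade to uniform convergence on $\mathcal X$. The deterministic anchor term converges trivially. Strict monotonicity of the limit then pins down the zero.

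The implicit function theorem then gives
\[
 \frac{\partial t^*_\lambda}{\partial r}=
 \frac{\E_\theta[(X_1-t^*_\lambda)^+]+\E_\theta[(t^*_\lambda-X_2)^+]}
 {\kappa+r\Prob_\theta(X_1>t^*_\lambda)+(1-r)\Prob_\theta(X_2<t^*_\lambda)}>0 .
\]
The limiting action $(t^*_\lambda-\lambda a_0)/(1-\lambda)$ is therefore strictly increasing in $r$. The rates $1-F_{1,\theta}(t^*_\lambda)$ fall, the rates $F_{2,\theta}(t^*_\lambda)$ rise, and $\E[X_1\mid X_1>t^*_\lambda]$ rises, because densities are positive. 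These facts yield the remaining claims of (i).

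For (ii), the zero condition gives $|t^*_\lambda-a_0|\le\max\{r\Gamma_1',(1-r)\Gamma_2'\}/\kappa$, where $\Gamma_k'$ are the truncated moments evaluated over $\mathcal X$ and are bounded. Hence $t^*_\lambda\to a_0$ as $\kappa\to\infty$. The displayed derivative is also $O(1/\kappa)\to0$.

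At $\lambda=1$, the first identity reads
\[
 (\eta_n+1)(a_0-y)+\sum_{S}v_{ni}(X_i-a_0)=0 ,
 \qquad\text{so}\qquad
 y=a_0+\frac{\sum_S v_{ni}(X_i-a_0)}{\eta_n+1} .
\]
This is additively separable across items. Sender 1 therefore discloses exactly the items with $X_i>a_0$, and Sender 2 exactly those with $X_i<a_0$, irrespective of $r_n$, up to ties. The bar is thus $a_0$ and disclosure frequencies are attention-invariant. The limiting action is
\[
 a_0+r\,\E_\theta[(X_1-a_0)^+]-(1-r)\,\E_\theta[(a_0-X_2)^+],
\]
whose derivative in $r$ is a sum of two positive tail moments, so the action remains strictly increasing in $r$.
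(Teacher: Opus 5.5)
Your proposal is correct, and for the finite-pool part of (i) it takes a genuinely different and cleaner route than the paper. The paper's proof, given for Proposition~\ref{prop:neutral-backfill}(i), works from scratch. It rewrites the action as in \eqref{eq:partial-rewrite} and uses the one-item comparison to show that best responses are tails around $t(y)=\lambda a_0+(1-\lambda)y$. It then checks the endpoint signs of $K_{n,\lambda}$ directly, which also places the bar in $(\lambda a_0+(1-\lambda)\underline x,\lambda a_0+(1-\lambda)\overline x)$, and verifies the fixed point and its converse by hand, reading global optimality of the tails off the marginal comparison. You instead use the strictly increasing affine change of variables $t=\lambda a_0+(1-\lambda)y$ to turn the $\lambda$-game into the baseline game with anchor weight $(\eta_n+\lambda)/(1-\lambda)$. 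Then $K_{n,\lambda}$ is literally $H_n$ with that weight, and Theorem~\ref{thm:finite} applies wholesale. This gives more than the statement requires. Because payoffs are positive affine transforms of the baseline ones, the full mixed-equilibrium characterization carries over. Global optimality of the tails now rests on the rigorous $G$-function argument of Theorem~\ref{thm:finite} rather than on a marginal comparison alone. The reduction also makes precise the paper's informal remark that imputation ``acts like a persistent anchor at $a_0$.'' Your large-pool step coincides with the paper's: uniform convergence via the grid argument with a nonvanishing anchor $\lambda/(1-\lambda)$, then the implicit function theorem with denominator $\lambda/(1-\lambda)+D_\lambda$. So does your $\lambda=1$ separability argument. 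Two trivial steps are left implicit: the endpoint signs of your limiting function, which are immediate because $a_0$ and $\theta$ are interior, and the identification of $1-F_{1,\theta}(t^*_\lambda)$ and $F_{2,\theta}(t^*_\lambda)$ as the limiting disclosure fractions, exactly as in Proposition~\ref{thm:largepool}. For (ii), your bound $|t^*_\lambda-a_0|=O((1-\lambda)/\lambda)$ suffices. The paper's rescaling $t^*_\lambda=a_0+(1-\lambda)z_\lambda$ additionally shows that the limiting action $y^*_\lambda$ converges to $y^*_1$. That follows from your bound plus continuity of the truncated moments at $a_0$, but this statement does not need it.
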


The result identifies a sharp boundary. The representation paradox survives
whenever displayed evidence receives strictly more weight than an imputed
empty slot; complete neglect of missingness is not required. Full imputation
removes the endogenous participation response, leaving only the mechanical
effect of changing attention weights. Partial imputation also gives the
benchmark persistent influence, so an implementing attention share may fail
to exist when $a_0$ is too far from the state. Appendix
Proposition~\ref{prop:neutral-backfill} gives the exact finite-pool cutoff,
comparative statics, implementation formula, and proof.

This is robustness to processing publicly observed empty slots, not a model
of correctly specified inference from strategic silence. The parameter
$\lambda$ measures the relative weight placed on a known missing observation.
A receiver who uses the equilibrium selection likelihood faces a different
problem, because disclosure then changes beliefs as well as the displayed
average.

\section{Attention Before the State Is Known}
\label{sec:design}

The conditional balancing share is a useful benchmark, not generally a
feasible policy. We now impose the paper's actual timing: attention is
assigned before the state is learned. When state-contingent shares disagree,
the platform must trade errors across states rather than eliminate them
state by state.

\subsection{Ex ante attention under an unknown state}
\label{sec:exante}

Let the state be drawn from a prior $\pi$ on a compact set
$\Theta\subset\mathrm{int}\,\mathcal X$, and let the platform choose one
share from a compact interval
$\mathcal R=[\underline r,\overline r]\subset(0,1)$ before observing
$\theta$. Define its ex ante large-pool loss by
\begin{equation}
  \mathcal L(r)
  =\int_{\Theta}\big[c^{*}(r,\theta)-\theta\big]^{2}\,d\pi(\theta).
  \label{eq:exanteloss}
\end{equation}

\begin{assumption}[Design regularity]
\label{ass:design}
The functions $\Gamma_1(\theta)$ and $\Gamma_2(\theta)$ are continuous on
$\Theta$; the tail probabilities and truncated moments entering
$H(c;r,\theta)$ are jointly continuous in $(c,r,\theta)$; and
$r^{*}(\theta)\in\mathrm{int}\,\mathcal R$ for every $\theta\in\Theta$.
\end{assumption}

These conditions make $c^{*}$ and its derivative in \eqref{eq:cs}
continuous on the compact design domain. They are satisfied by smooth
parametric families whose densities vary continuously with the state.

\begin{theorem}[Optimal attention before the state is known]
\label{thm:exante}
Under Assumptions~\ref{ass:evidence} and~\ref{ass:design}:

\smallskip
\noindent\textbf{(i) When one policy implements every state.}
An ex ante optimum exists. The minimum loss is zero if and only if
$r^{*}(\theta)=\bar r$ for $\pi$-almost every $\theta$ and some constant
$\bar r\in\mathcal R$; then $\bar r$ is the unique optimum.

\smallskip
\noindent\textbf{(ii) When the platform must compromise across states.}
Let
\[
  r_{L}=\operatorname*{ess\,inf}_{\theta\sim\pi}r^{*}(\theta),
  \qquad
  r_{U}=\operatorname*{ess\,sup}_{\theta\sim\pi}r^{*}(\theta).
\]
Every ex ante optimum lies in $[r_{L},r_{U}]$; if $r^{*}(\theta)$ is not
constant $\pi$-almost surely, every optimum lies strictly in $(r_L,r_U)$.

\smallskip
\noindent\textbf{(iii) How the optimal compromise is determined.}
Every interior optimum $r^{EA}$ satisfies
\begin{equation}
  \int_{\Theta}
  \big[c^{*}(r^{EA},\theta)-\theta\big]
  \frac{\partial c^{*}(r^{EA},\theta)}{\partial r}
  \,d\pi(\theta)=0.
  \label{eq:exanteFOC}
\end{equation}
If $\mathcal L$ is strictly convex, this equation identifies the unique
ex ante optimum.

\smallskip
\noindent\textbf{(iv) The value of information to the platform.}
If the platform observes a signal $Z$ before choosing attention, it solves
the same problem under $\pi(\cdot\mid Z)$, signal by signal. A
Blackwell-more-informative signal weakly lowers attainable ex ante loss.
\end{theorem}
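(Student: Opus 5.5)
The proof rests on two facts from the earlier results. First, for each fixed $\theta$, Proposition~\ref{thm:largepool} gives that $c^*(r,\theta)$ is strictly increasing and $C^1$ in $r$. Second, Theorem~\ref{thm:design} gives that $c^*(r,\theta)=\theta$ exactly at $r=r^*(\theta)$. Together these yield the sign identity
\[
\operatorname{sign}\bigl(c^*(r,\theta)-\theta\bigr)=\operatorname{sign}\bigl(r-r^*(\theta)\bigr).
\]
Under Assumption~\ref{ass:design}, the implicit function theorem applied to $H(c;r,\theta)=0$, with $\partial_c H<0$ and the denominator of \eqref{eq:cs} positive, makes $c^*$ and $\partial c^*/\partial r$ jointly continuous on the compact set $\mathcal R\times\Theta$. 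Hence both are bounded there, and $\partial c^*/\partial r$ is bounded below by some $\kappa>0$.

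\textbf{Part (i).} Dominated convergence shows that $\mathcal L$ is continuous on the compact interval $\mathcal R$, so a minimizer exists. If $\mathcal L(\bar r)=0$, then $c^*(\bar r,\theta)=\theta$ for $\pi$-a.e.\ $\theta$, and the uniqueness statement in Theorem~\ref{thm:design} gives $r^*(\theta)=\bar r$ a.e. Conversely, if $r^*(\theta)=\bar r$ a.e., the loss at $\bar r$ is zero. For any $r\neq\bar r$, strict monotonicity gives $c^*(r,\theta)\neq\theta$ on a set of full measure, so $\mathcal L(r)>0$, and $\bar r$ is the unique optimum. Here $\bar r\in\mathcal R$ by Assumption~\ref{ass:design}.

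\textbf{Part (ii).} Take $r<r_L$. For a.e.\ $\theta$ we have $r<r^*(\theta)$, so $c^*(r,\theta)<\theta$. Moving $r$ up toward $r_L$ raises $c^*(r,\theta)$ while keeping it at or below $\theta$, so each integrand $[c^*-\theta]^2$ strictly decreases. Therefore $\mathcal L(r_L)<\mathcal L(r)$, which excludes every $r<r_L$; the case $r>r_U$ is symmetric. For strict interiority, differentiate under the integral sign, which dominated convergence permits because the derivatives are bounded:
\[
\mathcal L'(r)=2\int_\Theta (c^*-\theta)\,\partial_r c^*\,d\pi .
\]
At $r=r_L$ every integrand is $\le 0$ a.e. If $r^*$ is not a.e.\ constant, then $\pi(r^*>r_L)>0$, and on that set the integrand is strictly negative because $\partial_r c^*\ge\kappa>0$. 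So $\mathcal L'(r_L)<0$, and $r_L$ is not optimal; the argument at $r_U$ is symmetric. Note that $r_L,r_U\in\mathcal R$ because $r^*$ takes values in $\operatorname{int}\mathcal R$.

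\textbf{Part (iii).} The formula for $\mathcal L'$ gives the first-order condition immediately: an interior minimizer satisfies $\mathcal L'(r^{EA})=0$, which is \eqref{eq:exanteFOC}. If $\mathcal L$ is strictly convex, its minimizer is unique, and $\mathcal L'$ is strictly increasing, so it has at most one zero. That zero is therefore the unique optimum.

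\textbf{Part (iv).} After observing $Z$, the platform chooses $r(Z)$ to minimize the conditional loss. By the tower property, the attainable loss equals $\E[\min_r \mathcal L_{\pi(\cdot\mid Z)}(r)]$. The map $\mu\mapsto\min_r\int[c^*(r,\theta)-\theta]^2\,d\mu$ is a minimum of linear functionals in $\mu$, hence concave. A Blackwell-more-informative signal induces a posterior distribution that is a mean-preserving spread of the less informative one. Jensen's inequality for concave functions then gives a weakly lower expected minimum loss. Measurability of $r(Z)$ follows from a measurable selection argument, using that $\mathcal L_\mu$ is continuous and $\mathcal R$ is compact.

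The main obstacle is establishing the joint continuity and the positive lower bound $\kappa$ on $\partial_r c^*$ over $\mathcal R\times\Theta$ from Assumption~\ref{ass:design}. Part (ii) needs $\kappa$ for strict interiority, and parts (ii) and (iii) need it to justify differentiating under the integral. I would get it from compactness: the denominator in \eqref{eq:cs} is a continuous, strictly positive function of $(r,\theta)$ because densities are positive on the interior and $c^*$ is interior. The numerator of \eqref{eq:cs} is continuous and strictly positive. Their ratio therefore attains a positive minimum on the compact domain.
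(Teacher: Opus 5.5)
Your proof is correct and follows the paper's argument for parts (i)--(iii). As in the paper, you use continuity and compactness for existence, the sign identity from Theorem~\ref{thm:design} for the zero-loss characterization and for excluding shares outside $[r_L,r_U]$, and the strict sign of $\mathcal L'$ at $r_L$ and $r_U$ when $r^*$ is not a.s.\ constant. For (iv) you use concavity of the posterior value and the mean-preserving-spread form of Blackwell's order, whereas the paper uses the equivalent garbling argument that a more informative signal can replicate any rule based on a less informative one; your route is equivalent and somewhat more explicit.
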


The theorem separates conditional correction from implementable policy.
It gives a simple test: one fixed share removes limiting bias on a set
of states if and only if the moment ratio
$\Gamma_2(\theta)/\Gamma_1(\theta)$ is constant on that set. Location families
pass this test because their directional moments are state independent.
More generally, no fixed share can eliminate large-pool bias in every
state. This is a characterization of what every optimum must satisfy, not a
general convexity result. Strict convexity is an additional sufficient
condition, not a consequence established by
Assumptions~\ref{ass:evidence} and~\ref{ass:design}. The next subsection
supplies primitive conditions under which the ex ante optimum is unique and
solves it exactly. For small dispersion of $r^{*}(\theta)$, a first-order expansion
around the state-contingent shares gives the useful approximation
\[
  r^{EA}
  \simeq
  \frac{\int \kappa(\theta)^{2}r^{*}(\theta)\,d\pi(\theta)}
       {\int \kappa(\theta)^{2}\,d\pi(\theta)},
  \qquad
  \kappa(\theta)
  =\left.\frac{\partial c^{*}(r,\theta)}{\partial r}
   \right|_{r=r^{*}(\theta)} .
\]
States in which the narrative is more sensitive to attention receive more
weight in the ex ante compromise.

\subsection{A fully solved two-state design problem}
\label{sec:two-state}

The general theorem deliberately does not assume that the ex ante loss is
strictly convex. We now give primitive evidence technologies for which the
state-contingent policies conflict but the second-best problem nevertheless
has a unique solution.

There are two states, centered around $\theta_0$:
$\theta_L=\theta_0-\Delta$ and $\theta_H=\theta_0+\Delta$, with prior
probability $p$ on the high state. In either state, each camp observes a
mean-unbiased uniform signal. The asymmetry is in dispersion. Sender 1 has
the more dispersed evidence technology in the low state, while Sender 2 has
the more dispersed technology in the high state. Thus the camp that the
platform should restrain changes with the state.

Formally, let $\mathcal X=[\theta_0-M,\theta_0+M]$, fix $s>0$ and
$1<k<\sqrt3$ such that $\Delta+k^2s<M$, and write
\[
 X_{1,j}=\theta_j+\sigma_{1,j}Z_1,
 \qquad
 X_{2,j}=\theta_j+\sigma_{2,j}Z_2,
\]
where $Z_1$ and $Z_2$ are independent draws from $U[-1,1]$. The scale pairs
reverse across states:
\[
 (\sigma_{1,L},\sigma_{2,L})=(k^2s,s),
 \qquad
 (\sigma_{1,H},\sigma_{2,H})=(s,k^2s).
\]
The parameter $k$ measures the cross-state reversal in relative dispersion.

\begin{proposition}[Unique ex ante attention in a two-state economy]
\label{prop:primitiveexante}
In the two-state economy above, suppose $p\in(0,1)$ and $\mathcal R$ contains
$[1/(1+k^2),k^2/(1+k^2)]$ in its interior, and define the attention-odds
coordinate $t=\sqrt{r/(1-r)}$. Then:

\smallskip
\noindent\textbf{(i) The state-contingent targets conflict.}
For every $r\in[1/(1+k^2),k^2/(1+k^2)]$,
\begin{align}
 c^{*}(r,\theta_L)-\theta_L
 &=d_L(t)=sk\frac{kt-1}{t+k},
 & r^{*}(\theta_L)&=\frac{1}{1+k^2},
 \label{eq:primitiveDL}\\
 c^{*}(r,\theta_H)-\theta_H
 &=d_H(t)=sk\frac{t-k}{kt+1},
 & r^{*}(\theta_H)&=\frac{k^2}{1+k^2}.
 \label{eq:primitiveDH}
\end{align}
No fixed attention share implements both states.

\smallskip
\noindent\textbf{(ii) The ex ante second best is unique.}
The platform's unique optimum is
$r^{EA}_p=t_p^2/(1+t_p^2)$, where $t_p\in(1/k,k)$ is the unique solution to
\begin{equation}
 \frac{p}{1-p}
 =R_k(t_p)
 \equiv
 \frac{(kt_p-1)(kt_p+1)^3}
      {(k-t_p)(t_p+k)^3}.
 \label{eq:primitiveFOC}
\end{equation}
Both $t_p$ and $r^{EA}_p$ are strictly increasing in the prior probability
of the high state. As $p$ approaches zero or one, the ex ante policy
approaches the corresponding state-contingent share; under the symmetric
prior, $r^{EA}_{1/2}=1/2$.

\smallskip
\noindent\textbf{(iii) Information has strictly positive value.}
A platform that observes the state uses the two state-contingent shares and
attains zero limiting loss. Without that information, its loss is strictly
positive. At $p=1/2$, the value of perfect state information is
\begin{equation}
 \mathcal L(1/2)
 =s^2k^2\left(\frac{k-1}{k+1}\right)^2,
 \label{eq:primitiveVOI}
\end{equation}
which is strictly increasing in the dispersion reversal $k$.

\pagebreak[3]
\smallskip
\noindent\textbf{(iv) The conclusions are robust under the maintained evidence assumptions.}
Fix any compact subset $K$ of the interior of the stated parameter domain.
There is a sequence of smooth densities, strictly positive on $\mathcal X$
and satisfying Assumption~\ref{ass:evidence}, whose induced truncated-moment
maps converge in $C^{2}$, on a common neighborhood of the relevant
equilibrium paths, to those of the uniform economy. For every sufficiently
close member of this sequence, the ex ante optimum is unique. Uniformly on
$K$, the state-contingent shares, equilibrium actions, unique ex ante
optima, and attainable losses converge to the expressions above. Hence the
strict policy ordering and positive value of information persist in a
$C^{2}$-open set of smooth, full-support economies satisfying the maintained
assumptions.
\end{proposition}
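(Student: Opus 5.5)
The plan is to compute everything in closed form from the balance function \eqref{eq:H}, working in the odds coordinate $t=\sqrt{r/(1-r)}$. For a uniform law on $[\theta-\sigma,\theta+\sigma]$, with $d=c-\theta\in[-\sigma,\sigma]$, the truncated moments are
\[
\E[(X-c)^+]=\frac{(\sigma-d)^2}{4\sigma},\qquad \E[(c-X)^+]=\frac{(\sigma+d)^2}{4\sigma}.
\]

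\emph{Step 1: the two state-contingent deviations.} In the low state, $(\sigma_1,\sigma_2)=(k^2s,s)$, and $H=0$ reads $r(k^2s-d)^2/(4k^2s)=(1-r)(d+s)^2/(4s)$. Both bases are positive on the relevant range, so taking square roots gives $t(k^2s-d)/k=d+s$. Hence $d_L(t)=sk(kt-1)/(t+k)$. The high state is the mirror case and gives $d_H(t)=sk(t-k)/(kt+1)$. I must check that these roots lie where the quadratic formulas are valid, i.e.\ $|d|\le s$ for the less dispersed camp. For $t\le k$ this reduces to $k(k^2-1)\le 2k$, i.e.\ $k^2\le 3$, which is exactly the hypothesis $k<\sqrt3$. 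Symmetrically it holds for $t\ge 1/k$, and $\Delta+k^2s<M$ keeps the supports inside $\mathcal X$. Setting $d_L=0$ and $d_H=0$ gives $t=1/k$ and $t=k$, that is, $r^*=1/(1+k^2)$ and $r^*=k^2/(1+k^2)$. These agree with Theorem~\ref{thm:design}, and since they differ, part (i) follows.

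\emph{Step 2: the ex ante problem.} Because $r\mapsto t$ is a strictly increasing bijection, I minimize $\ell(t)=p\,d_H(t)^2+(1-p)\,d_L(t)^2$. By Proposition~\ref{thm:largepool}, both $c^*(\cdot,\theta_j)$ are increasing in $r$. For $t<1/k$ both deviations are negative and for $t>k$ both are positive, so the loss strictly decreases toward $[1/k,k]$. This localizes every optimum to that interval, consistent with Theorem~\ref{thm:exante}(ii), where the closed forms are valid. There, $d_L'=sk(1+k^2)/(t+k)^2$ and $d_H'=sk(1+k^2)/(kt+1)^2$. The sign of $\ell'(t)$ is therefore the sign of
\[
(1-p)\frac{kt-1}{(t+k)^3}-p\,\frac{k-t}{(kt+1)^3},
\]
which vanishes exactly when $p/(1-p)=R_k(t)$; this is \eqref{eq:primitiveFOC}.

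The key step for uniqueness is that $R_k$ increases strictly from $0$ to $\infty$ on $(1/k,k)$. Its log-derivative is
\[
\frac{k}{kt-1}+\frac{1}{k-t}+\frac{3k}{kt+1}-\frac{3}{t+k},
\]
and the last two terms combine to $3(k^2-1)/\bigl((kt+1)(t+k)\bigr)>0$. Hence $\ell'$ changes sign exactly once, from negative to positive, so the optimum $t_p$ is unique. Monotonicity of $t_p$ and $r^{EA}_p$ in $p$, and the limits as $p\to0,1$, follow from inverting $R_k$. At $p=1/2$ we have $R_k(1)=1$, so $t_{1/2}=1$ and $r^{EA}_{1/2}=1/2$.

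\emph{Step 3: value of information.} Observing the state lets the platform use $r^*(\theta_j)$ and attain zero loss. Without it, the loss is positive because no single $t$ zeros both $d_L$ and $d_H$. At $t=1$, $d_L(1)^2=d_H(1)^2=s^2k^2\bigl((k-1)/(k+1)\bigr)^2$, which gives \eqref{eq:primitiveVOI}. This expression is increasing in $k>1$ because both $k$ and $(k-1)/(k+1)$ are increasing and positive.

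\emph{Step 4 (main obstacle): robustness (iv).} I would mollify the uniform densities, mixing in a small full-support component and recentering so that the mean stays at $\theta_j$. The truncated-moment maps $c\mapsto\E[(X-c)^\pm]$ are twice integrated distribution functions, so they converge in $C^2$ on a neighborhood of the equilibrium paths. The limiting economy has nonzero $\partial_c H$ and is nondegenerate at its optimum: $\ell''(t_p)>0$, which follows from the strict single crossing in Step 2. The implicit function theorem then transfers $c^*$ and $\partial_rc^*$ in $C^1$, so the perturbed losses converge in $C^2$ uniformly on $K$. Uniqueness survives by combining two facts. Locally, the strict second-order condition persists near $t_p$. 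Globally, uniform convergence of the losses, together with a strictly positive gap between $\ell(t_p)$ and the loss away from a neighborhood of $t_p$, rules out competing minimizers. Carrying out the uniformity over the compact parameter set $K$ is the delicate part.
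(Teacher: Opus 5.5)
Your treatment of (i)--(iii) is the paper's argument step for step. It uses the same uniform truncated moments, the square-root reduction of the balance equation, localization to $[1/k,k]$ by monotonicity of $c^*$ in $r$, the factorization of the loss derivative, and the log-derivative of $R_k$.

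The one step that does not hold as written is in (iv): $\ell''(t_p)>0$ does not follow from strict single crossing of $\ell'$. For example, $\ell'(t)=(t-t_p)^3$ changes sign strictly but has zero slope at $t_p$. Nondegeneracy is exactly what your local strict-convexity step needs. A $C^2$-small perturbation of a degenerate minimum, such as $\ell_\varepsilon'=(t-t_p)^3-\varepsilon(t-t_p)$, produces two global minimizers. The repair uses what you already computed. Set $Q=(1-p)(kt-1)/(t+k)^3$ and $P=p(k-t)/(kt+1)^3$. Then $\ell'\propto Q-P$ and $Q/P=\tfrac{1-p}{p}R_k$. At $t_p$, where $Q=P>0$, this gives $Q'-P'=Q\,(\log R_k)'>0$. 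This is how the paper obtains strict curvature.

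Your global step in (iv) combines a uniform value gap outside a neighborhood of $t_p$ with strict convexity inside it. That is a valid alternative to the paper's route, which uses the exterior signs of the first derivative between the state-contingent shares plus Theorem~\ref{thm:exante}(ii) applied to the smooth approximants. However, your $C^2$ control holds only near the equilibrium paths. So you still need one of two things:
\begin{itemize}
\item the paper's localization of perturbed optima between the perturbed state-contingent shares; or
\item a separate check that the losses converge uniformly on all of $\mathcal R$. This is true, since the truncated moments converge uniformly in $c$ and $H$ is strictly decreasing in $c$.
\end{itemize}

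You should also state that $C^2$ convergence of the mollified moment maps requires the paths to stay a positive distance from the support edges. The second derivative of those maps is the density, and the uniform density jumps at the edges. This is where the strict inequality $k<\sqrt3$, holding uniformly over the compact set $K$, is needed rather than $k^2\le 3$.
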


The proof is in Appendix~\ref{app:proof:primitiveexante}. Part (iv) connects
the transparent uniform calculation to the maintained evidence assumptions;
$C^2$ control is used to preserve uniqueness, not merely convergence of the
optimal value.

The function $R_k$ maps the attention odds into the prior odds. It rises
from zero to infinity on $(1/k,k)$, which is why the first-order condition
selects one and only one compromise. The policy moves toward the high-state
target as that state becomes more likely, but it reaches the target only when
the other state vanishes from the prior.

For example, when $k=3/2$, the two conditional shares are approximately
$0.308$ and $0.692$. The optimal fixed share is $0.5$ under equal prior odds
and approximately $0.610$ when the high state has probability $0.8$. Under
equal odds, the two state errors have opposite signs and magnitude $0.3s$.
Observing the state eliminates both errors; assigning equal attention before
the state is known does not.

\subsection{Finite evidence pools}
\label{sec:finitesample}

Moment balance removes limiting bias, but a finite selected record is not
equivalent to complete disclosure. The equilibrium cutoff is an implicit
estimator: sampling error changes both the tail moments and the observations
classified as visible. A risk-minimizing platform must therefore trade three
forces---endogenous cutoff bias, the anchor, and sampling variance.

For this section the predetermined within-camp schedule is uniform. Let
$n_k=|I_k|$, let $n=n_1+n_2$, and suppose
$n_k/n\to\alpha_k\in(0,1)$ for $k\in\{1,2\}$, with
$\alpha_1+\alpha_2=1$. At share $r$ the item weights are
\[
 v_{ni}=\frac{r}{n_1}\quad(i\in I_1),
 \qquad
 v_{ni}=\frac{1-r}{n_2}\quad(i\in I_2).
\]
The fixed anchor may remain relevant at the same order as finite-pool bias:
\begin{equation}
 n\eta_n\longrightarrow\nu\in[0,\infty).
 \label{eq:anchorrate}
\end{equation}
Restrict the platform's share to a compact interval
$\mathcal R_0\subset(0,1)$, and let $\mu$ be either a known state or a
compactly supported prior. Its exact and large-pool risks are
\begin{align*}
 \mathcal W_{n,\mu}(r)
 &=\int \E_\theta[(c_n(r)-\theta)^2]\,d\mu(\theta),\\
 \mathcal L_\mu(r)
 &=\int[c^*(r,\theta)-\theta]^2\,d\mu(\theta).
\end{align*}
The theorem below will require a unique, well-curved interior minimizer of
$\mathcal L_\mu$; denote that candidate by $r_\mu$. Before stating the
theorem, three economically interpretable objects are enough. Fix
$(r,\theta)$ and abbreviate $c^*=c^*(r,\theta)$. Let
$p_1=\Prob_\theta(X_1>c^*)$ and
$p_2=\Prob_\theta(X_2<c^*)$ be the fractions of the two pools that are
visible at the limiting cutoff. Their attention-weighted sum,
\[
 D^*(r,\theta)=rp_1+(1-r)p_2,
\]
is the effective slope of the balance equation: a small $D^*$ makes the
equilibrium action sensitive to sampling error. In particular, its
asymptotic variance is
\begin{equation}
 V^C(r,\theta)
 =\frac{1}{D^*(r,\theta)^2}
 \left\{
 \frac{r^2}{\alpha_1}\operatorname{Var}_\theta((X_1-c^*)^+)
 +\frac{(1-r)^2}{\alpha_2}\operatorname{Var}_\theta((c^*-X_2)^+)
 \right\}.
 \label{eq:curatedvariance-main}
\end{equation}

Sampling also shifts the mean cutoff. Under the regularity conditions below,
write $B^{C,\nu}(r,\theta)$ for the coefficient in
\[
 \E_\theta[c_n(r)-c^*(r,\theta)]
 =\frac{B^{C,\nu}(r,\theta)}{n}+o(n^{-1}).
\]
This bias has two distinct sources:
\begin{equation}
 B^{C,\nu}(r,\theta)
 =B^{C,0}(r,\theta)
  +\frac{\nu[a_0-c^*(r,\theta)]}{D^*(r,\theta)}.
 \label{eq:biasdecomposition-main}
\end{equation}
The first term is generated endogenously because sampling error changes both
the empirical balance equation and which observations cross the cutoff. The
second is the residual pull of the receiver's fixed benchmark. Appendix
Section~\ref{sec:finitesample-appendix} gives the primitive formula for
$B^{C,0}$. Finally, with $d(r,\theta)=c^*(r,\theta)-\theta$, define
\[
 Q_\nu(r,\theta)
 =V^C(r,\theta)+2d(r,\theta)B^{C,\nu}(r,\theta),
 \qquad
 \mathcal Q_{\mu,\nu}(r)=\int Q_\nu(r,\theta)\,d\mu(\theta).
\]
Thus $\mathcal Q_{\mu,\nu}$ is the first finite-pool adjustment to the
platform's large-pool loss.

The following conditions keep the limiting cutoff away from support
boundaries, ensure that a nonnegligible record remains visible, and permit us
to differentiate the exact risk as observations cross the moving cutoff.

\begin{assumption}[Finite-pool regularity]
\label{ass:finitepool}
Let $\mathcal K=\mathcal R_0\times\operatorname{supp}\mu$ and fix an
interior candidate $r_\mu$.

\smallskip
\noindent\textbf{(i) Risk expansion.}
On $\mathcal K$, the limiting cutoffs lie in a common compact subset of the
interior of $\mathcal X$, the visible attention mass $D^*$ is uniformly
bounded away from zero, and the two evidence densities are locally
continuously differentiable with uniform bounds. The camp fractions satisfy
$n_k/n\to\alpha_k\in(0,1)$ and the anchor satisfies~\eqref{eq:anchorrate}.

\smallskip
\noindent\textbf{(ii) Policy derivative.}
On a compact policy neighborhood $\mathcal N_\mu$ of $r_\mu$, the densities
are locally three times continuously differentiable in the evidence
realization, jointly continuous in the state, and uniformly bounded together
with those derivatives. In addition,
$\eta_n^{-1}=O(n^\rho)$ for some finite $\rho$.
\end{assumption}

Appendix Section~\ref{sec:finitesample-appendix} records the corresponding
uniform neighborhoods and bounds used in the proofs. Economically, the
assumption rules out boundary cutoffs, vanishing visible records, and
excessively rough density paths. It is not needed for the finite disclosure
equilibrium, large-pool moment balance, or ex ante design theorem.

Appendix Proposition~\ref{thm:finitewelfare} derives the sampling
distribution and the uniform expansion
\begin{equation}
 \mathcal W_{n,\mu}(r)
 =\mathcal L_\mu(r)+\frac{\mathcal Q_{\mu,\nu}(r)}{n}+o(n^{-1}),
 \label{eq:integratedrisk-main}
\end{equation}
where $\mathcal Q_{\mu,\nu}$ collects the three forces defined above. Let
$\widetilde r_{n,\mu}$ minimize the displayed second-order criterion
$\mathcal L_\mu+\mathcal Q_{\mu,\nu}/n$. A value expansion alone does not
imply that the optimizer of exact risk has the same displacement: two
policies separated by order $1/n$ differ in welfare only at order $1/n^2$.
The next theorem establishes that the exact and approximate policies do in
fact agree to first order.

\begin{theorem}[Exact finite-pool attention correction]
\label{thm:exactpolicy}
Under Assumptions~\ref{ass:evidence} and~\ref{ass:finitepool}, suppose the
large-pool loss $\mathcal L_\mu$ has a unique interior minimizer $r_\mu$ with
$\mathcal L_\mu''(r_\mu)>0$, and $\mathcal L_\mu$ and
$\mathcal Q_{\mu,\nu}$ are continuous on $\mathcal R_0$, and suppose
$\mathcal L_\mu$ and $\mathcal Q_{\mu,\nu}$ are twice continuously
differentiable on $\mathcal N_\mu$. Then every sequence of exact risk-minimizing
shares
$\widehat r_{n,\mu}\in\arg\min_{r\in\mathcal R_0}
\mathcal W_{n,\mu}(r)$ satisfies
\begin{equation}
 n(\widetilde r_{n,\mu}-r_\mu),\ 
 n(\widehat r_{n,\mu}-r_\mu)
 \longrightarrow
 -\frac{\mathcal Q_{\mu,\nu}'(r_\mu)}
        {\mathcal L_\mu''(r_\mu)},
 \qquad
 n(\widehat r_{n,\mu}-\widetilde r_{n,\mu})\longrightarrow0.
 \label{eq:exactpolicyshift}
\end{equation}
The conclusion does not require eventual uniqueness: every exact optimizer
has the same first-order displacement.

If the state $\theta$ is known to the platform, then $r_\mu=r^*(\theta)$.
Let
$\kappa(\theta)=\partial_r c^*(r^*(\theta),\theta)>0$ denote the sensitivity
of the limiting narrative to attention at that share. The correction
separates into three terms:
\begin{equation}
 n[\widehat r_n(\theta)-r^*(\theta)]
 \longrightarrow
 \underbrace{-\frac{B^{C,0}(r^*,\theta)}{\kappa(\theta)}}_{
   \text{selection-induced cutoff bias}}
 -\underbrace{\frac{\nu(a_0-\theta)}
  {\kappa(\theta)D^*(r^*,\theta)}}_{
   \text{fixed receiver benchmark}}
 -\underbrace{\frac{\partial_r V^C(r^*,\theta)}
                         {2\kappa(\theta)^2}}_{
   \text{sampling variance}}.
 \label{eq:exactbiasvarianceshift}
\end{equation}
\end{theorem}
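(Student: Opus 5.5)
The plan is a perturbation argument for argmins. I treat the expansion \eqref{eq:integratedrisk-main} from Appendix Proposition~\ref{thm:finitewelfare} as available, uniform on $\mathcal R_0$, together with a derivative-level version on $\mathcal N_\mu$. The proof then has four steps.

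\textbf{Step 1: consistency.} Because the expansion holds uniformly and $\mathcal Q_{\mu,\nu}$ is continuous on the compact set $\mathcal R_0$, we have $\sup_{\mathcal R_0}|\mathcal W_{n,\mu}-\mathcal L_\mu|=O(1/n)$. The minimizer $r_\mu$ of $\mathcal L_\mu$ is unique and interior, and $\mathcal R_0$ is compact. A standard argmin-consistency argument then gives $\widehat r_{n,\mu}\to r_\mu$ and $\widetilde r_{n,\mu}\to r_\mu$. Hence both sequences eventually lie in $\mathcal N_\mu$ and are interior.

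\textbf{Step 2: the approximate policy.} For $\widetilde r_{n,\mu}$ the first-order condition reads
\[
\mathcal L_\mu'(\widetilde r)+\mathcal Q_{\mu,\nu}'(\widetilde r)/n=0.
\]
A mean-value expansion of $\mathcal L_\mu'$ around $r_\mu$, using $\mathcal L_\mu'(r_\mu)=0$, $\mathcal L_\mu''(r_\mu)>0$ and continuity of $\mathcal Q_{\mu,\nu}'$, yields
\[
n(\widetilde r-r_\mu)\to-\mathcal Q_{\mu,\nu}'(r_\mu)/\mathcal L_\mu''(r_\mu).
\]

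\textbf{Step 3: the exact optimizer (main obstacle).} As the text notes, a value expansion cannot locate $\widehat r$ to order $1/n$. I therefore need the derivative expansion
\[
\mathcal W_{n,\mu}'(r)=\mathcal L_\mu'(r)+\mathcal Q_{\mu,\nu}'(r)/n+o(n^{-1})
\]
uniformly on $\mathcal N_\mu$. I would build it from the implicit structure of $c_n$. Since $H_n(c_n(r);r)=0$ and $H_n$ is piecewise linear and strictly decreasing in $c$, we get
\[
\partial_r c_n=\frac{\partial_r H_n}{D_n},\qquad D_n=\eta_n+\frac{r}{n_1}\#\{X_i>c_n\}+\frac{1-r}{n_2}\#\{X_i<c_n\},
\]
with $\partial_r H_n$ the difference of the two empirical truncated moments. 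This formula holds almost surely, because ties occur with probability zero under atomless laws, and $c_n$ is Lipschitz in $r$.

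This lets me differentiate under the expectation:
\[
\mathcal W_{n,\mu}'(r)=\int 2\E_\theta[(c_n-\theta)\,\partial_r H_n/D_n]\,d\mu.
\]
Observations crossing the moving cutoff enter only through the count in $D_n$. I would linearize $c_n-c^*$, $\partial_rH_n$ and $D_n$ around their population values, and expand to second order the empirical-process terms, namely truncated means and counts evaluated at the random cutoff. The $C^3$ densities in Assumption~\ref{ass:finitepool}(ii) control the smoothed remainder of the count process near $c^*$.

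On the event where $D_n$ is bounded away from zero, the remainders are $o(1/n)$. The complementary event has exponentially small probability. The anchor bound $\eta_n^{-1}=O(n^\rho)$ keeps $1/D_n$ polynomially bounded, so that event contributes $o(1/n)$ in expectation. I would then identify the $1/n$ coefficient with $\mathcal Q_{\mu,\nu}'$, either by checking that it equals the derivative of $V^C+2dB^{C,\nu}$ or by uniform convergence plus interchange of limits, since $\mathcal Q$ is $C^2$.

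\textbf{Step 4: conclusion.} Given the derivative expansion, the exact first-order condition $\mathcal W_{n,\mu}'(\widehat r)=0$ differs from the approximate one only by $o(1/n)$. A mean-value expansion with $\mathcal L_\mu''$ bounded away from zero on $\mathcal N_\mu$ then gives $n(\widehat r-\widetilde r)\to0$, and this holds for every minimizer, so uniqueness is not required.

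For the known-state formula \eqref{eq:exactbiasvarianceshift}, set $\mu=\delta_\theta$. Since $d(r^*,\theta)=0$,
\[
\mathcal L''(r^*)=2\kappa^2,\qquad \mathcal Q'(r^*)=\partial_rV^C+2\kappa B^{C,\nu}.
\]
Substituting the decomposition \eqref{eq:biasdecomposition-main} with $c^*=\theta$ gives the three stated terms.
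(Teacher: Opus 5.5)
Your Steps 1, 2 and 4 and the point-mass substitution match the paper's argument. You cannot, however, take a derivative-level risk expansion as given: the paper proves Proposition~\ref{prop:riskderivative} jointly with this theorem. Everything therefore rests on Step 3, and there the proposal has a concrete hole.

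With $\delta_n=c_n-c^*$ and $d=c^*-\theta$, your score form reads $\partial_rR_n=2d\kappa+2\kappa\,\E_\theta\delta_n+2d\,\E_\theta\dot\delta_n+2\E_\theta[\delta_n\dot\delta_n]$. Under a nondegenerate prior, $d(r_\mu,\theta)\neq0$ on a set of positive $\mu$-measure. You therefore need $\E_\theta\dot\delta_n=\partial_rB^{C,\nu}/n+o(n^{-1})$, i.e.\ the $r$-derivative of the bias. Linearizing $\dot c_n=A_n(c_n)/D_n$ puts $r(P_{n1}-P_1)(\ind{X>c_n}-\ind{X>c^*})$, and its Sender~2 analogue, into the denominator. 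It enters $\dot\delta_n$ linearly with coefficient $-A(c^*)/D^{*2}$, and its mean is not a remainder.

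To see this, take $i\in I_1$. Adding $X_i$ to the leave-one-out sample raises the root by $\Delta_i\approx r(X_i-c_n^{(i)})^+/(n_1D^*)$ but never past $X_i$. Hence $\ind{X_i>c_n}=\ind{X_i>c_n^{(i)}}$, with $c_n^{(i)}$ independent of $X_i$. It follows that $\E_\theta[(P_{n1}-P_1)\ind{X>c_n}]=\E_\theta[F_{1,\theta}(c_n)-F_{1,\theta}(c_n^{(i)})]\approx f_{1,\theta}(c^*)\,rU(c^*)/(n_1D^*)$, a nonzero order-$1/n$ term. Multiplied by $-rA(c^*)/D^{*2}$, it matches to leading order the piece $-r^2U(c^*)f_{1,\theta}(c^*)\kappa/(\alpha_1D^{*2}n)$ of $n^{-1}\partial_r(C/D^{*2})$ that comes from $\partial_rF_{1,\theta}(c^*)$.

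None of your tools produces or controls this term:
\begin{itemize}
\item a Taylor expansion cannot generate it, since an empirical indicator has no derivative;
\item the local modulus of the empirical process bounds it only by $O(n^{-3/4}\log n)$;
\item the $C^3$ density bounds smooth only the population counterpart, not this data-dependent fluctuation.
\end{itemize}
So the claim that the remainders are $o(1/n)$ fails exactly for the crossing term. Your fallback, identifying the coefficient by interchange of limits, presupposes the uniform derivative expansion that is still missing.

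The missing idea is a leave-one-out (self-influence) treatment of observations that cross the moving cutoff. The paper avoids meeting it inside the count. It writes $R_n=d^2+2d\,m_n+s_n$ and differentiates the expectation of the named root expansion \eqref{eq:C1-named-root-expansion}. That expansion is written in hinge terms, so crossings enter only through the second-order loss $G_n(\delta_n)$. Lemma~\ref{lem:C1crossing} bounds $\partial_r\E G_n(\delta_n)$ using leave-one-out roots, the self-influence bounds \eqref{eq:plus-self-influence}--\eqref{eq:minus-self-influence}, and the moving-boundary integral \eqref{eq:plus-observation-integral-derivative}. Lemma~\ref{lem:C1localroot} then removes $\E[u_n\rho_n]$ by exact leave-one-out centering.

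Your score route could be completed by computing the mean of the random-cutoff count through $c_n^{(i)}$ as above, but as written it is not. The route does buy something for the known-state formula \eqref{eq:exactbiasvarianceshift}, because there the hard term is not needed:
\begin{itemize}
\item The first-order condition at $\widehat r_n(\theta)$ forces $d=O(1/n)$, so $2d\,\E_\theta\dot\delta_n$ is negligible once $\E_\theta\dot\delta_n=o(1)$.
\item $2\kappa\,\E_\theta\delta_n$ is covered by the uniform value-level expansion \eqref{eq:biasexpansion}.
\item $2\E_\theta[\delta_n\dot\delta_n]=\partial_rV^C/n+o(n^{-1})$ needs only $\|\dot\delta_n-\partial_r(e_n/D^*)\|_{L^2}=o(n^{-1/2})$. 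The local modulus delivers this, because in $\E_\theta[\delta_n\dot\delta_n]$ the crossing count is multiplied by $\delta_n=O_{L^2}(n^{-1/2})$.
\end{itemize}
The self-influence analysis is indispensable only for the general-prior displacement in \eqref{eq:exactpolicyshift}.
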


The first term offsets the bias created by estimating a cutoff from the same
sample that the cutoff selects. The second offsets the receiver's fixed
benchmark: a benchmark below the state calls for more upward attention, and
a benchmark above the state for less. The third reallocates attention toward
policies with lower sampling variance. These forces need not point in the
same direction, even when the large-pool policy implements the state exactly.

The difficulty is not the Taylor expansion itself. Sample cutoff paths are
only piecewise smooth because individual observations enter and leave the
disclosed tails. Appendix Proposition~\ref{prop:riskderivative} proves the
uniform $C^1$ risk expansion behind the theorem. Its proof first localizes
the full and leave-one-out roots and their attention derivatives, then
integrates the moving-boundary contribution from observations that cross the
cutoff. Thus the exact-policy conclusion does not come from formally
differentiating the value expansion in~\eqref{eq:integratedrisk-main}.

The correction can be nonzero even when equal attention implements the
state. In the two-point limit used in the Introduction---values $9$ and $4$
with probabilities $1/5$ and $4/5$, common to both camps, equal camp sizes,
and $\theta=5$---direct calculation at $r^*=1/2$ gives
\[
 \kappa=\frac{16}{5},\qquad
 B^{C,0}=-\frac{24}{25},\qquad
 \partial_rV^C=\frac{3264}{125},
\]
Consequently, the policy correction itself reads
\[
 n(\widetilde r_n-\tfrac12)
 \longrightarrow
 \underbrace{\frac{3}{10}}_{\text{selection bias}}
 -\underbrace{\frac{5\nu}{8}(a_0-5)}_{\text{receiver benchmark}}
 -\underbrace{\frac{51}{40}}_{\text{sampling variance}}
 =-\frac{39}{40}-\frac{5\nu}{8}(a_0-5).
\]
Thus when $\nu=0$ or $a_0=\theta$, the second-order criterion gives the upward
camp slightly less than half of attention despite exact limiting moment
balance. Selection bias by itself calls for slightly more upward attention,
but the variance motive is more than four times as large and points the other
way. An order-$1/n$ benchmark below $5$ offsets that negative correction,
while one above $5$ reinforces it. Arbitrarily close smooth approximations
preserve these strict comparisons; for every such approximation satisfying
Assumption~\ref{ass:finitepool}, Theorem~\ref{thm:exactpolicy} gives the same
displacement for every exact finite-pool optimizer. By contrast, for a
distribution symmetric around $\theta$ with $a_0=\theta$, the endogenous-bias,
benchmark, and variance motives all vanish at equal attention; if
$a_0\neq\theta$, only the benchmark term remains.

\paragraph{Transparent finite-pool benchmark.}
The second-order correction is quantitatively accurate at moderate pool
sizes in the same two-point benchmark.

\begin{figure}[t]
 \centering
 \includegraphics[width=\textwidth]{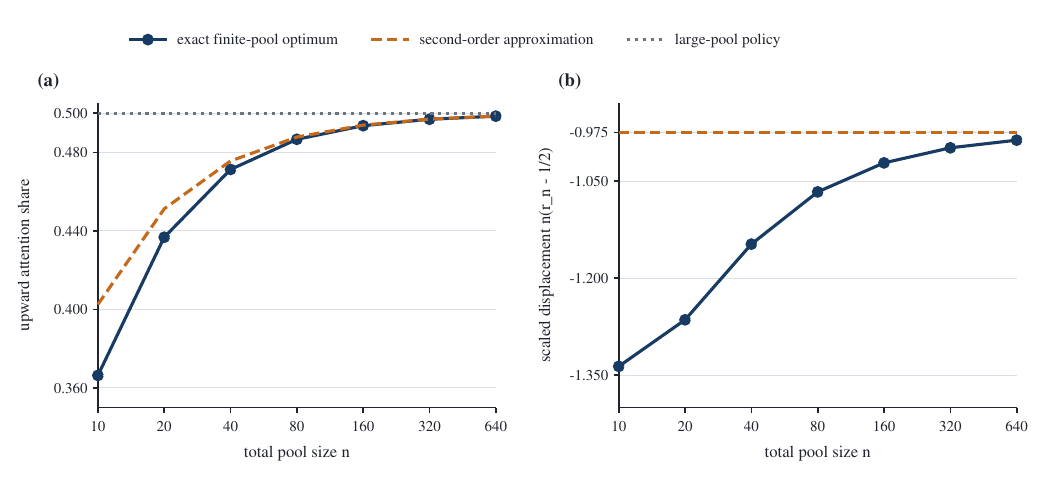}
 \caption{The exact finite-pool risk-minimizing share and its second-order approximation.
 Panel (a) plots the exact risk-minimizing upward attention share, the
 second-order policy $1/2-39/(40n)$, and the large-pool policy $1/2$.
 Panel (b) rescales the exact displacement by $n$ and compares it with its
 limit $-39/40$. Each exact point evaluates the full double-binomial sum;
 there is no simulation error.}
 \label{fig:finite-policy-validation}
\end{figure}

Figure~\ref{fig:finite-policy-validation} evaluates the full
double-binomial risk, rather than a simulation, and compares its numerical
minimizer with the large-pool policy $1/2$ and the second-order policy
$1/2-39/(40n)$. At $n=80$, the exact and second-order shares are $0.48666$
and $0.48781$; at $n=320$, they are $0.496879$ and $0.496953$. The scaled
exact displacement also converges to $-39/40$, as the expansion predicts.
Appendix~\ref{app:exact-two-point-benchmark} derives the exact cutoff and
finite-sum risk used in the figure.

Moment balance restores accuracy, not informational equivalence. In the
symmetric common-law benchmark, curation raises asymptotic variance by
$\operatorname{Var}(|X-\theta|)$; for uniform evidence this is a $25\%$
variance premium. Outside that benchmark, the ranking can reverse because a
camp suppresses precisely the evidence most unfavorable to its own
position. Appendix~\ref{sec:finitesample-appendix} states the formal
comparison, gives a smooth validation of the policy correction, and shows
how curation can instead reduce sampling variance.

\section{Discussion and conclusion}
\label{sec:discussion}

The central lesson is that attention is part of the disclosure game. Giving
an advocate more potential exposure changes not only how loudly its evidence
is heard, but also which evidence it is willing to show. The marginal item
then moves with the public action: greater attention can make a sender less
frequent, more extreme, and more influential at the same time.

This response creates scope for countervailing design. Directional tail
moments, rather than realized speaker counts, determine whether the camps'
selection offsets. If those moments require the same balance in every state,
the platform can commit to a uniformly implementing share. Otherwise it
faces a genuine ex ante compromise. With finite pools, even a policy that
removes limiting bias need not minimize risk; endogenous cutoff bias, the
anchor, and sampling variance all matter at the next order.

\paragraph{Behavioral and institutional scope.}
Proposition~\ref{prop:selection-naive} derives the averaging rule as the
quadratic-loss Bayes action under a Gaussian working model in which display
is exogenous. This pins down the receiver's misspecification, but does not
describe correctly specified inference from strategic missingness. If the
receiver uses the equilibrium selection likelihood, a change in disclosure
affects both the visible sample and the likelihood of the selection event;
the tail best responses used here then need not follow.
Section~\ref{sec:missing-processing} shows that the representation paradox
survives operational neutral backfilling and benchmark imputation whenever
empty slots receive less weight than displayed evidence. This extension
assigns weight to observable missingness but does not model inference from
the senders' equilibrium selection likelihoods. The
mechanism also requires unused item-level exposure to remain
nontransferable. Full neutral backfill
preserves that feature, whereas Proposition~\ref{prop:fungible} shows that
unrestricted within-camp reallocation produces extremal rather than
tail-moment selection.

\paragraph{Empirical and network scope.}
The comparative statics give joint sign restrictions under exogenous
variation in assigned attention: the favored camp speaks less often but uses
more extreme evidence even as its influence rises. Testing these predictions
would require attention to be assigned before content is realized, a stable
state and evidence technology, and data on the eligible evidence pool rather
than only the public record. The personalized-attention result in
Appendix~\ref{sec:personalization} maps diffuse deterministic attention rows
into static listener actions. It does not model repeated learning, endogenous
links, or homophily.

\paragraph{Remaining limitations.}
The analysis assumes verifiable, exclusively controlled evidence, opposed
monotone preferences, and a benevolent platform. The private-pool extension
keeps the state common knowledge; private learning about the state is a
different incomplete-information problem. We also use squared loss for one
receiver and leave platform engagement motives outside the model. Within
this scope, the broader distinction is simple: institutions often control
access before content, and advocates adjust content to the access they
receive. A policy can therefore balance voices and still unbalance evidence.

\clearpage
\appendix

\section{Online Appendix: Supplementary Results and Extensions}
\label{sec:robustness}

The main analysis assumes that both senders observe the realized pool, the
platform evaluates one receiver action, and unused exposure is neither
processed by the receiver nor reallocated. This appendix relaxes these
assumptions one at a time.
\subsection{Technical foundations}
\label{app:foundations}

\subsubsection{The decision maker's working model}
\label{sec:subjective-bayes}

The visible-average rule does not describe a receiver that ignores the
evidence it sees. It describes a receiver that processes displayed values
carefully but does not model how advocates selected them. The rule is the
exact Bayes action under a simple misspecified working model. Let $\vartheta$
denote the state in that subjective model, to distinguish it from the
objective state $\theta$ governing the evidence process.

\begin{proposition}[Selection-naive Bayesian representation]
\label{prop:selection-naive}
Fix $n$, a disclosure set $S$, and any common precision scale $\tau_n>0$.
Suppose the receiver's subjective prior and signal model are
\begin{equation}
 \vartheta\sim N\!\left(a_0,\frac{1}{\tau_n\eta_n}\right),
 \qquad
 X_i\mid\vartheta\sim N\!\left(\vartheta,
                         \frac{1}{\tau_n v_{ni}}\right)
 \quad\text{independently across }i,
 \label{eq:subjective-gaussian}
\end{equation}
and the receiver believes that $S$ is generated independently of
$\vartheta$ and of all evidence values. After observing $S$ and
$\{X_i:i\in S\}$, its Bayes action under quadratic loss is exactly
$a_n(S)$ in~\eqref{eq:action}, for every possible disclosure record.
\end{proposition}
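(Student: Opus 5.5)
The plan is a standard Gaussian conjugate computation, with one preliminary step to dispose of the selection event. First I will use the receiver's belief that $S$ is generated independently of $\vartheta$ and of all evidence values. Under that belief, conditioning on the event that the display set equals $S$ leaves the subjective prior on $\vartheta$ unchanged, and it also leaves the joint law of $(\vartheta,\{X_i\}_{i\in S})$ unchanged. So the subjective posterior given $(S,\{X_i:i\in S\})$ equals the posterior given the values $\{X_i:i\in S\}$ alone, treated as a fixed sample. I will state this explicitly: it is the only place the misspecification enters, and it is what rules out any inference from empty slots. The undisplayed $X_j$, $j\notin S$, are unobserved. Under the working model they are conditionally independent given $\vartheta$, so they integrate out and do not affect the posterior.

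Second, I will compute the posterior by completing the square in the log density. The prior contributes $-\tfrac{\tau_n\eta_n}{2}(\vartheta-a_0)^2$, and each displayed item contributes $-\tfrac{\tau_n v_{ni}}{2}(X_i-\vartheta)^2$. The sum is therefore quadratic in $\vartheta$, with precision
\[
\tau_n\Bigl(\eta_n+\sum_{i\in S}v_{ni}\Bigr)
\]
and mean equal to the precision-weighted average
\[
\frac{\eta_n a_0+\sum_{i\in S}v_{ni}X_i}{\eta_n+\sum_{i\in S}v_{ni}}.
\]
The common scale $\tau_n$ cancels from this ratio, which is why any $\tau_n>0$ works. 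Since $\eta_n>0$, the precision is positive even when $S=\emptyset$; in that case the posterior is just the prior, with mean $a_0=a_n(\emptyset)$.

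Third, under quadratic loss the Bayes action is the posterior mean, which coincides with $a_n(S)$ in~\eqref{eq:action}. This holds for every $S$ and every realization, as claimed. One caveat: the Gaussian working model puts mass outside $\mathcal X$. I will note that the statement concerns the unconstrained Bayes action. If the action space is restricted to $\mathcal X$, the conclusion is unaffected, because $a_n(S)$ is a convex combination of points in $\mathcal X$ and so already lies in $\mathcal X$; a convex restriction of the action set therefore does not change the optimizer of a strictly convex expected loss.

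I do not expect a real obstacle here. The only point needing care is the first step, justifying that the event that $S$ is displayed carries no likelihood information under the subjective model. The rest is routine conjugacy.
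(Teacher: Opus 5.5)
Your proposal is correct and matches the paper's proof: under the independence belief the display event carries no $\vartheta$-dependent likelihood term. Gaussian conjugacy then gives posterior precision $\tau_n(\eta_n+\sum_{i\in S}v_{ni})$ and posterior mean $a_n(S)$, which is the quadratic-loss Bayes action, and your extra remarks (integrating out the undisplayed items, the empty-display case, and $a_n(S)\in\mathcal X$, so restricting actions to $\mathcal X$ changes nothing) are correct refinements that the paper leaves implicit.
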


The proof is in Appendix~\ref{app:proof:selection-naive}.

This is a representation of the receiver's subjective model, not a
restriction on the objective evidence distributions. The true laws may be
bounded, non-Gaussian, and camp specific as in
Assumption~\ref{ass:evidence}. The common scale $\tau_n$ cancels from the
posterior mean and may vary with $n$. Exposure weights are relative
subjective precisions, or repetitions that the receiver treats as
independent confirmations. The receiver uses the identities of displayed
items to attach these precisions but assigns no likelihood content to the
event that a slot is filled or empty. A correctly specified Bayesian would
instead use the equilibrium probability of disclosure and withholding.
Section~\ref{sec:missing-processing} considers the case in which empty slots
receive partial weight as benchmark pseudo-signals; Appendix
Section~\ref{sec:neutral-backfill} gives the full derivation.

\subsubsection{Monotone sender preferences}

\begin{proposition}[Ordinal robustness of the disclosure game]
\label{prop:ordinal}
Fix $n$, a realized evidence profile, and a state. Replace Sender 1's
payoff by $u_1(a_n,\theta)$ and Sender 2's payoff by
$u_2(a_n,\theta)$, where $u_1(\cdot,\theta)$ is strictly increasing and
$u_2(\cdot,\theta)$ is strictly decreasing on $\mathcal X$. Then each
sender's best-response correspondence against every opponent disclosure
set is identical to its correspondence in the baseline game. Consequently,
the set of pure-strategy Nash equilibria is unchanged, and the pure-strategy
conclusions of Theorem~\ref{thm:finite}---the unique action $c_n$ and tail
disclosure up to ties---continue to hold.

The conclusion therefore applies to every subsequent common-knowledge-pool
result that uses the equilibrium record or action, including the large-pool,
attention-design, finite-welfare, neutral-backfill, and full-reallocation
results. For
example, it covers quadratic ideal-point payoffs when Sender 1's
ideal lies weakly above $\overline x$ and Sender 2's ideal lies
weakly below $\underline x$.

This is a statement about deterministic disclosure choices. The mixed-
equilibrium characterization in Theorem~\ref{thm:finite} continues to hold
under positive affine transformations of the baseline payoffs, but not under
arbitrary monotone transformations, which can change rankings over lotteries.
\end{proposition}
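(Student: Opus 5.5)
The plan is to reduce everything to a single pairwise comparison of actions and then check that the comparison has the same sign under any strictly monotone payoff. Fix the realized pool and the state. Fix Sender 2's disclosure set $S_2$. Sender 1's best-response correspondence in the baseline game is $\arg\max_{S_1\subseteq I_1} a_n(S_1\cup S_2)$, since baseline payoffs are $u_1(a)=a$. Under a strictly increasing $u_1(\cdot,\theta)$ it is $\arg\max_{S_1} u_1(a_n(S_1\cup S_2),\theta)$. The feasible set is the finite collection of subsets of $I_1$, and for each $S_1$ the action $a_n(S_1\cup S_2)$ is a real number in $\mathcal X$. Because $u_1(\cdot,\theta)$ is strictly increasing on $\mathcal X$, for any two subsets we have $u_1(a_n(S_1\cup S_2))\ge u_1(a_n(S_1'\cup S_2))$ if and only if $a_n(S_1\cup S_2)\ge a_n(S_1'\cup S_2)$. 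So the maximizer sets coincide. The argument for Sender 2 is symmetric: a strictly decreasing $u_2$ reverses the order, and $\arg\max u_2(a_n)$ equals $\arg\min a_n$, which is the baseline correspondence under $u_2(a)=-a$. This step only uses that the opponent's strategy is a deterministic set, which matches the claim about pure best responses.

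Next I would conclude that the pure equilibria are unchanged. A pure profile $(S_1,S_2)$ is a Nash equilibrium exactly when each component lies in the corresponding best-response set. Those sets were just shown to be identical, so the sets of pure equilibria coincide. Theorem~\ref{thm:finite} gives the baseline pure equilibria: they are exactly the profiles in $\mathcal T_1(c_n)\times\mathcal T_2(c_n)$. A degenerate mixed profile is a pure profile, and the theorem's support condition specializes to this statement. Each such profile yields the action $c_n$. Hence the unique action and tail disclosure up to ties carry over, and existence follows because the strict-tail profile belongs to the set.

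The downstream results only use the equilibrium record and action through the pure characterization. These are Proposition~\ref{thm:largepool}, Theorems~\ref{thm:design}, \ref{thm:exante} and \ref{thm:exactpolicy}, and the backfill and reallocation results. So they inherit the conclusion once their own best-response arguments are checked to depend only on the ordinal comparison of actions. For those variants I would reapply the same order-preservation argument to their action maps, such as $a_{n,\lambda}$.

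The quadratic example needs one check. With Sender 1's ideal $\beta_1\ge\overline x$, the payoff $-(a-\beta_1)^2$ has derivative $-2(a-\beta_1)$. This derivative is positive for $a<\beta_1$, so the payoff is strictly increasing on $\mathcal X$. If $\beta_1=\overline x$, it is still strictly increasing on the closed interval, since the derivative vanishes only at the single endpoint. Sender 2 is symmetric.

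For the mixed-strategy caveat, I would note two things. Positive affine maps preserve expected-utility rankings, so the mixed characterization survives them. A strictly concave increasing transformation, by contrast, can make a sender prefer a lottery it would otherwise reject. I would exhibit this with a small example rather than prove a general negative.

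The main obstacle is modest: stating carefully that the best-response equivalence concerns responses to pure opponent strategies. Against a mixed opponent, expected nonlinear utility does not reduce to an ordinal comparison of actions.
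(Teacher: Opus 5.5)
Your proposal is essentially the paper's proof: strict monotonicity preserves the weak ranking of actions against each fixed opponent set, so the pure best-response correspondences and pure equilibria are unchanged. Theorem~\ref{thm:finite}'s pure conclusions therefore carry over, and the same order argument with $a_{n,\lambda}$ or $a_n^R$ covers the extensions. One caution on the mixed caveat, which the paper also treats only informally: a change in lottery rankings does not by itself change the equilibrium set, because profiles supported on $\mathcal T_1(c_n)\times\mathcal T_2(c_n)$ remain equilibria under any monotone payoffs. Moreover, if both utilities are concave and $\sigma_1$ puts mass off $\mathcal T_1(c_n)$ (symmetrically for $\sigma_2$), then Sender~2's deviation to a set in $\mathcal T_2(c_n)$ is worth strictly more than $u_2(c_n)$, so Jensen's inequality gives $\E a<c_n$ while Sender~1's security bound gives $\E a\geq c_n$. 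Your strictly concave example therefore cannot exhibit a failure; any counterexample needs a sender with non-concave (e.g.\ risk-loving) utility and an explicit extra equilibrium.
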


The proof is in Appendix~\ref{app:proof:ordinal}.

\subsection{Information content and further implications}
\label{app:further-implications}

\subsubsection{Accuracy without a complete record}

\begin{lemma}[The record is not sufficient]
\label{lem:insufficient}
Under Assumption~\ref{ass:evidence}, suppose that one camp contains at least
one item and, for all $\theta\ne\theta'$ in a neighborhood of the true state,
the likelihood ratio $f_{k,\theta}/f_{k,\theta'}$ for that camp is
nonconstant on every nonempty open interval in
$\operatorname{int}\mathcal X$. Then, for every positive attention share,
the equilibrium record
$R=(S_1^{*},S_2^{*},\{X_{i}\}_{i\in S_1^{*}\cup S_2^{*}})$ is not a
sufficient statistic for $\theta$ relative to the full evidence pool: on a
positive-probability set of records, the conditional distribution of the
withheld items given $R$ depends on $\theta$.
\end{lemma}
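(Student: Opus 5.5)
The plan is to use the structure of the equilibrium record described by Theorem~\ref{thm:finite}. Since the pool is atomless, ties at $c_n$ occur with probability zero, so almost surely the record consists of the disclosed tails, $\{i\in I_1:X_i>c_n\}$ and $\{i\in I_2:X_i<c_n\}$, together with their values. The cutoff $c_n$ is itself a function of the full pool, but it is recoverable from the record. Because $H_n(c_n)=0$ involves only disclosed items, $c_n$ solves $\eta_n(a_0-c)+\sum_{i\in S_1^*}v_{ni}(X_i-c)-\sum_{i\in S_2^*}v_{ni}(c-X_i)=0$, which is linear in $c$. Consequently $c_n$ is $R$-measurable. Given $R$, the event generating it is therefore the event that each withheld item of camp~1 lies in $[\underline x,c_n]$ and each withheld item of camp~2 lies in $[c_n,\overline x]$, with the disclosed values fixed. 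Observing the identities in $S_k^*$ tells us exactly which items are withheld.

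I will compute the conditional law of the withheld items given $R$. The first step is to fix a realization of the disclosed values and the cutoff value $c$. The set of full pools producing exactly this record is the product set in which every withheld $j\in I_1\setminus S_1^*$ has $X_j\le c$ and every withheld $j\in I_2\setminus S_2^*$ has $X_j\ge c$. The cutoff equation does not involve withheld values, so it imposes no further restriction. By independence (Assumption~\ref{ass:evidence}), conditional on $R$ the withheld items are independent, with camp-1 items distributed as $F_{1,\theta}$ truncated to $[\underline x,c]$ and camp-2 items as $F_{2,\theta}$ truncated to $[c,\overline x]$.

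The second step shows that this conditional law depends on $\theta$. Take the camp for which the likelihood-ratio condition holds, say camp~1 (camp~2 is symmetric). Its truncated density is $f_{1,\theta}(x)/F_{1,\theta}(c)$ on $(\underline x,c)$. If the truncated laws coincided for $\theta\ne\theta'$, then $f_{1,\theta}/f_{1,\theta'}$ would be constant on $(\underline x,c)$, which is a nonempty open interval because $c\in(\underline x,\overline x)$. This contradicts the hypothesis.

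It remains to show that records with at least one withheld camp-1 item occur with positive probability, for each $\theta$ in the neighborhood. The camp has at least one item, and each density is strictly positive on the interior. Consider the event that all camp-1 items fall below some level and camp-2 items are arbitrary. Because $c_n$ lies in the interior and, as I will argue, is bounded away from $\underline x$ on a positive-probability set, there is positive probability that some camp-1 item is withheld. Concretely, one can take a configuration where all items lie in small neighborhoods of fixed interior points with one camp-1 item below the resulting cutoff; continuity of $c_n$ in the pool values and positive densities then give positive probability. If camp~1 has withheld items only in degenerate cases, I would instead use camp~2 with the upper truncation.

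I expect the main obstacle to be rigor rather than the idea: formalizing "positive-probability set of records" and the measurability of $c_n$ and the conditioning, given that the set of withheld indices is random. I would handle this by partitioning on the finitely many possible index sets $(S_1^*,S_2^*)$ and writing the joint density of the record on each piece explicitly. This density is a product of disclosed densities times $F_{1,\theta}(c)^{m_1}(1-F_{2,\theta}(c))^{m_2}$, from which the conditional truncated form follows directly.
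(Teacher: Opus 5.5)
Your proposal is correct and follows essentially the paper's route: $c_n$ is recoverable from $R$, the withheld camp-1 items can move freely in $(\underline x,c_n)$ without changing $R$, so the likelihood-ratio condition on that interval rules out sufficiency, and configurations clustered near $a_0$ supply the positive probability. The only difference is framing. You compute the truncated product law of the withheld items explicitly, which gives the lemma's ``conditional distribution depends on $\theta$'' conclusion directly, whereas the paper argues by contradiction from the factorization criterion.
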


The proof in Appendix~\ref{app:insufficient} uses the factorization
criterion. The receiver action, and hence the equilibrium cutoff $c_n$, can
be computed from $R$. Holding $R$ fixed, a withheld upward item can be moved
below $c_n$ without changing the record. Sufficiency would therefore require
the relevant likelihood ratio to be constant on $(\underline x,c_n)$,
contrary to the condition in the lemma. It follows that
at $r=r^*(\theta)$ the two actions have the same probability limit even
though the equilibrium record remains statistically insufficient. The two
finite-pool statistics need not be equal.

\subsubsection{Expectiles, restoration, and mechanical censoring}

\begin{remark}[Expectile benchmark]
\label{rem:expectile}
Under the common distribution $F_{\theta}$, $c^*(r)$ is exactly the
$r$-expectile:
\[
 c^*(r)
 =\operatorname*{arg\,min}_{a\in\mathcal X}
 \E_\theta\!\left[
   r\big((X-a)^+\big)^2+(1-r)\big((a-X)^+\big)^2
 \right].
\]
The first-order condition is
$r\E_\theta(X-a)^+=(1-r)\E_\theta(a-X)^+$, the equilibrium balance
equation. Thus standard expectile properties provide a statistical
benchmark for the common-distribution case
\citep{neweypowell1987,ehm2016}. The game does not, however, posit an
asymmetric loss or ask an observer to estimate this functional: strategic
tail disclosure generates it, and platform attention selects its index.
Accordingly, the location and monotonicity of $c^*(r)$ under a common law
are not claimed as new properties of expectiles; the strategic content is
the equilibrium record and its response to the attention instrument.
When $F_{1,\theta}\neq F_{2,\theta}$, $c^*$ instead equates truncated
moments from two different laws and is not the expectile of either camp's
distribution.
\end{remark}

\begin{proposition}[Distributional restoration at balance]
\label{prop:restoration}
Let $r=1/2$ under a common evidence distribution, and draw an item $J$
with $\Prob(J=i)=v_{ni}$, independently of the evidence. Then, as
$n\to\infty$, the law of $X_{J}$ conditional on the event that item $J$ is
disclosed in equilibrium converges in total variation to the uncensored law
$F_{\theta}$.
\end{proposition}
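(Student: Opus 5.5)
At $r=1/2$ under a common evidence law, Corollary~\ref{thm:countervailing} gives $c^*(1/2)=\theta$, and Proposition~\ref{thm:largepool} gives $c_n\xrightarrow{p}\theta$. The plan is to compute the law of $X_J$ given disclosure in two steps. First, identify the limit under the idealized cutoff $\theta$. Second, show that replacing $c_n$ by $\theta$ costs nothing in total variation.

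\textbf{The idealized record.} With cutoff $\theta$ and $r=1/2$, $J$ falls in camp 1 with probability $1/2$. If so, $X_J\sim F_\theta$ and it is disclosed iff $X_J>\theta$. Symmetrically, $J$ falls in camp 2 with probability $1/2$ and is disclosed iff $X_J<\theta$. Ties have probability zero because the law is atomless.

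The joint sub-probability of disclosure and $X_J\in A$ is then
\[
\tfrac12 F_\theta(A\cap(\theta,\overline x])+\tfrac12 F_\theta(A\cap[\underline x,\theta))=\tfrac12F_\theta(A).
\]
The disclosure probability is $\tfrac12$, so the conditional law is exactly $F_\theta$. Note that only $r=1/2$ is needed here, not symmetry of $F_\theta$: the two tails are complementary pieces of one law with equal weights.

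\textbf{The finite-$n$ record.} The item $J$ is drawn independently of the evidence, and the disclosure sets are those of Theorem~\ref{thm:finite}. Ties at $c_n$ occur with probability zero, by atomlessness and because $c_n$ is a function of the sample with only finitely many tie configurations.

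Condition on $J=i$ with $i\in I_1$. Disclosure is the event $X_i>c_n$. The difficulty is that $c_n$ depends on $X_i$ itself. I would handle this with a leave-one-out cutoff $c_n^{(-i)}$, the zero of $H_n$ with item $i$ removed. Since $H_n$ is strictly decreasing with slope at least $\eta_n$ plus the visible mass, and item $i$ changes $H_n$ by at most $v_{ni}|\overline x-\underline x|$, we get
\[
|c_n-c_n^{(-i)}|\le \frac{\max_i v_{ni}\,(\overline x-\underline x)}{D_n},
\]
where $D_n$ is the visible mass. This bound tends to zero in probability, because the visible mass is bounded below with probability tending to one by the large-pool argument behind Proposition~\ref{thm:largepool}.

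Moreover, $c_n^{(-i)}\xrightarrow{p}\theta$ uniformly in $i$. The leave-one-out version of that proposition applies, since removing one vanishing-weight item does not change the limit. Then
\[
\Prob(X_i\in A,\;X_i>c_n)=\Prob(X_i\in A,\;X_i>\theta)+\varepsilon_{n,i}(A),
\]
with
\[
\sup_A|\varepsilon_{n,i}(A)|\le \Prob(X_i\text{ between }c_n\text{ and }\theta)\le \Prob\bigl(|X_i-\theta|\le\delta\bigr)+\Prob\bigl(|c_n-\theta|>\delta\bigr).
\]
By atomlessness this is small uniformly in $i$ after letting $n\to\infty$ and then $\delta\to0$. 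Actually the leave-one-out step is only needed if I want independence; the crude bound above suffices directly.

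The same holds for $i\in I_2$. Averaging over $J$ with weights $v_{ni}$, which sum to $r_n=1/2$ on each camp (or to $r_n\to1/2$), the joint sub-probability converges uniformly in $A$ to $\tfrac12F_\theta(A)$. The disclosure probability converges to $\tfrac12>0$. Dividing gives total-variation convergence of the conditional law to $F_\theta$.

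\textbf{Main obstacle.} The main obstacle is the uniformity over measurable $A$ and over items. This is resolved by the single bound via $\Prob(|X_i-\theta|\le\delta)$, which does not depend on $A$, plus identical marginals within camps. If $r_n\ne1/2$ exactly, the $r_n\to1/2$ discrepancy adds a vanishing term.
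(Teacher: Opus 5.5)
Your proof is correct, but it takes a different and more elementary route than the paper's. The paper conditions on the value $X_i=x$. It constructs leave-one-out cutoffs $c_n^{(i)}$ and good events independent of $X_i$ (Lemma~\ref{lem:loo}). It then sandwiches the disclosure propensity $\Prob(A_i=1\mid X_i=x)$ between $\Prob(c_n^{(i)}<x\mp\delta_n)$ up to vanishing terms, concludes that $\varphi_n(x)\to1/2$ for $F_\theta$-almost every $x$, and converts this density-level convergence into total variation by dominated convergence and Scheff\'e's theorem. You instead bound joint probabilities, using the inclusion of $\{X_i\text{ weakly between }c_n\text{ and }\theta\}$ in $\{|X_i-\theta|\le\delta\}\cup\{|c_n-\theta|>\delta\}$. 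This needs only $c_n\xrightarrow{p}\theta$ from Proposition~\ref{thm:largepool} and continuity of $F_\theta$ at $\theta$. Because the bound does not depend on $A$ or $i$, it gives uniform-in-$A$ convergence of the joint sub-probability to $\tfrac12F_\theta(A)$, and hence total variation directly. As you observe, the leave-one-out step is then superfluous, and so is your tie argument: tied or randomized disclosure at $X_i=c_n$ lies inside the closed ``between'' event, so mixed equilibria are covered by the same bound. The paper's route yields more, namely the value-conditional limits $\Prob(A_i=1\mid X_i=x)\to\ind{x>\theta}$ in camp 1 and $\ind{x<\theta}$ in camp 2, i.e.\ convergence of the density of the visible law. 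Yours is shorter and needs no independence device. It also gives an explicit error bounded by a multiple of $F_\theta([\theta-\delta,\theta+\delta])+\Prob(|c_n-\theta|>\delta)+|r_n-\tfrac12|$. Finally, it makes plain that, with equal weights on a common law, the argument would go through for any deterministic limit of the cutoff.
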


The proof in Appendix~\ref{app:restoration} proceeds as follows. At equal
attention,
the equilibrium threshold converges to $\theta$, above which only upward
items (attention mass $1/2$) are shown and below which only downward items
(attention mass $1/2$) are shown. The two weighted tails therefore fit
together into the uncensored marginal distribution. The visible record is
still censored because each side shows only one tail, and by
Lemma~\ref{lem:insufficient} it remains statistically insufficient. The
result concerns the law of an attention-weighted visible item and the
receiver action, not recovery of the full record.

\begin{remark}[Microfoundation of mechanical censoring]
\label{rem:identity}
For a fixed exogenous cutoff $c$ used by both camps, define the mechanical
visible mean $m(c;r)$, i.e., the probability limit of the naive average
when upward items above $c$ and downward items below $c$ are shown. A
direct computation gives the identity
\[
  H(c;r,\theta)=D(c;r)\,\big[m(c;r)-c\big],
  \qquad
  D(c;r)=r\,q(c)+(1-r)(1-q(c)),
\]
so the equilibrium narrative is precisely the fixed point $m(c^{*})=c^{*}$
of the mechanical model's visible-mean map. The strategic game thus
microfounds the reduced-form censoring rule studied in nonstrategic models
of expression---with the cutoff pinned down endogenously as the public
narrative itself.
\end{remark}

\subsubsection{Amplification relative to a truth-centered standard}

\begin{proposition}[Strategic amplification]
\label{prop:amplification}
Let $m(c;r)$ be the mechanical visible mean of
Remark~\ref{rem:identity}. If $r>1/2$, then
\[
  c^{*}(r)\;>\;m(\theta;r)\;>\;\theta ,
\]
and both inequalities reverse for $r<1/2$: relative to a benchmark in
which both camps hold their editorial standard fixed at the truth,
equilibrium curation strictly amplifies the bias of imbalanced attention.
\end{proposition}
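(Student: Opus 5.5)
The plan is to work entirely with the identity in Remark~\ref{rem:identity}, $H(c;r,\theta)=D(c;r)\,[m(c;r)-c]$ with $D>0$, under a common law $F_\theta$. Take $r>1/2$; the case $r<1/2$ follows by the mirror argument, exchanging the roles of the two camps, or equivalently reflecting $X\mapsto -X$ and $r\mapsto 1-r$. The two inequalities are handled separately.

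\emph{Second inequality, $m(\theta;r)>\theta$.} Evaluate the identity at $c=\theta$:
\[
D(\theta;r)\,[m(\theta;r)-\theta]=H(\theta;r,\theta)=rM_0(\theta)-(1-r)M_0(\theta)=(2r-1)M_0(\theta).
\]
Here the two truncated moments coincide by mean-unbiasedness. Since $M_0(\theta)>0$, $D(\theta;r)>0$ and $2r-1>0$, it follows that $m(\theta;r)>\theta$. Equivalently, Corollary~\ref{thm:countervailing} gives $c^*(r)>\theta$, and $H(\theta)>0$ is what places the zero to the right of $\theta$.

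\emph{First inequality, $c^*(r)>m(\theta;r)$.} This is the main step. I will show that $H(m(\theta;r);r,\theta)>0$. Because $H(\cdot;r,\theta)$ is strictly decreasing by Proposition~\ref{thm:largepool}, this gives $c^*>m(\theta;r)$. Write $m_0=m(\theta;r)$ and $\delta=m_0-\theta>0$, and compute $H(m_0)$ directly. The integrands $(x-c)^+$ and $(c-x)^+$ have derivatives $-\mathbf 1\{x>c\}$ and $\mathbf 1\{x<c\}$ in $c$, so
\[
H(m_0)=H(\theta)-\int_\theta^{m_0}\big[r\,q(u)+(1-r)(1-q(u))\big]\,du=H(\theta)-\int_\theta^{m_0}D(u;r)\,du .
\]
The claim then reduces to
\[
\int_\theta^{m_0}D(u;r)\,du< H(\theta)=D(\theta;r)\,\delta .
\]
It therefore suffices that $D(u;r)<D(\theta;r)$ for $u\in(\theta,m_0]$. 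Now $D(u;r)=(1-r)+(2r-1)q(u)$, and $q$ is strictly decreasing on the interior because the density is positive. With $2r-1>0$, $D(\cdot;r)$ is therefore strictly decreasing in $u$, which gives the strict inequality.

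The key to this step is exactly the observation that $D$ is the visible mass at the margin and falls when the cutoff rises while $r>1/2$. This is the sense in which endogenous standards amplify the imbalance. I expect this monotonicity sign check, and keeping track of its reversal for $r<1/2$, to be the only delicate point. For $r<1/2$, both $\delta$ and the slope of $D$ flip sign, so $\int_{m_0}^{\theta}D\,du>D(\theta)|\delta|$ yields $H(m_0)<0$ and hence $c^*<m_0<\theta$.
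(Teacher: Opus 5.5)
Your argument is correct, but it takes a different route from the paper's. The paper differentiates the identity $H=D\,(m-c)$ in $c$ to get $\partial m/\partial c=(2r-1)\,f_\theta(c)\,[m(c;r)-c]/D(c;r)$. Since $H>0$, and hence $m>c$, on $(\theta,c^*(r))$, the mechanical visible mean is strictly increasing along the whole interval from the truth to the equilibrium. The fixed-point property $c^*=m(c^*;r)$ then gives $c^*>m(\theta;r)$.

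You instead evaluate $H$ once, at the benchmark $m_0=m(\theta;r)$. You write $H(m_0)=\int_\theta^{m_0}[D(\theta;r)-D(u;r)]\,du=(2r-1)\int_\theta^{m_0}[q(\theta)-q(u)]\,du$ and sign it by the monotonicity of $q$. Both arguments rest on the same fact: the visible mass at the margin falls as the common cutoff rises when $r>1/2$.

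Yours needs less. It uses only $\partial_c H=-D$ and strict monotonicity of $q$, and requires neither differentiability of $m$ nor the sign of $m-c$ along $(\theta,c^*)$. It also makes the amplification gap quantitative, since $D\le 1$ gives $c^*(r)-m(\theta;r)\ge H(m_0)$.

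The paper's version delivers a stronger intermediate fact: $m(\cdot;r)$ is strictly increasing on $[\theta,c^*(r)]$ with $c<m(c;r)<c^*$ there. Consequently, repeatedly resetting the common standard to the resulting visible mean, starting from the truth, rises monotonically to the equilibrium narrative.

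One slip: in your last sentence the inequality is reversed. For $r<1/2$, $D(\cdot;r)$ is increasing, so $D(u;r)<D(\theta;r)$ on $[m_0,\theta)$ and $\int_{m_0}^{\theta}D\,du<D(\theta;r)\,|\delta|$. Since $H(m_0)=\int_{m_0}^{\theta}D\,du-D(\theta;r)\,|\delta|$, it is this strict inequality that yields $H(m_0)<0$; the one you wrote would give $H(m_0)>0$. Your reflection argument ($X\mapsto -X$, $r\mapsto 1-r$, which sends $c^*(r)$ to $-c^*(1-r)$ and $m(\theta;r)$ to $-m(-\theta;1-r)$ for the reflected law) already handles the case $r<1/2$ correctly. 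So the proof is complete once that sign is fixed.
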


The proof is in Appendix~\ref{app:proof:amplification}.

\subsubsection{Vanishing opposition and minority leverage}

\begin{proposition}[Extremization]
\label{prop:extremization}
Under a common evidence distribution,
$\lim_{r\uparrow1}c^{*}(r)=\overline x$ and
$\lim_{r\downarrow0}c^{*}(r)=\underline x$.
\end{proposition}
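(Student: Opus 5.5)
The plan is to work directly with the balance equation that defines the limiting narrative. Under a common law $F_\theta$ with support $\mathcal X=[\underline x,\overline x]$, $c^*(r)$ is the unique zero of
\[
 H(c;r)=r\,\E_\theta[(X-c)^+]-(1-r)\,\E_\theta[(c-X)^+],
\]
so that
\[
 \frac{r}{1-r}=\frac{\E_\theta[(c^*-X)^+]}{\E_\theta[(X-c^*)^+]}\equiv G(c^*).
\]
Proposition~\ref{thm:largepool} already gives existence, uniqueness and interiority of $c^*(r)$, and Proposition~\ref{prop:paradox} gives $dc^*/dr>0$. Hence $c^*$ is strictly increasing on $(0,1)$ and bounded by $\overline x$, so the limit $\bar c=\lim_{r\uparrow1}c^*(r)\le\overline x$ exists. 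The whole task is to rule out $\bar c<\overline x$. The case $r\downarrow0$ is symmetric: exchange the roles of the two tails, or reflect $X\mapsto -X$.

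\textbf{Step 1 (behaviour of $G$).} For $c$ in the interior of $\mathcal X$, the numerator $\E_\theta[(c-X)^+]$ is continuous in $c$. It is bounded above by $\overline x-\underline x$ and is strictly positive for $c>\underline x$, because the density is positive on the interior. The denominator $\E_\theta[(X-c)^+]$ is continuous, strictly positive for $c<\overline x$, and tends to $0$ as $c\uparrow\overline x$. This last fact follows by dominated convergence: $(X-c)^+\le \overline x-\underline x$ and $(X-c)^+\to0$ pointwise, since $X\le\overline x$ almost surely and $F_\theta$ is atomless.

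Consequently $G$ is finite and continuous on any compact $[c_0,\bar c]\subset\mathrm{int}\,\mathcal X$, while $G(c)\to\infty$ as $c\uparrow\overline x$.

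\textbf{Step 2 (contradiction).} Suppose $\bar c<\overline x$. Then $c^*(r)\in[c^*(1/2),\bar c]$ for all $r\ge1/2$, which is a compact interior interval. By Step 1, $G(c^*(r))\le\max_{[c^*(1/2),\bar c]}G<\infty$. But the balance identity says $G(c^*(r))=r/(1-r)\to\infty$ as $r\uparrow1$. This contradiction gives $\bar c=\overline x$.

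An equivalent direct route avoids the limit argument. Fix any $c<\overline x$. Then $H(c;r)\to\E_\theta[(X-c)^+]>0$ as $r\uparrow1$. Because $H$ is strictly decreasing in $c$, $H(c;r)>0$ implies $c^*(r)>c$ for $r$ close to one. Since $c$ was arbitrary, $c^*(r)\to\overline x$.

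\textbf{Step 3 (the lower limit).} For $r\downarrow0$, fix any $c>\underline x$. Then $H(c;r)\to-\E_\theta[(c-X)^+]<0$, so $c^*(r)<c$ eventually, which gives $c^*(r)\to\underline x$.

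\textbf{Main obstacle.} The only delicate point is the positivity of the relevant tail moment at every $c$ strictly inside $\mathcal X$. That is, $\E_\theta[(X-c)^+]>0$ for all $c<\overline x$, and likewise for the lower tail. This is exactly where Assumption~\ref{ass:evidence}'s strictly positive interior density, giving full support on $\mathcal X$, enters. With support strictly inside $\mathcal X$, the limit would instead be the supremum of the support. I would therefore state the argument in terms of the support endpoints and note that they coincide with $\underline x,\overline x$ under the maintained assumption. The direct $H$-sign argument avoids needing monotonicity or continuity of $G$ at the boundary, so that is the version I would write up.
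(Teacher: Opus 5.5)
Your argument is correct and is essentially the paper's. The paper takes the monotone limit $\bar c$ of $c^*(r)$ and, if $\bar c<\overline x$, uses joint continuity of $H$ to obtain $0=\lim H(c^*(r);r,\theta)=\E_\theta(X-\bar c)^+>0$, which is your Step~2 without passing to the ratio $G$; your preferred direct sign argument (fix $c<\overline x$, note $H(c;r)\to\E_\theta(X-c)^+>0$, and use strict monotonicity of $H$ in $c$) is a slightly leaner variant that needs neither monotonicity of $c^*$ in $r$ nor joint continuity in $(c,r)$.
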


The proof is in Appendix~\ref{app:proof:extremization}.

A one-sided information environment is thus driven to its own extreme even
though listeners are naive and no one ever lies: each ally whose evidence
sits below the going narrative would drag it down, and therefore stays
silent---sender-side unraveling, with competition among allies playing the
role that receiver skepticism plays in \citet{milgrom1981}. The rate at
which countervailing attention undoes this is the economically informative
object.

\begin{proposition}[Square-root minority leverage]
\label{thm:leverage}
Suppose the camps draw from a common evidence law $F_\theta$ whose density
$f_\theta$ is continuous and strictly positive at $\underline x$ and
$\overline x$. Then
\[
  \overline x-c^{*}(r)
  \;\sim\;
  \sqrt{\frac{2(\overline x-\theta)}{f_{\theta}(\overline x)}}\,
  \sqrt{1-r}
  \quad(r\uparrow1),
  \qquad
  c^{*}(r)-\underline x
  \;\sim\;
  \sqrt{\frac{2(\theta-\underline x)}{f_{\theta}(\underline x)}}\,
  \sqrt{r}
  \quad(r\downarrow0).
\]
\end{proposition}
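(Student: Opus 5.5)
The proof is a local analysis of the balance equation $H(c;r,\theta)=0$ near the upper boundary. The lower boundary then follows by the mirror argument, exchanging the roles of the camps and replacing $r$ by $1-r$. Under a common law the limiting cutoff $c^*(r)$ solves
\[
 r\,\E_\theta[(X-c)^+]=(1-r)\,\E_\theta[(c-X)^+].
\]
By Proposition~\ref{prop:extremization}, $c^*(r)\to\overline x$ as $r\uparrow1$. This lets us expand both truncated moments at $c=\overline x-\delta$ with $\delta\downarrow0$ and solve for $\delta$ in terms of $1-r$.

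\emph{Upper tail.} Write
\[
 \E_\theta[(X-c)^+]=\int_c^{\overline x}(x-c)f_\theta(x)\,dx .
\]
Because $f_\theta$ is continuous and positive at $\overline x$, we have $f_\theta(x)=f_\theta(\overline x)+o(1)$ uniformly on $[\overline x-\delta,\overline x]$. Integrating gives
\[
 \E_\theta[(X-c)^+]=\tfrac12 f_\theta(\overline x)\,\delta^2\,(1+o(1)).
\]

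\emph{Lower moment.} Mean-unbiasedness gives $\E_\theta[(X-c)^+]-\E_\theta[(c-X)^+]=\theta-c$. Hence
\[
 \E_\theta[(c-X)^+]=c-\theta+\E_\theta[(X-c)^+]=(\overline x-\theta)-\delta+O(\delta^2)=(\overline x-\theta)(1+o(1)),
\]
which is bounded away from zero because $\theta$ is interior.

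\emph{Solving for $\delta$.} Substituting both expansions into the balance equation gives
\[
 r\cdot\tfrac12 f_\theta(\overline x)\,\delta^2(1+o(1))=(1-r)(\overline x-\theta)(1+o(1)).
\]
Since $r\to1$, this yields $\delta^2\sim 2(\overline x-\theta)(1-r)/f_\theta(\overline x)$. Taking square roots, which is legitimate because the ratio tends to $1$, gives the stated asymptotic. At $\underline x$, the same steps apply with $\E_\theta[(c-X)^+]\sim\tfrac12 f_\theta(\underline x)(c-\underline x)^2$ and $\E_\theta[(X-c)^+]\to\theta-\underline x$.

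\emph{Main obstacle.} The step needing care is the order of the argument: the tail expansion is valid only once $\delta\to0$ is known. Invoking Proposition~\ref{prop:extremization} first avoids circularity. If that proposition were not available, I would derive it directly. For any fixed $c<\overline x$, the left side $r\,\E[(X-c)^+]$ stays bounded below as $r\to1$ while the right side tends to $0$, so the strictly decreasing $H$ forces the root above any such $c$. The uniform $o(1)$ control of the density on the shrinking interval comes from continuity at the endpoint alone; no differentiability is needed.
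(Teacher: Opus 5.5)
Your proposal is correct and follows essentially the same route as the paper's proof. Both invoke Proposition~\ref{prop:extremization} to get $c^*(r)\uparrow\overline x$, expand the upper truncated moment as $\tfrac12 f_\theta(\overline x)(\overline x-c)^2(1+o(1))$ while the lower one tends to $\overline x-\theta$, and solve the balance equation; the differences are cosmetic (the paper integrates $1-F_\theta$ and gets the lower limit from continuity of $L$, whereas you integrate against the density and use the mean identity).
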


The proof is in Appendix~\ref{app:leverage}. If a silenced opposition is
granted an attention share $\varepsilon$, it pulls the public narrative
away from the boundary by order $\sqrt{\varepsilon}$: the marginal
disciplinary value of countervailing attention is unbounded near a
one-sided environment.

\begin{examplebox}
For $X\sim U[0,1]$, solving $r(1-c)^{2}=(1-r)c^{2}$ gives
\[
  c^{*}(r)=\frac{\sqrt r}{\sqrt r+\sqrt{1-r}} .
\]
At $r=3/4$: $c^{*}\approx0.634$, against a mechanical truth-centered
visible mean of $0.625$ (amplification), and
$1-c^{*}(1-\varepsilon)\approx\sqrt{\varepsilon}$ near $r=1$ (minority
leverage).
\end{examplebox}

\begin{corollary}[Uniform implementation]
\label{cor:uniformimpl}
The minimum of $\mathcal L$ is zero for every prior supported on a set
$\Theta_{0}\subseteq\Theta$ if and only if the moment ratio
$\theta\mapsto \Gamma_2(\theta)/\Gamma_1(\theta)$ is constant on $\Theta_{0}$.
\end{corollary}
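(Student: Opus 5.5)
The plan is to reduce the corollary to Theorem~\ref{thm:exante}(i) together with the characterization of $r^*(\theta)$ in Theorem~\ref{thm:design}. The key observation is that $\mathcal L(r)=\int[c^*(r,\theta)-\theta]^2\,d\pi$ is a nonnegative integral. It therefore vanishes at some $r$ if and only if $c^*(r,\theta)=\theta$ for $\pi$-almost every $\theta$. By the uniqueness part of Theorem~\ref{thm:design}, this happens if and only if $r=r^*(\theta)$ for $\pi$-a.e.\ $\theta$. Because $r^*(\theta)=\Gamma_2/(\Gamma_1+\Gamma_2)=\rho/(1+\rho)$ with $\rho=\Gamma_2/\Gamma_1$, and $\rho\mapsto\rho/(1+\rho)$ is strictly increasing, constancy of $r^*$ is equivalent to constancy of the moment ratio.

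For sufficiency, suppose $\Gamma_2/\Gamma_1\equiv\rho_0$ on $\Theta_0$. Then $r^*(\theta)\equiv\bar r=\rho_0/(1+\rho_0)$ on $\Theta_0$. Assumption~\ref{ass:design} places $\bar r$ in $\mathcal R$. For any prior $\pi$ supported on $\Theta_0$, the policy $\bar r$ gives $c^*(\bar r,\theta)=\theta$ for every $\theta$ in the support. Hence $\mathcal L(\bar r)=0$, which is the minimum.

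For necessity, suppose the ratio is not constant on $\Theta_0$, and pick $\theta_1,\theta_2\in\Theta_0$ with $r^*(\theta_1)\neq r^*(\theta_2)$. Take the prior $\pi=\tfrac12(\delta_{\theta_1}+\delta_{\theta_2})$, which is supported on $\Theta_0$. Zero loss would require a single $r$ with $c^*(r,\theta_j)=\theta_j$ for $j=1,2$. By the uniqueness in Theorem~\ref{thm:design}, that $r$ would have to equal both $r^*(\theta_1)$ and $r^*(\theta_2)$, a contradiction. Equivalently, $r^*$ fails to be $\pi$-a.s.\ constant, so Theorem~\ref{thm:exante}(i) gives a strictly positive minimum, using existence of a minimizer on the compact $\mathcal R$ by continuity of $\mathcal L$.

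The only delicate point is the quantifier ``for every prior supported on $\Theta_0$.'' I read it as priors with support contained in $\Theta_0$, and I would state that explicitly, so that the two-point priors are admissible. Everything else is bookkeeping. I would also remark that under the almost-everywhere reading of Theorem~\ref{thm:exante}(i), continuity of $\Gamma_k$ (Assumption~\ref{ass:design}) ensures that "constant $\pi$-a.s.\ for every such prior" coincides with "constant pointwise on $\Theta_0$."
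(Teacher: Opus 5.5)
Your proof is correct and takes the paper's route: the paper also notes that $r^*(\theta)=[1+\Gamma_1(\theta)/\Gamma_2(\theta)]^{-1}$ is a strictly monotone function of the moment ratio, and then applies Theorem~\ref{thm:exante}(i). Your two-point prior makes explicit the step from ``$r^*$ is $\pi$-a.s.\ constant for every prior supported on $\Theta_0$'' to pointwise constancy on $\Theta_0$, which the paper leaves implicit; note that this step needs only admissibility of such priors, not the continuity of $\Gamma_k$ invoked in your closing remark.
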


The proof is in Appendix~\ref{app:proof:uniformimpl}.

\subsection{Finite-sample welfare}
\label{sec:finitesample-appendix}

Moment balance eliminates limiting bias; it does not make a finite selected
record equivalent to complete disclosure. Three questions remain. How noisy
is the equilibrium action? How should a platform trade that noise against
small-sample bias? And does the resulting approximation describe the policy
that minimizes exact finite-pool risk? We answer these questions in turn.

Maintain the uniform within-camp schedule, anchor rate, and notation defined
in Section~\ref{sec:finitesample}.

\paragraph{Primitive bias and variance coefficients.}
For reference and proof use, we record the primitive forms of the
coefficients introduced in the main text. Fix $(r,\theta)$, abbreviate
$c^*=c^*(r,\theta)$, and let
\begin{align*}
 U(c^*)&=\E_\theta(X_1-c^*)^+,
 &L(c^*)&=\E_\theta(c^*-X_2)^+,\\
 p_1&=\Prob_\theta(X_1>c^*),
 &p_2&=\Prob_\theta(X_2<c^*).
\end{align*}
The visible attention mass and the variance of the empirical balance
equation are
\begin{align*}
 D^*(r,\theta)&=rp_1+(1-r)p_2,\\
 \Omega(r,\theta)
 &=\frac{r^2}{\alpha_1}\operatorname{Var}_\theta((X_1-c^*)^+)
   +\frac{(1-r)^2}{\alpha_2}\operatorname{Var}_\theta((c^*-X_2)^+).
\end{align*}
Accordingly,
\begin{align*}
 V^C(r,\theta)&=\frac{\Omega(r,\theta)}{D^*(r,\theta)^2},\\
 V^F(r,\theta)
 &=\frac{r^2}{\alpha_1}\operatorname{Var}_\theta(X_1)
  +\frac{(1-r)^2}{\alpha_2}\operatorname{Var}_\theta(X_2).
\end{align*}
Define
\[
 C(r,\theta)
 =-\frac{r^2}{\alpha_1}U(c^*)(1-p_1)
  +\frac{(1-r)^2}{\alpha_2}L(c^*)(1-p_2).
\]
Then the order-$1/n$ cutoff-bias coefficient is
\begin{align}
 B^{C,\nu}(r,\theta)
 &=\underbrace{\frac{C(r,\theta)}{D^*(r,\theta)^2}
   +\frac{[r f_{1,\theta}(c^*)-(1-r)f_{2,\theta}(c^*)]
          \Omega(r,\theta)}{2D^*(r,\theta)^3}}
   _{B^{C,0}(r,\theta)}
   +\frac{\nu(a_0-c^*)}{D^*(r,\theta)}.
 \label{eq:secondorderbias}
\end{align}
The two components inside $B^{C,0}$ respectively capture covariance between
the empirical balance and its slope, and curvature of the implicit cutoff
map. The final term is the receiver-benchmark component isolated in
Theorem~\ref{thm:exactpolicy}.

\paragraph{Detailed finite-pool regularity.}
The exact uniform formulation of Assumption~\ref{ass:finitepool}(i) is that
there are constants $\varepsilon,d_0,M>0$ such that, for every
$(r,\theta)\in\mathcal K$, the interval
$[c^*(r,\theta)-\varepsilon,c^*(r,\theta)+\varepsilon]$ lies in the interior
of $\mathcal X$, $D^*(r,\theta)\ge d_0$, and each density $f_{k,\theta}$ is
continuously differentiable on that interval, with the density and its first
derivative bounded by $M$ uniformly over $\mathcal K$. For part (ii), there
is a compact interval $\mathcal N_\mu$ with
$r_\mu\in\operatorname{int}\mathcal N_\mu\subset
\operatorname{int}\mathcal R_0$ such that, on the corresponding cutoff
neighborhoods for $\mathcal N_\mu\times\operatorname{supp}\mu$, each density
has three derivatives that are jointly continuous in $(x,\theta)$ and
uniformly bounded; moreover, $\eta_n^{-1}=O(n^\rho)$ for some finite $\rho$.

We first establish the distribution and risk expansion used by the
main-text policy theorem.

\begin{proposition}[Finite-pool distribution and risk]
\label{thm:finitewelfare}
Under Assumption~\ref{ass:evidence}, conditional on $\theta$ and for fixed
$r\in(0,1)$:
\begin{align}
  \sqrt n\,[c_{n}-c^{*}(r,\theta)]
  &\ \Rightarrow\ N\big(0,V^{C}(r,\theta)\big),
  \label{eq:cltcurated}\\
  \sqrt n\,[a_n(I_1\cup I_2)-\theta]
  &\ \Rightarrow\ N\big(0,V^{F}(r,\theta)\big).
  \label{eq:cltfull}
\end{align}
Moreover, under Assumption~\ref{ass:finitepool}(i),
\begin{equation}
  \E_{\theta}[c_n-c^{*}(r,\theta)]
  =\frac{B^{C,\nu}(r,\theta)}{n}+o(n^{-1}),
  \label{eq:biasexpansion}
\end{equation}
and
\begin{equation}
  \E_{\theta}\big[(c_{n}-\theta)^{2}\big]
  =\big[c^{*}(r,\theta)-\theta\big]^{2}
   +\frac{Q_{\nu}(r,\theta)}{n}+o(n^{-1}),
  \qquad
  Q_{\nu}(r,\theta)
  =V^{C}(r,\theta)
   +2[c^{*}(r,\theta)-\theta]B^{C,\nu}(r,\theta),
  \label{eq:riskexpansion}
\end{equation}
Both second-order expansions are uniform on $\mathcal K$: the suprema over
$\mathcal K$ of the absolute remainders, after multiplication by $n$,
converge to zero. In particular, at an implementing pair
$c^{*}(r,\theta)=\theta$, the leading order-$1/n$ risk term is
$V^{C}(r,\theta)/n$.
\end{proposition}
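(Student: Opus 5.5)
The plan is to treat $c_n$ as the root of the empirical balance function $H_n$ in \eqref{eq:Hn}, i.e.\ as a Z-estimator, and expand around the population root $c^*=c^*(r,\theta)$ of $H$ in \eqref{eq:H}. With the uniform schedule, write
\[
H_n(c)=\eta_n(a_0-c)+\frac{r}{n_1}\sum_{i\in I_1}(X_i-c)^+-\frac{1-r}{n_2}\sum_{i\in I_2}(c-X_i)^+ .
\]
This is a sum of two independent i.i.d.\ averages plus a deterministic anchor term. By Theorem~\ref{thm:finite}, $c_n$ is its unique zero. The map $c\mapsto H_n(c)$ is continuous, piecewise linear and strictly decreasing, with slope
\[
-\Bigl[\eta_n+\tfrac{r}{n_1}\#\{i\in I_1:X_i>c\}+\tfrac{1-r}{n_2}\#\{i\in I_2:X_i<c\}\Bigr].
\]
This slope converges uniformly near $c^*$ to $-D^*(r,\theta)$.

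\textbf{Step 1: consistency and the two CLTs.} Consistency is Proposition~\ref{thm:largepool}. For \eqref{eq:cltcurated}, use monotonicity of $H_n$. For fixed $z$,
\[
\Prob(\sqrt n(c_n-c^*)\le z)=\Prob\bigl(H_n(c^*+z/\sqrt n)\le 0\bigr).
\]
Then expand $H_n(c^*+z/\sqrt n)=H_n(c^*)-D^*z/\sqrt n+o_p(n^{-1/2})$. The remainder is controlled by the Lipschitz property of $c\mapsto (x-c)^+$ and uniform convergence of the empirical slope, via a Glivenko--Cantelli argument. The term $\sqrt n H_n(c^*)$ is asymptotically normal with variance $\Omega$ by the Lindeberg CLT, applied to each camp with $n_k/n\to\alpha_k$; the anchor term is $O(\eta_n)=o(n^{-1/2})$. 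This gives variance $\Omega/D^{*2}=V^C$. For \eqref{eq:cltfull}, write $a_n(I_1\cup I_2)-\theta$ as the weighted mean deviation divided by $1+\eta_n$, and apply the CLT directly.

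\textbf{Step 2: bias expansion \eqref{eq:biasexpansion}.} Carry the implicit-function expansion to second order. Write
\[
G_n=H_n(c^*)-H(c^*),\qquad \Delta_n(c)=\text{(empirical slope)}-H'(c),
\]
and note that $H''(c^*)=r f_1(c^*)-(1-r)f_2(c^*)$. Solving $0=H_n(c_n)$ iteratively gives
\[
c_n-c^*\approx \frac{G_n}{D^*}+\frac{\Delta_n(c^*)G_n}{D^{*2}}+\frac{H''(c^*)G_n^2}{2D^{*3}}+\frac{\eta_n(a_0-c^*)}{D^*}.
\]
Taking expectations, the first term vanishes except for the anchor piece, which yields $\nu(a_0-c^*)/(nD^*)$. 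The covariance of the empirical slope with $G_n$ produces the term $C/D^{*2}$. This uses $\operatorname{Cov}\bigl((X_1-c)^+,\ind{X_1>c}\bigr)=U(1-p_1)$ per item, scaled by $r^2/(n\alpha_1)$, with the analogous term of opposite sign for camp 2. The quadratic term gives $H''\,\Omega/(2nD^{*3})$. Together these reproduce \eqref{eq:secondorderbias}. The risk expansion \eqref{eq:riskexpansion} then follows from
\[
\E(c_n-\theta)^2=d^2+2d\,\E(c_n-c^*)+\E(c_n-c^*)^2,
\]
with $\E(c_n-c^*)^2=V^C/n+o(n^{-1})$.

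\textbf{Main obstacle.} The hard part is turning these heuristic expansions into genuine moment statements with $o(n^{-1})$ remainders that hold uniformly on $\mathcal K$. Weak convergence does not imply convergence of moments, and the slope of $H_n$ jumps whenever an observation crosses $c$. My first approach is a localization argument. On a high-probability event $E_n$, $|c_n-c^*|\le n^{-1/2}\log n$ and the empirical slope is within $n^{-1/2}\log n$ of $D^*$. This event is obtained from Bernstein/Hoeffding bounds for bounded summands, made uniform using the uniform bounds in Assumption~\ref{ass:finitepool}(i). Its complement has probability $O(n^{-2})$, and $c_n\in\mathcal X$ is bounded, so it contributes negligibly.

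On $E_n$, replace the empirical slope by its expectation plus fluctuation. To handle crossings, bound the number of observations in a window of width $n^{-1/2}\log n$ around $c^*$ using the density bound $M$. This shows that the integrated effect of crossings enters only through $H''(c^*)$ plus $o(n^{-1})$, with the $C^1$ densities giving the needed Taylor control. Uniformity over $\mathcal K$ then follows because every constant depends only on $\varepsilon$, $d_0$, $M$, and bounds on the $\alpha_k$.
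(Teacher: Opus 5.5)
Your architecture matches the paper's. You treat $c_n$ as the root of $H_n$ and expand to second order, so that the bias comes from the slope--score covariance $\E[u_ne_n]$, the curvature $H_{cc}(c^*)=rf_{1,\theta}(c^*)-(1-r)f_{2,\theta}(c^*)$, and the anchor, and then you localize. Step~1 is fine, and so is the moment bookkeeping on an event of radius $b_n=n^{-1/2}\log n$. On that event $\delta_n-e_n/D^*=O(n^{-1}\log^2 n)$, which already suffices for $n\E\delta_n^2\to V^C$. One slip: your $G_n=H_n(c^*)-H(c^*)$ already contains $\eta_n(a_0-c^*)$, so your display counts the anchor twice. $G_n$ should be the anchor-free score, the paper's $e_n$.

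The genuine gap is the crossing step, which is exactly where the bias is determined, and the mechanism you propose does not deliver it. Write
$H_n(c^*+t)=\eta_n(a_0-c^*-t)+e_n-D_nt+rP_{n1}\ell_1(\cdot,t)-(1-r)P_{n2}\ell_2(\cdot,t)$,
with $D_n=D^*-u_n$ the empirical slope at $c^*$ and crossing losses satisfying $|\ell_k(x,t)|\le|t|\ind{|x-c^*|\le|t|}$.

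Counting observations in the window with the density bound $M$ bounds the crossing terms only at order $\delta_n^2$; on your event this is $Mb_n^2=Mn^{-1}\log^2 n$. That is the same order as the bias $B^{C,\nu}/n$ you are trying to identify. It therefore cannot show that crossings contribute $\tfrac12H_{cc}\delta_n^2+o(n^{-1})$ in expectation.

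What you need is to split each crossing term into two parts:
\begin{itemize}
\item its population part, $P_k\ell_k(\cdot,t)=\tfrac12f_{k,\theta}(c^*)t^2+O(|t|^3)$;
\item its centered part, $(P_{nk}-P_k)\ell_k(\cdot,t)$, which must be shown to be $o(n^{-1})$ in $L^1$ at $t=\delta_n$.
\end{itemize}
Because $\delta_n$ is built from the same sample, a fixed-$t$ variance bound (of order $M|t|^3/n_k$) does not suffice. You need a localized maximal inequality that holds uniformly over $|t|\le b_n$, applied before you substitute $\delta_n$.

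The paper uses the VC-subgraph bound with envelope $F_a=a\ind{|X-c^*|\le a}$, where $P_kF_a^2\le 2Ma^3$. This gives $\bigl\|\sup_{|t|\le a}|(P_{nk}-P_k)\ell_k(\cdot,t)|\bigr\|_{L^2}\lesssim a^{3/2}n^{-1/2}+a/n$, which is $O(n^{-5/4}\log^{3/2}n)$ at $a=b_n$.

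An elementary substitute also works. For $t>0$, $\ell_1(x,t)=\int_0^t\ind{c^*<x\le c^*+s}\,ds$, and the other cases are similar. So the centered crossing term is at most $|t|$ times the oscillation of the centered empirical distribution function over a window of width $|t|$. A grid-plus-Bernstein argument bounds that oscillation by $O(\sqrt{b_n\log n/n})$, hence the crossing term by $O(n^{-5/4}\log^2 n)$.

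With either bound in place, your remaining steps yield $B^{C,\nu}$, the risk expansion, and uniformity on $\mathcal K$.
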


The proof is in Appendix~\ref{app:finitewelfare}. The equilibrium action is
an implicit two-sample estimator. Its numerator uses one-sided deviations,
while its effective slope is the visible attention mass $D^{*}$; selective
curation can therefore either reduce raw sampling variation or magnify it
through a thin visible record. Proposition~\ref{prop:varcost} settles the
comparison in the symmetric benchmark; Remark~\ref{rem:reversal} shows the
ranking can reverse outside it.

We next use the risk objects defined in Section~\ref{sec:finitesample} to
compare the exact and second-order attention policies. Write
$\widetilde{\mathcal W}_{n,\mu}
=\mathcal L_\mu+\mathcal Q_{\mu,\nu}/n$ for the second-order criterion.

\begin{proposition}[Second-order attention design]
\label{prop:finitepolicy}
Suppose Assumption~\ref{ass:finitepool}(i) holds. Suppose also that
$\mathcal L_\mu$ has a unique minimizer
$r_\mu\in\operatorname{int}\mathcal R_0$, that
$\mathcal L_\mu$ and $\mathcal Q_{\mu,\nu}$ are continuous on $\mathcal R_0$ and
twice continuously differentiable near $r_\mu$, and that
$\mathcal L_\mu''(r_\mu)>0$. Then:

\smallskip
\noindent\textbf{(i) Exact policy consistency.}
The exact finite-pool risk $\mathcal W_{n,\mu}$ has a minimizer on
$\mathcal R_0$, and every sequence
$\widehat r_{n,\mu}\in\arg\min_{r\in\mathcal R_0}\mathcal W_{n,\mu}(r)$
satisfies $\widehat r_{n,\mu}\to r_\mu$.

\smallskip
\noindent\textbf{(ii) The second-order policy correction.}
For all sufficiently large $n$, the second-order criterion
$\widetilde{\mathcal W}_{n,\mu}$ has a unique minimizer
$\widetilde r_{n,\mu}$ on $\mathcal R_0$, and
\begin{equation}
 n(\widetilde r_{n,\mu}-r_\mu)
 \longrightarrow
 -\frac{\mathcal Q_{\mu,\nu}'(r_\mu)}{\mathcal L_\mu''(r_\mu)}.
 \label{eq:secondorderpolicy}
\end{equation}
Moreover,
\begin{equation}
 \widetilde{\mathcal W}_{n,\mu}(\widetilde r_{n,\mu})
 =\mathcal L_\mu(r_\mu)+\frac{\mathcal Q_{\mu,\nu}(r_\mu)}{n}
  -\frac{[\mathcal Q_{\mu,\nu}'(r_\mu)]^2}
         {2\mathcal L_\mu''(r_\mu)n^2}
  +o(n^{-2}).
 \label{eq:secondordervalue}
\end{equation}

\smallskip
\noindent\textbf{(iii) Conditional bias--variance decomposition.}
If $\mu$ is a point mass at $\theta$, then $r_\mu=r^*(\theta)$. Write
\[
 \kappa(\theta)
 =\left.\frac{\partial c^*(r,\theta)}{\partial r}
  \right|_{r=r^*(\theta)}>0.
\]
The second-order conditional policy satisfies
\begin{equation}
 n[\widetilde r_n(\theta)-r^*(\theta)]
 \longrightarrow
  -\frac{B^{C,\nu}(r^*,\theta)}{\kappa(\theta)}
  -\frac{\partial_r V^C(r^*,\theta)}{2\kappa(\theta)^2}.
 \label{eq:biasvarianceshift}
\end{equation}
Thus finite-pool attention moves against endogenous estimator bias and the
anchor's pull, and toward lower sampling variance. Since
\[
 B^{C,\nu}(r^*,\theta)
 =B^{C,0}(r^*,\theta)+\frac{\nu(a_0-\theta)}{D^*(r^*,\theta)},
\]
the anchor contributes the transparent policy shift
$-\nu(a_0-\theta)/[\kappa(\theta)D^*(r^*,\theta)]$: an anchor below the state
calls for more upward attention, and an anchor above it for less. For a
nondegenerate prior, the same
formula~\eqref{eq:secondorderpolicy} applies at the ex ante optimum, with
$\mathcal Q_{\mu,\nu}$ averaging the state-specific endogenous-bias, anchor,
and variance terms.
\end{proposition}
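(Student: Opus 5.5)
The argument combines the uniform value expansion of Proposition~\ref{thm:finitewelfare} with standard $M$-estimation/argmin arguments, and it proceeds in the order of the three parts.

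\textbf{Part (i).} First I will show that $\mathcal W_{n,\mu}$ is continuous on the compact set $\mathcal R_0$. For each realization, $c_n(r)$ is the unique zero of $H_n$, which is continuous and strictly decreasing in $c$ and jointly continuous in $(c,r)$. Hence $c_n(r)$ is continuous in $r$. It is also bounded in $\mathcal X$, so dominated convergence gives continuity of the risk, and a minimizer exists.

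Next, the uniform expansion \eqref{eq:riskexpansion}, integrated against $\mu$ (compact support, uniform remainders on $\mathcal K$), gives $\sup_{\mathcal R_0}|\mathcal W_{n,\mu}-\mathcal L_\mu|\to0$. Since $\mathcal L_\mu$ is continuous with a unique minimizer on a compact set, the standard argmin-consistency argument applies. For every $\delta>0$, $\inf_{|r-r_\mu|\ge\delta}\mathcal L_\mu>\mathcal L_\mu(r_\mu)$. Uniform convergence then forces every $\widehat r_{n,\mu}$ into the $\delta$-ball eventually.

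\textbf{Part (ii).} The criterion $\widetilde{\mathcal W}_{n,\mu}=\mathcal L_\mu+\mathcal Q_{\mu,\nu}/n$ converges uniformly to $\mathcal L_\mu$, because $\mathcal Q_{\mu,\nu}$ is continuous and therefore bounded. The same argument as in part (i) puts its minimizers near $r_\mu$.

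On a neighborhood where both functions are $C^2$, $\widetilde{\mathcal W}_{n,\mu}''=\mathcal L_\mu''+O(1/n)$, which is bounded below by $\mathcal L_\mu''(r_\mu)/2>0$. So $\widetilde{\mathcal W}_{n,\mu}$ is strictly convex there, and its minimizer is unique and interior. Uniqueness on all of $\mathcal R_0$ follows because any global minimizer eventually lies in this neighborhood.

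The first-order condition $\mathcal L_\mu'(\widetilde r)+\mathcal Q_{\mu,\nu}'(\widetilde r)/n=0$ can be expanded as $\mathcal L_\mu'(\widetilde r)=\mathcal L_\mu''(\bar r)(\widetilde r-r_\mu)$, using $\mathcal L_\mu'(r_\mu)=0$. This gives
$n(\widetilde r-r_\mu)=-\mathcal Q_{\mu,\nu}'(\widetilde r)/\mathcal L_\mu''(\bar r)$. By continuity of $\mathcal Q_{\mu,\nu}'$ and $\mathcal L_\mu''$, the right side converges to \eqref{eq:secondorderpolicy}. This shows in particular that $\widetilde r-r_\mu=O(1/n)$.

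For the value formula \eqref{eq:secondordervalue}, I will Taylor expand $\mathcal L_\mu$ to second order around $r_\mu$ and $\mathcal Q_{\mu,\nu}$ to first order, then substitute $\widetilde r-r_\mu=-\mathcal Q'/(n\mathcal L'')+o(1/n)$. The cross term and the quadratic term combine to $-[\mathcal Q']^2/(2\mathcal L''n^2)$. The remainders are $o(n^{-2})$ because both expansions are $C^2$ and the displacement is $O(1/n)$.

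\textbf{Part (iii).} With $\mu=\delta_\theta$, Theorem~\ref{thm:design} gives $r_\mu=r^*(\theta)$ and $d(r^*,\theta)=0$. Since $\mathcal L=d^2$, we have $\mathcal L''(r^*)=2\kappa^2$. Also $Q_\nu=V^C+2dB^{C,\nu}$ gives $\partial_rQ_\nu(r^*)=\partial_rV^C+2\kappa B^{C,\nu}$, because the term $2d\,\partial_rB$ vanishes at $d=0$. Dividing yields \eqref{eq:biasvarianceshift}. Substituting $c^*=\theta$ into \eqref{eq:secondorderbias} splits off the anchor term.

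\textbf{Main obstacle.} I expect the main obstacle to be the regularity needed in part (ii), namely that $\mathcal Q_{\mu,\nu}$ is $C^2$, or at least $C^1$. This requires $V^C$ and $B^{C,\nu}$, and hence $c^*$, to be smooth in $r$. Those properties depend on the density derivatives in \eqref{eq:secondorderbias}. Since they are granted by hypothesis in this proposition, the only substantive inputs are the uniform remainder in Proposition~\ref{thm:finitewelfare} and the identification inequality.
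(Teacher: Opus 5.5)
Your proposal is correct and follows the paper's proof essentially step for step. For (i), you integrate the uniform value expansion of Proposition~\ref{thm:finitewelfare} and use continuity of $c_n(r)$ to get existence and argmin consistency. For (ii), you establish eventual localization and strict convexity of $\widetilde{\mathcal W}_{n,\mu}$ and then Taylor-expand its first-order condition. For (iii), you use the identities $\mathcal L_\mu''(r^*)=2\kappa(\theta)^2$ and $\partial_rQ_\nu(r^*,\theta)=\partial_rV^C+2\kappa B^{C,\nu}$. Your mean-value form $n(\widetilde r_{n,\mu}-r_\mu)=-\mathcal Q_{\mu,\nu}'(\widetilde r_{n,\mu})/\mathcal L_\mu''(\bar r_n)$ gives the $O(n^{-1})$ rate and the limit in one step, and the $C^2$ regularity you flag as the main obstacle is supplied by hypothesis.
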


The proof is in Appendix~\ref{app:proof:finitepolicy}.

Parts (i) and (ii) make different claims. The uniform $o(n^{-1})$ remainder
identifies the limit of the exact optimizer but not its order-$1/n$
displacement, because policies separated by $O(n^{-1})$ differ in welfare
only at order $n^{-2}$. Thus \eqref{eq:secondorderpolicy} follows for the
second-order criterion, but it cannot be obtained for the exact optimizer by
differentiating \eqref{eq:integratedrisk}. The stronger smoothness condition
in Assumption~\ref{ass:finitepool}(ii) is used below to control observations
that cross the cutoff when attention changes.

\begin{proposition}[Uniform derivative expansion]
\label{prop:riskderivative}
Maintain the conditions of Proposition~\ref{prop:finitepolicy}. For every
compact interval $\mathcal N_\mu$ with
$r_\mu\in\operatorname{int}\mathcal N_\mu\subset
\operatorname{int}\mathcal R_0$ on which $\mathcal L_\mu$ and
$\mathcal Q_{\mu,\nu}$ are twice continuously differentiable and
Assumption~\ref{ass:finitepool}(ii) holds, the exact finite-pool risk is
continuously differentiable on $\operatorname{int}\mathcal N_\mu$ and
\begin{equation}
 \sup_{r\in\mathcal N_\mu}
 n\left|
 \mathcal W_{n,\mu}'(r)
 -\mathcal L_\mu'(r)
 -\frac{\mathcal Q_{\mu,\nu}'(r)}{n}
 \right|\longrightarrow0.
 \label{eq:riskderivativeexpansion}
\end{equation}
Consequently, every exact minimizer has the order-$1/n$ displacement stated
in Theorem~\ref{thm:exactpolicy}.
\end{proposition}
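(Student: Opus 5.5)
The plan is to write the exact risk at a fixed state as $\E_\theta[(c_n(r)-\theta)^2]$ and differentiate inside the expectation. First I would show that $r\mapsto c_n(r)$ is almost surely continuous and piecewise $C^1$. Fix a realization and use the implicit function theorem on $H_n(c;r)=0$, with $H_n$ from \eqref{eq:Hn} and weights $r/n_1$ and $(1-r)/n_2$. Away from ties, $H_n$ is smooth in $(c,r)$ with
\[
\partial_c H_n=-\Bigl(\eta_n+\tfrac{r}{n_1}\#\{i\in I_1:X_i>c\}+\tfrac{1-r}{n_2}\#\{i\in I_2:X_i<c\}\Bigr)<0 .
\]
This gives $\partial_r c_n=\partial_rH_n/D_n$, where $D_n$ is the empirical visible mass. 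The kinks occur exactly where an observation crosses the cutoff. Because $H_n$ is continuous in $c$ even at such crossings, $c_n(r)$ stays Lipschitz with constant bounded by $\eta_n^{-1}$. The condition $\eta_n^{-1}=O(n^\rho)$ yields polynomial bounds on these derivatives, so dominated convergence gives $\mathcal W_{n,\mu}'(r)=\int\E_\theta[2(c_n-\theta)\partial_rc_n]\,d\mu$ on $\operatorname{int}\mathcal N_\mu$, with continuity inherited from the absolutely continuous densities. The moving-boundary terms are absent at the pathwise level because $c_n$ is continuous. They reappear when the expectation of $\partial_rc_n$ is expanded, since the count in $D_n$ jumps as observations cross the cutoff.

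Second, I would localize. Using the exponential tail bounds already behind Proposition~\ref{thm:finitewelfare}, I would show that $|c_n-c^*|>\varepsilon$ or $D_n<d_0/2$ occurs with probability $o(n^{-k})$ for every $k$, uniformly on $\mathcal N_\mu\times\operatorname{supp}\mu$. The polynomial bounds then make the excluded events negligible. On the good event, I would write $c_n=c^*+\Delta_n$, where $\Delta_n$ solves the empirical balance equation, and expand $\partial_rc_n=\partial_rH_n(c_n)/D_n(c_n)$ to second order around $c^*$. This uses the leave-one-out roots $c_n^{(-i)}$, which decouple each item's indicator $\mathbf 1\{X_i>c_n\}$ from the cutoff it helps determine. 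The three-times differentiability of the densities controls the remainder in $\E[\mathbf 1\{X_i>c_n^{(-i)}+\delta_i\}]$ as a Taylor expansion in the $O(1/n)$ shift $\delta_i$. The resulting expression gives $\E_\theta[(c_n-\theta)\partial_rc_n]=d\,\partial_rc^*+\tfrac1{2n}\partial_rQ_\nu+o(n^{-1})$ uniformly. The matching against $\partial_r$ of \eqref{eq:riskexpansion} can be done by showing that each coefficient arises as the $r$-derivative of the corresponding coefficient in Proposition~\ref{thm:finitewelfare}, whose expansion is uniform. Checking this coefficient by coefficient is the main bookkeeping, but it is mechanical.

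Third, I would derive the consequence. Given \eqref{eq:riskderivativeexpansion}, Proposition~\ref{prop:finitepolicy}(i) places exact minimizers in $\operatorname{int}\mathcal N_\mu$ eventually, so $\mathcal W_{n,\mu}'(\widehat r_n)=0$. A Taylor expansion of $\mathcal L_\mu'$ around $r_\mu$, together with $\mathcal L_\mu''(r_\mu)>0$, gives $\mathcal L_\mu''(r_\mu)(\widehat r_n-r_\mu)+\mathcal Q_{\mu,\nu}'(r_\mu)/n=o(1/n)+o(|\widehat r_n-r_\mu|)$. This yields the displacement in Theorem~\ref{thm:exactpolicy}. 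The three-term split then follows from \eqref{eq:biasdecomposition-main} and $\partial_r Q_\nu=\partial_rV^C+2\kappa B^{C,\nu}$ at $d=0$.

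The main obstacle is the second step: establishing uniformity of the $o(n^{-1})$ derivative remainder when the cutoff itself depends on every item. I would first attempt the leave-one-out coupling with $|c_n-c_n^{(-i)}|=O_p(1/n)$ and bound the crossing-probability remainder through the third derivative of the density.
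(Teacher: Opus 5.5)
\emph{Route.} Your route differs from the paper's. The paper does not compute $\E_\theta\dot c_n$ to second order at fixed $r$. It writes $R_n=d^2+2d\,\E_\theta\delta_n+\E_\theta\delta_n^2$ with $\delta_n=c_n-c^*$, differentiates the expectation of the named second-order root expansion (whose leading terms have exact means $0$, $C_n/n$, $\Omega_n/n$), and proves $\partial_r\E_\theta\mathfrak r_n=o(n^{-1})$. The hard work there is the $r$-derivative of the empirical crossing term (Lemma~\ref{lem:C1crossing}, with moving-endpoint integrals) and of $\E_\theta[u_n\rho_n]$ (exact leave-one-out centering).

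Your approach keeps the pathwise derivative inside the expectation, so you never differentiate an integral across a moving cutoff. Your coefficient matching is also automatic. A uniform expansion $\partial_rR_n=g_0+g_1/n+o(n^{-1})$ with $g_1$ continuous, integrated in $r$ and compared with the uniform value expansion of Proposition~\ref{thm:finitewelfare}, forces $g_0=\partial_r d^2$ and $g_1=\partial_rQ_\nu$.

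The price is twofold. Every $O(n^{-1})$ mean in your expansion of $\dot c_n=\partial_rH_n(c_n)/D_n(c_n)$ must be computed, not merely bounded. You also still need $\|\dot\delta_n-\partial_r(e_n/D^*)\|_{L^2}=o(n^{-1/2})$ to evaluate $\E_\theta[\delta_n\dot\delta_n]$. That bound comes from a local maximal inequality for the empirical mass between $c^*$ and $c_n$, not from density smoothness.

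\emph{The gap.} It lies in the step you flag as the crux. Consider the random-interval term $(P_{n1}-P_1)[\ind{\cdot>c_n}-\ind{\cdot>c^*}]$ in $D_n(c_n)$, where $P_{n1},P_1$ are Sender~1's empirical and population laws. Its $L^2$ norm is of order $n^{-3/4}$ up to logarithms, so no norm bound makes it $o(n^{-1})$. Its mean enters $\E_\theta\dot c_n$, and hence $\partial_rR_n$ through $2d\,\E_\theta\dot c_n$, at order $n^{-1}$ wherever $d\neq0$. The uniform statement on $\mathcal N_\mu$, and any nondegenerate prior, require those points.

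You propose to compute this mean by Taylor expanding $\E_\theta[\ind{X_i>c_n^{(-i)}+\delta_i}]$ in the shift $\delta_i$ using density smoothness. But $\delta_i$ is a function of $X_i$. The operative fact is a self-influence bound: on the good event, for $i\in I_1$, $0\le\delta_i\le K(X_i-c_n^{(-i)})^+/n$. Hence, for large $n$, $\ind{X_i>c_n}=\ind{X_i>c_n^{(-i)}}$ identically; an item never crosses the cutoff it moves. The conditional mean is therefore exactly $1-F_{1,\theta}(c_n^{(-i)})$, and
\[
 \E_\theta\bigl[(P_{n1}-P_1)\ind{\cdot>c_n}\bigr]
 =\frac1{n_1}\sum_{i\in I_1}\E_\theta\bigl[F_{1,\theta}(c_n)-F_{1,\theta}(c_n^{(-i)})\bigr]
 =f_{1,\theta}(c^*)\,\frac{r\,U(c^*)}{n_1D^*}+o(n^{-1}),
\]
with a mirror-image positive term for Sender~2.

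A shift expansion that treats $\delta_i$ as independent of $X_i$ returns $\E_\theta[1-F_{1,\theta}(c_n)]$ instead and loses exactly this term. Your $g_1$ would then be wrong at the very order the proposition concerns. Third derivatives of the density play no role in this step. You also need $\sup_i\|c_n-c_n^{(-i)}\|_{L^q}=O(n^{-1})$ on an exponentially likely event, not merely $O_p(1/n)$, to control expectations at this precision. With the self-influence bound (the paper's \eqref{eq:plus-self-influence}--\eqref{eq:minus-self-influence}) in place of the shift expansion, your route goes through.
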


The proof is combined with the proof of Theorem~\ref{thm:exactpolicy} in
Appendix~\ref{app:exactpolicy}.

\subsubsection{Exact two-point benchmark}
\label{app:exact-two-point-benchmark}

Here we give the exact calculation behind
Figure~\ref{fig:finite-policy-validation}. Let $n$ be even, give each camp
$m=n/2$ draws, and set $\eta_n=n^{-2}$ and $a_0=\theta=5$. Let
$K_1\sim\operatorname{Bin}(m,1/5)$ count value-$9$ draws in Sender 1's pool,
and let $K_2\sim\operatorname{Bin}(m,4/5)$ count value-$4$ draws in Sender
2's pool. Conditional on these counts, the unique finite-pool cutoff is
\[
 c_n(r;k_1,k_2)
 =\frac{5\eta_n+9r(k_1/m)+4(1-r)(k_2/m)}
        {\eta_n+r(k_1/m)+(1-r)(k_2/m)}.
\]
Writing $p_m(k;q)=\binom{m}{k}q^k(1-q)^{m-k}$, the exact finite-pool risk is
the finite sum
\[
 W_n(r)=\sum_{k_1=0}^{m}\sum_{k_2=0}^{m}
 p_m(k_1;1/5)p_m(k_2;4/5)
 \bigl[c_n(r;k_1,k_2)-5\bigr]^2.
\]
The exact policy plotted in Figure~\ref{fig:finite-policy-validation} is the
numerical minimizer of this expression over $[0,1]$. Because the calculation
sums over all pairs $(k_1,k_2)$, the figure contains no simulation error.

\subsubsection{Smooth validation and the information cost of curation}

The exact calculation in the main text uses a discrete distribution. We now
repeat the exercise under a smooth evidence law satisfying
Assumptions~\ref{ass:evidence} and~\ref{ass:finitepool}. Let $X=4+5Z$,
where $Z$ has density
\begin{equation}
 g(z)
 =\frac{12}{65}\,40z^{39}
  +\frac{53}{65}\,40(1-z)^{39},
 \qquad 0\leq z\leq1.
 \label{eq:smooth-validation-density}
\end{equation}
This mixture density is smooth and strictly positive on the interior, and
$\E[Z]=1/5$, so $\E[X]=\theta=5$. Both camps draw independently from this
common law, camp sizes are equal, $a_0=5$, and $\eta_n=n^{-2}$. Thus
$r^*=1/2$ and $\nu=0$. Numerical integration of the population objects in
Proposition~\ref{prop:finitepolicy} and Theorem~\ref{thm:exactpolicy} gives
\[
 \kappa=2.863832,
 \qquad B^{C,0}=-0.902897,
 \qquad \partial_rV^C=24.016573,
\]
and hence
\begin{equation}
 n(\widehat r_n-\frac{1}{2})\longrightarrow-1.148875,
 \qquad n(\widehat r_n-\widetilde r_n)\longrightarrow0,
 \qquad
 \widetilde r_n=\frac12-\frac{1.148875}{n}.
 \label{eq:smooth-validation-policy}
\end{equation}

Figure~\ref{fig:smooth-finite-policy-validation} compares this approximation
with the minimizer of simulated finite-pool risk. For each $n$, common random
numbers are used to solve the simulated risk first-order condition; the
vertical bars are pointwise $95\%$ Monte Carlo intervals obtained from the
score variance and local risk curvature. At $n=80$, the simulated and
second-order shares are $0.48434$ and $0.48564$; at $n=160$, they are
$0.49257$ and $0.49282$; and at $n=320$, they are $0.49619$ and $0.49641$.
The scaled intervals at $n=160$ and $n=320$ contain the theoretical limit
$-1.148875$. Thus the policy correction is not an artifact of the
off-assumption two-point calculation.

\begin{figure}[t]
 \centering
 \includegraphics[width=\textwidth]{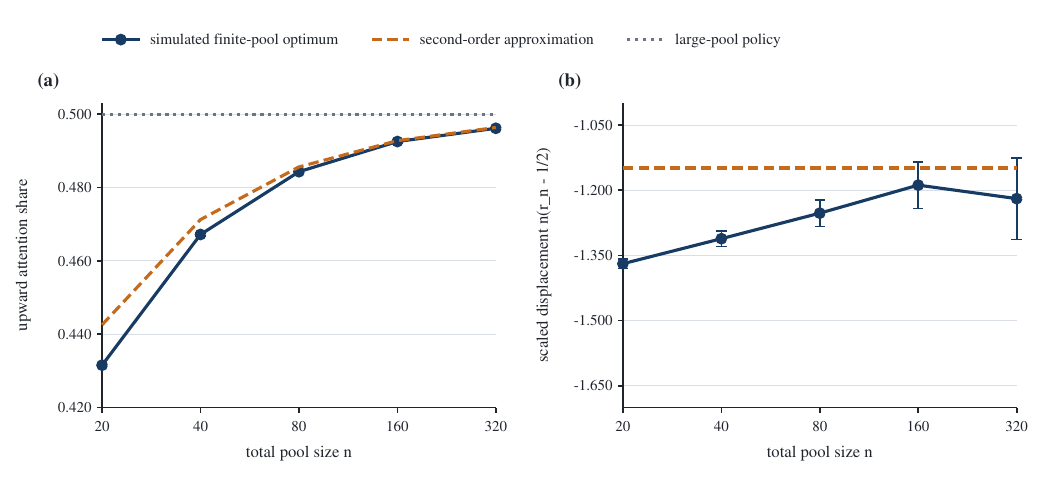}
 \caption{The smooth finite-pool risk-minimizing share and its second-order approximation.
 The evidence density is~\eqref{eq:smooth-validation-density}. Panel (a)
 compares the simulated finite-pool optimum, the second-order policy in
 \eqref{eq:smooth-validation-policy}, and the large-pool policy. Panel (b)
 rescales the simulated displacement and compares it with its theoretical
 limit. The calculations use respectively $300{,}000$, $250{,}000$,
 $180{,}000$, $120{,}000$, and $80{,}000$ independent pool pairs at
 $n=20,40,80,160,320$; vertical bars are pointwise $95\%$ Monte Carlo
 intervals and may be hidden by the markers.}
 \label{fig:smooth-finite-policy-validation}
\end{figure}

\begin{examplebox}
Let both camps draw from $U[0,1]$, let $\theta=1/2$, and take equal camp
sizes and equal attention. Then $c^{*}=\theta$, but
\[
  V^{C}=\frac{5}{48}
  \qquad\text{while}\qquad
  V^{F}=\frac{1}{12}.
\]
Strategic tail selection raises asymptotic variance by $25\%$ even though
it eliminates limiting bias. In a finite pool, ``accurate'' and ``as
informative as complete disclosure'' are therefore distinct claims.
\end{examplebox}

\begin{proposition}[The variance cost of curation]
\label{prop:varcost}
Let the camps draw from a common distribution $F_{\theta}$, let camp sizes
be asymptotically equal ($\alpha_{1}=\alpha_{2}=1/2$), and set $r=1/2$.
Then $c^{*}=\theta$, $D^{*}=1/2$, and
\[
  V^{C}(1/2,\theta)-V^{F}(1/2,\theta)
  =\operatorname{Var}_{\theta}\big(|X-\theta|\big)\;\geq\;0,
\]
with equality if and only if $|X-\theta|$ is degenerate. In this benchmark,
countervailing curation is weakly noisier than complete disclosure, and its
variance premium is exactly the variance of the evidence's absolute
deviation.
\end{proposition}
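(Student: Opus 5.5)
We prove Proposition~\ref{prop:varcost} by direct substitution into the variance formulas of Proposition~\ref{thm:finitewelfare}. The first step fixes the limiting cutoff and visible mass. By Corollary~\ref{thm:countervailing}, a common law and $r=1/2$ give $c^*=\theta$. Because $F_\theta$ is atomless, $p_1=\Prob_\theta(X>\theta)=1-F_\theta(\theta)$ and $p_2=F_\theta(\theta)$. Hence
\[
D^*=\tfrac12p_1+\tfrac12p_2=\tfrac12 .
\]
Note that this holds without symmetry: the two tail probabilities sum to one.

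The second step computes $V^C$. With $\alpha_1=\alpha_2=1/2$ and $r=1/2$, the weights $r^2/\alpha_1$ and $(1-r)^2/\alpha_2$ both equal $1/2$. Write $Y=X-\theta$, $A=Y^+$ and $B=Y^-=(\theta-X)^+$. Then
\[
\Omega=\tfrac12[\operatorname{Var}(A)+\operatorname{Var}(B)],
\qquad
V^C=\Omega/D^{*2}=2[\operatorname{Var}(A)+\operatorname{Var}(B)].
\]
Similarly, $V^F=\tfrac12\operatorname{Var}(X)+\tfrac12\operatorname{Var}(X)=\operatorname{Var}(Y)$.

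The third step is the algebraic identity, which I expect to be the only real step. Since $AB=0$, we have $\operatorname{Var}(Y)=\operatorname{Var}(A-B)$ and $\operatorname{Var}(|Y|)=\operatorname{Var}(A+B)$. Their sum is $2\operatorname{Var}(A)+2\operatorname{Var}(B)$, which gives
\[
V^C=\operatorname{Var}(A-B)+\operatorname{Var}(A+B)=V^F+\operatorname{Var}(|X-\theta|).
\]
Mean-unbiasedness gives $\E A=\E B=M_0(\theta)$, but it is not even needed for this parallelogram-type identity. It only serves to confirm the cross term: $\operatorname{Cov}(A,B)=-\E A\,\E B$ enters with opposite signs in the two variances and cancels in the sum. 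Nonnegativity is immediate. Equality holds if and only if $\operatorname{Var}(|X-\theta|)=0$, that is, $|X-\theta|$ is almost surely constant.

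As a check, the uniform example gives
\[
\operatorname{Var}|X-\tfrac12|=\tfrac1{48}=\tfrac5{48}-\tfrac1{12}.
\]
The main care point is to verify that Proposition~\ref{thm:finitewelfare}'s $V^F$ with equal camp fractions indeed reduces to $\operatorname{Var}(X)$.
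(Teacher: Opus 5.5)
Your proof is correct and follows the paper's route: you get $c^*=\theta$ from Corollary~\ref{thm:countervailing}, $D^*=1/2$ from atomlessness, and then substitute into $\Omega$, $V^C$, and $V^F$. The only difference is the final algebra. You use the parallelogram identity $\operatorname{Var}(A-B)+\operatorname{Var}(A+B)=2\operatorname{Var}(A)+2\operatorname{Var}(B)$, whereas the paper expands second moments using $g_1^2+g_2^2=(X-\theta)^2$ and $\E g_1=\E g_2=\tfrac12\E|X-\theta|$. Your version also makes clear that mean-unbiasedness is needed only to place the cutoff at $\theta$.
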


The proof is in Appendix~\ref{app:proof:varcost}. In the uniform example,
$\operatorname{Var}(|X-\tfrac12|)=1/48=5/48-1/12$: the $25\%$ premium is
exactly this variance. The identity also explains why curation is noisier in
the symmetric benchmark. The receiver sees each item's location relative to
the narrative through only one tail, so between-tail variation in
$|X-\theta|$, which complete disclosure averages away, survives in the
curated record.

\begin{remark}[Reversal outside the benchmark]
\label{rem:reversal}
With asymmetric camps the ranking can reverse, because equilibrium
suppression deletes each camp's \emph{unfavorable} tail from the record:
when a camp's sampling risk is concentrated in that tail, curation filters
out its own worst noise. For instance, let Sender 2 draw
$\theta+U[-1,1]$ and Sender 1 draw $\theta+u$, where $u$ equals
$-10.5$ with probability $0.1$ and $+7/6$ with probability $0.9$ (mean
zero; a smoothed version satisfies Assumption~\ref{ass:evidence} and
changes nothing by continuity). At the implementing share
$r^{*}\approx0.19$ and equal camp sizes,
$V^{C}\approx0.44$ while $V^{F}\approx1.34$: the curated record is three
times \emph{less} noisy than complete disclosure. Accuracy aside,
suppression of a camp's own bad news can be variance-reducing for the
receiver.
\end{remark}

\subsection{Private evidence}
\label{sec:private}

We now suppose that each sender observes only its own evidence. The state
remains common knowledge between the senders, so the result does not cover
private learning about the state. In contrast to the common-knowledge game
of Proposition~\ref{prop:ordinal}, a sender now ranks lotteries over the
receiver action. We therefore retain the baseline risk-neutral objectives
and their positive affine transformations. Write $n_k=|I_k|$ for
$k\in\{1,2\}$ and suppose $n_k\to\infty$ for both senders. Weights are uniform within each camp,
$v_{ni}=r_n/n_1$ on $I_1$ and $v_{ni}=(1-r_n)/n_2$ on $I_2$, and
$r_n\to r\in(0,1)$. For any deterministic sequence $\eta_n>0$ with
$\eta_n\to\bar\eta\in[0,\infty)$, let $c_n^{\circ}$ be the unique zero of
\begin{equation}
  \eta_n(a_0-c)+H(c;r_n,\theta)=0.
  \label{eq:private-target}
\end{equation}

We use behavioral Bayesian Nash equilibrium. A sender's strategy is a
measurable probability kernel from its own realized pool to its finite set
of disclosure subsets; private randomizations are independent conditional
on the realized pools. This is equivalent here to the distributional
strategy representation of \citet{milgromweber1985}. To make a finite-pool
tail cutoff unique, represent disclosure of the top $k$ observations by the
midpoint between the largest excluded and smallest disclosed observations,
with cutoffs $\overline x$ and $\underline x$ for the empty and full sets;
use the symmetric convention for a bottom-$k$ set. For a behavioral-strategy
profile, write $a_n(\sigma_1,\sigma_2)$ for its induced random receiver
action.

\begin{proposition}[Private evidence]
\label{prop:private}
Suppose each sender observes only its own pool and chooses its disclosure
set as a measurable, possibly randomized function of its own realization.
For every $n$, a behavioral Bayesian Nash equilibrium exists, and in every
such equilibrium each pure disclosure action used by either sender is
almost surely a tail policy, described under the convention above by a
threshold $T_1^n$ or $T_2^n$. Moreover:
\smallskip

\noindent\textbf{(a) Uniform security.}
Let $\tau_{1,n}^{\circ}=\{i:X_i>c_n^{\circ}\}$ and
$\tau_{2,n}^{\circ}=\{j:X_j<c_n^{\circ}\}$. Against arbitrary measurable,
possibly randomized opponent disclosure rules $\sigma_2$ and $\sigma_1$,
respectively, for every $\varepsilon>0$,
\begin{align*}
 \sup_{\sigma_2}\Pr\!\left(
 a_n(\tau_{1,n}^{\circ},\sigma_2)<c_n^{\circ}-\varepsilon\right)&\longrightarrow0,\\
 \sup_{\sigma_1}\Pr\!\left(
 a_n(\sigma_1,\tau_{2,n}^{\circ})>c_n^{\circ}+\varepsilon\right)&\longrightarrow0.
\end{align*}
Thus the two constant-cutoff rules are asymptotic maxmin strategies.

\noindent\textbf{(b) All-equilibrium concentration.}
Along every sequence of equilibria,
\[
  T_1^n-c_n^{\circ}\xrightarrow{p}0,\qquad
  T_2^n-c_n^{\circ}\xrightarrow{p}0,\qquad
  a_n-c_n^{\circ}\xrightarrow{p}0.
\]

\noindent\textbf{(c) Limit.}
$c_n^{\circ}\to c^*_{\bar\eta}$, the unique zero of
$\bar\eta(a_0-c)+H(c;r,\theta)$. In particular, for every rate
$\eta_n\downarrow0$, both private-evidence thresholds and the receiver
action converge to the common-knowledge narrative $c^*(r,\theta)$.
\end{proposition}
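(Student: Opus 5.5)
The plan is to handle the finite-$n$ structure first (existence and tail form), then the asymptotic claims (a)--(c), using the population balance function $H$ of Proposition~\ref{thm:largepool} as the deterministic skeleton throughout.

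\textbf{Existence and tail form.}
For existence, each sender has a finite action set: the subsets of its pool. Conditional on $\theta$, the two pools are independent with atomless marginals, by Assumption~\ref{ass:evidence}. Payoffs are bounded and measurable. The existence theorem of \citet{milgromweber1985} for distributional strategies therefore applies, and the paper has already noted that distributional strategies are equivalent to behavioral ones here.

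For the tail form, I will use a pointwise exchange argument that holds against every opponent set $S_2$. Within-camp weights are uniform, so suppose Sender 1 discloses $i$ and withholds $j$ with $X_j>X_i$. Swapping $i$ for $j$ leaves the denominator of \eqref{eq:action} unchanged and strictly raises the numerator. The action therefore rises for every $S_2$, and hence for every opponent lottery. Atomlessness rules out ties almost surely. Hence every disclosure set used with positive probability is almost surely a top-$k$ set, and the symmetric argument gives bottom-$k$ sets for Sender 2. This yields the thresholds $T_1^n$ and $T_2^n$ under the stated midpoint convention.

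\textbf{Part (a), uniform security.}
Fix Sender 1's rule $\tau_{1,n}^\circ$. For each realization, the worst opponent choice is pointwise: by \eqref{eq:marginal-item}, Sender 2 minimizes the action by showing its strict tail below the resulting action. So the infimum over all $\sigma_2$ is attained realization by realization. The minimizing action is the unique zero of
\[
G_n(c)=\eta_n(a_0-c)+\sum_{i\in\tau_{1,n}^\circ}v_{ni}(X_i-c)-\sum_{j\in I_2}v_{nj}(c-X_j)^+ .
\]
I will compare $G_n$ with $\eta_n(a_0-c)+H(c;r_n,\theta)$ on a neighborhood of $c_n^\circ$. There are two error sources:
\begin{itemize}
\item a sampling error, which the Glivenko--Cantelli theorem and a uniform law of large numbers for $(X-c)^\pm$ over compact $c$-sets make $o_p(1)$ uniformly;
\item a bookkeeping term, since Sender 1's tail is fixed at $c_n^\circ$ rather than at $c$, which is bounded by $|c-c_n^\circ|$ times the empirical mass between the two points.
\end{itemize}
Because $H$ has derivative $-D(c;r)$ bounded away from zero near $c^*$, the zero of $G_n$ exceeds $c_n^\circ-\varepsilon$ with probability tending to one, uniformly in $\sigma_2$. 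The statement for Sender 2 is symmetric.

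\textbf{Part (b), all-equilibrium concentration, and part (c).}
The game is zero-sum in $a_n$ with bounded actions. By (a), each sender can secure its side of $c_n^\circ$ up to $\varepsilon$. Hence in every equilibrium $\E a_n\in[c_n^\circ-\varepsilon-o(1),\,c_n^\circ+\varepsilon+o(1)]$.

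Upgrading this to $T_k^n-c_n^\circ\to0$ and $a_n-c_n^\circ\to0$ in probability is the step I expect to be the main obstacle. The first attempt will be a quantitative strict-loss lemma. Conditional on its own pool, Sender 1 faces an opponent record that is independent of that pool given $\theta$. By \eqref{eq:marginal-item}, moving its threshold a distance $\delta$ away from the realized action costs an amount of order $D\delta^2$ in the action, uniformly with high probability.

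The argument then runs as follows.
\begin{enumerate}
\item Suppose thresholds differ from $c_n^\circ$ by $\delta$ on an event of non-vanishing probability. The resulting expected loss is of order $\delta^2$ times that probability.
\item Compare this with the deviation to $\tau_{1,n}^\circ$, which by (a) loses at most $\varepsilon$.
\item Take $\varepsilon\downarrow0$; this contradicts equilibrium optimality.
\item Convergence of the thresholds then forces $a_n-c_n^\circ\to0$ through the uniform approximation from (a).
\end{enumerate}

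Part (c) is routine. The function $\eta(a_0-c)+H(c;r,\theta)$ is jointly continuous and strictly decreasing in $c$, with slope at most $-\eta-D<0$. Its zero is therefore continuous in $(\eta,r)$, so $c_n^\circ\to c^*_{\bar\eta}$. At $\bar\eta=0$ this zero is $c^*(r,\theta)$.
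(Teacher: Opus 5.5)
Your existence argument, the swap argument for tail form, the pointwise worst-case bound in (a), and the continuity argument in (c) are essentially the paper's. The paper bounds the numerator of $a_n-c_n^\circ$ below by $H_n(c_n^\circ)=o_p(1)$ for every $S_2$, which is your zero-of-$G_n$ argument in another guise.

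The gap is in part (b). Your strict-loss lemma measures the cost of a threshold relative to the \emph{realized action}. In the private-evidence game Sender 1 does not even know that action. Step~1 then silently replaces it by $c_n^\circ$, which presupposes $a_n\approx c_n^\circ$. Step~2 also overreads (a): $\tau_{1,n}^\circ$ guarantees at least $c_n^\circ-\varepsilon$, but it is not within $\varepsilon$ of a best response to an arbitrary $\sigma_2$.

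Nor can the two-sided value bound $\E a_n\in[c_n^\circ-o(1),c_n^\circ+o(1)]$ pin down thresholds. Let $y_n(s,t)$ be the population action at deterministic cutoffs $(s,t)$. The numerator of $y_n(s,t)-c_n^\circ$ is additively separable: a term in $s$ that is strictly negative for $s\neq c_n^\circ$, plus a term in $t$ that is strictly positive for $t\neq c_n^\circ$. Displaced thresholds on the two sides can therefore offset exactly. That leaves Sender 1's payoff at $c_n^\circ$ and contradicts neither security bound.

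The missing idea is that each sender's threshold is disciplined by the \emph{opponent's} deviation to the constant cutoff. The equilibrium threshold $T_2^n$ is independent of Sender 1's pool. Against it, Sender 1's deviation to $\tau_{1,n}^\circ$ earns $y_n(c_n^\circ,T_2^n)+o_p(1)=c_n^\circ+g_{2,n}(T_2^n)+o_p(1)$. Here $g_{2,n}(t)=y_n(c_n^\circ,t)-c_n^\circ\ge0$, and for all large $n$, $g_{2,n}(t)\ge\delta_\varepsilon>0$ whenever $|t-c_n^\circ|\ge\varepsilon$.

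Sender 1's equilibrium optimality, combined with Sender 2's security bound $V_n\le c_n^\circ+o(1)$, gives $\E g_{2,n}(T_2^n)\to0$. Markov's inequality then yields $T_2^n-c_n^\circ\xrightarrow{p}0$. Symmetrically, Sender 2's deviation to $\tau_{2,n}^\circ$ and Sender 1's security bound concentrate $T_1^n$. Equivalently, in this zero-sum game each equilibrium strategy is maxmin, so you should test $\sigma_1$ against $\tau_{2,n}^\circ$ rather than against $\sigma_2$.

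No quadratic strict-loss lemma is needed. What is needed is the uniform separation of the gap functions, and a uniform law of large numbers valid at the data-dependent, opponent-independent threshold. Your step~4 then goes through as stated.
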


The proof (Appendix~\ref{app:private}) uses security levels rather than an
iterated-limit argument. If Sender 1 discloses exactly its items
above $c_n^{\circ}$, the action relative to $c_n^{\circ}$ is bounded below
by the empirical counterpart of \eqref{eq:private-target}, no matter which
items Sender 2 discloses; the symmetric cutoff gives
Sender 2 the matching upper bound. These two pathwise inequalities
pin down the value. Strict separation of the corresponding population
payoffs away from $c_n^{\circ}$ then forces both equilibrium thresholds to
concentrate there. Thus the result neither presumes threshold play by the
opponent when establishing the guarantee nor restricts the relative rates
at which the evidence pool grows and the anchor vanishes.

\subsection{Personalized attention and heterogeneous receivers}
\label{sec:personalization}

The receiver so far is a single feed. Now let $J_n$ be a nonempty finite,
possibly growing set of listeners. Listener $j\in J_n$ aggregates the
\emph{same} equilibrium record with a deterministic attention row
$w_{nj}=(w_{nji})_{i=1}^n$, where $w_{nji}\geq0$ and
$\sum_iw_{nji}=1$. Define
\[
  \rho_{nj}=\sum_{i\in I_1}w_{nji},
  \qquad
  \omega_n=\max_{j\in J_n}\max_{1\leq i\leq n}w_{nji},
\]
and assume the uniform diffuseness condition
$\omega_n\log(2|J_n|)\to0$. The senders target the platform's main feed
(weights $v$, share $r$), not a weighted average of listener actions, so
their equilibrium cutoff still satisfies $c_n\xrightarrow{p}
c^*(r,\theta)$. From its disclosed record $S_n^*$, listener $j$ takes
\[
  a_{nj}=\frac{\sum_{i\in S_n^*}w_{nji}X_i}
                  {\sum_{i\in S_n^*}w_{nji}};
\]
set $a_{nj}=a_0$ if the denominator is zero, an event that the result below
makes asymptotically irrelevant. Define
\[
  y(\rho;c^{*})
  =
  \frac{\rho\,\E_{\theta}[X_1;X_1>c^{*}]
        +(1-\rho)\,\E_{\theta}[X_2;X_2<c^{*}]}
       {\rho\,\Prob_{\theta}(X_1>c^{*})
        +(1-\rho)\,\Prob_{\theta}(X_2<c^{*})}
  .
\]

\begin{proposition}[Uniform personalized accuracy]
\label{prop:personalized}
Under the uniform diffuseness condition above,
\begin{equation}
  \max_{j\in J_n}
  \big|a_{nj}-y(\rho_{nj};c^*(r,\theta))\big|
  \xrightarrow{p}0,
  \label{eq:uniform-personalization}
\end{equation}
and $y(\cdot\,;c^*)$ is strictly increasing. Consequently:
(a) for any main-feed design $r$, at most one local camp share $\rho$
can make a listener accurate;
(b) if the main feed uses the conditional implementing share
$r=r^*(\theta)$, then the entire growing receiver is uniformly accurate if
and only if its rows are uniformly balanced:
\[
  \max_{j\in J_n}|a_{nj}-\theta|\xrightarrow{p}0
  \quad\Longleftrightarrow\quad
  \max_{j\in J_n}|\rho_{nj}-r^*(\theta)|\longrightarrow0.
\]
\end{proposition}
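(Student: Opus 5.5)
The plan is to reduce listener $j$'s action to a ratio of two weighted empirical sums, prove a uniform law of large numbers across listeners by a maximal inequality, and then replace the random cutoff $c_n$ by $c^*$.

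\textbf{Decomposition.} Since the record is the equilibrium tail record at the random cutoff $c_n$ (ties have probability zero under atomlessness), write
\[
a_{nj}=\frac{N_{nj}(c_n)}{D_{nj}(c_n)},
\]
with numerator and denominator
\begin{align*}
N_{nj}(c)&=\sum_{i\in I_1}w_{nji}X_i\ind{X_i>c}+\sum_{i\in I_2}w_{nji}X_i\ind{X_i<c},\\
D_{nj}(c)&=\sum_{i\in I_1}w_{nji}\ind{X_i>c}+\sum_{i\in I_2}w_{nji}\ind{X_i<c}.
\end{align*}
Their expectations at a fixed $c$ are $\rho_{nj}\E_\theta[X_1;X_1>c]+(1-\rho_{nj})\E_\theta[X_2;X_2<c]$ and the analogous probability combination. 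Both are affine in $\rho$, and the denominator is bounded below by
\[
\min\{\Prob_\theta(X_1>c),\Prob_\theta(X_2<c)\}>0
\]
for $c$ near $c^*$. This holds by the interior location of $c^*$ (Proposition~\ref{thm:largepool}) and positive densities.

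\textbf{Uniform concentration.} For fixed $c$, each centered sum has independent bounded summands with weights $w_{nji}\le\omega_n$. Hoeffding gives a tail bound $2\exp(-2t^2/(C\sum_i w_{nji}^2))$, and $\sum_i w_{nji}^2\le\omega_n$. A union bound over $j\in J_n$ then yields
\[
\max_j|N_{nj}(c)-\E N_{nj}(c)|\to0
\]
in probability, exactly because $\omega_n\log(2|J_n|)\to0$; the same holds for $D_{nj}$. To handle the random $c_n$, take a finite grid of mesh $\delta$ on a neighbourhood of $c^*$ and apply the union bound on grid $\times J_n$, which adds only a constant factor. Between grid points, monotonicity of the indicator sums in $c$ gives a bracketing sandwich. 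Continuity of $F_{k,\theta}$ and of the truncated means (densities are bounded near $c^*$) makes the bracket error $O(\delta)$ uniformly in $j$. Together with $c_n\xrightarrow{p}c^*$ from Proposition~\ref{thm:largepool}, this shows that numerator and denominator at $c_n$ are uniformly close to their population counterparts at $c^*$ with $\rho=\rho_{nj}$. The ratio map is uniformly Lipschitz where the denominator exceeds its positive lower bound, which gives \eqref{eq:uniform-personalization}. It also shows that the event of a zero denominator has vanishing probability.

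\textbf{Monotonicity and consequences.} The function $y(\rho)$ is a ratio of affine functions. Its derivative has the sign of
\[
\E[X_1;X_1>c^*]\Prob(X_2<c^*)-\E[X_2;X_2<c^*]\Prob(X_1>c^*),
\]
which equals
\[
\Prob(X_1>c^*)\Prob(X_2<c^*)\bigl(\E[X_1\mid X_1>c^*]-\E[X_2\mid X_2<c^*]\bigr).
\]
This is strictly positive because the first conditional mean exceeds $c^*$ and the second falls below it. Part (a) follows from injectivity. For (b), at $r=r^*(\theta)$ we have $c^*=\theta$, and by Remark~\ref{rem:identity}, $y(r;c^*)=m(c^*;r)=c^*$ because $H=0$. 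So $y(r^*;\theta)=\theta$. Since $y$ is a continuous, strictly increasing function on the compact $[0,1]$, its inverse is uniformly continuous, and $\max_j|y(\rho_{nj})-\theta|\to0$ iff $\max_j|\rho_{nj}-r^*|\to0$. Combining with \eqref{eq:uniform-personalization} gives both directions, using that the $\rho_{nj}$ are deterministic.

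\textbf{Main obstacle.} The delicate step is the uniformity over the growing set $J_n$ combined with the data-dependent cutoff $c_n$. The grid-plus-monotone-bracketing argument is where care is needed, since the indicators make the sums discontinuous in $c$.
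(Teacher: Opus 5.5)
Your proposal is correct and follows essentially the same route as the paper: weighted Hoeffding with $\sum_i w_{nji}^2\le\omega_n$, a union bound over listeners and grid points using $\omega_n\log(2|J_n|)\to0$, monotone bracketing, evaluation at $c_n\xrightarrow{p}c^*$ with the denominator bounded below by $\min\{\Prob_\theta(X_1>c^*),\Prob_\theta(X_2<c^*)\}$, and strict monotonicity of $y$ for (a) and (b). Your use of uniform continuity of $y^{-1}$ on $[0,1]$ is equivalent to the paper's $m_\delta$ contradiction argument.

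Two small repairs are needed. First, the first-moment sums $N_{nj}(c)$ are not monotone in $c$ when $X_i$ can be negative. Bracket them as the paper does, by writing $X=\underline x+(X-\underline x)$ in the upper tail and $X=\overline x-(\overline x-X)$ in the lower tail, or bound their increments by $\max\{|\underline x|,|\overline x|\}$ times the monotone mass sums. Second, the identity $H=D(m-c)$ that you borrow from Remark~\ref{rem:identity} is stated there only for a common law. It does hold verbatim with camp-specific laws, by the same direct computation.
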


The proof is in Appendix~\ref{app:personalized}.

\begin{corollary}[Two-community segregation under conditional balance]
\label{cor:homophily}
Suppose the main feed uses $r=r^*(\theta)$. Let $u_n^1,u_n^2$ be two
deterministic normalized attention rows representing source neighborhoods,
with
\[
 \max_{g\in\{1,2\},i}u_{ni}^g\log(2|J_n|)\to0,
 \qquad
 \sum_{i\in I_1}u_{ni}^1\to r^*(\theta)+\delta,
 \qquad
 \sum_{i\in I_1}u_{ni}^2\to r^*(\theta)-\delta,
\]
where $0<\delta<\min\{r^*(\theta),1-r^*(\theta)\}$. A community-1 listener
uses the row $h u_n^1+(1-h)u_n^2$, and a community-2 listener uses
$h u_n^2+(1-h)u_n^1$, where $h\in[1/2,1]$. Their local upward shares
converge to
\[
 \rho_1(h)=r^*(\theta)+(2h-1)\delta,
 \qquad
 \rho_2(h)=r^*(\theta)-(2h-1)\delta.
\]
At $h=1/2$ both communities are asymptotically accurate. For every
$h>1/2$, community 1's action converges above $\theta$ and community 2's
below it, and their limiting disagreement is strictly increasing in $h$.
\end{corollary}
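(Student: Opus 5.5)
The plan is to reduce the corollary to Proposition~\ref{prop:personalized} by computing each community's local upward share and then reading off accuracy and disagreement from the strict monotonicity of $y(\cdot\,;c^*)$.

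\textbf{Step 1: the community rows are admissible.} Each community row is a convex combination of $u_n^1$ and $u_n^2$. It is therefore nonnegative and normalized. Its largest entry is at most $\max_{g,i}u^g_{ni}$, so the uniform diffuseness condition $\omega_n\log(2|J_n|)\to0$ holds for all listeners at once.

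\textbf{Step 2: limiting local shares.} Linearity of $\rho_{nj}=\sum_{i\in I_1}w_{nji}$ in the row gives, for a community-1 listener,
\[
\rho_{nj}=h\sum_{I_1}u^1_{ni}+(1-h)\sum_{I_1}u^2_{ni}\to h(r^*+\delta)+(1-h)(r^*-\delta)=r^*+(2h-1)\delta .
\]
The same computation gives $\rho_2(h)=r^*-(2h-1)\delta$ for community 2. The restriction on $\delta$ keeps both limits in $(0,1)$, which is the domain where $y$ is defined and strictly increasing.

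\textbf{Step 3: accuracy at $h=1/2$ and separation for $h>1/2$.} Proposition~\ref{prop:personalized} gives
\[
\max_j\bigl|a_{nj}-y(\rho_{nj};c^*)\bigr|\xrightarrow{p}0 .
\]
Since $\rho_{nj}$ converges deterministically and $y$ is continuous in $\rho$, each community's action converges in probability to $y(\rho_g(h);c^*)$. At $r=r^*(\theta)$ we have $c^*=\theta$, and we need $y(r^*;\theta)=\theta$. The numerator of $y(r^*;\theta)-\theta$ is
\[
r^*\,\E_\theta\bigl[(X_1-\theta);X_1>\theta\bigr]+(1-r^*)\,\E_\theta\bigl[(X_2-\theta);X_2<\theta\bigr]=r^*\Gamma_1-(1-r^*)\Gamma_2 ,
\]
which is zero by Theorem~\ref{thm:design}. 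So $h=1/2$ gives accuracy for both communities. This is also part (b) of Proposition~\ref{prop:personalized}, because the rows are then uniformly balanced.

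For $h>1/2$ we have $\rho_1(h)>r^*>\rho_2(h)$. Strict monotonicity of $y$ then gives
\[
y(\rho_1(h))>\theta>y(\rho_2(h)) .
\]

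\textbf{Step 4: disagreement increases in $h$.} The limiting disagreement is $y(r^*+(2h-1)\delta)-y(r^*-(2h-1)\delta)$. The first argument increases in $h$ and the second decreases, so strict monotonicity of $y$ makes this difference strictly increasing.

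The main obstacle is Step 1 together with the continuity step: confirming that the uniform convergence in Proposition~\ref{prop:personalized} transfers to the community-level limits. This requires checking that $y(\cdot\,;c^*)$ is continuous, which holds because its denominator is bounded away from zero by positive tail probabilities. It also requires that the number of listeners is accounted for in the diffuseness condition, which the hypothesis on $u^g_{ni}$ supplies directly.
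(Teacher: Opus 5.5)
Your proposal is correct and follows essentially the same route as the paper. Both mix the two neighborhood compositions to obtain $\rho_1(h)$ and $\rho_2(h)$, invoke Proposition~\ref{prop:personalized} at $c^*=\theta$ with $y(r^*(\theta);\theta)=\theta$, and use strict monotonicity of $y(\cdot\,;\theta)$. The differences are minor: you explicitly verify the row diffuseness and the identity $y(r^*;\theta)=\theta$, and you get strictly increasing disagreement directly from monotonicity, whereas the paper differentiates it as $2\delta\{y'(\rho_1(h);\theta)+y'(\rho_2(h);\theta)\}>0$; your version does not need differentiability of $y$.
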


\begin{proof}
The displayed shares follow from mixing the two neighborhood compositions.
Proposition~\ref{prop:personalized} gives the action limits, and
$y(\cdot;\theta)$ is strictly increasing with
$y(r^*(\theta);\theta)=\theta$. Finally,
\[
 \frac{d}{dh}\{y(\rho_1(h);\theta)-y(\rho_2(h);\theta)\}
 =2\delta\{y'(\rho_1(h);\theta)+y'(\rho_2(h);\theta)\}>0.
\]
\end{proof}

The condition $r=r^*(\theta)$ in part (b) and the corollary is a conditional
implementation benchmark. It is an ex ante feasible policy only when the
moment-balance share is state invariant or the platform observes enough
information before assigning attention; otherwise the second-best policy
of Section~\ref{sec:exante} governs the main feed, and exact receiver-wide
accuracy is not generally attainable. Conditional on a truthful aggregate
narrative, two groups of listeners can nevertheless err in opposite
directions. Segregated attention harms not by spreading falsehoods (all
disclosed evidence is true) but by preventing the two truncations from
meeting inside the same average. Proposition~\ref{prop:personalized} is a
static personalization theorem for a growing collection of deterministic
diffuse rows, not a model of repeated network learning, endogenous link
formation, or social influence. Its claim is deliberately row-level: a
platform with one aggregate instrument cannot repair heterogeneous
downstream access.

\subsection{Neutral backfilling and missing-slot imputation}
\label{sec:neutral-backfill}

Section~\ref{sec:missing-processing} introduced this relaxation and its
sharp robustness boundary. Here we give the full finite-pool equilibrium,
behavioral foundation, and implementation analysis. Under an operational
\emph{neutral-backfill} interpretation, a fraction $\lambda\in[0,1]$ of
each unused slot is filled by outside content with value $a_0$, while the
remainder stays blank. Under a behavioral interpretation, all slots are
publicly known and the receiver gives an empty slot relative weight
$\lambda$, imputing its value at $a_0$. In either case the induced action is
the minimizer of
\[
 \mathcal Q_{n,\lambda}(y;S)
 =\eta_n(y-a_0)^2
  +\sum_{i\in S}v_{ni}(y-X_i)^2
  +\lambda\sum_{i\notin S}v_{ni}(y-a_0)^2.
\]
A disclosed item is treated as a full-weight observation, while a missing
slot is treated as a benchmark pseudo-observation at $a_0$ with confidence
$\lambda$. Because $\eta_n>0$, the criterion is strictly convex, and its
unique minimizer is $a_{n,\lambda}(S)$ in~\eqref{eq:partial-action}.
Thus $\lambda=0$ is the baseline average, while $\lambda=1$ fills or imputes
every missing slot at $a_0$. The senders continue to maximize and minimize
the induced action. Unlike the vanishing anchor in the baseline, $a_0$ becomes
an economically substantive primitive whenever $\lambda>0$: operationally
it is the value of neutral replacement content; behaviorally it is a public
prior default or institutional reference point. It is fixed before
disclosure, not chosen as an additional policy instrument here. The
criterion $\mathcal Q_{n,\lambda}$ represents the receiver's decision
procedure under the behavioral interpretation, not the paper's welfare
benchmark, which remains squared error relative to the true state. More
precisely, under the subjective Gaussian model of
Proposition~\ref{prop:selection-naive}, let every missing slot $i$ contribute
a benchmark pseudo-signal equal to $a_0$ with subjective precision
$\lambda\tau_n v_{ni}$. Up to an additive constant and a positive scale,
$\mathcal Q_{n,\lambda}$ is the resulting negative log posterior, and
$a_{n,\lambda}$ is its posterior mean. Thus the receiver remains Bayesian
within the same working model while giving observable missingness partial
weight.

The behavioral interpretation also has an exact attention-to-absence
representation.
Suppose the receiver processes every displayed slot but processes each
publicly known empty slot with probability $\lambda$; whenever it processes
an empty slot, it codes the missing value at the default $a_0$. If it chooses
$y$ to minimize expected slot-weighted quadratic loss, its criterion is
exactly $\mathcal Q_{n,\lambda}$. Equivalently, $\lambda$ is the relative
cognitive weight placed on missing and displayed exposure: a receiver with
$\lambda=0.4$ gives each empty slot $40\%$ of the weight of an equally
exposed displayed item, always at the fixed imputation $a_0$. Even
$\lambda=1$ means full attention to the existence of missing slots, not
correct Bayesian inference about why those particular values were
suppressed.

The parameter also has direct behavioral content outside the equilibrium
feedback. Let
\[
 m_n(S)=\sum_{i\notin S}v_{ni},\qquad
 A_n(S)=\eta_n a_0+\sum_{i\in S}v_{ni}X_i,\qquad
 B_n(S)=\eta_n+\sum_{i\in S}v_{ni}.
\]
Then
$a_{n,\lambda}(S)=[A_n(S)+\lambda a_0 m_n(S)]/
[B_n(S)+\lambda m_n(S)]$. Holding the displayed record fixed,
\[
 \frac{\partial a_{n,\lambda}(S)}{\partial\lambda}
 =\frac{m_n(S)[a_0-a_{n,\lambda}(S)]}
        {B_n(S)+\lambda m_n(S)}.
\]
Greater weight on missingness therefore moves the action monotonically toward
$a_0$, with a response that scales with the publicly known missing attention
mass. An experiment that varies the salience of known missing slots while
holding displayed values fixed can use this response to discipline $\lambda$
before studying the senders' equilibrium adjustment. More directly, when
$m_n(S)>0$ and $a_{n,\lambda}(S)\neq a_0$, the observed action and displayed
record identify the processing weight through
\begin{equation}
 \lambda
 =\frac{a_{n,\lambda}(S)B_n(S)-A_n(S)}
        {m_n(S)[a_0-a_{n,\lambda}(S)]}.
 \label{eq:lambda-identification}
\end{equation}
If the displayed-record average already equals $a_0$, changing the weight on
missing slots has no behavioral effect and $\lambda$ is correspondingly not
identified from that choice. This is an off-equilibrium measurement
statement: equilibrium feedback is studied only after the receiver rule is
disciplined.

For $\lambda<1$, define
\begin{align}
 K_{n,\lambda}(t)
 &=\frac{\eta_n+\lambda}{1-\lambda}(a_0-t)
   +\sum_{i\in I_1}v_{ni}(X_i-t)^+
   -\sum_{i\in I_2}v_{ni}(t-X_i)^+,
 \label{eq:partial-Kn}\\
 K_\lambda(t;r,\theta)
 &=\frac{\lambda}{1-\lambda}(a_0-t)
   +r\E_\theta(X_1-t)^+
   -(1-r)\E_\theta(t-X_2)^+.
 \label{eq:partial-K}
 \\
 D_\lambda(t;r,\theta)
 &=r\Prob_\theta(X_1>t)+(1-r)\Prob_\theta(X_2<t).
 \label{eq:partial-D}
\end{align}

\begin{proposition}[Neutral backfilling and missing-slot imputation]
\label{prop:neutral-backfill}
Fix $\lambda\in[0,1]$.

\smallskip
\noindent\textbf{(i) Unique finite equilibrium outcome.}
If $\lambda<1$, $K_{n,\lambda}$ has a unique zero
$t_{n,\lambda}$ in
\[
 \big(\lambda a_0+(1-\lambda)\underline x,
      \lambda a_0+(1-\lambda)\overline x\big).
\]
Every pure-strategy Nash equilibrium induces the same action
\begin{equation}
 y_{n,\lambda}
 =\frac{t_{n,\lambda}-\lambda a_0}{1-\lambda},
 \label{eq:partial-y}
\end{equation}
and, up to cutoff ties, the same tail disclosure profile
\[
 S_{1,\lambda}^*=\{i\in I_1:X_i>t_{n,\lambda}\},
 \qquad
 S_{2,\lambda}^*=\{i\in I_2:X_i<t_{n,\lambda}\}.
\]
If $\lambda=1$, the unique cutoff is $a_0$, the same two tail rules apply,
and the unique action is
\begin{equation}
 y_{n,1}
 =a_0+\frac{\sum_{i\in I_1}v_{ni}(X_i-a_0)^+
             -\sum_{i\in I_2}v_{ni}(a_0-X_i)^+}
            {1+\eta_n}.
 \label{eq:full-imputation-action}
\end{equation}

\smallskip
\noindent\textbf{(ii) Large pools and the representation paradox.}
Under the conditions of Proposition~\ref{thm:largepool}, if $\lambda<1$ then
$t_{n,\lambda}\xrightarrow{p}t_\lambda^*(r,\theta)$, the unique zero of
$K_\lambda$, and
\[
 y_{n,\lambda}\xrightarrow{p}
 y_\lambda^*(r,\theta)
 =\frac{t_\lambda^*(r,\theta)-\lambda a_0}{1-\lambda}.
\]
The attention derivatives satisfy
\begin{align}
 \frac{\partial t_\lambda^*}{\partial r}
 &=\frac{\E_\theta(X_1-t_\lambda^*)^+
          +\E_\theta(t_\lambda^*-X_2)^+}
         {\lambda/(1-\lambda)+D_\lambda(t_\lambda^*;r,\theta)}>0,
 \label{eq:partial-cutoff-cs}\\
 \frac{\partial y_\lambda^*}{\partial r}
 &=\frac{\E_\theta(X_1-t_\lambda^*)^+
          +\E_\theta(t_\lambda^*-X_2)^+}
         {\lambda+(1-\lambda)D_\lambda(t_\lambda^*;r,\theta)}>0.
 \label{eq:partial-action-cs}
\end{align}
Hence for every $\lambda<1$, more upward attention raises the action and
the common cutoff: Sender 1 discloses less often and more extremely,
while Sender 2 discloses more often. As $\lambda\uparrow1$,
$t_\lambda^*\to a_0$ and the cutoff response vanishes. At $\lambda=1$,
\[
 y_1^*(r,\theta)
 =a_0+r\E_\theta(X_1-a_0)^+-(1-r)\E_\theta(a_0-X_2)^+;
\]
the action remains strictly increasing in $r$, but disclosure frequencies
no longer respond to attention.

\smallskip
\noindent\textbf{(iii) Conditional implementation.}
Let
\[
 \bar t_\lambda(\theta)=\lambda a_0+(1-\lambda)\theta,
 \quad
 \Gamma_{1,\lambda}(\theta)=\E_\theta(X_1-\bar t_\lambda(\theta))^+,
 \quad
 \Gamma_{2,\lambda}(\theta)=\E_\theta(\bar t_\lambda(\theta)-X_2)^+.
\]
An interior attention share implements
$y_\lambda^*(r,\theta)=\theta$ if and only if
\begin{equation}
 r=r_\lambda^*(\theta)
 =\frac{\Gamma_{2,\lambda}(\theta)+\lambda(\theta-a_0)}
        {\Gamma_{1,\lambda}(\theta)+\Gamma_{2,\lambda}(\theta)},
 \label{eq:partial-implementing-share}
\end{equation}
which is feasible if and only if
\begin{equation}
 -\Gamma_{2,\lambda}(\theta)
 <\lambda(\theta-a_0)
 <\Gamma_{1,\lambda}(\theta).
 \label{eq:partial-feasibility}
\end{equation}
Whenever feasible, the implementing share is unique. At $\lambda=0$ this
is exactly $r^*(\theta)$ in Theorem~\ref{thm:design}. If $a_0=\theta$ and the
camps have a common evidence law, equal attention implements for every
$\lambda$.
\end{proposition}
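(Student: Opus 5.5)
The plan is to reduce everything to the cutoff variable $t=\lambda a_0+(1-\lambda)y$ and then reuse the arguments behind Theorem~\ref{thm:finite} and Proposition~\ref{thm:largepool}.

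\textbf{Step 1: the marginal comparison.} For $\lambda<1$, fix a display $S$ with action $y=a_{n,\lambda}(S)$. Switching slot $i$ from empty to displayed replaces weight $\lambda v_{ni}$ at $a_0$ by weight $v_{ni}$ at $X_i$. A direct computation with the criterion $\mathcal Q_{n,\lambda}$ shows that the new action exceeds $y$ if and only if
\[
v_{ni}(X_i-y)>\lambda v_{ni}(a_0-y),
\qquad\text{i.e.}\qquad
X_i>\lambda a_0+(1-\lambda)y=t .
\]
So each sender's best response to a fixed opponent set is a tail set relative to the bar $t$ generated by its own choice. As in Theorem~\ref{thm:finite}, Sender~1 maximizes the action and Sender~2 minimizes it.

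\textbf{Step 2: the balance function.} At a tail profile with bar $t$, the first-order condition of $\mathcal Q_{n,\lambda}$ reads
\[
\eta_n(a_0-y)+\sum_{S}v_{ni}(X_i-y)+\lambda(a_0-y)\sum_{i\notin S}v_{ni}=0 .
\]
Substitute $y=(t-\lambda a_0)/(1-\lambda)$, use $\sum_i v_{ni}=1$, and rewrite each term relative to $t$. The claim to verify by algebra is that, after multiplying through by the appropriate factor, the result is $K_{n,\lambda}(t)=0$ as in \eqref{eq:partial-Kn}.

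$K_{n,\lambda}$ is continuous and strictly decreasing, with slope at most $-(\eta_n+\lambda)/(1-\lambda)$ less the visible mass. Evaluating at the endpoints of the stated interval gives opposite signs, so there is a unique zero $t_{n,\lambda}$.

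Existence and uniqueness of the pure equilibrium action then follow the proof of Theorem~\ref{thm:finite}:
\begin{itemize}
\item[(a)] The strict-tail profile at $t_{n,\lambda}$ is an equilibrium. Against the opponent's tail set, the best response is the tail at the induced bar, and the bar is self-consistent exactly at the zero of $K_{n,\lambda}$.
\item[(b)] No other bar is an equilibrium. At any other $t$, $K_{n,\lambda}$ has a definite sign, so one sender can profitably add or delete a marginal item.
\end{itemize}

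For $\lambda=1$ the bar is $a_0$ regardless of $y$, so the senders have dominant strategies. Solving the linear first-order condition then yields \eqref{eq:full-imputation-action}.

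\textbf{Step 3: large pools and the representation paradox.} I will mirror Proposition~\ref{thm:largepool}. By a uniform law of large numbers for $(X-t)^+$ over compact $t$-sets, $K_{n,\lambda}\to K_\lambda$ uniformly in probability. $K_\lambda$ is strictly decreasing with a unique zero, so the argmin/zero-consistency argument gives $t_{n,\lambda}\xrightarrow{p}t^*_\lambda$, and hence convergence of $y_{n,\lambda}$.

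Implicit differentiation gives \eqref{eq:partial-cutoff-cs}. Since $dy/dr=(1-\lambda)^{-1}\,dt/dr$, this yields \eqref{eq:partial-action-cs}. The disclosure-rate and extremity signs follow as in Proposition~\ref{prop:paradox}: a higher bar means Sender~1 discloses less often and more extremely, and Sender~2 discloses more often.

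As $\lambda\uparrow1$, the term $\lambda/(1-\lambda)$ dominates $K_\lambda$, forcing $t^*_\lambda\to a_0$. The cutoff derivative's denominator blows up, so the response to attention vanishes.

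\textbf{Step 4: implementation.} Set $y^*_\lambda=\theta$, which is equivalent to $t=\bar t_\lambda(\theta)$. Use $a_0-\bar t_\lambda=(1-\lambda)(a_0-\theta)$ in $K_\lambda(\bar t_\lambda)=0$; this is linear in $r$ and solves to \eqref{eq:partial-implementing-share}.

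The condition $r\in(0,1)$ is exactly \eqref{eq:partial-feasibility}. Uniqueness holds because $y^*_\lambda$ is strictly increasing in $r$.

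If $a_0=\theta$, then $\bar t_\lambda=\theta$, and mean-unbiasedness gives $\Gamma_{1,\lambda}=\Gamma_{2,\lambda}$ under a common law, so $r=1/2$. At $\lambda=0$ the formula reduces to $r^*(\theta)$ of Theorem~\ref{thm:design}.

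\textbf{Expected main obstacle.} The delicate part is the normalization in Step~2. The scaling factor $(\eta_n+\lambda)/(1-\lambda)$ in \eqref{eq:partial-Kn} depends on exactly how the imputed mass $\lambda\sum_{i\notin S}v_{ni}$ is re-expressed in terms of $t$. My first attempt will be to write $\sum_{i\notin S}v_{ni}=1-\sum_{S}v_{ni}$ and check the identity term by term.
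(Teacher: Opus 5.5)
Your proposal is correct and follows essentially the same route as the paper. The shared steps are the marginal comparison at the bar $t=\lambda a_0+(1-\lambda)y$, and the rewriting via $\sum_i v_{ni}=1$; the normalization you were worried about closes with no extra factor, since $a_0-y=(a_0-t)/(1-\lambda)$ turns the first-order condition into $\frac{\eta_n+\lambda}{1-\lambda}(a_0-t)+\sum_{S}v_{ni}(X_i-t)=0$, which at tail sets is $K_{n,\lambda}(t)=0$. The remaining steps also match: uniform convergence plus the implicit function theorem for large pools, and substitution of $\bar t_\lambda(\theta)$ for implementation. One loose end: $K_\lambda$ is undefined at $\lambda=1$, so read the $\lambda=1$ cases of parts (ii)--(iii) directly from \eqref{eq:full-imputation-action} and the weighted law of large numbers, as the paper does.
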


The proof is in Appendix~\ref{app:neutral-backfill}. Neutral backfilling or
benchmark imputation acts like a persistent anchor at $a_0$: it preserves the
tail game but dampens the response of editorial standards to attention. It
also creates a new constraint on policy. When the benchmark is too far from the state,
condition~\eqref{eq:partial-feasibility} fails and no interior allocation of
attention can offset the receiver's imputation error. Under a prior, a fixed
share implements almost every state if and only if
$r_\lambda^*(\theta)$ is feasible and constant almost surely; otherwise the
platform again faces an ex ante compromise, now over the distortions
$y_\lambda^*(r,\theta)-\theta$.

The operational interpretation requires no claim about receiver
sophistication. Under the behavioral interpretation, the receiver is
Bayesian within the misspecified working model of
Proposition~\ref{prop:selection-naive}, but it does not form the correctly
specified posterior induced by strategic disclosure. If the receiver uses
that equilibrium posterior, replacing a disclosed item by a higher one also
changes the selection likelihood, with no generally signed effect. The
exchange argument need not yield tail best responses. A correctly specified
Bayesian extension therefore requires additional restrictions on what
missing-data statistics are observed and how posterior actions respond to
them. The imputation rule instead isolates the effect of giving observable
empty slots mechanical or cognitive weight; it is not an intermediate
equilibrium-inference concept. Nor is $\lambda$
the $\chi$ parameter of cursed equilibrium \citep{eysterrabin2005}: it scales
the weight on observable missing slots rather than the perceived correlation
between another player's information and action.

\subsection{Fully fungible within-camp exposure: a boundary}
\label{sec:fungible}

Neutral backfilling leaves an item's promised exposure nontransferable to
the sender's other evidence. We now study the opposite boundary. Suppose
the platform still commits camp shares $r_n$ and $1-r_n$, but after
disclosure reallocates each camp's entire share across the items that camp
submits. For $k\in\{1,2\}$, define its submitted-record average by
\[
 \overline X_{k,n}(S_k)
 =\begin{cases}
   \displaystyle
   \frac{\sum_{i\in S_k}v_{ni}X_i}
        {\sum_{i\in S_k}v_{ni}},&S_k\neq\varnothing,\\[8pt]
   a_0,&S_k=\varnothing,
  \end{cases}
\]
where the empty submission is fully backfilled at $a_0$. The receiver action
under full within-camp reallocation is
\begin{equation}
 a_n^{R}(S_1,S_2)
 =\frac{\eta_n a_0
       +r_n\overline X_{1,n}(S_1)
       +(1-r_n)\overline X_{2,n}(S_2)}{1+\eta_n}.
 \label{eq:reallocated-action}
\end{equation}
Unlike \eqref{eq:action}, the denominator and each camp's realized exposure
are now independent of how much evidence that camp withholds.

\begin{proposition}[Fungible-exposure boundary]
\label{prop:fungible}
Under the baseline evidence assumptions, the disclosure game induced by
\eqref{eq:reallocated-action} has a Nash equilibrium for every finite pool,
and all pure-strategy equilibria induce the same action
\begin{equation}
 c_n^R
 =\frac{\eta_n a_0+r_n M_{1,n}+(1-r_n)m_{2,n}}{1+\eta_n},
 \quad
 M_{1,n}=\max\bigl\{a_0,\max_{i\in I_1}X_i\bigr\},
 \quad
 m_{2,n}=\min\bigl\{a_0,\min_{i\in I_2}X_i\bigr\}.
 \label{eq:reallocated-finite-outcome}
\end{equation}
Except for ties and the option of an empty submission when $a_0$ is best, the
Sender 1 submits only its sample maximum and Sender 2 only
its sample minimum.

If $\eta_n\to0$, $r_n\to r\in(0,1)$,
$\max_i v_{ni}\to0$, and the two camp sizes diverge, then
\begin{equation}
 c_n^R\xrightarrow{p}c^R(r)
 =r\overline x+(1-r)\underline x,
 \label{eq:reallocated-limit}
\end{equation}
whereas the complete-disclosure action converges in probability to $\theta$.
Consequently the unique state-contingent implementing share is
\begin{equation}
 r_R^*(\theta)
 =\frac{\theta-\underline x}{\overline x-\underline x}.
 \label{eq:reallocated-share}
\end{equation}
If the state can take two distinct values in the common support, no fixed
share implements both. Attention changes the mechanical mixture of the two
extremes but not either camp's disclosure rule, so the representation
paradox of Proposition~\ref{prop:paradox} does not survive full within-camp
reallocation.
\end{proposition}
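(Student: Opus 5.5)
The plan is to separate the two senders' problems. Under \eqref{eq:reallocated-action}, Sender 1's choice affects only $\overline X_{1,n}(S_1)$, and Sender 2's choice affects only $\overline X_{2,n}(S_2)$. Both enter with positive coefficients $r_n/(1+\eta_n)$ and $(1-r_n)/(1+\eta_n)$. So, with the baseline (or any strictly monotone) preferences, each sender has a dominant-strategy problem: Sender 1 maximizes $\overline X_{1,n}(S_1)$ over subsets of $I_1$, and Sender 2 minimizes $\overline X_{2,n}(S_2)$. Existence of a Nash equilibrium follows immediately, because each sender's finite maximization has a solution independent of the opponent. Every pure equilibrium consists of dominant choices, so the action is pinned down once we identify the optimal values.

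The key step is to compute $\max_{S_1}\overline X_{1,n}(S_1)$. For nonempty $S_1$, a weighted average is at most its largest element, with equality exactly when every element of $S_1$ attains $\max_{i\in S_1}X_i$. Hence the supremum over nonempty subsets is $\max_{i\in I_1}X_i$. It is attained by submitting the singleton maximizer, or any set of tied maximizers. The empty submission gives $a_0$. The optimal value is therefore $M_{1,n}=\max\{a_0,\max_{I_1}X_i\}$. The optimal sets are the nonempty sets of tied sample maxima, plus $\varnothing$ when $a_0\ge\max X_i$. This gives exactly the tie and empty-submission caveat in the statement. The symmetric argument gives $m_{2,n}$. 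Substituting into \eqref{eq:reallocated-action} yields \eqref{eq:reallocated-finite-outcome}.

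For the limit, I would show that $\max_{i\in I_1}X_i\xrightarrow{p}\overline x$ and $\min_{i\in I_2}X_i\xrightarrow{p}\underline x$. Under Assumption~\ref{ass:evidence}, $f_{1,\theta}>0$ on the interior, so $p_\varepsilon=\Prob_\theta(X_1>\overline x-\varepsilon)>0$ for every $\varepsilon>0$. Then
\[
\Prob\bigl(\max_{I_1}X_i\le\overline x-\varepsilon\bigr)=(1-p_\varepsilon)^{n_1}\to0
\]
as $n_1\to\infty$. Since $a_0<\overline x$, this also gives $M_{1,n}\xrightarrow{p}\overline x$. With $\eta_n\to0$ and $r_n\to r$, Slutsky's lemma yields \eqref{eq:reallocated-limit}.

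The complete-disclosure action converges to $\theta$ by a weighted law of large numbers. The weights are uniformly small because $\max v_{ni}\to0$, and the variables are bounded in $\mathcal X$ with common mean $\theta$. A Chebyshev bound $\sum v_{ni}^2\le\max v_{ni}\to0$ therefore suffices. Solving $r\overline x+(1-r)\underline x=\theta$ gives \eqref{eq:reallocated-share}. This share is unique because the left side is strictly increasing in $r$, and it is interior because $\theta\in\mathrm{int}\,\mathcal X$. Two distinct states give two distinct required shares, so no fixed share implements both.

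Finally, the disclosure rule (submit the maximum or the minimum) does not depend on $r_n$. So the disclosure frequency of each camp, $1/n_k$ asymptotically, does not respond to attention. This contrasts with Proposition~\ref{prop:paradox}. The only mild obstacle is the careful treatment of ties and the empty set in the characterization of optimal sets. The rest is routine.
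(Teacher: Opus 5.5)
Your proposal is correct and follows the paper's proof essentially step for step. Separability of \eqref{eq:reallocated-action} makes each sender's problem independent of the opponent, and the weighted-average bound (attained by a singleton, with $\varnothing$ yielding $a_0$) gives $M_{1,n}$ and $m_{2,n}$. Convergence of the sample extremes to the support endpoints, together with the linear equation $r\overline x+(1-r)\underline x=\theta$, then delivers the limit, the unique share, and the impossibility of one fixed share across two states. One small slip, which does not affect the value $M_{1,n}$: when $a_0>\max_{i\in I_1}X_i$ the empty set is Sender 1's \emph{only} best response, so the nonempty tied-maximum sets are optimal only when $\max_{i\in I_1}X_i\ge a_0$.
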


\begin{proof}
The denominator in \eqref{eq:reallocated-action} is fixed and the two
senders' choices are additively separable. Among nonempty subsets, a
weighted average is no larger than the sample maximum and no smaller than
the sample minimum; either bound is attained by submitting a corresponding
singleton. The empty set yields $a_0$. Thus every upward best response has
average $M_{1,n}$, every downward best response has average $m_{2,n}$, and
\eqref{eq:reallocated-finite-outcome} follows. Any pair of such best
responses is a Nash equilibrium.

Positive density on the interior of $\mathcal X$ and divergent camp sizes
give
$M_{1,n}\xrightarrow{p}\overline x$ and
$m_{2,n}\xrightarrow{p}\underline x$, proving
\eqref{eq:reallocated-limit}. Under complete disclosure, diffuseness of the
within-camp weights and unbiasedness give
$\overline X_{1,n}(I_1)\xrightarrow{p}\theta$ and
$\overline X_{2,n}(I_2)\xrightarrow{p}\theta$. Finally,
$c^R(r)=\theta$ if and only if \eqref{eq:reallocated-share} holds; strict
monotonicity in $r$ gives uniqueness. Because this share is strictly
increasing in $\theta$, it cannot be constant at two distinct states.
\end{proof}

This result shows why item-level commitment matters. When exposure cannot be
transferred across items, a sender trades off frequency against extremity,
which generates tail disclosure and moment balance. With fully fungible
capacity, each camp instead concentrates on an extreme, and accuracy depends
on the endpoints of the support. Neutral backfilling in
Proposition~\ref{prop:neutral-backfill} leaves item-level exposure
nontransferable and therefore preserves the tail-disclosure mechanism.

\section{Proofs}
\label{app:proofs}

\subsection{Proof of Proposition~\ref{prop:selection-naive}}
\label{app:proof:selection-naive}

\begin{proof}
Under the receiver's independence belief, the event $S$ contributes no
state-dependent likelihood term. Gaussian updating gives posterior precision
$\tau_n(\eta_n+\sum_{i\in S}v_{ni})$ and precision-weighted posterior numerator
$\tau_n(\eta_na_0+\sum_{i\in S}v_{ni}X_i)$. Their ratio is $a_n(S)$, which is
the posterior mean and hence the unique quadratic-loss Bayes action.
\end{proof}

\subsection{Proof of Theorem~\ref{thm:finite}}
\label{app:proof:finite}

\begin{proof}
\emph{Uniqueness of the zero.} $H_{n}$ is continuous, and strictly
decreasing because the anchor term is strictly decreasing while each
$(X_{i}-c)^{+}$ is nonincreasing and each $-(c-X_{i})^{+}$ is
nonincreasing in $c$. At the endpoints,
$H_{n}(\underline x)\geq\eta_{n}(a_0-\underline x)>0$ and
$H_{n}(\overline x)\leq\eta_{n}(a_0-\overline x)<0$.

\emph{The tail profile is an equilibrium.} Set
$S_1^{*}=\{i\in I_1:X_i>c_n\}$ and
$S_2^{*}=\{j\in I_2:X_j<c_n\}$. Fix $S_2^{*}$ and consider
Sender 1's problem: choose $S_1\subseteq I_1$ to maximize
\[
  \frac{A+\sum_{i\in S_1}v_{ni}X_{i}}{B+\sum_{i\in S_1}v_{ni}},
  \qquad
  A=\eta_{n}a_0+\sum_{j\in S_2^{*}}v_{nj}X_{j},
  \quad
  B=\eta_{n}+\sum_{j\in S_2^{*}}v_{nj}.
\]
For any candidate value $y$ of this ratio, including an item with
$X_{i}>y$ strictly raises the ratio and excluding an item with $X_{i}<y$
strictly raises it. Hence at an optimum with value $y_{1}$, the chosen set
is $\{i:X_{i}>y_{1}\}$ up to ties, and $y_{1}$ satisfies
$G_{1}(y_{1})=0$, where
\[
 G_{1}(y)=A-By+\sum_{i\in I_1}v_{ni}(X_{i}-y)^{+}.
\]
The function $G_{1}$ is strictly decreasing because $B>0$. Evaluating it
at the proposed equilibrium value, rather than identifying it with
$H_n(y)$ away from that value, gives
\begin{align*}
 G_{1}(c_n)
 &=\eta_n(a_0-c_n)
   +\sum_{j\in S_2^*}v_{nj}(X_j-c_n)
   +\sum_{i\in I_1}v_{ni}(X_i-c_n)^+\\
 &=H_n(c_n)=0,
\end{align*}
where the second equality uses
$S_2^*=\{j\in I_2:X_j<c_n\}$. Thus Sender 1's unique best-response value is
$y_1=c_n$, and its optimal set is $S_1^*$ up to ties.

For Sender 2, fix $S_1^*$ and define
\[
 \widetilde A=\eta_n a_0+\sum_{i\in S_1^*}v_{ni}X_i,
 \qquad
 \widetilde B=\eta_n+\sum_{i\in S_1^*}v_{ni}.
\]
Sender 2 minimizes the corresponding ratio. Its optimal set is
$\{j\in I_2:X_j<y_2\}$ up to ties, and its value satisfies
\[
 G_2(y_2)=0,
 \qquad
 G_2(y)=\widetilde A-\widetilde B y
          -\sum_{j\in I_2}v_{nj}(y-X_j)^+.
\]
Again $G_2$ is strictly decreasing, and
\[
 G_2(c_n)
 =\eta_n(a_0-c_n)
  +\sum_{i\in I_1}v_{ni}(X_i-c_n)^+
  -\sum_{j\in I_2}v_{nj}(c_n-X_j)^+
 =H_n(c_n)=0.
\]
Therefore $y_2=c_n$, and Sender 2's optimal sets coincide with $S_2^*$ up to
ties. Neither sender can profit by deviating, so
$(S_1^*,S_2^*)$ is a Nash equilibrium with outcome $c_n$.

\emph{All pure-strategy equilibria.} Let $(S_1,S_2)$ be any
pure-strategy Nash equilibrium with
induced action $y$. If some $i\in I_1$ with $X_{i}>y$ were excluded,
adding it would strictly raise the action; if some $i\in I_1$ with
$X_{i}<y$ were included, removing it would strictly raise the action.
Hence $S_1=\{i\in I_1:X_{i}>y\}$ up to ties, and symmetrically
$S_2=\{i\in I_2:X_{i}<y\}$. The induced action then satisfies
$H_{n}(y)=0$, so $y=c_{n}$.

\emph{All mixed-strategy equilibria.} For disclosure sets $S_1$ and $S_2$,
write
\[
 D_n(S_1,S_2)=\eta_n+\sum_{i\in S_1}v_{ni}
                         +\sum_{j\in S_2}v_{nj}>0.
\]
If $S_1\in\mathcal T_1(c_n)$, then, for every $S_2\subseteq I_2$,
\begin{align*}
 D_n(S_1,S_2)\bigl[a_n(S_1\cup S_2)-c_n\bigr]
 &=\eta_n(a_0-c_n)
   +\sum_{i\in S_1}v_{ni}(X_i-c_n)
   +\sum_{j\in S_2}v_{nj}(X_j-c_n)\\
 &\geq \eta_n(a_0-c_n)
   +\sum_{i\in I_1}v_{ni}(X_i-c_n)^+
   -\sum_{j\in I_2}v_{nj}(c_n-X_j)^+\\
 &=H_n(c_n)=0.
\end{align*}
The inequality follows because Sender 1 includes every strictly positive
term, while the smallest contribution from Sender 2 is obtained by including
every strictly negative term and no strictly positive term. Thus every set in
$\mathcal T_1(c_n)$ guarantees an action weakly above $c_n$. Symmetrically,
every set in $\mathcal T_2(c_n)$ guarantees an action weakly below $c_n$,
whatever Sender 1 discloses. When both sets belong to their respective tail
families, the action is therefore exactly $c_n$. It follows that any pair of
mixed strategies supported on $\mathcal T_1(c_n)$ and
$\mathcal T_2(c_n)$ is a Nash equilibrium.

Conversely, let $(\sigma_1,\sigma_2)$ be a mixed-strategy Nash equilibrium
and let $V$ be its expected receiver action. A pure set in
$\mathcal T_1(c_n)$ guarantees at least $c_n$, whereas a pure set in
$\mathcal T_2(c_n)$ guarantees at most $c_n$. The two equilibrium deviation
inequalities therefore imply $V\geq c_n$ and $V\leq c_n$, so $V=c_n$.
Fix any $\bar S_2\in\mathcal T_2(c_n)$. Sender 2's equilibrium inequality and
the preceding upper bound give
\[
 c_n=V\leq
 \mathbb E_{S_1\sim\sigma_1}a_n(S_1\cup\bar S_2)
 \leq c_n.
\]
The last inequality is strict for every $S_1\notin\mathcal T_1(c_n)$:
relative to the upper tail, such a set omits a strictly positive term or
includes a strictly negative one. Hence
$\sigma_1(\mathcal T_1(c_n))=1$. Fixing any
$\bar S_1\in\mathcal T_1(c_n)$ and applying the symmetric argument yields
$\sigma_2(\mathcal T_2(c_n))=1$. This proves both necessity and sufficiency
of~\eqref{eq:mixed-equilibrium-support}; because every supported pure profile
produces $c_n$, the action equals $c_n$ with probability one.
\end{proof}

\subsection{Proof of Proposition~\ref{prop:ordinal}}
\label{app:proof:ordinal}

\begin{proof}
Fix the opponent's disclosure set. For any two disclosure choices $S_1$ and
$S_1'$ by Sender 1, strict monotonicity gives
\[
 u_1(a_n(S_1,S_2),\theta)
 \geq u_1(a_n(S_1',S_2),\theta)
 \quad\Longleftrightarrow\quad
 a_n(S_1,S_2)
 \geq a_n(S_1',S_2).
\]
Thus its complete ranking of disclosure choices, including ties, is the
same as under the baseline objective. The analogous equivalence, with the
action ordering reversed, holds for Sender 2. Both pure best-response
correspondences and hence the pure-strategy Nash-equilibrium set are therefore
unchanged. Subsequent common-knowledge-pool results depend on the senders'
preferences only through that pure equilibrium record and action. This
comparison concerns deterministic outcomes; an arbitrary monotone
transformation need not preserve the ranking of lotteries and therefore need
not preserve mixed equilibria. For
the neutral-backfill or benchmark-imputation extension, the same argument
applies with $a_{n,\lambda}$ in place of $a_n$; for full reallocation it
applies with $a_n^R$.
\end{proof}

\subsection{Proof of Proposition~\ref{thm:largepool}}
\label{app:proof:largepool}

\begin{proof}
Strict monotonicity: for $c<c'$,
\[
  H(c;r,\theta)-H(c';r,\theta)
  =\int_{c}^{c'}
  \big[r\,\Prob_{\theta}(X_1>s)+(1-r)\,\Prob_{\theta}(X_2<s)\big]\,ds
  >0,
\]
since the integrand is strictly positive on the interior by
Assumption~\ref{ass:evidence}. At the endpoints,
$H(\underline x)=r(\theta-\underline x)>0$ and
$H(\overline x)=-(1-r)(\overline x-\theta)<0$, so the zero exists, is
unique, and is interior. Lemma~\ref{lem:uniform} in the Appendix shows
$\sup_{c}|H_{n}(c)-H(c;r,\theta)|\xrightarrow{p}0$; since $H$ is strictly
decreasing and continuous, $H(c^{*}\pm\varepsilon)$ straddle zero, so with
probability tending to one $H_{n}(c^{*}-\varepsilon)>0>H_{n}(c^{*}+\varepsilon)$
and hence $|c_{n}-c^{*}|<\varepsilon$. Convergence of the disclosed
fractions follows from $c_{n}\xrightarrow{p}c^{*}$, the continuity of
$F_{1,\theta}$ and $F_{2,\theta}$, and the law of large numbers applied to
the two tail indicators at $c=c^{*}\pm\varepsilon$. Differentiability: $H$ is
$C^{1}$ in $(c,r)$ with
$\partial H/\partial c=-[r\Prob_{\theta}(X_1>c)+(1-r)\Prob_{\theta}(X_2<c)]<0$
and $\partial H/\partial r=\E_{\theta}(X_1-c)^{+}+\E_{\theta}(c-X_2)^{+}>0$;
the implicit function theorem gives \eqref{eq:cs}.
\end{proof}

\subsection{Proof of Corollary~\ref{thm:countervailing}}
\label{app:proof:countervailing}

\begin{proof}
$H(\theta;r,\theta)=r\,\E_{\theta}(X-\theta)^{+}-(1-r)\,\E_{\theta}(\theta-X)^{+}
=(2r-1)M_{0}(\theta)$. Since $H$ is strictly decreasing in $c$
(Proposition~\ref{thm:largepool}), its zero lies at $\theta$ when $r=1/2$,
strictly above when $r>1/2$, and strictly below when $r<1/2$.
\end{proof}

\subsection{Proof of Proposition~\ref{prop:paradox}}
\label{app:proof:paradox}

\begin{proof}
The first inequality is \eqref{eq:cs}. The second and third follow from
the chain rule:
\[
 p_1'=-f_{\theta}(c^{*})\frac{dc^{*}}{dr}<0,
 \qquad
 p_2'=f_{\theta}(c^{*})\frac{dc^{*}}{dr}>0.
\]
For the last, differentiating the
conditional mean in the threshold gives
\[
  \frac{d}{dc}\,\E_{\theta}[X\mid X>c]
  =
  \frac{f_{\theta}(c)}{q(c)}
  \Big(\E_{\theta}[X\mid X>c]-c\Big)>0,
\]
and the chain rule with $dc^{*}/dr>0$ completes the argument.
\end{proof}

\subsection{Proof of Proposition~\ref{prop:amplification}}
\label{app:proof:amplification}

\begin{proof}
Differentiating the identity of Remark~\ref{rem:identity} in $c$ and using
$\partial H/\partial c=-D+(m-c)\partial D/\partial c$ gives, after
rearranging,
\[
  \frac{\partial m(c;r)}{\partial c}
  =
  (2r-1)\,\frac{f_{\theta}(c)}{D(c;r)}\,\big[m(c;r)-c\big].
\]
For $r>1/2$: $m(\theta;r)-\theta=(2r-1)M_{0}/D>0$ and
$c^{*}(r)>\theta$ by Corollary~\ref{thm:countervailing}. On
$(\theta,c^{*}(r))$ we have $H>0$, hence $m(c;r)>c$, hence
$\partial m/\partial c>0$: $m(\cdot;r)$ is strictly increasing between
$\theta$ and $c^{*}$. Therefore
$c^{*}(r)=m(c^{*}(r);r)>m(\theta;r)>\theta$.
\end{proof}

\subsection{Proof of Proposition~\ref{prop:extremization}}
\label{app:proof:extremization}

\begin{proof}
$c^{*}$ is increasing, so the limit as $r\uparrow1$ exists; call it
$\bar c$. If $\bar c<\overline x$, continuity of
$(c,r)\mapsto H(c;r,\theta)$ would give
$0=\lim H(c^{*}(r);r,\theta)=\E_{\theta}(X-\bar c)^{+}>0$, a
contradiction. The other limit is symmetric.
\end{proof}

\subsection{Proof of Theorem~\ref{thm:design}}
\label{app:proof:design}

\begin{proof}
$H(\theta;r,\theta)=r\,\Gamma_1(\theta)-(1-r)\,\Gamma_2(\theta)$. Since
$H(\cdot;r,\theta)$ is strictly decreasing
(Proposition~\ref{thm:largepool}), $c^{*}(r,\theta)=\theta$ if and only if
this expression vanishes, which is exactly $r=r^{*}(\theta)$. Strict
positivity of the loss elsewhere follows from strict monotonicity of
$c^{*}$ in $r$ \eqref{eq:cs}. Under location families,
$\Gamma_1$ and $\Gamma_2$ are moments of the innovation distributions and do not
depend on $\theta$.
\end{proof}

\subsection{Proof of Theorem~\ref{thm:exante}}
\label{app:proof:exante}

\begin{proof}
Joint continuity and compactness imply continuity of $\mathcal L$ and
existence of a minimizer. The integrand is nonnegative. It vanishes almost
surely at a constant $r$ if and only if
$c^{*}(r,\theta)=\theta$ almost surely, which by
Theorem~\ref{thm:design} is equivalent to
$r=r^{*}(\theta)$ almost surely. This proves (i).

For (ii), if $r<r_{L}$, then $c^{*}(r,\theta)<\theta$ almost surely by
Theorem~\ref{thm:design}; since $\partial c^{*}/\partial r>0$, increasing
$r$ strictly lowers the loss until $r_{L}$ is reached. The symmetric
argument applies above $r_{U}$. If $r^*$ is not almost surely constant,
then at $r_L$ the distortion is weakly negative almost surely and strictly
negative on a set of positive probability, so the derivative of
$\mathcal L$ is strictly negative; symmetrically it is strictly positive at
$r_U$. Hence neither endpoint is optimal. For (iii), differentiate under the integral,
which is justified by Assumption~\ref{ass:design} and compactness; the
factor two cancels from the first-order condition. Strict convexity gives
uniqueness. For (iv), posterior minimization follows from iterated
expectations. A more informative signal expands the set of feasible
state-contingent decision rules (the platform can ignore the extra
information), so attainable loss cannot increase.
\end{proof}

\subsection{Proof of Corollary~\ref{cor:uniformimpl}}
\label{app:proof:uniformimpl}

\begin{proof}
$r^{*}(\theta)=[1+\Gamma_1(\theta)/\Gamma_2(\theta)]^{-1}$ is constant on
$\Theta_{0}$ if and only if the ratio is; apply part (i).
\end{proof}

\subsection{Proof of Proposition~\ref{prop:primitiveexante}}
\label{app:proof:primitiveexante}

\begin{proof}
For $X=\theta+\sigma Z$ with $Z\sim U[-1,1]$ and
$d=c-\theta\in[-\sigma,\sigma]$,
\[
 \E(X-c)^+=\frac{(\sigma-d)^2}{4\sigma},
 \qquad
 \E(c-X)^+=\frac{(\sigma+d)^2}{4\sigma}.
\]
The same monotonicity argument as in Theorem~\ref{thm:exante}(ii) places
every optimum between the two state-contingent shares: below both shares an
increase in $r$ reduces both state losses, and above both shares a decrease
does so. On that interval $1/k\leq t\leq k$; the restriction $k<\sqrt3$
keeps both state-specific roots inside the overlapping supports.
Taking the positive square root of the equilibrium balance equation and
solving for $d$ gives \eqref{eq:primitiveDL} and
\eqref{eq:primitiveDH}. Their zeros give the displayed state-contingent
shares.

The unknown-state loss in the odds coordinate is
\[
 \mathcal L_p(t)=(1-p)d_L(t)^2+p\,d_H(t)^2.
\]
Direct differentiation yields
\begin{equation}
 \frac{\mathcal L_p'(t)}{2s^2k^2(k^2+1)}
 =(1-p)\frac{kt-1}{(t+k)^3}
   +p\frac{t-k}{(kt+1)^3}.
 \label{eq:primitiveLossDerivative}
\end{equation}
The derivative is negative at $1/k$ and positive at $k$. On $(1/k,k)$ its
zero is equivalent to \eqref{eq:primitiveFOC}. Moreover,
\[
 \frac{d}{dt}\log R_k(t)
 =\frac{k}{kt-1}+\frac{3k}{kt+1}
  +\frac{1}{k-t}-\frac{3}{t+k}>0,
\]
because
$3k/(kt+1)-3/(t+k)=3(k^2-1)/[(kt+1)(t+k)]>0$.
Thus the zero is unique, and equating $\log R_k(t_p)$ to the prior log odds
proves the comparative static in $p$. Symmetry gives $R_k(1)=1$ and hence
$r^{EA}_{1/2}=1/2$. The zero is also simple. Indeed, write the right-hand
side of \eqref{eq:primitiveLossDerivative} as $Q(t)-P(t)$, where
\[
 Q(t)=(1-p)\frac{kt-1}{(t+k)^3},
 \qquad
 P(t)=p\frac{k-t}{(kt+1)^3}.
\]
At $t_p$, $Q(t_p)=P(t_p)>0$ and
\[
 \frac{Q(t)}{P(t)}=\frac{1-p}{p}R_k(t).
\]
Consequently
\begin{equation}
 \frac{\mathcal L_p''(t_p)}{2s^2k^2(k^2+1)}
 =Q'(t_p)-P'(t_p)
 =Q(t_p)\left.\frac{d}{dt}\log R_k(t)\right|_{t=t_p}>0.
 \label{eq:primitiveStrictCurvature}
\end{equation}

State observation permits the platform to choose the two zeros and attain
zero loss. At $t=1$ the distortions are equal and opposite, with magnitude
$sk(k-1)/(k+1)$, which proves \eqref{eq:primitiveVOI}; its derivative with
respect to $k$ is positive for $k>1$.

For the final claim, approximate each centered uniform density by a
symmetric $C^\infty$ mollification with the same mean, and mix it with a
vanishing weight on a smooth density that is strictly positive on
$\mathcal X$ and has mean $\theta_j$ (an exponential tilt on $\mathcal X$
provides one). Choose the approximations in mirror-image pairs across the
two states. The resulting densities are atomless, unbiased, strictly
positive on the interior of $\mathcal X$, and hence satisfy
Assumption~\ref{ass:evidence}.

Fix a compact parameter set $K$ as in the statement. The strict parameter
inequalities give a common positive distance between the relevant
equilibrium paths and the moving support edges. We may therefore choose the
mollifications so that, on a common neighborhood of those paths, the
truncated-moment maps and their derivatives through order two converge
uniformly. Their derivative with respect to the action is bounded away from
zero there. The implicit-function theorem with parameters then implies
$C^2$ convergence of the state-specific equilibrium-action paths, and thus
uniform $C^2$ convergence of the perturbed ex ante losses to
$\mathcal L_p$ in the odds coordinate.

It remains to verify that this convergence preserves global uniqueness.
By \eqref{eq:primitiveStrictCurvature} and compactness of $K$, there are a
common neighborhood of the minimizer path $t_p$ and a constant $m>0$ on
which $\mathcal L_p''\geq m$. Outside that neighborhood, but between the two
state-contingent policies, $\mathcal L_p'$ is bounded away from zero and has
the sign pointing toward $t_p$; this follows from the strict increase of
$R_k$. For every sufficiently close approximation, its second derivative
is therefore positive on the neighborhood and its first derivative retains
the two exterior signs. It has exactly one critical point in the relevant
interval, and that point is its unique global minimizer. Theorem~\ref{thm:exante}(ii)
rules out an optimum outside the interval between the perturbed
state-contingent policies. The minimizers and attained losses consequently
converge uniformly on $K$. The same strict inequalities remain valid under
small $C^2$ perturbations, which proves the final open-set claim.
\end{proof}

\subsection{Proof of Proposition~\ref{prop:finitepolicy}}
\label{app:proof:finitepolicy}

\begin{proof}
Proposition~\ref{thm:finitewelfare} implies, uniformly on $\mathcal R_0$,
\begin{equation}
 \mathcal W_{n,\mu}(r)
 =\mathcal L_\mu(r)+\frac{\mathcal Q_{\mu,\nu}(r)}{n}+o(n^{-1}).
 \label{eq:integratedrisk}
\end{equation}
For every realization, $c_n(r)$ is continuous in $r$: it is the unique zero
of a function continuous in $(c,r)$. Bounded support and dominated
convergence therefore make $\mathcal W_{n,\mu}$ continuous, so an exact
minimizer exists. Uniform convergence in \eqref{eq:integratedrisk} and the
unique minimizer of $\mathcal L_\mu$ give part (i) by the standard compact
argmin argument.

For part (ii), outside any neighborhood of $r_\mu$ the limiting loss is
bounded strictly above its minimum, whereas the $\mathcal Q_{\mu,\nu}/n$ term is
uniformly vanishing. Hence every minimizer of the second-order criterion is
eventually local. On a sufficiently small neighborhood,
$\mathcal L_\mu''$ is bounded away from zero and
$\mathcal Q_{\mu,\nu}''/n$ is uniformly negligible, so the criterion is strictly
convex there and its minimizer is unique. Its first-order condition is
\[
 0=\mathcal L_\mu'(\widetilde r_{n,\mu})
   +\frac{\mathcal Q_{\mu,\nu}'(\widetilde r_{n,\mu})}{n}.
\]
Taylor expansion first gives
$\widetilde r_{n,\mu}-r_\mu=O(n^{-1})$ and then yields
\eqref{eq:secondorderpolicy}. Expanding the criterion once more at $r_\mu$
and substituting that limit gives \eqref{eq:secondordervalue}.

For a point mass at $\theta$, Theorem~\ref{thm:design} gives
$r_\mu=r^*(\theta)$, and
\[
 \mathcal L_\mu''(r^*)=2\kappa(\theta)^2,
 \qquad
 \partial_r Q_{\nu}(r^*,\theta)
 =\partial_r V^C(r^*,\theta)+2\kappa(\theta)B^{C,\nu}(r^*,\theta).
\]
Substitution into \eqref{eq:secondorderpolicy} proves part (iii).
\end{proof}

\subsection{Proof of Proposition~\ref{prop:varcost}}
\label{app:proof:varcost}

\begin{proof}
$c^{*}=\theta$ by Corollary~\ref{thm:countervailing}, and
$D^{*}=\tfrac12\Prob_{\theta}(X>\theta)+\tfrac12\Prob_{\theta}(X<\theta)
=\tfrac12$ because $F_{\theta}$ is atomless. Write
$g_1=(X-\theta)^{+}$ and $g_2=(\theta-X)^{+}$, so that
$g_1^{2}+g_2^{2}=(X-\theta)^{2}$ and, by unbiasedness,
$\E g_1=\E g_2=\tfrac12\E|X-\theta|$. Hence
\[
  \operatorname{Var}(g_1)+\operatorname{Var}(g_2)
  =\E(X-\theta)^{2}-\tfrac12\big(\E|X-\theta|\big)^{2},
\]
so $\Omega=\tfrac12[\operatorname{Var}(g_1)+\operatorname{Var}(g_2)]$
and
$V^{C}=\Omega/D^{*2}=4\Omega
=2\E(X-\theta)^{2}-(\E|X-\theta|)^{2}$, while
$V^{F}=\operatorname{Var}_{\theta}(X)=\E(X-\theta)^{2}$. Subtracting,
$V^{C}-V^{F}=\E|X-\theta|^{2}-(\E|X-\theta|)^{2}
=\operatorname{Var}(|X-\theta|)$, which is nonnegative and zero only in
the degenerate case.
\end{proof}

\subsection{Uniform convergence}
\label{app:uniform}

\begin{lemma}
\label{lem:uniform}
Under the conditions of Proposition~\ref{thm:largepool},
$\sup_{c\in\mathcal X}\big|H_{n}(c)-H(c;r,\theta)\big|
\xrightarrow{p}0$.
\end{lemma}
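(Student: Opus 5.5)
The strategy is to split $H_n(c)-H(c;r,\theta)$ into a deterministic anchor part, an attention-drift part, and two centered weighted empirical processes indexed by $c$. Each piece is then controlled uniformly over the compact interval $\mathcal X$. Write
\[
 H_n(c)-H(c;r,\theta)
 =\eta_n(a_0-c)
 +\Bigl[\sum_{i\in I_1}v_{ni}(X_i-c)^+-r\,\E_\theta(X_1-c)^+\Bigr]
 -\Bigl[\sum_{i\in I_2}v_{ni}(c-X_i)^+-(1-r)\,\E_\theta(c-X_2)^+\Bigr].
\]
The anchor term is bounded by $\eta_n(\overline x-\underline x)\to0$ uniformly. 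In the first bracket we use $\sum_{i\in I_1}v_{ni}=r_n$ to split it as
\[
 (r_n-r)\,\E_\theta(X_1-c)^+ \;+\; \sum_{i\in I_1}v_{ni}\bigl[(X_i-c)^+-\E_\theta(X_1-c)^+\bigr].
\]
The first of these pieces is at most $|r_n-r|(\overline x-\underline x)\to0$ uniformly in $c$. The second bracket is handled symmetrically with $1-r_n$. This leaves the two centered processes
\[
 Z_{k,n}(c)=\sum_{i\in I_k}v_{ni}\bigl[g_k(X_i,c)-\E_\theta g_k(X_k,c)\bigr],
 \qquad g_1(x,c)=(x-c)^+,\quad g_2(x,c)=(c-x)^+,
\]
to control.

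\textbf{Pointwise convergence.} Fix $c$. The summands are independent, centered, and bounded by $\overline x-\underline x$. Therefore
\[
 \operatorname{Var}Z_{k,n}(c)\le(\overline x-\underline x)^2\sum_i v_{ni}^2\le(\overline x-\underline x)^2\max_i v_{ni}\to0,
\]
using $\sum_i v_{ni}\le1$. Chebyshev's inequality then gives $Z_{k,n}(c)\xrightarrow{p}0$. This is the only point where diffuseness $\max_i v_{ni}\to0$ enters, and it is what replaces a plain i.i.d.\ law of large numbers for the weighted, triangular-array sums.

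\textbf{Uniformity (the main step).} Both $c\mapsto g_k(x,c)$ and its expectation are $1$-Lipschitz in $c$. Hence $Z_{k,n}$ is Lipschitz with constant at most $\sum_i v_{ni}+1\le2$, deterministically. Take a finite grid $c_1<\dots<c_J$ in $\mathcal X$ with mesh $\delta$. Then
\[
 \sup_c|Z_{k,n}(c)|\le\max_j|Z_{k,n}(c_j)|+2\delta .
\]
The maximum over finitely many points converges to zero in probability by the pointwise step and a union bound. Choosing $\delta=\varepsilon/4$ and then $n$ large gives $\Pr\bigl(\sup_c|Z_{k,n}(c)|>\varepsilon\bigr)\to0$.

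Combining the four pieces proves the lemma. The only care needed is to check that the Lipschitz bound does not depend on the realization or on $n$, which follows from the normalization of the weights. Nothing beyond boundedness of $\mathcal X$, independence, and diffuse weights is used; in particular the densities and atomlessness from Assumption~\ref{ass:evidence} are not needed for this step.
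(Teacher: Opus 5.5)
Your proof is correct and follows the paper's argument essentially step for step. The paper also isolates the deterministic anchor and $|r_n-r|$ drift terms (through $\bar H_n=\E_\theta H_n$), gets pointwise convergence from the variance bound $(\overline x-\underline x)^2\max_i v_{ni}$, and upgrades it to uniform convergence with a Lipschitz finite-grid argument; the only cosmetic difference is that you apply the grid step to the two centered camp processes separately rather than to $H_n-\bar H_n$ as a whole.
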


\begin{proof}
Write $\bar H_{n}(c)=\E_{\theta}[H_{n}(c)]
=\eta_{n}(a_0-c)+r_{n}\E_{\theta}(X_1-c)^{+}
-(1-r_{n})\E_{\theta}(c-X_2)^{+}$. First,
\[
  \sup_{c}\big|\bar H_{n}(c)-H(c;r,\theta)\big|
  \leq
  \eta_{n}\,\mathrm{diam}(\mathcal X)
  +2\,|r_{n}-r|\,\mathrm{diam}(\mathcal X)
  \longrightarrow0 .
\]
Second, for fixed $c$, $H_{n}(c)-\bar H_{n}(c)=\sum_{i}v_{ni}\xi_{i}(c)$
with $\xi_{i}(c)$ independent, mean zero, and
$|\xi_{i}(c)|\leq\mathrm{diam}(\mathcal X)$, so
$\mathrm{Var}(H_{n}(c))\leq\mathrm{diam}(\mathcal X)^{2}\sum_{i}v_{ni}^{2}
\leq\mathrm{diam}(\mathcal X)^{2}\max_{i}v_{ni}\to0$: pointwise
convergence in probability. Third, every function involved is Lipschitz in
$c$ with constant at most $1+\eta_{n}$, so convergence on a finite
$\delta$-grid extends to uniform convergence with error
$2(1+\eta_{n})\delta$; letting $\delta\to0$ along a diagonal completes the
proof.
\end{proof}

\subsection{Proof of Proposition~\ref{thm:finitewelfare}}
\label{app:finitewelfare}

The only nonstandard step is control of the random kink at the equilibrium
cutoff. Second-order bias and mean-squared-error expansions for ordinary
asymmetric least squares are developed by \citet{leeullahwang2019}; here the
score combines two distinct camp samples, so the following lemma establishes
the required two-sample remainder directly.

\begin{lemma}[Uniform local root expansion]
\label{lem:localroot}
Under Assumption~\ref{ass:finitepool}(i), let
$\delta_n=c_n-c^{*}(r,\theta)$ and define $e_n$ and $u_n$ as below. Uniformly
on $\mathcal K$,
\[
 \|\delta_n\|_{L^q}=O(n^{-1/2})\quad\text{for every fixed }q<\infty,
 \qquad
 \left\|\delta_n-\frac{e_n}{D^{*}}\right\|_{L^2}=O(n^{-1}),
\]
and
\begin{equation}
  \delta_n
  =\frac{e_n}{D^{*}}
   +\frac{u_ne_n}{(D^{*})^{2}}
   +\frac{H_{cc}(c^{*};r,\theta)e_n^{2}}{2(D^{*})^{3}}
   +\frac{\eta_n(a_0-c^*)}{D^*}
   +\mathfrak r_n,
  \qquad
  \sup_{(r,\theta)\in\mathcal K}n\E_\theta|\mathfrak r_n|\longrightarrow0.
  \label{eq:uniformroot}
\end{equation}
\end{lemma}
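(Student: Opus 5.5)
The plan is to linearize the finite balance equation $H_n(c_n)=0$ around $c^*=c^*(r,\theta)$ and to control the kink remainder separately. I take $e_n$ to be the centered empirical balance at the limiting cutoff,
\[
 e_n=\sum_{i\in I_1}v_{ni}(X_i-c^*)^+-\sum_{j\in I_2}v_{nj}(c^*-X_j)^+-H(c^*;r_n,\theta),
\]
and $u_n=\widehat D_n(c^*)-D^*$ to be the centered empirical slope, where $\widehat D_n(c)=\sum_{i\in I_1}v_{ni}\mathbf 1\{X_i>c\}+\sum_{j\in I_2}v_{nj}\mathbf 1\{X_j<c\}$. Both are normalized sums of independent bounded terms, so $\|e_n\|_{L^q},\|u_n\|_{L^q}=O(n^{-1/2})$ uniformly on $\mathcal K$. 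The bounds are uniform because $n_k/n\to\alpha_k$ and the weights are $r/n_1$ and $(1-r)/n_2$.

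\emph{Step 1 (moment bound).} I would use monotonicity of $H_n$ rather than a Taylor expansion. For $t>0$,
\[
\{\delta_n>t\}\subseteq\{H_n(c^*+t)\ge0\}.
\]
The mean of $H_n(c^*+t)$ is at most $-\tfrac12 d_0\min(t,\varepsilon)+O(\eta_n)+O(n^{-1})$, because $D\ge d_0$ on the $\varepsilon$-window and $H$ is strictly decreasing. Hoeffding's inequality then gives
\[
\Prob(\delta_n>t)\le 2\exp\!\left(-c\,n\min(t,\varepsilon)^2\right)
\]
once $t\gtrsim n^{-1/2}$, and symmetrically for $\delta_n<-t$. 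Integrating the tail yields $\|\delta_n\|_{L^q}=O(n^{-1/2})$ for every fixed $q$, with constants depending only on $(d_0,\varepsilon,M)$, hence uniformly on $\mathcal K$. The condition $n\eta_n\to\nu$ makes the anchor contribute only at order $n^{-1}$.

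\emph{Step 2 (exact decomposition).} Write the balance equation as
\[
0=H_n(c^*+\delta_n)=\eta_n(a_0-c^*)-\eta_n\delta_n+e_n+H(c^*;r_n,\theta)-\widehat D_n(c^*)\delta_n+R_n(\delta_n),
\]
with the kink remainder
\[
R_n(\delta)=\int_0^\delta[\widehat D_n(c^*)-\widehat D_n(c^*+s)]\,ds.
\]
Split $R_n=\bar R(\delta)+\widetilde R_n(\delta)$, where $\bar R(\delta)=\int_0^\delta[D(c^*)-D(c^*+s)]\,ds=\tfrac12H_{cc}\delta^2+O(|\delta|^3)$ is the population part. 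The remaining piece $\widetilde R_n$ is a local empirical-process increment. Solving gives
\[
\delta_n=\frac{e_n+\eta_n(a_0-c^*)+\bar R(\delta_n)+\widetilde R_n(\delta_n)}{D^*+u_n+\eta_n}.
\]
Taking $L^2$ norms with Step 1 yields $\|\delta_n-e_n/D^*\|_{L^2}=O(n^{-1})$. The denominator is not bounded below pathwise, but $\{|u_n|>d_0/2\}$ has exponentially small probability, and $|\delta_n|\le\mathrm{diam}\,\mathcal X$ always, so this event is negligible.

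\emph{Step 3 (second order).} Expand $1/(D^*+u_n+\eta_n)=1/D^*-u_n/D^{*2}+O(u_n^2+\eta_n)$ and substitute $\delta_n\approx e_n/D^*$ into $\bar R$. This produces the terms $u_ne_n/D^{*2}$, $H_{cc}e_n^2/(2D^{*3})$ and $\eta_n(a_0-c^*)/D^*$ in \eqref{eq:uniformroot}. All leftover products are of order $n^{-3/2}$ in $L^1$, by Hölder's inequality and Step 1. The $r_n-r$ discrepancy in $H(c^*;r_n,\theta)$ is absorbed by defining $c^*$ at the current $r$, since the pool is indexed at share $r$.

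\emph{Main obstacle.} The hard part is showing $n\,\E|\widetilde R_n(\delta_n)|\to0$ uniformly on $\mathcal K$. The difficulty is that $\delta_n$ is random and correlated with the indicators that cross the cutoff. I would first bound
\[
\sup_{|s|\le h}\bigl|\widehat D_n(c^*+s)-\widehat D_n(c^*)-[D(c^*+s)-D(c^*)]\bigr|.
\]
The bound comes from a Bernstein inequality on a grid of mesh $n^{-1}$ together with monotonicity of the indicator sums. Each centered increment has variance $O(h/n)$ because the density is bounded by $M$, so the supremum is $O_{L^q}(\sqrt{h\log n/n})$. I would then take $h=n^{-1/2}\log n$ and use Step 1 to discard $\{|\delta_n|>h\}$. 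On the complementary event,
\[
|\widetilde R_n(\delta_n)|\le|\delta_n|\sup_{|s|\le h}|\cdots|=O_{L^1}\bigl(n^{-1/2}\cdot n^{-3/4}\log n\bigr)=o(n^{-1}).
\]
Uniformity follows because every constant depends only on $(d_0,\varepsilon,M)$ and the camp-fraction limits.
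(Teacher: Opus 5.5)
Your proposal is correct and is essentially the paper's proof. Your integrated kink remainder $R_n(\delta)$ is exactly the paper's crossing-loss term $rP_{n1}\ell_1(\cdot,\delta)-(1-r)P_{n2}\ell_2(\cdot,\delta)$, and you split it the same way into a population Taylor part and a centered empirical part. As in the paper, you localize the root by monotonicity plus weighted Hoeffding, and you control the empirical part by a supremum over the deterministic window $|t|\le n^{-1/2}\log n$ before substituting $\delta_n$. The only differences are technical: you use a Bernstein grid with monotone bracketing where the paper cites a VC maximal inequality, and you divide by the random slope where the paper keeps $D^*\delta_n=e_n+u_n\delta_n+\cdots$ with a deterministic denominator, which avoids your bad-event step.

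There is one slip to fix. With your definition $u_n=\widehat D_n(c^*)-D^*$, expanding $1/(D^*+u_n+\eta_n)$ gives $-u_ne_n/(D^*)^2$, not $+u_ne_n/(D^*)^2$. To match the displayed expansion you need $u_n=D^*-\widehat D_n(c^*)$, which is the paper's sign. This matters downstream, because $n\E[u_ne_n]$ is the covariance term $C(r,\theta)$ in the bias coefficient.
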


\begin{proof}
Suppress $(r,\theta)$ and write $P_{n1}$ and $P_{n2}$ for the two empirical
measures, with population laws $P_1$ and $P_2$. At $c^*$ set
\begin{align*}
 g_1(x)&=(x-c^*)^+,& \iota_1(x)&=\ind{x>c^*},\\
 g_2(x)&=(c^*-x)^+,& \iota_2(x)&=\ind{x<c^*},
\end{align*}
and define the centered score and slope fluctuations
\begin{align*}
 e_n&=r(P_{n1}-P_1)g_1-(1-r)(P_{n2}-P_2)g_2,\\
 u_n&=-r(P_{n1}-P_1)\iota_1-(1-r)(P_{n2}-P_2)\iota_2.
\end{align*}
Then
\[
 D_n=rP_{n1}\iota_1+(1-r)P_{n2}\iota_2
     =D^{*}-u_n.
\]
For $t\in\R$, define the crossing losses
\begin{align*}
 \ell_1(x,t)
 &= (x-c^{*}-t)^+-(x-c^{*})^+ +t\ind{x>c^{*}},\\
 \ell_2(x,t)
 &= (c^{*}+t-x)^+-(c^{*}-x)^+ -t\ind{x<c^{*}}.
\end{align*}
For each $k\in\{1,2\}$ they satisfy
$|\ell_k(x,t)|\leq |t|\ind{|x-c^{*}|\leq|t|}$. If the anchor is omitted,
the empirical excess-advocacy function therefore has the exact expansion
\begin{equation}
 H_n^0(c^{*}+t)
 =e_n-D_nt+rP_{n1}\ell_1(\cdot,t)
              -(1-r)P_{n2}\ell_2(\cdot,t).
 \label{eq:exactkink}
\end{equation}
Assumption~\ref{ass:finitepool}(i) and Taylor's theorem imply, uniformly for
$|t|\leq\varepsilon$,
\begin{align*}
 P_1\ell_1(\cdot,t)&=\tfrac12f_{1,\theta}(c^{*})t^2+O(|t|^3),\\
 P_2\ell_2(\cdot,t)&=\tfrac12f_{2,\theta}(c^{*})t^2+O(|t|^3).
\end{align*}
Moreover, the two one-dimensional classes indexed by $t$ are VC-subgraph
classes with envelope
$F_a(X)=a\ind{|X-c^{*}|\leq a}$ on $|t|\leq a$. Uniformly on
$\mathcal K$, $P_k F_a^2\leq 2Ma^3$ for $k\in\{1,2\}$ and
$\|F_a\|_\infty\leq a$.
The maximal-moment bound for bounded VC-subgraph classes
\citep[Section~2.14]{vanderVaartWellner1996} consequently gives, for either
sender and $0<a\leq\varepsilon$,
\begin{equation}
 \max_{k\in\{1,2\}}
 \left\|\sup_{|t|\leq a}
 |(P_{nk}-P_k)\ell_k(\cdot,t)|\right\|_{L^2}
 \leq K\max_{k\in\{1,2\}}
       \left(\frac{a^{3/2}}{\sqrt{n_k}}+\frac{a}{n_k}\right)
 \leq K'\left(\frac{a^{3/2}}{\sqrt n}+\frac{a}{n}\right),
 \label{eq:crossingmax}
\end{equation}
where the second inequality uses the lower bound on $n_k/n$, and $K,K'$
are independent of $(r,\theta)\in\mathcal K$. The classes are translations
of the same one-dimensional VC-subgraph class as $c^*$ varies; the density
bound controls their localized envelopes uniformly.

For completeness, the root is localized at the parametric rate. Put
$\alpha_{k,n}=n_k/n$ for $k\in\{1,2\}$. Because the sender fractions and $r$ on
$\mathcal K$ are bounded away from zero,
the squared weights in the centered part of $H_n(c)$ sum to at most $K_0/n$
for a common constant $K_0$. The population function has derivative
$-D^{*}$ at $c^{*}$, and the uniform density bound and $D^{*}\geq d_0$
imply, after reducing $\varepsilon$ if needed, that
$|H(c^{*}+t;r,\theta)|\geq d_0|t|/2$ for $|t|\leq\varepsilon$.
For $t\geq n^{-1/2}$, the $O(n^{-1})$ anchor is smaller than this separation.
Monotonicity of $H_n$, followed by weighted Hoeffding inequalities at
$c^{*}\pm t$, therefore yields constants $A,k>0$ such that
\begin{equation}
 \sup_{(r,\theta)\in\mathcal K}
 \Prob_\theta(|\delta_n|>t)
 \leq A\exp(-knt^2),
 \qquad n^{-1/2}\leq t\leq\varepsilon,
 \label{eq:roottail}
\end{equation}
and the probability of $|\delta_n|>\varepsilon$ is $O(e^{-kn})$.
Since both $c_n$ and $c^*$ lie in the fixed compact interval $\mathcal X$,
integrating this tail bound proves
$\|\delta_n\|_{L^q}=O(n^{-1/2})$ for every fixed finite $q$.

Take $b_n=n^{-1/2}\log n$ and let $\mathcal B_n=\{|\delta_n|\leq b_n\}$. On
$\mathcal B_n$,
the empirical part of each crossing loss is bounded in $L^2$ by
\eqref{eq:crossingmax}, which at $b_n$ is $o(n^{-1})$; on $\mathcal B_n^c$ its
bounded-support contribution is $o(n^{-1})$ because
\eqref{eq:roottail} makes $\Pr(\mathcal B_n^c)$ smaller than every power of
$n$.  Taking the supremum over every $|t|\leq b_n$ before substituting the
random value $t=\delta_n$ is what avoids any independence requirement between
the root and the empirical process.
The population Taylor remainder is $O(|\delta_n|^3)$, whose $L^2$ norm is
$O(n^{-3/2})$ by the sixth-moment bound just established. Consequently,
\begin{equation}
 rP_{n1}\ell_1(\cdot,\delta_n)
 -(1-r)P_{n2}\ell_2(\cdot,\delta_n)
 =\tfrac12H_{cc}(c^{*};r,\theta)\delta_n^2+\zeta_n,
 \qquad \|\zeta_n\|_{L^2}=o(n^{-1}),
 \label{eq:kinkremainder}
\end{equation}
uniformly on $\mathcal K$. Substituting $t=\delta_n$ into
\eqref{eq:exactkink} and restoring the anchor gives
\[
 D^{*}\delta_n
 =e_n+u_n\delta_n
  +\tfrac12H_{cc}(c^{*};r,\theta)\delta_n^2
  +\zeta_n+\eta_n(a_0-c^{*}-\delta_n).
\]
Bounded independent summands give
$\|e_n\|_{L^q}+\|u_n\|_{L^q}=O(n^{-1/2})$ for every fixed finite $q$.
The last display, the moment bound for $\delta_n$, and
$D^{*}\geq d_0$ first imply
\[
 \|u_n\delta_n\|_{L^2}
 \leq\|u_n\|_{L^4}\|\delta_n\|_{L^4}=O(n^{-1}),
 \qquad
 \|\delta_n^2\|_{L^2}=\|\delta_n\|_{L^4}^2=O(n^{-1}).
\]
Together with $\|\zeta_n\|_{L^2}=o(n^{-1})$ and
$\eta_n=O(n^{-1})$, these inequalities give
$\|\delta_n-e_n/D^{*}\|_{L^2}=O(n^{-1})$.
Write $\rho_n=\delta_n-e_n/D^*$, so $\|\rho_n\|_{L^2}=O(n^{-1})$.
Replacing $u_n\delta_n$ by $u_ne_n/D^*$ changes its expectation in absolute
value by at most
$\|u_n\|_{L^2}\|\rho_n\|_{L^2}=O(n^{-3/2})$. Likewise,
\[
 \E\left|\delta_n^2-\frac{e_n^2}{(D^*)^2}\right|
 \leq \|\rho_n\|_{L^2}
       \left\|\delta_n+\frac{e_n}{D^*}\right\|_{L^2}
 =O(n^{-3/2}).
\]
The $L^1$ norm of $\zeta_n$ is $o(n^{-1})$. Retain
$\eta_n(a_0-c^*)/D^*$ explicitly; the remaining anchor term satisfies
$\eta_n\E|\delta_n|=O(n^{-3/2})$ uniformly because
$n\eta_n\to \nu$ and $\|\delta_n\|_{L^1}=O(n^{-1/2})$.
Dividing by $D^*\geq d_0$ and collecting these errors gives
\eqref{eq:uniformroot}, including
$\sup_{\mathcal K}n\E|\mathfrak r_n|\to0$.
\end{proof}

\begin{proof}[Proof of Proposition~\ref{thm:finitewelfare}]
Fix $(r,\theta)$ and abbreviate $c^{*}=c^{*}(r,\theta)$ and
$D^{*}=D^{*}(r,\theta)$. Put
\[
  g_1(x)=(x-c^{*})^{+},
  \qquad
  g_2(x)=(c^{*}-x)^{+}.
\]
The centered empirical excess advocacy at the population zero is
\[
  e_{n}
  =r\big(\bar g_1-\E g_1\big)
   -(1-r)\big(\bar g_2-\E g_2\big).
\]
The two-sample central limit theorem gives
$\sqrt n e_n\Rightarrow N(0,\Omega)$, while
$\sqrt n\eta_n(a_0-c^*)\to0$ under \eqref{eq:anchorrate}.
Uniform convergence of the empirical tail probabilities in a neighborhood
of $c^{*}$ gives the Bahadur representation
\begin{equation}
  c_n-c^{*}=\frac{e_n}{D^{*}}+o_p(n^{-1/2}),
  \label{eq:bahadur}
\end{equation}
because $\partial H(c^{*};r,\theta)/\partial c=-D^{*}$. This proves
\eqref{eq:cltcurated}. Complete disclosure is the weighted sum of the two
sample means up to an order-$1/n$ anchor; the ordinary two-sample central
limit theorem gives \eqref{eq:cltfull}.

For the second-order statements, let
\[
  u_n
  =-r(\bar\iota_1-p_1)-(1-r)(\bar\iota_2-p_2),
  \quad
  \iota_1=\ind{X_1>c^{*}},\quad \iota_2=\ind{X_2<c^{*}}.
\]
Thus $u_n$ is the centered empirical derivative of $H_n$ at $c^{*}$, and
Lemma~\ref{lem:localroot} gives \eqref{eq:uniformroot}, where
$H_{cc}(c^{*};r,\theta)=r f_{1,\theta}(c^{*})
-(1-r)f_{2,\theta}(c^{*})$. Independence of the two pools and
$\operatorname{Cov}(\iota_1,g_1)=U(c^{*})(1-p_1)$,
$\operatorname{Cov}(\iota_2,g_2)=L(c^{*})(1-p_2)$ give the exact
finite-$n$ identities
\[
 \begin{split}
 n\E[u_ne_n]
 &=-\frac{r^2}{\alpha_{1,n}}U(c^*)(1-p_1)
   +\frac{(1-r)^2}{\alpha_{2,n}}L(c^*)(1-p_2)
   \equiv C_n(r,\theta),\\
 n\E[e_n^2]
 &=\frac{r^2}{\alpha_{1,n}}\operatorname{Var}(g_1)
   +\frac{(1-r)^2}{\alpha_{2,n}}\operatorname{Var}(g_2)
   \equiv\Omega_n(r,\theta),
 \end{split}
\]
where $\alpha_{k,n}=n_k/n$ for $k\in\{1,2\}$. Hence, uniformly on $\mathcal K$,
$C_n\to C$ and $\Omega_n\to\Omega$, because
$\alpha_{k,n}\to\alpha_k$ for both $k$ and all displayed moments are uniformly
bounded by compact support. Taking expectations in \eqref{eq:uniformroot}
therefore gives
\[
 n\E[c_n-c^*]
 =\frac{C_n}{(D^*)^2}
  +\frac{H_{cc}(c^*;r,\theta)\Omega_n}{2(D^*)^3}
  +\frac{n\eta_n(a_0-c^*)}{D^*}+o(1)
 \longrightarrow B^{C,\nu}(r,\theta),
\]
uniformly, which proves \eqref{eq:biasexpansion}. Moreover,
$\|c_n-c^*-e_n/D^*\|_{L^2}=O(n^{-1})$ implies
\[
 n\E[(c_n-c^*)^2]
 =\frac{\Omega_n}{(D^*)^2}+o(1)
 \longrightarrow V^C(r,\theta)
\]
uniformly. Finally,
$(c_n-\theta)^2=(c^*-\theta)^2
+2(c^*-\theta)(c_n-c^*)+(c_n-c^*)^2$ proves
\eqref{eq:riskexpansion} and its uniform remainder claim.
\end{proof}

\subsection{Derivative risk expansion and the exact finite-pool policy}
\label{app:exactpolicy}

For this subsection write
\(
 R_n(r,\theta)=\E_\theta[(c_n(r)-\theta)^2]
\)
for conditional exact risk. The next argument upgrades the value expansion
in Proposition~\ref{thm:finitewelfare} to the local $C^1$ expansion needed for
the exact optimizer.

For $a,z\in\operatorname{int}\mathcal X$, define the triangular crossing
loss
\begin{equation}
 \psi(x;z,a)
 =|x-z|\ind{\min\{a,z\}<x<\max\{a,z\}}.
 \label{eq:triangular-crossing}
\end{equation}
Restoring the cutoff argument in Lemma~\ref{lem:localroot} shows that both
camp-specific crossing losses are instances of the same function:
\begin{equation}
 \ell_1(x,z-a;a)=\ell_2(x,z-a;a)=\psi(x;z,a)
 \quad\text{almost surely}.
 \label{eq:crossing-identity}
\end{equation}

\begin{lemma}[$C^1$ cutoff-crossing bound]
\label{lem:C1crossing}
Let $\mathcal K_1$ be a compact subset of the interior of the design domain
on which the interiority and smoothness conditions in
Assumption~\ref{ass:finitepool} hold. For $k\in\{1,2\}$, set
\begin{equation}
 T_{nk}(r,\theta)
 =\E_\theta\big[(P_{nk}-P_k)
       \psi(\cdot;c_n(r),c^*(r,\theta))\big].
 \label{eq:crossing-term}
\end{equation}
Then $T_{nk}$ is continuously differentiable in $r$ and, uniformly on
$\mathcal K_1$,
\begin{equation}
 |T_{nk}(r,\theta)|
 +|\partial_rT_{nk}(r,\theta)|
 =O(n^{-3/2}\log n).
 \label{eq:C1-crossing-rate}
\end{equation}
Moreover, for every fixed finite $q$,
\begin{equation}
 \|c_n-c^*\|_{L^q}
 +\|\dot c_n-\dot c^*\|_{L^q}=O(n^{-1/2}).
 \label{eq:C1-localization}
\end{equation}
\end{lemma}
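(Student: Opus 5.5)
We first prove the localization bound \eqref{eq:C1-localization}, and then bound $T_{nk}$ and its derivative by conditioning on the cutoff through a leave-one-out decomposition. The level part of \eqref{eq:C1-localization} is already in Lemma~\ref{lem:localroot}, which gives $\|c_n-c^*\|_{L^q}=O(n^{-1/2})$ uniformly.

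For the derivative, we differentiate the empirical balance equation $H_n(c_n(r);r)=0$ in $r$ wherever no observation sits at the cutoff. Pathwise this gives
\[
\dot c_n=\frac{P_{n1}(\cdot-c_n)^+ +P_{n2}(c_n-\cdot)^+}{\eta_n+D_n(c_n)},
\]
where $D_n(c)=rP_{n1}\ind{\cdot>c}+(1-r)P_{n2}\ind{\cdot<c}$. The population counterpart \eqref{eq:cs} has the same form with $c^*$ and $P_k$. On the event $|c_n-c^*|\le\varepsilon$, the numerator and denominator differ from their population values by three pieces: empirical-process fluctuations of order $n^{-1/2}$, by VC/Hoeffding bounds; a Lipschitz term in $|c_n-c^*|$; and for the indicators, a density-bounded term in $|c_n-c^*|$ plus the local fluctuation of $(P_{nk}-P_k)\ind{\cdot>c}$ over $|c-c^*|\le b_n$. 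The denominator is at least $d_0/2$ with overwhelming probability, by \eqref{eq:roottail} and $D^*\ge d_0$. Off that event we use the crude bound $\dot c_n\le \mathrm{diam}(\mathcal X)/\eta_n=O(n^\rho)$, which Assumption~\ref{ass:finitepool}(ii) supplies. Multiplied by the superpolynomially small probability, this contribution is negligible, and \eqref{eq:C1-localization} follows. Since $c_n$ is continuous and piecewise $C^1$ in $r$, the pathwise derivative agrees a.s.\ with the weak derivative.

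For $T_{nk}$, write $(P_{nk}-P_k)\psi=n_k^{-1}\sum_{i\in I_k}[\psi(X_i;c_n,c^*)-P_k\psi(\cdot;c_n,c^*)]$. For each summand we replace $c_n$ by the leave-one-out root $c_n^{(-i)}$, which is independent of $X_i$. The replaced summand has expectation zero after conditioning on $c_n^{(-i)}$, so
\[
T_{nk}=n_k^{-1}\sum_i\E\big[\psi(X_i;c_n,c^*)-\psi(X_i;c_n^{(-i)},c^*)-P_k\{\psi(\cdot;c_n,c^*)-\psi(\cdot;c_n^{(-i)},c^*)\}\big].
\]
Removing one item moves the root by $|c_n-c_n^{(-i)}|=O(1/n)$, because the slope of $H_n$ is at least $\eta_n+D_n$, bounded below on the good event, and the item's weight is $O(1/n)$. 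The function $\psi$ is Lipschitz in $z$ with constant $1$ on its support, and that support has width $|c_n-c^*|$. Each difference is therefore at most $O(1/n)\cdot\ind{|X_i-c^*|\le|c_n-c^*|+O(1/n)}$. Taking expectations, with $|c_n-c^*|\le b_n=n^{-1/2}\log n$ off a negligible event, the density bound gives $O(n^{-1}\cdot n^{-1/2}\log n)$. This yields the level bound in \eqref{eq:C1-crossing-rate}.

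The main obstacle is the derivative bound. We differentiate under the expectation, which is justified by dominated convergence using $\dot c_n=O(n^\rho)$ and a.s.\ differentiability. We use the leave-one-out representation, but now differentiate the difference of $\psi$ terms in $r$. This produces $\partial_z\psi\cdot(\dot c_n-\dot c_n^{(-i)})$ and $\partial_a\psi\cdot\dot c^*$. It also produces boundary terms where $X_i$ crosses $c_n$ or $c^*$; these vanish because $\psi$ is continuous, being zero at both endpoints of its support. We must show $|\dot c_n-\dot c_n^{(-i)}|=O(1/n)$ on the good event. We would get this by differentiating the one-item perturbation of the displayed formula for $\dot c_n$. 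The danger there is the indicator jump in $D_n$ when $X_i$ lies between $c_n$ and $c_n^{(-i)}$. That event has probability $O(1/n)$ by the density bound, and on it the jump has size $O(1/n)$, so the contribution remains of order $n^{-3/2}\log n$. The remaining terms are controlled by the indicator width $O(b_n)$ and the density bound, exactly as in the level case. Continuity of $\partial_rT_{nk}$ in $r$ then follows by dominated convergence, using the three-times differentiable densities and the uniformity on $\mathcal K_1$.
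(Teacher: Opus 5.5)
Your localization step and your level bound for $T_{nk}$ are sound, and they follow the paper's route. That route is the pathwise ratio formula for $\dot c_n$ on a good event whose complement is negligible because $\dot c_n=O(\eta_n^{-1})$ is polynomial, together with leave-one-out centering using $c_n^{(-i)}$, which is independent of $X_i$.

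The gap is in the derivative bound, at the claim that boundary terms vanish because $\psi$ is continuous. The function $\psi(x;z,a)=|x-z|\ind{\min\{a,z\}<x<\max\{a,z\}}$ does vanish at $x=z$. But it tends to $|z-a|$ as $x\to a$ from inside its support: it is the crossing loss of Lemma~\ref{lem:localroot}, whose term $t\ind{x>c^*}$ jumps at $x=c^*$. Hence $r\mapsto\psi(X_i;c_n(r),c^*(r))$ jumps at the $r$ where $c^*(r)=X_i$. Its difference quotients are not dominated, so you cannot differentiate under the expectation pathwise.

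Condition on $\mathcal F_{-i}$ and write $z_i=c_n^{(-i)}$, $a=c^*$, and $\Delta_i(x,r)=c_n-c_n^{(-i)}$ when $X_i=x$. For $i\in I_1$ the observation part of your difference is then exactly $-\Delta_i(x,r)\ind{z_i<a,\ z_i<x<a}$. Its conditional mean is $-\ind{z_i<a}\int_{z_i}^{a}\Delta_i(x,r)f_{1,\theta}(x)\,dx$, whose $r$-derivative contains the moving-boundary term $-\Delta_i(a,r)f_{1,\theta}(a)\dot c^*$. Your bound $|\Delta_i|=O(1/n)$ controls this term only at order $n^{-1}$, because the event $\{z_i<a\}$ has probability near one half. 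That is a factor $\sqrt n$ short of the claimed $O(n^{-3/2}\log n)$.

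The missing ingredient is the self-influence bound $0\le\Delta_i\le (K/n)(X_i-c_n^{(-i)})^+$ for $i\in I_1$, with the mirror-image bound for $I_2$. It follows from your own slope argument if, at the leave-one-out root, you keep the deleted item's actual score contribution $r(X_i-c_n^{(-i)})^+/n_1$ instead of bounding it by $O(1/n)$. It gives $|\Delta_i(c^*,r)|\le (K/n)|c_n^{(-i)}-c^*|$, so the boundary term is $O(n^{-3/2})$.

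The same bound shows $X_i$ never lies strictly between $c_n^{(-i)}$ and $c_n$, so the danger you single out does not arise. The $\log n$ in $\dot c_n-\dot c_n^{(-i)}$ comes instead from the other observations between the two roots. That piece needs a local VC interval bound in $L^q$; a pathwise $O(1/n)$ bound on the good event is not available.

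The population part also needs a cancellation you do not mention. Since $\partial_aP_k\psi(\cdot;z,a)=-(z-a)f_{k,\theta}(a)$, the $a$-derivative of $P_k\psi(\cdot;c_n,a)-P_k\psi(\cdot;z_i,a)$ is $-\Delta_i f_{k,\theta}(a)\dot c^*$. This is again $O(1/n)$ and carries no interval-width factor, so your ``indicator width times density'' heuristic does not cover it. It becomes small only when combined with the piece $[F_{k,\theta}(c_n)-F_{k,\theta}(z_i)]\dot z_i$, which your displayed decomposition omits. The combination leaves essentially $f_{k,\theta}(a)\Delta_i(\dot z_i-\dot c^*)$, which is $O(n^{-3/2})$ by derivative localization of the leave-one-out root.

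The paper handles all of these terms by differentiating the conditional integrals explicitly, with moving endpoints, rather than differentiating under the expectation pathwise.
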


\begin{proof}
Suppress $(r,\theta)$ where harmless. Write $P_{n1}$ and $P_{n2}$ for the
two empirical measures and define
\begin{align*}
 A_n(t)
 &=P_{n1}(X_1-t)^+ +P_{n2}(t-X_2)^+,\\
 \widehat D_n(t;r)
 &=\eta_n+rP_{n1}\ind{X_1>t}
              +(1-r)P_{n2}\ind{X_2<t}.
\end{align*}
Except at finitely many switching points, implicit differentiation of the
sample root gives
\begin{equation}
 \dot c_n(r)
 =\frac{A_n(c_n(r))}{\widehat D_n(c_n(r);r)}.
 \label{eq:sample-cutoff-derivative}
\end{equation}
The localization proof of Lemma~\ref{lem:localroot}, now applied also to
the empirical numerator and denominator in
\eqref{eq:sample-cutoff-derivative}, proves
\eqref{eq:C1-localization}. Indeed these are bounded VC tail-moment and
tail-probability processes, their fixed-order maximal moments are
$O(n^{-1/2})$, and the population denominator is bounded below by $d_0$.
To make the uniform denominator event explicit, choose
$\varepsilon_1>0$ so that
\[
 \inf_{\substack{(r,\theta)\in\mathcal K_1\\
                  |t-c^*(r,\theta)|\leq\varepsilon_1}}
 \{rP_1(X_1>t)+(1-r)P_2(X_2<t)\}\geq 3d_0/4.
\]
Let $\mathcal E_n(r,\theta)$ be the event that the full root and every
leave-one-out root lie in this interval and that, throughout the interval,
each full and deleted empirical tail mass differs from its population
counterpart by at most $d_0/4$. Weighted Hoeffding bounds localize the
roots, and the distribution-free VC inequality for half-lines controls the
tail masses. A union bound over the $n$ deleted samples therefore gives
\begin{equation}
 \sup_{(r,\theta)\in\mathcal K_1}
 \Prob_\theta\{\mathcal E_n(r,\theta)^c\}
 \leq K n e^{-kn}\leq K'e^{-k'n}.
 \label{eq:C1-uniform-good-event}
\end{equation}
On $\mathcal E_n(r,\theta)$ every full and deleted denominator is at least
$d_0/2$. The deterministic $O(\eta_n^{-1})$ derivative bound and the
polynomial lower bound on $\eta_n$ make the complement negligible in every
fixed moment.

Delete observation $i$ and let $c_n^{(i)}$ be the root of the score with
that observation's term omitted but the original camp normalization
retained. Put
\(
 \Delta_i=c_n-c_n^{(i)}
\)
and let $\mathcal F_{-i}$ contain all other observations. On $\mathcal E_n$,
monotonicity and the slope bound give the sharper self-influence bounds
\begin{align}
 0\leq \Delta_i
 &\leq \frac{K}{n}(X_i-c_n^{(i)})^+,
 &&i\in I_1,
 \label{eq:plus-self-influence}\\
 0\leq-\Delta_i
 &\leq \frac{K}{n}(c_n^{(i)}-X_i)^+,
 &&i\in I_2.
 \label{eq:minus-self-influence}
\end{align}
For example, at the upward leave-one-out root the added score is
$r(X_i-c_n^{(i)})^+/n_1$, and the full score falls at rate at least
$d_0/2$ until its zero is reached; the downward calculation is identical
with signs reversed. Applying \eqref{eq:sample-cutoff-derivative} to the
full and deleted samples, and counting the other observations between the
two roots, also gives, for every fixed finite $q$,
\begin{equation}
 \sup_i\|\Delta_i\|_{L^q}
 =O(n^{-1}),
 \qquad
 \sup_i\|\dot\Delta_i\|_{L^q}=O(n^{-1}\log n).
 \label{eq:C1-loo}
\end{equation}
For completeness, the local VC maximal inequality for intervals gives,
uniformly for deterministic $h$ in the cutoff neighborhood,
\[
 \max_{k\in\{1,2\}}
 \left\|\sup_{|I|\leq h}P_{nk}\ind{X_k\in I}\right\|_{L^q}
 \leq K\left(h+\sqrt{\frac{h\log n}{n}}+\frac{\log n}{n}\right).
\]
A dyadic decomposition and $\|\Delta_i\|_{L^{2q}}=O(n^{-1})$ therefore
show that the total empirical weight between the two random roots is
$O_{L^q}(n^{-1}\log n)$, which proves the derivative part of
\eqref{eq:C1-loo}. Subsection~\ref{app:influence-details}, Step~1, records
the ratio difference and peeling argument in detail. Equations
\eqref{eq:plus-self-influence}--\eqref{eq:minus-self-influence} are stronger
than \eqref{eq:C1-loo} near the cutoff: the root influence vanishes
linearly with the observation's excess over the leave-one-out root.

We now bound the crossing term. The elementary integral identity
\begin{equation}
 P_k\psi(\cdot;z,a)
 =\begin{cases}
   \displaystyle\int_a^z(z-x)f_{k,\theta}(x)\,dx,&z\geq a,\\[5pt]
   \displaystyle\int_z^a(x-z)f_{k,\theta}(x)\,dx,&z<a
  \end{cases}
 \label{eq:population-triangle}
\end{equation}
implies, on either side of $z=a$ and continuously across it,
\begin{equation}
 \partial_zP_k\psi(\cdot;z,a)
 =F_{k,\theta}(z)-F_{k,\theta}(a),
 \qquad
 \partial_aP_k\psi(\cdot;z,a)
 =-(z-a)f_{k,\theta}(a).
 \label{eq:triangle-derivatives}
\end{equation}
In particular, both derivatives vanish at $z=a$.

By camp exchangeability, it suffices to bound one summand. Write
$z_i=c_n^{(i)}$, $a=c^*$, and $d_i=z_i-a$. Conditional on
$\mathcal F_{-i}$, $z_i$ and $a$ are independent of $X_i$, so
\begin{equation}
 \E_\theta[\psi(X_i;z_i,a)-P_k\psi(\cdot;z_i,a)\mid
              \mathcal F_{-i}]=0.
 \label{eq:loo-centering}
\end{equation}
Subtract this identity from the full-root summand. Equations
\eqref{eq:plus-self-influence}--\eqref{eq:minus-self-influence} show that
its observation part can change only when $X_i$ lies between $a$ and $z_i$
or in the endpoint strip of width $|\Delta_i|$. The first region has
conditional probability at most $M|d_i|$; on the second, the self-influence
bound contributes an additional distance-to-cutoff factor. For the
population part, \eqref{eq:triangle-derivatives} gives
\begin{align*}
 |F_{k,\theta}(z_i)-F_{k,\theta}(a)|
 &\leq M|d_i|,\\
 \left|\partial_r
   [F_{k,\theta}(z_i)-F_{k,\theta}(a)]\right|
 &\leq K\bigl(|d_i|+|\dot z_i-\dot a|\bigr).
\end{align*}
For clarity, write
$\bar\psi_k(z,a)=P_k\psi(\cdot;z,a)$ and
$\dot d_i=\dot z_i-\dot a$. The mean-value theorem applied to
\eqref{eq:triangle-derivatives} gives the deterministic population
envelopes
\begin{align}
 |\bar\psi_k(z_i+\Delta_i,a)-\bar\psi_k(z_i,a)|
 &\leq K(|d_i||\Delta_i|+\Delta_i^2),
 \label{eq:population-crossing-envelope}\\
 \left|\partial_r
 [\bar\psi_k(z_i+\Delta_i,a)-\bar\psi_k(z_i,a)]\right|
 &\leq K\bigl[(|d_i|+|\dot d_i|)|\Delta_i|
       +( |d_i|+|\Delta_i|)|\dot\Delta_i|+\Delta_i^2\bigr].
 \label{eq:population-crossing-derivative-envelope}
\end{align}
We make the observation calculation explicit because this is where
pointwise differentiation of a sample indicator would miss a boundary
term.  Regard $\Delta_i=\Delta_i(x,r)$ as a function of the deleted
observation.  For $i\in I_1$, \eqref{eq:plus-self-influence} and $n$ large
enough imply that $z_i+\Delta_i(x,r)<x$ whenever $x>z_i$ and
$\Delta_i(x,r)>0$.  Hence, up to density-zero endpoints,
\begin{equation}
 \psi(x;z_i+\Delta_i,a)-\psi(x;z_i,a)
 =-\Delta_i(x,r)\ind{z_i<a,\ z_i<x<a}.
 \label{eq:plus-exact-observation-crossing}
\end{equation}
Similarly, for $i\in I_2$,
\begin{equation}
 \psi(x;z_i+\Delta_i,a)-\psi(x;z_i,a)
 =\Delta_i(x,r)\ind{z_i>a,\ a<x<z_i}.
 \label{eq:minus-exact-observation-crossing}
\end{equation}
Thus the two conditional expectations are, respectively,
\begin{equation}
 -\ind{z_i<a}\int_{z_i}^{a}\Delta_i(x,r)f_{1,\theta}(x)\,dx,
 \qquad
 \ind{z_i>a}\int_{a}^{z_i}\Delta_i(x,r)f_{2,\theta}(x)\,dx.
 \label{eq:exact-observation-integrals}
\end{equation}
These formulas also show directly that the expectations are continuously
differentiable when $z_i$ passes through $a$.  For example, on $z_i<a$
the derivative of the first integral is
\begin{equation}
 -\int_{z_i}^{a}\dot\Delta_i(x,r)f_{1,\theta}(x)\,dx
 -\Delta_i(a,r)f_{1,\theta}(a)\dot a,
 \label{eq:plus-observation-integral-derivative}
\end{equation}
because $\Delta_i(z_i,r)=0$; the downward formula is analogous.  In
particular, the missing moving-boundary contribution is the displayed
term evaluated at $a$, not an unrecorded derivative of a sample
indicator.

For reference, differencing the two ratios in
\eqref{eq:sample-cutoff-derivative}, and then using the local interval
inequality displayed above, gives on $\mathcal E_n$
\begin{align}
 \int_{z_i\wedge a}^{z_i\vee a}
   |\Delta_i(x,r)|f_{k,\theta}(x)\,dx
 &\leq \frac K n\left(|d_i|^2+\frac1{n^2}\right),
 \label{eq:integrated-influence}\\
 \int_{z_i\wedge a}^{z_i\vee a}
   |\dot\Delta_i(x,r)|f_{k,\theta}(x)\,dx
 +|\Delta_i(a,r)|
 &\leq \frac{K\log n}{n}\left(
   |d_i|+|\dot d_i|+\frac1n\right).
 \label{eq:integrated-derivative-influence}
\end{align}
To see the second inequality, separate the direct change in the numerator
and denominator caused by observation $i$ from the change caused by the
other observations between the two roots.  The direct change is bounded
by $K/n$ times $1+|\dot z_i|$; conditional integration restricts it to an
interval of length $|d_i|$.  The remaining change is at most $K$ times
the empirical mass of an interval of length $|\Delta_i|$; its conditional
$L^q$ bound is $K\log n/n$ by the preceding local interval inequality.
The two endpoint terms are bounded by
\eqref{eq:plus-self-influence}--\eqref{eq:minus-self-influence}.  This
proves \eqref{eq:integrated-derivative-influence}; the first inequality
follows immediately by integrating the sharper self-influence bounds.
Subsection~\ref{app:influence-details}, Step~2, gives the corresponding
conditional calculation, including the moving endpoint. The exponentially
small complement is again negligible.  Equations
\eqref{eq:exact-observation-integrals}--
\eqref{eq:integrated-derivative-influence}, the uniform density bound, and
H\"older's inequality therefore give
\begin{align}
 \E_\theta|\psi(X_i;z_i+\Delta_i,a)-\psi(X_i;z_i,a)|
 &\leq \frac{K}{n}\left(\E_\theta|d_i|+\frac1n\right),
 \label{eq:observation-crossing-envelope}\\
 \left|\partial_r\E_\theta
 [\psi(X_i;z_i+\Delta_i,a)-\psi(X_i;z_i,a)]\right|
 &\leq \frac{K\log n}{n}\left(
    \E_\theta|d_i|+\E_\theta|\dot d_i|+\frac1n\right).
 \label{eq:observation-crossing-derivative-envelope}
\end{align}
Combining \eqref{eq:population-crossing-envelope}--
\eqref{eq:observation-crossing-derivative-envelope} with
\eqref{eq:C1-loo} yields
\begin{align}
 |\E_\theta J_i|
 &\leq \frac{K}{n}\left(\E_\theta|d_i|+\frac1n\right),
 \label{eq:crossing-conditional-value}\\
 |\partial_r\E_\theta J_i|
 &\leq \frac{K\log n}{n}\left(
   \E_\theta|d_i|+\E_\theta|\dot d_i|+\frac1n\right),
 \label{eq:crossing-conditional-derivative}
\end{align}
where
\[
 J_i=\psi(X_i;c_n,a)-P_k\psi(\cdot;c_n,a)
     -\psi(X_i;z_i,a)+P_k\psi(\cdot;z_i,a).
\]
These bounds also justify differentiation under the conditional integral.
They are uniform in $i$, camp, and parameter; the $C^3$ bounds control the
moving endpoints and the derivative of the density integral. Finally,
\eqref{eq:C1-localization} and \eqref{eq:C1-loo} imply
\[
 \E|d_i|+\E|\dot z_i-\dot a|=O(n^{-1/2}).
\]
Average \eqref{eq:crossing-conditional-value} and
\eqref{eq:crossing-conditional-derivative} over the observations in the
relevant camp. The exponentially small complement of $\mathcal E_n$ is
negligible by the anchor-rate condition. This proves
\eqref{eq:C1-crossing-rate} and completes the proof.
\end{proof}

\begin{lemma}[$C^1$ local-root moments]
\label{lem:C1localroot}
Let $\mathcal K_1$ be as in Lemma~\ref{lem:C1crossing}.  Define
$\delta_n=c_n-c^*$, let $e_n,u_n$ be the centered score and slope in
Lemma~\ref{lem:localroot}, and put
\begin{equation}
 C_n(r,\theta)=n\E_\theta[u_ne_n],
 \qquad
 \Omega_n(r,\theta)=n\E_\theta[e_n^2].
 \label{eq:C1-finite-leading-moments}
\end{equation}
Then, uniformly on $\mathcal K_1$,
\begin{align}
 \partial_r\E_\theta\delta_n
 &=\frac1n\partial_r\left\{
    \frac{C_n}{(D^*)^2}
   +\frac{H_{cc}(c^*;r,\theta)\Omega_n}{2(D^*)^3}
   +\frac{n\eta_n(a_0-c^*)}{D^*}\right\}+o(n^{-1}),
 \label{eq:C1-root-mean}\\
 \partial_r\E_\theta\delta_n^2
 &=\frac1n\partial_r\left\{
   \frac{\Omega_n}{(D^*)^2}\right\}+o(n^{-1}).
 \label{eq:C1-root-second-moment}
\end{align}
\end{lemma}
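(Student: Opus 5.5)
We differentiate, in $r$, the exact root identity that underlies Lemma~\ref{lem:localroot}, rather than the expansion~\eqref{eq:uniformroot} itself. From~\eqref{eq:exactkink} with the anchor restored and $t=\delta_n$,
\[
 D^*\delta_n=e_n+u_n\delta_n+\eta_n(a_0-c^*-\delta_n)
 +rP_{n1}\ell_1(\cdot,\delta_n)-(1-r)P_{n2}\ell_2(\cdot,\delta_n).
\]
By~\eqref{eq:crossing-identity}, the crossing terms are $P_{nk}\psi(\cdot;c_n,c^*)$. We split each of them into its population part $P_k\psi(\cdot;c_n,c^*)$ and its centered part $(P_{nk}-P_k)\psi(\cdot;c_n,c^*)$. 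After taking expectations, the centered parts are exactly the terms $T_{nk}$ of~\eqref{eq:crossing-term}. Lemma~\ref{lem:C1crossing} makes these terms and their $r$-derivatives $O(n^{-3/2}\log n)=o(n^{-1})$ uniformly on $\mathcal K_1$, so they drop out of both~\eqref{eq:C1-root-mean} and~\eqref{eq:C1-root-second-moment}. This step is the reason the earlier lemma was proved: pointwise differentiation of the empirical indicators would miss the moving-boundary terms.

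For the population parts, \eqref{eq:population-triangle} and~\eqref{eq:triangle-derivatives} show that $P_k\psi(\cdot;c^*+t,c^*)$ is $C^3$ in $(t,r)$, with expansion $\tfrac12 f_{k,\theta}(c^*)t^2+O(|t|^3)$ and an $r$-derivative that has the same structure. We then divide by $D^*$, which is smooth in $r$ and bounded below by $d_0$, and take expectations. Next we differentiate the resulting identity for $\E\delta_n$ in $r$, writing $\dot\delta_n=\dot c_n-\dot c^*$. Every term is a product of centered empirical quantities ($e_n$, $u_n$, and their $r$-derivatives, all with $L^q$ norm $O(n^{-1/2})$) with $\delta_n$ or $\dot\delta_n$. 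By~\eqref{eq:C1-localization} and H\"older, each such term is $O(n^{-1})$. We then substitute $\delta_n=e_n/D^*+\rho_n$ with $\|\rho_n\|_{L^2}=O(n^{-1})$, and likewise $\dot\delta_n=\partial_r(e_n/D^*)+\dot\rho_n$. The $O(n^{-1})$ bound on $\dot\rho_n$ comes from differentiating the same identity and reusing Lemma~\ref{lem:C1crossing}. With these substitutions, the quadratic terms reduce to $\partial_r\E[u_ne_n]/(D^*)^2$ and $\partial_r\E[e_n^2]H_{cc}/(2(D^*)^3)$, together with the derivatives of the deterministic coefficients, up to errors $O(n^{-3/2}\log n)$. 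Since $e_n$ and $u_n$ are explicit two-sample averages at the deterministic point $c^*(r,\theta)$, their moments $C_n/n$ and $\Omega_n/n$ are exactly differentiable in $r$. The anchor term is deterministic except for $\eta_n\partial_r\E\delta_n=O(n^{-3/2})$, using $n\eta_n\to\nu$. This yields~\eqref{eq:C1-root-mean}. Differentiability of $\E\delta_n$ itself follows because $c_n(r)$ is piecewise $C^1$ with a continuous path. On $\mathcal E_n$ the derivative~\eqref{eq:sample-cutoff-derivative} is bounded, and off it the derivative is bounded by $O(\eta_n^{-1})=O(n^\rho)$, which is offset by the exponential bound~\eqref{eq:C1-uniform-good-event}. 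Dominated convergence then justifies exchanging derivative and expectation.

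For~\eqref{eq:C1-root-second-moment} we write $\partial_r\E\delta_n^2=2\E[\delta_n\dot\delta_n]$. We substitute the two linearizations and get $2\E[(e_n/D^*)\partial_r(e_n/D^*)]=\partial_r\{\Omega_n/(D^*)^2\}/n$. The cross remainders are bounded by $\|e_n\|_{L^2}\|\dot\rho_n\|_{L^2}+\|\rho_n\|_{L^2}\|\dot\delta_n\|_{L^2}=O(n^{-3/2})$. The main obstacle is obtaining $\|\dot\rho_n\|_{L^2}=O(n^{-1})$, or at least $o(n^{-1/2})$, for the differentiated root. Here $\dot u_n$ involves the empirical mass near $c^*$ weighted by $\dot c^*$, and naively this fluctuates at $n^{-1/2}$ only after localization. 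We will handle it with the local interval maximal inequality from the proof of Lemma~\ref{lem:C1crossing}. That inequality bounds the empirical weight between $c_n$ and $c^*$ by $O_{L^q}(n^{-1/2})$ times $|\delta_n|$, up to logarithmic factors, which suffices for the required $o(n^{-1})$ remainder uniformly on $\mathcal K_1$.
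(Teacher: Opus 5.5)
\paragraph{A gap in the mean expansion.}
Your architecture matches the paper's: differentiate the exact root identity, send the centered crossing terms to Lemma~\ref{lem:C1crossing}, Taylor-expand the population crossing term, and substitute $\delta_n=e_n/D^*+\rho_n$. Your argument for \eqref{eq:C1-root-second-moment} also goes through if you use $\|\dot\rho_n\|_{L^q}=o(n^{-1/2})$ rather than $O(n^{-1})$. The latter bound is not attainable: $\widehat D_n(c_n;r)$ contains the centered empirical mass between $c^*$ and $c_n$, which is of order $\sqrt{|\delta_n|/n}\asymp n^{-3/4}$. In addition, Lemma~\ref{lem:C1crossing} bounds expectations, not pathwise $L^2$ norms. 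The paper gets $O(n^{-3/4}\log n)$ by expanding the ratio \eqref{eq:sample-cutoff-derivative} at $c^*$.

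The genuine gap is the term $u_n\delta_n$ in \eqref{eq:C1-root-mean}. You treat $\dot u_n$ like $\dot e_n$, as a centered empirical quantity with $L^q$ norm $O(n^{-1/2})$. That holds for $e_n$, whose hinges are continuous in the cutoff, so that $\dot e_n=\gamma_n+\kappa u_n$. It fails for $u_n$, which contains the indicators $\ind{X>c^*(r,\theta)}$ and $\ind{X<c^*(r,\theta)}$ at a cutoff that moves with $r$:
\begin{itemize}
\item Pathwise, $u_n$ jumps by $O(1/n)$ whenever $c^*$ crosses an observation.
\item Its a.e.\ derivative carries the deterministic $O(1)$ drift $-H_{cc}(c^*;r,\theta)\dot c^*$.
\item Writing $u_n=\sum_i w_{ni}\xi_i$, differentiating $\E_\theta[\xi_i\rho_n]$ produces a moving-boundary term proportional to $\dot c^* f_{k,\theta}(c^*)\{\E_\theta[\rho_n\mid X_i=c^*]-\E_\theta\rho_n\}$.
\end{itemize}
Summed over $n$ observations with weights $O(1/n)$, the bound $\|\rho_n\|_{L^q}=O(n^{-1})$ gives at best $O(n^{-1})$. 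That is the same order as the quantity you are computing. Moreover, an $L^q$ bound on $\rho_n$ does not by itself control a conditional expectation given $X_i=c^*$. So your claim that all post-substitution errors are $O(n^{-3/2}\log n)$ is unsupported exactly for $\partial_r\E_\theta[u_n\rho_n]$.

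\paragraph{The missing leave-one-out argument.}
What you need is a proof that a single observation's influence on $\rho_n$, including through the moving boundary, is $o(n^{-1})$. The paper gets this by leave-one-out.
\begin{itemize}
\item Build $\rho_n^{(i)}$ from the deleted root $c_n^{(i)}$, keeping the original camp normalization, and from $e_n^{(i)}=e_n-a_{ni}(c^*,r)$. Conditional independence then gives the exact identity $\E_\theta[\xi_i\rho_n^{(i)}]=0$ for every $r$. Its derivative, boundary term included, therefore vanishes identically.
\item The integral mean-value formula \eqref{eq:exact-root-influence-ratio} gives $\rho_n-\rho_n^{(i)}=O_{L^q}(n^{-3/2}\log n)$.
\item Split the conditional integrals at $c^*$ and $c_n^{(i)}$, decompose $D^*-\overline D_{ni}$, and expand the random reciprocal $1/\overline D_{ni}$. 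This yields $\partial_r\E_\theta[\xi_i(\rho_n-\rho_n^{(i)})]=O(n^{-5/4}\log^2 n)$ uniformly in $i$.
\item Summing over $i$ gives $\partial_r\E_\theta[u_n\rho_n]=o(n^{-1})$.
\end{itemize}
The local-interval inequality in your last paragraph controls the empirical mass between $c_n$ and $c^*$. That is what drives $\dot\rho_n$, but it does not produce this cancellation.
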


\begin{proof}
We give the remainder calculation because an $L^1$ expansion alone cannot
be differentiated.  Suppress $(r,\theta)$ and write
\[
 D=D^*,\qquad h=H_{cc}(c^*;r,\theta),\qquad
 \rho_n=\delta_n-\frac{e_n}{D},\qquad
 \kappa=\dot c^*.
\]
The value argument in Lemma~\ref{lem:localroot}, with its fixed-order
moment bounds in place of the $L^2$ bound, gives for every fixed finite
$q$
\begin{equation}
 \|\rho_n\|_{L^q}=O(n^{-1}).
 \label{eq:rho-value-rate}
\end{equation}
We first record its derivative counterpart.  Let
\begin{align*}
 \gamma_n
 &=(P_{n1}-P_1)g_1 +(P_{n2}-P_2)g_2,\\
 k&=p_2-p_1,
 &j&=-r f_{1,\theta}(c^*)+(1-r)f_{2,\theta}(c^*).
\end{align*}
Direct differentiation of the centered score, including the motion of
$c^*$, gives the exact identity
\begin{equation}
 \dot e_n=\gamma_n+\kappa u_n.
 \label{eq:score-derivative-identity}
\end{equation}
Expand the numerator and denominator of
\eqref{eq:sample-cutoff-derivative} at $c^*$.  The population linear terms
are $k\delta_n$ and $j\delta_n$, respectively.  More explicitly, with
$A=A(c^*)=\kappa D$,
\begin{align}
 A_n(c_n)
 &=A+\gamma_n+k\delta_n+\alpha_n,
 \label{eq:C1-numerator-expansion}\\
 \widehat D_n(c_n;r)
 &=D-u_n+j\delta_n+\eta_n+\beta_n.
 \label{eq:C1-denominator-expansion}
\end{align}
The centered changes over the random interval between $c^*$ and $c_n$
have, by the local interval inequality in the proof of
Lemma~\ref{lem:C1crossing}, $L^q$ norm at most
$K n^{-3/4}\log n$.  The population Taylor errors are $O_{L^q}(n^{-1})$.
Thus
\begin{equation}
 \|\alpha_n\|_{L^q}+\|\beta_n\|_{L^q}
 \leq K(n^{-3/4}\log n+n^{-1}).
 \label{eq:C1-random-interval-errors}
\end{equation}
Substitute \eqref{eq:C1-numerator-expansion}--
\eqref{eq:C1-denominator-expansion}, use
$\rho_n=O_{L^q}(n^{-1})$, and expand the ratio on the event where its
denominator is at least $d_0/2$.  Products of two global empirical
fluctuations and the anchor are $O_{L^q}(n^{-1})$; the exponentially small
complement is negligible.  This yields
\begin{equation}
 \dot\delta_n
 =\frac{\gamma_n+\kappa u_n}{D}
  +\frac{(k-\kappa j)e_n}{D^2}+q_n,
 \qquad
 \|q_n\|_{L^q}=O(n^{-3/4}\log n).
 \label{eq:C1-Bahadur-ratio}
\end{equation}
Here and below derivatives are taken off the finitely many sample switching
points; the integral formulas in Lemma~\ref{lem:C1crossing} supply the
continuous expected extension.  Since
\begin{equation}
 \dot D=p_1-p_2+\kappa j=-k+\kappa j,
 \label{eq:D-derivative-identity}
\end{equation}
the first two terms in \eqref{eq:C1-Bahadur-ratio} equal
$\partial_r(e_n/D)$.  Consequently,
\begin{equation}
 \|\dot\rho_n\|_{L^q}
 =O(n^{-3/4}\log n)=o(n^{-1/2}).
 \label{eq:rho-derivative-rate}
\end{equation}
This is the $C^1$ Bahadur bound used below.

We next name every term in the second-order remainder.  With the crossing
losses of Lemma~\ref{lem:localroot}, define
\begin{align*}
 K(t)
 &=rP_1\ell_1(\cdot,t)-(1-r)P_2\ell_2(\cdot,t),\\
 G_n(t)
 &=r(P_{n1}-P_1)\ell_1(\cdot,t)
   -(1-r)(P_{n2}-P_2)\ell_2(\cdot,t).
\end{align*}
The sample root equation is exactly
\begin{equation}
 D\delta_n=e_n+u_n\delta_n+K(\delta_n)+G_n(\delta_n)
                    +\eta_n(a_0-c^*-\delta_n).
 \label{eq:C1-exact-root-equation}
\end{equation}
It follows algebraically that
\begin{equation}
 \delta_n=\frac{e_n}{D}+\frac{u_ne_n}{D^2}
     +\frac{he_n^2}{2D^3}+\frac{\eta_n(a_0-c^*)}{D}
     +\mathfrak r_n,
 \label{eq:C1-named-root-expansion}
\end{equation}
where
\begin{equation}
 \mathfrak r_n
 =\frac{u_n\rho_n}{D}
  +\frac{K(\delta_n)-\tfrac12h\delta_n^2}{D}
  +\frac{h}{2D}\left(\delta_n^2-\frac{e_n^2}{D^2}\right)
  +\frac{G_n(\delta_n)}{D}
  -\frac{\eta_n\delta_n}{D}.
 \label{eq:C1-remainder-decomposition}
\end{equation}

Four of the five terms in \eqref{eq:C1-remainder-decomposition} are now
immediate.  Lemma~\ref{lem:C1crossing} gives
\begin{equation}
 \left|\partial_r\E G_n(\delta_n)\right|
 =O(n^{-3/2}\log n).
 \label{eq:C1-G-bound}
\end{equation}
Taylor's theorem and Assumption~\ref{ass:finitepool}(ii) give
\begin{equation}
 \left|\partial_r\E
 [K(\delta_n)-\tfrac12h\delta_n^2]\right|
 \leq K\E[|\delta_n|^3+\delta_n^2|\dot\delta_n|]
 =O(n^{-3/2}).
 \label{eq:C1-cubic-bound}
\end{equation}
Equations \eqref{eq:rho-value-rate}--\eqref{eq:rho-derivative-rate} and
$\delta_n^2-e_n^2/D^2=\rho_n(\delta_n+e_n/D)$ imply
\begin{equation}
 \left|\partial_r\E
 \left[\delta_n^2-\frac{e_n^2}{D^2}\right]\right|
 =O(n^{-5/4}\log n)=o(n^{-1}),
 \label{eq:C1-quadratic-replacement}
\end{equation}
while $\eta_n=O(n^{-1})$ and
\eqref{eq:C1-localization} give
\begin{equation}
 \left|\partial_r\E[\eta_n\delta_n]\right|=O(n^{-3/2}).
 \label{eq:C1-anchor-remainder}
\end{equation}

It remains to treat $u_n\rho_n$, for which a direct H\"older bound is not
enough at the derivative scale.  Write $u_n=\sum_iw_{ni}\xi_i$, where
$|w_{ni}|\leq K/n$ and $\xi_i$ is the relevant centered cutoff indicator.
Let $a_{ni}(c,r)$ denote observation $i$'s signed, normalized contribution
to the unanchored score: it is $r(X_i-c)^+/n_1$ in Sender 1's pool and
$-(1-r)(c-X_i)^+/n_2$ in Sender 2's pool.  Define the deleted objects exactly
by
\[
 e_n^{(i)}=e_n-a_{ni}(c^*,r),\qquad
 \delta_n^{(i)}=c_n^{(i)}-c^*,\qquad
 \rho_n^{(i)}=\delta_n^{(i)}-\frac{e_n^{(i)}}D,
\]
where deletion retains the original camp normalization. Conditional on
$\mathcal F_{-i}$,
$\rho_n^{(i)}$ is independent of observation $i$ and
\begin{equation}
 \E[\xi_i\rho_n^{(i)}]=0
 \quad\text{for every }r;
 \label{eq:rho-loo-exact-centering}
\end{equation}
thus its derivative, including the moving-cutoff boundary contribution,
is also zero. The cancellation is exact rather than asymptotic. The integral
mean-value formula for the two root equations gives, on $\mathcal E_n$,
\begin{equation}
 \delta_n-\delta_n^{(i)}
 =\frac{a_{ni}(c_n^{(i)},r)}{\overline D_{ni}},
 \qquad
 \overline D_{ni}
 =-\int_0^1\partial_cH_n\!
  \left(c_n^{(i)}+t[c_n-c_n^{(i)}];r\right)\,dt ,
 \label{eq:exact-root-influence-ratio}
\end{equation}
where $\overline D_{ni}\geq d_0/2$.  Consequently,
\begin{equation}
 \rho_n-\rho_n^{(i)}
 =\frac{a_{ni}(c_n^{(i)},r)-a_{ni}(c^*,r)}{D}
  +a_{ni}(c_n^{(i)},r)
    \left(\frac1{\overline D_{ni}}-\frac1D\right).
 \label{eq:exact-second-order-influence}
\end{equation}
The first term is $O_{L^q}(n^{-1}|c_n^{(i)}-c^*|)$; the second is
$O_{L^q}(n^{-1})$ times a centered denominator fluctuation of order
$n^{-1/2}\log n$.  The localization and local interval bounds already
proved therefore give
\begin{equation}
 \sup_i\|\rho_n-\rho_n^{(i)}\|_{L^q}
 =O(n^{-3/2}\log n).
 \label{eq:rho-second-order-influence}
\end{equation}
For its derivative, condition on $\mathcal F_{-i}$ and split the integral
at $c^*$ and $c_n^{(i)}$.  To spell out the rate, use
\[
 \frac1{\overline D_{ni}}-\frac1D
 =\frac{D-\overline D_{ni}}{D\overline D_{ni}}
\]
in \eqref{eq:exact-second-order-influence}.  Conditional differentiation
of its two numerators gives
\begin{align}
 \left|\partial_r\E\!\left[
 \xi_i\{a_{ni}(c_n^{(i)},r)-a_{ni}(c^*,r)\}\right]\right|
 &\leq K n^{-3/2}\log n,
 \label{eq:C1-direct-influence-derivative}\\
 \left|\partial_r\E\!\left[
 \xi_i a_{ni}(c_n^{(i)},r)
      \{D-\overline D_{ni}\}\right]\right|
 &\leq K n^{-5/4}\log^2 n.
 \label{eq:C1-denominator-influence-derivative}
\end{align}
For the first line, off the interval between $c^*$ and $c_n^{(i)}$ the
hinge derivative is $n^{-1}O(|\dot c_n^{(i)}-\dot c^*|)$; on that interval
it is $O(n^{-1})$, but the conditional density mass is
$O(|c_n^{(i)}-c^*|)$.  For the second line, decompose
$D-\overline D_{ni}$ into the fixed-cutoff centered tail average, the
empirical mass between the two roots, and the deterministic anchor and
population-shift terms.  Their derivative contributions are controlled,
respectively, by the ordinary $n^{-1/2}$ empirical moment bound, the local
interval bound, and $O(n^{-1})$.  The only moving endpoint is exactly of
the form \eqref{eq:plus-observation-integral-derivative} and is bounded by
\eqref{eq:integrated-derivative-influence}. Subsection~\ref{app:influence-details},
Step~3, makes this decomposition and its rates explicit. Since both denominators and
their population derivatives are uniformly bounded away from zero and
bounded above, \eqref{eq:C1-direct-influence-derivative}--
\eqref{eq:C1-denominator-influence-derivative} imply
\begin{equation}
 \sup_i\left|\partial_r\E
  [\xi_i(\rho_n-\rho_n^{(i)})]\right|
 \leq K n^{-5/4}\log^2 n.
 \label{eq:rho-second-order-influence-derivative}
\end{equation}
The same calculation applies to the derivative of the deterministic
weights $w_{ni}$.  Summing them and using
\eqref{eq:rho-loo-exact-centering} yields
\begin{equation}
 \left|\partial_r\E[u_n\rho_n]\right|
 =O(n^{-5/4}\log^2 n)=o(n^{-1}).
 \label{eq:C1-slope-root-bound}
\end{equation}
Together, \eqref{eq:C1-G-bound}--\eqref{eq:C1-slope-root-bound} prove
\begin{equation}
 \partial_r\E\mathfrak r_n=o(n^{-1})
 \label{eq:C1-remainder-derivative}
\end{equation}
uniformly on $\mathcal K_1$.

Finally, $\E e_n=0$, $n\E[u_ne_n]=C_n$, and
$n\E e_n^2=\Omega_n$ exactly.  Taking expectations and differentiating
\eqref{eq:C1-named-root-expansion}, then using
\eqref{eq:C1-remainder-derivative}, proves
\eqref{eq:C1-root-mean}.  Equation
\eqref{eq:C1-quadratic-replacement} gives
\[
 \partial_r\E\delta_n^2
 =\partial_r\E(e_n^2/D^2)+o(n^{-1}),
\]
which proves \eqref{eq:C1-root-second-moment}.
\end{proof}

\begin{lemma}[Uniform derivative expansion]
\label{lem:riskderivative}
Let $\mathcal K_1$ be a compact subset of the interior of the design domain
on which the interiority and smoothness conditions in
Assumption~\ref{ass:finitepool} hold. Then
$R_n(\cdot,\theta)$ is continuously
differentiable in $r$ and, uniformly on $\mathcal K_1$,
\begin{equation}
 n\left|
 \partial_r R_n(r,\theta)
 -\partial_r\left\{
 [c^*(r,\theta)-\theta]^2+\frac{Q_{\nu}(r,\theta)}{n}
 \right\}
 \right|\longrightarrow0.
 \label{eq:conditional-risk-derivative}
\end{equation}
\end{lemma}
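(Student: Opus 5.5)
Write $\delta_n=c_n(r)-c^*(r,\theta)$ and $d=c^*(r,\theta)-\theta$. The proof rests on the exact decomposition
\[
 R_n(r,\theta)=d^2+2d\,\E_\theta\delta_n+\E_\theta\delta_n^2 .
\]
Differentiating in $r$ gives
\[
 \partial_rR_n=2d\dot c^*+2\dot c^*\,\E_\theta\delta_n+2d\,\partial_r\E_\theta\delta_n+\partial_r\E_\theta\delta_n^2 .
\]
The target $\partial_r\{d^2+Q_\nu/n\}$ with $Q_\nu=V^C+2dB^{C,\nu}$ decomposes the same way:
\[
 2d\dot c^*+\tfrac1n\bigl[\partial_rV^C+2\dot c^*B^{C,\nu}+2d\,\partial_rB^{C,\nu}\bigr].
\]
The plan is to match these terms one by one, uniformly on $\mathcal K_1$, with every error $o(n^{-1})$.

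\emph{Differentiability.} First establish that $R_n(\cdot,\theta)$ is $C^1$. Off finitely many switching points, $c_n(r)$ is piecewise smooth with derivative \eqref{eq:sample-cutoff-derivative}, and it is continuous in $r$. On the good event $\mathcal E_n$ of \eqref{eq:C1-uniform-good-event} this derivative is bounded by $K$. Off that event it is bounded by the deterministic $O(\eta_n^{-1})=O(n^\rho)$, and the event itself has exponentially small probability. So $(c_n-\theta)^2$ is absolutely continuous in $r$ with an integrable dominating derivative, and dominated convergence gives $\partial_rR_n=2\E_\theta[(c_n-\theta)\dot c_n]$. Continuity of this expression in $r$ follows from the same domination together with continuity of $c_n$ and almost-sure continuity of $\dot c_n$ away from the switching set. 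The integral representation of the crossing terms in Lemma~\ref{lem:C1crossing} supplies the continuous expected extension across that set.

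\emph{Matching the terms.} Lemma~\ref{lem:C1localroot} gives
\[
 \partial_r\E\delta_n^2=\tfrac1n\partial_r\{\Omega_n/(D^*)^2\}+o(n^{-1}),
\]
and $\partial_r\E\delta_n$ equal to $\tfrac1n$ times the $r$-derivative of the finite-$n$ bias coefficient, plus $o(n^{-1})$. Proposition~\ref{thm:finitewelfare} already gives $n\E\delta_n\to B^{C,\nu}$ uniformly, which handles the term $2\dot c^*\E\delta_n$. Since $d$ and $\dot c^*$ are bounded on the compact set, only one gap remains. The finite-$n$ coefficients $C_n,\Omega_n$, built from $\alpha_{k,n}$ and $n\eta_n$, must be replaced by their limits $C,\Omega,\nu$ inside $r$-derivatives. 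This works because $C_n$ and $\Omega_n$ are explicit smooth functions of $(r,\theta)$ through $c^*$ and the densities. Their $r$-derivatives differ from those of the limits only through the factors $1/\alpha_{k,n}-1/\alpha_k\to0$ and $n\eta_n-\nu\to0$, multiplying derivatives that are uniformly bounded under Assumption~\ref{ass:finitepool}(ii). Hence $\partial_r\{C_n/(D^*)^2+\dots\}\to\partial_rB^{C,\nu}$ and $\partial_r\{\Omega_n/(D^*)^2\}\to\partial_rV^C$ uniformly. Collecting terms yields \eqref{eq:conditional-risk-derivative}.

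\emph{Main obstacle.} The delicate part, differentiating the expected second-order remainder across moving cutoffs, is already contained in Lemma~\ref{lem:C1localroot} via Lemma~\ref{lem:C1crossing}. What remains here is the routine bookkeeping above, plus checking that the $C^1$ property survives the exponentially small bad event where only the polynomial bound $\eta_n^{-1}=O(n^\rho)$ is available. That bound is what makes the product $n^{\rho}e^{-kn}$ negligible, and I expect verifying it uniformly in $(r,\theta)$ to be the only real care point.
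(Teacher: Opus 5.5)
Your proposal is correct and follows the paper's proof essentially step for step: the same exact identity $R_n=d^2+2d\,\E_\theta\delta_n+\E_\theta\delta_n^2$, the same piecewise-smooth and dominated-convergence argument for $C^1$ regularity, Lemma~\ref{lem:C1localroot} for the derivatives of the first two root moments after replacing $C_n,\Omega_n,n\eta_n$ by their limits, and Proposition~\ref{thm:finitewelfare} for the term $2\dot c^*\E_\theta\delta_n$. One small point: the $n^\rho e^{-kn}$ bookkeeping you flag is not needed in this step, because for fixed $n$ the deterministic $O(\eta_n^{-1})$ slope bound alone gives the domination; that bookkeeping belongs to the uniform moment lemmas you are invoking.
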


\begin{proof}
Suppress $\theta$ where no confusion results. Equation
\eqref{eq:sample-cutoff-derivative} shows that, for every realization,
$r\mapsto c_n(r)$ is continuous, absolutely continuous, and piecewise
continuously differentiable. Its derivative is bounded for fixed $n$ by a
finite constant proportional to $\eta_n^{-1}$. Hence the sample squared
loss is absolutely continuous as well. At any fixed $r$, a cutoff tie has
probability zero: deleting the tied observation leaves a root independent
of that observation, and the evidence law is atomless. Difference
quotients, Fubini's theorem, and dominated convergence therefore give
\begin{equation}
 \partial_r R_n(r,\theta)
 =2\E_\theta[(c_n(r)-\theta)\dot c_n(r)].
 \label{eq:exact-risk-score}
\end{equation}
The density smoothness in Assumption~\ref{ass:finitepool}(ii) and the same
domination make this derivative continuous. The population counterpart of
\eqref{eq:sample-cutoff-derivative} is
\begin{equation}
 \dot c^*(r,\theta)
 =\frac{A(c^*)}{D^*},
 \qquad
 A(c)=\E_\theta(X_1-c)^++\E_\theta(c-X_2)^+,
 \label{eq:population-cutoff-derivative}
\end{equation}
which is \eqref{eq:cs}.

Put $\delta_n=c_n-c^*$ and
\begin{equation}
 m_n(r,\theta)=\E_\theta\delta_n,
 \qquad
 s_n(r,\theta)=\E_\theta\delta_n^2.
 \label{eq:root-moments}
\end{equation}
Uniformly on $\mathcal K_1$,
\begin{align}
 \partial_r m_n(r,\theta)
 &=\frac{\partial_r B^{C,\nu}(r,\theta)}{n}+o(n^{-1}),
 \label{eq:C1-bias}\\
 \partial_r s_n(r,\theta)
 &=\frac{\partial_r V^C(r,\theta)}{n}+o(n^{-1}).
 \label{eq:C1-variance}
\end{align}
Here and below the $o(n^{-1})$ terms are uniform. We establish these two
statements from Lemma~\ref{lem:C1localroot}, rather than infer them by
differentiating the value expansion in Proposition~\ref{thm:finitewelfare}.
Indeed, the explicit formulas in the proof of that theorem show that
$C_n,\Omega_n$ and their first $r$-derivatives converge uniformly to
$C,\Omega$ and their derivatives.  This follows from
$\alpha_{k,n}\to\alpha_k$ for $k\in\{1,2\}$, the $C^1$ tail-moment formulas, compact
support, and the uniform density bounds.  Also $n\eta_n\to \nu$.  Substitution
in \eqref{eq:C1-root-mean}--\eqref{eq:C1-root-second-moment} therefore gives
\eqref{eq:C1-bias}--\eqref{eq:C1-variance}.

Finally, let $d(r,\theta)=c^*(r,\theta)-\theta$ and
$\kappa(r,\theta)=\partial_r c^*(r,\theta)$. The exact identity
\[
 R_n(r,\theta)=d(r,\theta)^2+2d(r,\theta)m_n(r,\theta)+s_n(r,\theta)
\]
and Proposition~\ref{thm:finitewelfare} give
\begin{align*}
 \partial_rR_n(r,\theta)
 &=2d\kappa
   +\frac{2\kappa B^{C,\nu}+2d\,\partial_rB^{C,\nu}
           +\partial_rV^C}{n}
   +o(n^{-1})\\
 &=\partial_r\left\{d(r,\theta)^2+\frac{Q_{\nu}(r,\theta)}{n}\right\}
   +o(n^{-1}),
\end{align*}
uniformly, since $Q_{\nu}=V^C+2dB^{C,\nu}$. This is
\eqref{eq:conditional-risk-derivative}.
\end{proof}

\begin{proof}[Proof of Theorem~\ref{thm:exactpolicy}]
Integrating \eqref{eq:conditional-risk-derivative} with respect to $\mu$
is justified by the uniform bounds on
$\mathcal N_\mu\times\operatorname{supp}\mu$ and gives
\eqref{eq:riskderivativeexpansion}. Proposition~\ref{prop:finitepolicy}(i)
already shows that every exact minimizer $\widehat r_{n,\mu}$ converges to
the interior point $r_\mu$. It is therefore interior for all sufficiently
large $n$ and satisfies
\(
 \mathcal W_{n,\mu}'(\widehat r_{n,\mu})=0.
\)

Let $h_n=\widehat r_{n,\mu}-r_\mu$. Positive curvature and Taylor's theorem
give
\(
 \mathcal L_\mu'(r_\mu+h_n)
 =\mathcal L_\mu''(r_\mu)h_n+o(|h_n|).
\)
Equation~\eqref{eq:riskderivativeexpansion}, boundedness of
$\mathcal Q_{\mu,\nu}'$ near $r_\mu$, and
$\mathcal L_\mu''(r_\mu)>0$ first imply $h_n=O(n^{-1})$. Multiplying the
exact first-order condition by $n$ then gives
\[
 0=\mathcal L_\mu''(r_\mu)\,n h_n
   +\mathcal Q_{\mu,\nu}'(r_\mu)+o(1),
\]
which proves the first limit in \eqref{eq:exactpolicyshift}. The second
follows by comparison with \eqref{eq:secondorderpolicy}.

For a point mass at $\theta$,
$\mathcal L_\mu''(r^*)=2\kappa(\theta)^2$ and
\[
 \partial_rQ_{\nu}(r^*,\theta)
 =\partial_rV^C(r^*,\theta)
  +2\kappa(\theta)B^{C,\nu}(r^*,\theta).
\]
Substitution proves \eqref{eq:exactbiasvarianceshift}.
\end{proof}

\subsubsection{Proof-detail supplement for the derivative influence bounds}
\label{app:influence-details}

We give the three influence calculations used in the proof of the
exact-optimizer result. They show where the logarithmic factors enter and
how we differentiate observations that cross the cutoff. All bounds are
uniform on $\mathcal K_1$, $q<\infty$ is fixed, and $K$ may change from line
to line. For $i\in I_k$, $k\in\{1,2\}$, let
\[
 P_{nk}^{(i)}\varphi
 :=\frac1{n_k}\sum_{j\in I_k,\,j\ne i}\varphi(X_j),
\]
with $P_{n\ell}^{(i)}=P_{n\ell}$ for $\ell\ne k$. Thus deletion keeps
the original camp normalization, as in the preceding proofs.  Write
$I_i$ for the interval with endpoints $c_n$ and $c_n^{(i)}$.  On
$\mathcal E_n$, both roots lie in a common compact neighborhood of $c^*$
and every full or deleted denominator is at least $d_0/2$.

\paragraph{Step 1: derivative leave-one-out stability.}
Let $A_n^{(i)}$ and $\widehat D_n^{(i)}$ be the deleted analogues of
$A_n$ and $\widehat D_n$.  Bounded support and the hinge's Lipschitz
property imply
\begin{align}
 |A_n(c_n)-A_n^{(i)}(c_n^{(i)})|
 &\leq K\{n^{-1}+|\Delta_i|\},
 \label{eq:supplement-numerator-difference}\\
 |\widehat D_n(c_n;r)-\widehat D_n^{(i)}(c_n^{(i)};r)|
 &\leq K n^{-1}
   +K\sum_{\ell=1}^{2}P_{n\ell}^{(i)}\ind{X_\ell\in I_i}.
 \label{eq:supplement-denominator-difference}
\end{align}
Indeed, the first right-hand side consists of the deleted hinge, of order
$n^{-1}$, and the displacement of all remaining hinges, bounded by
$|\Delta_i|$.  For the denominator, deletion contributes $O(n^{-1})$ and
the only other indicators that change are those whose observations lie
between the two roots.  Differencing the two ratios in
\eqref{eq:sample-cutoff-derivative} and using the denominator lower bound
therefore gives
\begin{equation}
 |\dot\Delta_i|
 \leq K\left\{n^{-1}+|\Delta_i|
 +\sum_{\ell=1}^{2}P_{n\ell}^{(i)}\ind{X_\ell\in I_i}\right\}
 \quad\text{on }\mathcal E_n.
 \label{eq:supplement-ratio-difference}
\end{equation}

We obtain the random-interval estimate by peeling. Put
$H_i=|\Delta_i|$, separate $\{H_i\leq n^{-1}\}$, and split the
remainder into the dyadic events
$2^{m-1}/n<H_i\leq2^m/n$, $1\leq m\leq C\log n$; bounded support makes
these events exhaustive.  Markov's inequality and
$\|H_i\|_{L^{2q}}=O(n^{-1})$ weight the large-$m$ events. Applying the
displayed local VC inequality in the proof of Lemma~\ref{lem:C1crossing}
on each dyadic event and summing the resulting geometric series yields
\begin{equation}
 \sup_i\left\|
  \sum_{\ell=1}^{2}P_{n\ell}^{(i)}\ind{X_\ell\in I_i}
 \right\|_{L^q}
 \leq K\frac{\log n}{n}.
 \label{eq:supplement-random-interval-mass}
\end{equation}
The polynomial derivative bound on $\mathcal E_n^c$ and its exponentially
small probability make that complement $o(n^{-M})$ for any fixed required
$M$.  Equations \eqref{eq:supplement-ratio-difference}--
\eqref{eq:supplement-random-interval-mass} prove the derivative assertion
in \eqref{eq:C1-loo}.

\paragraph{Step 2: integrated influence at a cutoff crossing.}
Condition on $\mathcal F_{-i}$ and write $z_i=c_n^{(i)}$, $d_i=z_i-a$.
For $i\in I_1$, adding $X_i=x$ cannot move the root unless $x>z_i$ and,
by \eqref{eq:plus-self-influence},
\[
 0\leq\Delta_i(x,r)\leq K n^{-1}(x-z_i)^+.
\]
The downward case is identical after reversing signs.  Since the density is
uniformly bounded, integration over the interval $J_i$ with endpoints
$z_i$ and $a$ gives
\begin{equation}
 \int_{J_i}|\Delta_i(x,r)|f_{k,\theta}(x)\,dx
 \leq \frac{K}{n}\int_0^{|d_i|}u\,du
 \leq \frac{K}{n}|d_i|^2.
 \label{eq:supplement-integrated-level}
\end{equation}
Allowing a terminal interval of length $n^{-1}$ gives the harmless
$n^{-3}$ term in \eqref{eq:integrated-influence}.

For the derivative, apply \eqref{eq:supplement-ratio-difference}
conditionally for each value $X_i=x$.  The associated root displacement
satisfies $|\Delta_i(x,r)|\leq K n^{-1}|x-z_i|$, so the empirical-mass
term is controlled by \eqref{eq:supplement-random-interval-mass}, uniformly
in $x$.  Integrating over $J_i$ and differentiating its endpoints gives
\begin{align}
 &\int_{J_i}|\dot\Delta_i(x,r)|f_{k,\theta}(x)\,dx
   +|\Delta_i(a,r)| \notag\\
 &\hspace{2em}\leq
 \frac{K\log n}{n}
 \left(|d_i|+|\dot d_i|+\frac1n\right).
 \label{eq:supplement-integrated-derivative}
\end{align}
Here $|\Delta_i(a,r)|\leq K n^{-1}|d_i|$ by self-influence.  The term
from the moving endpoints is absorbed by the displayed $|d_i|$ and
$|\dot d_i|$ terms; because $\Delta_i(z_i,r)=0$, the endpoint at $z_i$
contributes nothing.  Thus
\eqref{eq:supplement-integrated-derivative} is precisely
\eqref{eq:integrated-derivative-influence}, and substitution in
\eqref{eq:plus-observation-integral-derivative} also verifies directly
that no derivative of a sample indicator has been omitted.

\paragraph{Step 3: second-order denominator influence.}
Return to \eqref{eq:exact-second-order-influence}.  For the direct term,
conditional differentiation and the same split at $c^*$ and
$c_n^{(i)}$ yield
\begin{align}
 &\left|\partial_r\E\!\left[
 \xi_i\{a_{ni}(c_n^{(i)},r)-a_{ni}(c^*,r)\}\right]\right| \notag\\
 &\qquad\leq \frac K n\left(
   \E|c_n^{(i)}-c^*|+
   \E|\dot c_n^{(i)}-\dot c^*|+n^{-1}\right)
 \leq K n^{-3/2}\log n,
 \label{eq:supplement-direct-second-order}
\end{align}
where the final logarithmic factor is a loose bound that is uniform over the
local interval classes.

For the denominator term, define
\begin{align*}
 B_{1i}&=D-\widehat D_n^{(i)}(c^*;r),\\
 B_{2i}&=\widehat D_n^{(i)}(c^*;r)
          -\widehat D_n^{(i)}(c_n^{(i)};r),\\
 B_{3i}&=\widehat D_n^{(i)}(c_n^{(i)};r)-\overline D_{ni}.
\end{align*}
Thus the exact identity
\begin{align}
 D-\overline D_{ni}
 =B_{1i}+B_{2i}+B_{3i}
 \label{eq:supplement-denominator-decomposition}
\end{align}
separates, respectively, the fixed-cutoff centered tail average (including
the anchor and the omitted $O(n^{-1})$ mass), the empirical mass between
$c^*$ and $c_n^{(i)}$, and the path/self-influence remainder between the
deleted and full roots.  The fixed-cutoff moment bound,
\eqref{eq:C1-localization}, and
\eqref{eq:supplement-random-interval-mass} give, for every fixed finite $q$,
\begin{align}
 \|B_{1i}\|_{L^q}+\|B_{2i}\|_{L^q}
 &\leq K n^{-1/2}\log n,
 &\|B_{3i}\|_{L^q}&\leq K n^{-1}\log n,
 \label{eq:supplement-denominator-levels}
\end{align}
uniformly in $i$.  At switching values, derivatives in the next display are
understood through the conditional density integrals used in Step~2.  Split
those integrals at $c^*$ and $c_n^{(i)}$.  Cauchy--Schwarz for $B_{1i}$, the
local interval inequality for $B_{2i}$, and
\eqref{eq:supplement-integrated-derivative} for $B_{3i}$ then yield
\begin{align}
 \left|\partial_r\E[\xi_i a_{ni}(c_n^{(i)},r)B_{1i}]\right|
 &\leq K n^{-3/2}\log n,\notag\\
 \left|\partial_r\E[\xi_i a_{ni}(c_n^{(i)},r)B_{2i}]\right|
 &\leq K n^{-3/2}\log^2 n,\notag\\
 \left|\partial_r\E[\xi_i a_{ni}(c_n^{(i)},r)B_{3i}]\right|
 &\leq K n^{-2}\log^2 n.
 \label{eq:supplement-three-rates}
\end{align}
For example, in the second line the derivative of the moving interval has
two endpoint terms.  The $c_n^{(i)}$ endpoint is multiplied by an empirical
mass over an interval of length $O_{L^{2q}}(n^{-1/2})$, while the $c^*$
endpoint is the density term in
\eqref{eq:plus-observation-integral-derivative}; both are bounded by
\eqref{eq:supplement-integrated-derivative}.  This records the endpoint
contribution rather than differentiating an empirical indicator
pointwise.  The anchor component of the first line is $O(n^{-2})$.

It remains to handle the random reciprocal, which cannot be treated as a
deterministic bounded multiplier.  Use instead the exact expansion
\begin{equation}
 a_{ni}(c_n^{(i)},r)
 \left(\frac1{\overline D_{ni}}-\frac1D\right)
 =\frac{a_{ni}(c_n^{(i)},r)(D-\overline D_{ni})}{D^2}
  +S_{ni},
 \qquad
 S_{ni}=\frac{a_{ni}(c_n^{(i)},r)(D-\overline D_{ni})^2}
                  {D^2\overline D_{ni}}.
 \label{eq:supplement-reciprocal-expansion}
\end{equation}
The leading term is covered by
\eqref{eq:supplement-three-rates}.  On $\mathcal E_n$, the preceding
decomposition and ratio bounds also give
\begin{equation}
 \|D-\overline D_{ni}\|_{L^{2q}}
 \leq K n^{-1/2}\log n,
 \qquad
 \|\partial_r(D-\overline D_{ni})\|_{L^{2q}}
 +\|\partial_r\overline D_{ni}\|_{L^{2q}}
 \leq K\log n,
 \label{eq:supplement-reciprocal-moments}
\end{equation}
where the derivative bounds again include the conditional moving-endpoint
terms from Step~2.  Since
$\|a_{ni}(c_n^{(i)},r)\|_{L^{2q}}+
 \|\partial_ra_{ni}(c_n^{(i)},r)\|_{L^{2q}}\leq K/n$ and all denominators
are at least $d_0/2$, H\"older's inequality applied to
\eqref{eq:supplement-reciprocal-expansion} gives
\begin{equation}
 \left|\partial_r\E[\xi_iS_{ni}]\right|
 \leq K n^{-3/2}\log^3 n
 =O(n^{-5/4}\log^2 n).
 \label{eq:supplement-reciprocal-remainder}
\end{equation}
The polynomial derivative bound and the exponentially small probability of
$\mathcal E_n^c$ make its contribution smaller than every required power of
$n$.  Equations \eqref{eq:supplement-three-rates} and
\eqref{eq:supplement-reciprocal-remainder} prove
\eqref{eq:C1-denominator-influence-derivative}, hence
\eqref{eq:rho-second-order-influence-derivative} and the remainder estimate
used in Theorem~\ref{thm:exactpolicy}.

\subsection{Leave-one-out stability and the proof of
Proposition~\ref{prop:restoration}}
\label{app:restoration}

Throughout, $d=\overline x-\underline x$. For each item $i$, let
$H_{n}^{(i)}$ be $H_{n}$ with item $i$'s term deleted, and let
$c_{n}^{(i)}$ be its unique zero (the endpoint sign conditions and strict
monotonicity of Theorem~\ref{thm:finite} hold verbatim). Let
$Z_{n}(c)=\sum_{i\in I_1}v_{ni}\ind{X_{i}>c}
+\sum_{i\in I_2}v_{ni}\ind{X_{i}<c}$ denote visible attention mass, and
$Z_{n}^{(i)}$ its leave-one-out version, so that
$-H_{n}'(c)=\eta_{n}+Z_{n}(c)$ almost everywhere and
$Z_{n}\geq Z_{n}^{(i)}$ pointwise.

\begin{lemma}[Leave-one-out stability]
\label{lem:loo}
Under the conditions of Proposition~\ref{thm:largepool}, fix
$\varepsilon_{0}>0$ with
$J_{0}=[c^{*}-\varepsilon_{0},c^{*}+\varepsilon_{0}]\subset
\mathrm{int}\,\mathcal X$ and let
$D_{*}=\tfrac12\min_{c\in J_{0}}
[r\,\Prob_{\theta}(X_1>c)+(1-r)\,\Prob_{\theta}(X_2<c)]>0$. There are
events $E_{n}^{(i)}$, each independent of $X_{i}$ and with
$\min_{i}\Prob(E_{n}^{(i)})\to1$, such that on $E_{n}^{(i)}$:
$c_{n}\in J_{0}$, $c_{n}^{(i)}\in J_{0}$, and
\[
  |c_{n}-c_{n}^{(i)}|
  \;\leq\;
  \delta_{n}
  :=\frac{d}{D_{*}}\max_{j}v_{nj}
  \;\longrightarrow\;0 .
\]
\end{lemma}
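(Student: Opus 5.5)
The plan is to build $E_n^{(i)}$ from the leave-one-out function $H_n^{(i)}$ alone, so that it is independent of $X_i$ by construction. Then, on that event, we compare the two roots through a deterministic slope bound.

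\emph{Step 1: the events.} Let $\bar H$ denote the population function $H(\cdot;r,\theta)$ and put
\[
\gamma=\min\{\bar H(c^*-\varepsilon_0/2),\,-\bar H(c^*+\varepsilon_0/2)\}>0 .
\]
Define $E_n^{(i)}$ as the intersection of three requirements:
\begin{itemize}
\item[(a)] $\sup_c|H_n^{(i)}(c)-\bar H(c)|<\gamma/2$;
\item[(b)] $\inf_{c\in J_0} Z_n^{(i)}(c)\ge D_*$;
\item[(c)] $\max_j v_{nj}\,d<\gamma/2$. This condition is deterministic and holds for large $n$.
\end{itemize}
Every requirement depends only on $\{X_j\}_{j\ne i}$, so $E_n^{(i)}$ is independent of $X_i$.

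For (a), the proof of Lemma~\ref{lem:uniform} applies verbatim to $H_n^{(i)}$. Deleting one term changes the mean function by at most $v_{ni}d$, and the variance bound $d^2\sum_j v_{nj}^2\le d^2\max_j v_{nj}$ does not depend on $i$. The grid argument therefore gives $\max_i\Prob((a)^c)\to0$.

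For (b), the population visible mass on $J_0$ is at least $2D_*$. The map $c\mapsto Z_n^{(i)}(c)$ is a difference of two monotone step functions, so uniform control on $J_0$ reduces to a finite grid together with monotonicity and continuity of $F_{k,\theta}$, in Glivenko--Cantelli fashion. Weighted Chebyshev bounds at the grid points, again uniform in $i$ since the variance is at most $\max_j v_{nj}$, give $\max_i\Prob((b)^c)\to0$.

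\emph{Step 2: localization.} On (a), the function $H_n^{(i)}$ is positive at $c^*-\varepsilon_0/2$ and negative at $c^*+\varepsilon_0/2$. Its strict monotonicity, which Theorem~\ref{thm:finite}'s argument gives verbatim, places $c_n^{(i)}$ in $J_0$. Since $|H_n-H_n^{(i)}|\le v_{ni}d<\gamma/2$ by (c), the same sign pattern holds for $H_n$ at $c^*\pm\varepsilon_0/2$, so $c_n\in J_0$ as well.

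\emph{Step 3: the stability bound.} This is the main step, but it is short. Write $H_n=H_n^{(i)}+v_{ni}h_i$, where $|h_i(c)|\le d$. Both roots lie in $J_0$, and on $J_0$ we have
\[
-H_n^{(i)\prime}=\eta_n+Z_n^{(i)}\ge D_*
\]
almost everywhere by (b). Because $H_n^{(i)}$ is absolutely continuous, this gives
\[
D_*|c_n-c_n^{(i)}|\le|H_n^{(i)}(c_n)-H_n^{(i)}(c_n^{(i)})|=|H_n^{(i)}(c_n)|=v_{ni}|h_i(c_n)|\le v_{ni}d .
\]
It follows that $|c_n-c_n^{(i)}|\le d\max_j v_{nj}/D_*=\delta_n\to0$.

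The only delicate point is that the probability bounds must be uniform in $i$. This holds because all the variance bounds depend only on $\max_j v_{nj}$.
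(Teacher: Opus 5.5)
Your proof is correct and follows essentially the paper's argument: you build the events from $\{X_j\}_{j\neq i}$ alone (sign conditions on $H_n^{(i)}$ plus the visible-mass bound $\inf_{J_0}Z_n^{(i)}\geq D_*$), localize both roots using strict monotonicity and the deterministic perturbation $\sup_c|H_n-H_n^{(i)}|\leq v_{ni}d$, and then convert $v_{ni}d$ into the root bound through the slope lower bound $D_*$. The remaining differences are cosmetic: you use a sup-norm event at $c^*\pm\varepsilon_0/2$ instead of point conditions at $c^*\pm\varepsilon_0$, and you integrate the slope of $H_n^{(i)}$ at $c_n$ rather than the slope of $H_n$ at $c_n^{(i)}$.
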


\begin{proof}
The argument of Lemma~\ref{lem:uniform} applies verbatim to
$H_{n}^{(i)}$ and to $Z_{n}^{(i)}$ (the summands of $Z_{n}^{(i)}$ are
monotone in $c$ and uniformly bounded, and its mean converges uniformly to
the continuous function
$D(c)=r\Prob_{\theta}(X_1>c)+(1-r)\Prob_{\theta}(X_2<c)$), with all
bounds uniform in $i$ because deleting one term perturbs each function by
at most $\max_{j}v_{nj}\cdot d\to0$ in sup norm. Let
$\gamma=\min\{H(c^{*}-\varepsilon_{0}),-H(c^{*}+\varepsilon_{0})\}>0$ and
define
\[
  E_{n}^{(i)}
  =\Big\{
  H_{n}^{(i)}(c^{*}-\varepsilon_{0})>\tfrac{\gamma}{2},\;
  H_{n}^{(i)}(c^{*}+\varepsilon_{0})<-\tfrac{\gamma}{2},\;
  \inf_{c\in J_{0}}Z_{n}^{(i)}(c)\geq D_{*}
  \Big\},
\]
which is measurable with respect to $\{X_{j}\}_{j\neq i}$, hence
independent of $X_{i}$, and satisfies
$\min_{i}\Prob(E_{n}^{(i)})\to1$ by the uniform convergence just noted.
On $E_{n}^{(i)}$: the sign conditions force $c_{n}^{(i)}\in J_{0}$; since
$\sup_{c}|H_{n}-H_{n}^{(i)}|\leq v_{ni}d\leq\max_{j}v_{nj}\,d$, for $n$
large ($\max_{j}v_{nj}d<\gamma/2$) also $c_{n}\in J_{0}$. Finally, on
$J_{0}$ the map $H_{n}$ decreases at rate at least
$\eta_{n}+Z_{n}\geq Z_{n}^{(i)}\geq D_{*}$, while
$|H_{n}(c_{n}^{(i)})|=|H_{n}(c_{n}^{(i)})-H_{n}^{(i)}(c_{n}^{(i)})|
\leq v_{ni}d$; hence $|c_{n}-c_{n}^{(i)}|\leq v_{ni}d/D_{*}\leq\delta_{n}$.
\end{proof}

\begin{proof}[Proof of Proposition~\ref{prop:restoration}]
Here $F_{1,\theta}=F_{2,\theta}=F_{\theta}$, $r=1/2$, and
$c^{*}=\theta$ (Corollary~\ref{thm:countervailing}). Fix $x\neq\theta$ and
$i\in I_1$. Since $c_{n}^{(i)}$ and $E_{n}^{(i)}$ are independent of
$X_{i}$, Lemma~\ref{lem:loo} gives the sandwich
\[
  \Prob\big(c_{n}^{(i)}<x-\delta_{n}\big)-\Prob\big((E_{n}^{(i)})^{c}\big)
  \;\leq\;
  \Prob\big(A_{i}=1\mid X_{i}=x\big)
  \;\leq\;
  \Prob\big(c_{n}^{(i)}<x+\delta_{n}\big)+\Prob\big((E_{n}^{(i)})^{c}\big),
\]
because on $E_{n}^{(i)}$ we have
$c_{n}^{(i)}<x-\delta_{n}\Rightarrow c_{n}<x$ and
$c_{n}<x\Rightarrow c_{n}^{(i)}<x+\delta_{n}$. The uniform convergence in
the proof of Lemma~\ref{lem:loo} implies
$\max_{i}|c_{n}^{(i)}-\theta|\xrightarrow{p}0$, and $\delta_{n}\to0$, so
both outer bounds converge to $\ind{x>\theta}$, uniformly over
$i\in I_1$. Symmetrically, for $i\in I_2$ the disclosure probability
given $X_{i}=x$ converges to $\ind{x<\theta}$, uniformly. Since
$X_{J}\sim F_{\theta}$ for every $J=i$ and
$\sum_{i\in I_1}v_{ni}=r_{n}\to1/2$,
\[
  \varphi_{n}(x)
  :=\Prob(A_{J}=1\mid X_{J}=x)
  =\sum_{i\in I_1}v_{ni}\,\Prob(A_{i}=1\mid X_{i}=x)
  +\sum_{i\in I_2}v_{ni}\,\Prob(A_{i}=1\mid X_{i}=x)
  \longrightarrow\tfrac12
\]
for $F_{\theta}$-almost every $x$, with $0\leq\varphi_{n}\leq1$. By
dominated convergence $\Prob(A_{J}=1)=\int\varphi_{n}\,dF_{\theta}
\to\tfrac12$. The conditional law of $X_{J}$ given $A_{J}=1$ has density
$\varphi_{n}/\!\int\!\varphi_{n}dF_{\theta}$ with respect to
$F_{\theta}$; this density is eventually bounded and converges to $1$
$F_{\theta}$-a.e., so it converges to $1$ in $L^{1}(F_{\theta})$ by
Scheff\'e's theorem, which is exactly total-variation convergence to
$F_{\theta}$.
\end{proof}

\subsection{Proof of Proposition~\ref{thm:leverage}}
\label{app:leverage}

Let $U(c)=\E_{\theta}(X-c)^{+}=\int_{c}^{\overline x}[1-F_{\theta}(z)]dz$
and $L(c)=\E_{\theta}(c-X)^{+}=\int_{\underline x}^{c}F_{\theta}(z)dz$.
The equilibrium condition is $rU(c^{*})=(1-r)L(c^{*})$. As
$c\uparrow\overline x$, continuity and positivity of $f_{\theta}$ at
$\overline x$ give $1-F_{\theta}(z)=f_{\theta}(\overline x)(\overline
x-z)(1+o(1))$, hence
$U(c)=\tfrac12 f_{\theta}(\overline x)(\overline x-c)^{2}(1+o(1))$, while
$L(c)\to L(\overline x)=\overline x-\theta$. By
Proposition~\ref{prop:extremization}, $c^{*}(r)\uparrow\overline x$ as
$r\uparrow1$. Substituting the expansions into the equilibrium condition,
\[
  \tfrac12 f_{\theta}(\overline x)\,(\overline x-c^{*})^{2}(1+o(1))
  =\frac{1-r}{r}\,(\overline x-\theta)(1+o(1)),
\]
and solving for $\overline x-c^{*}$ gives the first display. The second is
symmetric. \qed

\subsection{Proof of Lemma~\ref{lem:insufficient}}
\label{app:insufficient}

The receiver action equals the anchored visible average
\eqref{eq:action} evaluated at $S^{*}$, so $c_{n}$ is a function of the
record $R$. Suppose, for contradiction, that $R$ is sufficient. By the
factorization criterion, the joint density of the evidence pool factors as
$g_{\theta}(R(x))\,h(x)$ for some measurable $g_{\theta},h$. Fix a
labeling $s(\ell)\in\{1,2\}$ that identifies the sender who owns observation
$\ell$. Fix a
realization $x$ in which some upward item $i$ is withheld, i.e.\
$x_{i}<c_{n}(R(x))$. If Sender 1 is the identifying camp, such
realizations have positive probability for every finite $n$: take $x_i$ in
an interval below $a_0$ and all other observations in a sufficiently small
neighborhood of $a_0$. The anchor and the other observations then imply
$H_n(x_i)>0$, hence $c_n>x_i$. (If Sender 2 is identifying, use
the symmetric event with one downward item above $a_0$ and repeat the
argument below with inequalities reversed.) Consider
perturbations $x_{i}'\in
(\underline x,\,c_{n})$ of coordinate $i$, holding all other coordinates
fixed. The disclosed sets and disclosed values are unchanged (item $i$
remains withheld: its value stays below the threshold, and the threshold
$c_{n}$, being a function of disclosed items and the anchor only, is
unchanged). Hence $R$ is constant along the perturbation, and the
factorization implies that
\[
  \frac{f_{1,\theta}(x_{i}')}{f_{1,\theta'}(x_{i}')}
  =
  \frac{g_{\theta}(R)\,h(x')/\prod_{\ell\neq i}f_{s(\ell),\theta}(x_{\ell})}
       {g_{\theta'}(R)\,h(x')/\prod_{\ell\neq i}f_{s(\ell),\theta'}(x_{\ell})}
  \cdot
  \frac{\prod_{\ell\neq i}f_{s(\ell),\theta}(x_{\ell})}
       {\prod_{\ell\neq i}f_{s(\ell),\theta'}(x_{\ell})}
\]
is constant in $x_{i}'$ on the open interval $(\underline x,c_{n})$,
contradicting the likelihood-ratio condition in
Lemma~\ref{lem:insufficient}. \qed

\subsection{Proof of Proposition~\ref{prop:private}}
\label{app:private}

For every $n$, Sender $k$'s type space $\mathcal X^{n_k}$, $k\in\{1,2\}$, is compact
metric and its disclosure-action space is finite, endowed with the discrete
topology. Payoffs are bounded and continuous in actions, so they are
equicontinuous; conditional on the commonly known state, the two type
vectors are independent, so the information structure is absolutely
continuous with respect to the product of its marginals. The existence
theorem of \citet{milgromweber1985} therefore yields an equilibrium in
distributional strategies. Regular conditional probabilities give the
equivalent behavioral-strategy representation used in the statement. The
argument below applies to every equilibrium sequence and every deterministic
$\eta_n\to\bar\eta\in[0,\infty)$.

\paragraph{Step 1: tails.}
Fix any strategy of Sender 2 and any realization of the
Sender 1's pool. Sender 1's expected payoff from a set $S_1$ is
$\E[(A+\sum_{S_1}v x_{i})/(B+\sum_{S_1}v)]$, where
$(A,B)=(\eta_n a_0+\sum_{S_2}v X_{j},\,\eta_n+\sum_{S_2}v)$ is generated by
the opponent's play and is independent of Sender 1's pool, with
$B\geq\eta_n>0$. If $S_1$ contains $i$ and excludes $j$ with
$x_{j}>x_{i}$, swapping $j$ for $i$ changes the payoff by
$v\,\E[(x_{j}-x_{i})/(B+\sum_{S_1}v)]>0$ pointwise in $(A,B)$: a strict
improvement. Hence, conditional on disclosing $k$ observations, the unique
best set consists of the top $k$ values. Since the pool is atomless, ties
are null, and every pure action in the support of a best response (hence,
for almost every realization and private-randomization draw, every
equilibrium action) is a tail in own values. Under the convention in the
statement, its cutoff $T_1^n$ lies strictly between the marginal disclosed
and excluded order statistics and satisfies
$S_1=\{i:X_i>T_1^n\}$ exactly, apart from the null endpoint events. The
bottom-tail argument gives $S_2=\{j:X_j<T_2^n\}$. Moreover,
$T_{1}^{n}\perp T_{2}^{n}$ because each is a measurable function of its
camp's independent pool and private randomization only.

\paragraph{Step 2: security against arbitrary disclosure.}
Put $c_n=c_n^{\circ}$. Because the left side of
\eqref{eq:private-target} is strictly decreasing and converges uniformly
to $\bar\eta(a_0-c)+H(c;r,\theta)$, its zeros satisfy
$c_n\to c^*_{\bar\eta}\in(\underline x,\overline x)$. At the deterministic
point $c_n$, independence and camp-uniform weights give
\[
  \E[H_n(c_n)]=0,
  \qquad
  \operatorname{Var}(H_n(c_n))
  \leq (\overline x-\underline x)^2\max_i v_{ni}=o(1),
\]
so $H_n(c_n)=o_p(1)$.

Suppose Sender 1 discloses exactly $\{i:X_i>c_n\}$, while the
Sender 2 uses an arbitrary, possibly randomized disclosure set
$S_2$. Subtracting $c_n$ from the induced action gives
\[
 a_n-c_n=
 \frac{\eta_n(a_0-c_n)+\sum_{i\in I_1}v_{ni}(X_i-c_n)^+
       +\sum_{j\in S_2}v_{nj}(X_j-c_n)}
      {\eta_n+\sum_{i\in I_1}v_{ni}\ind{X_i>c_n}
       +\sum_{j\in S_2}v_{nj}}.
\]
For every realization and every $S_2$, the numerator is at least
$H_n(c_n)$: among downward items, its smallest value is obtained by
including every item below $c_n$ and no item above it. Moreover, the
denominator is bounded away from zero with probability approaching one,
because it contains the upward tail mass, whose probability limit is
$r[1-F_{1,\theta}(c^*_{\bar\eta})]>0$. Therefore this cutoff guarantees
\begin{equation}
  a_n\geq c_n-o_p(1)
  \quad\text{uniformly over all downward disclosure rules.}
  \label{eq:private-up-security}
\end{equation}
The symmetric argument shows that disclosing exactly the downward items
below $c_n$ guarantees
\begin{equation}
  a_n\leq c_n+o_p(1)
  \quad\text{uniformly over all upward disclosure rules.}
  \label{eq:private-down-security}
\end{equation}
The empirical deviations and the tail-mass event used in these bounds do
not depend on the opponent's rule. Hence, for every $\varepsilon>0$, taking
the supremum over all measurable randomized opponent rules gives exactly
the two probability limits in part (a).
All actions lie in the compact interval $\mathcal X$, so the same bounds
hold in expectation with $o(1)$ remainders. If $V_n$ is the expected
payoff in any equilibrium, optimality and the two deviations imply
\begin{equation}
  c_n-o(1)\leq V_n\leq c_n+o(1).
  \label{eq:private-value}
\end{equation}
This also proves the asserted maxmin statement without restricting the
opponent to a threshold rule.

\paragraph{Step 3: equilibrium thresholds concentrate.}
For deterministic cutoffs $(s,t)$ define the population action
\[
 y_n(s,t)=
 \frac{\eta_n a_0+r_n\E_\theta[X_1;X_1>s]
       +(1-r_n)\E_\theta[X_2;X_2<t]}
      {\eta_n+r_n\Prob_\theta(X_1>s)
       +(1-r_n)\Prob_\theta(X_2<t)}.
\]
At $c_n$, equation \eqref{eq:private-target} gives
$y_n(c_n,c_n)=c_n$. More strongly,
\begin{equation}
  g_{2,n}(t):=y_n(c_n,t)-c_n\geq0,
  \qquad
  g_{1,n}(s):=c_n-y_n(s,c_n)\geq0,
  \label{eq:private-gaps}
\end{equation}
and, by the positive densities, either inequality is strict unless its
argument equals $c_n$. For example, moving $t$ above $c_n$ adds positive
downward observations, while moving it below $c_n$ removes negative ones;
both changes raise the action from $c_n$. The upward statement is
symmetric. Since $c_n$ converges to an interior point, compactness and
continuity imply that for every $\varepsilon>0$ there are
$\delta_\varepsilon>0$ and $N_\varepsilon$ such that, for $n\geq
N_\varepsilon$,
\begin{equation}
 \inf_{|t-c_n|\geq\varepsilon}g_{2,n}(t)\geq\delta_\varepsilon,
 \qquad
 \inf_{|s-c_n|\geq\varepsilon}g_{1,n}(s)\geq\delta_\varepsilon.
 \label{eq:private-separation}
\end{equation}

The four camp-weighted empirical tail probabilities and tail first
moments converge uniformly to their population counterparts by the grid
argument in Lemma~\ref{lem:uniform}. This remains valid at a data-dependent
threshold. If Sender 1 deviates to the constant cutoff $c_n$
against the equilibrium threshold $T_2^n$, the denominator contains the
upward tail mass at $c_n$ and is therefore bounded away from zero with
probability approaching one. Consequently its deviation payoff is
\[
  y_n(c_n,T_2^n)+o_p(1)
  =c_n+g_{2,n}(T_2^n)+o_p(1).
\]
Equilibrium optimality, this deviation, and the upper bound in
\eqref{eq:private-value} yield
$\E[g_{2,n}(T_2^n)]\to0$. Sender 2's deviation to $c_n$ and
the lower bound in \eqref{eq:private-value} similarly give
$\E[g_{1,n}(T_1^n)]\to0$. Hence \eqref{eq:private-separation} and Markov's
inequality imply
\[
  T_2^n-c_n\xrightarrow{p}0,
  \qquad
  T_1^n-c_n\xrightarrow{p}0.
\]

\paragraph{Step 4: the action and target converge.}
On the event that both thresholds lie near $c_n$, the visible population
mass is bounded away from zero uniformly in $n$. The same uniform laws of
large numbers therefore give
$a_n-y_n(T_1^n,T_2^n)\xrightarrow{p}0$. Continuity, the preceding threshold
convergence, and $y_n(c_n,c_n)=c_n$ then imply
$a_n-c_n\xrightarrow{p}0$. Finally, the uniform convergence of
$\eta_n(a_0-c)+H(c;r_n,\theta)$ used in Step~2 gives
$c_n\to c^*_{\bar\eta}$. When $\bar\eta=0$, this limit is
$c^*(r,\theta)$ by Proposition~\ref{thm:largepool}. \qed

\subsection{Proof of Proposition~\ref{prop:personalized}}
\label{app:personalized}

For each row define the weighted upper- and lower-tail processes
\begin{align*}
 Q_{1,nj}(c)&=\sum_{i\in I_1}w_{nji}\ind{X_i>c},
 &M_{1,nj}(c)&=\sum_{i\in I_1}w_{nji}X_i\ind{X_i>c},\\
 Q_{2,nj}(c)&=\sum_{i\in I_2}w_{nji}\ind{X_i<c},
 &M_{2,nj}(c)&=\sum_{i\in I_2}w_{nji}X_i\ind{X_i<c}.
\end{align*}
Their expectations are, respectively, $\rho_{nj}$ or $1-\rho_{nj}$ times
the corresponding camp tail probability or truncated first moment. Since
\[
 \sum_{i\in I_k}w_{nji}^2
 \leq \omega_n\sum_{i\in I_k}w_{nji}\leq\omega_n,
 \qquad k\in\{1,2\},
\]
weighted Hoeffding gives, at every fixed cutoff and, for example, for an
upper-tail probability,
\[
 \Pr\!\left(
   \max_{j\in J_n}|Q_{1,nj}(c)-\rho_{nj}\Prob(X_1>c)|>\varepsilon
 \right)
 \leq 2|J_n|\exp\{-2\varepsilon^2/\omega_n\}.
\]
The other three processes satisfy the same type of bound, with a constant
depending only on the fixed bounded interval $\mathcal X$ for the first
moments.

Take a finite grid whose population tail probabilities and truncated
moments vary by at most an arbitrarily small amount between adjacent grid
points. Empirical tail probabilities are monotone in the cutoff. For the
possibly signed first moments, write
$X=\underline x+(X-\underline x)$ in the upper tail and
$X=\overline x-(\overline x-X)$ in the lower tail; the shifted summands are
nonnegative, so their empirical tail sums are also monotone and the same
bracketing argument applies. A union bound over listeners, processes, and
grid points, together with
$\omega_n\log(2|J_n|)\to0$, therefore gives uniform convergence over both
$j$ and $c$ of all four processes to their population counterparts; in
particular,
\begin{align*}
 \max_{j\in J_n}\sup_{c\in\mathcal X}
 \big|Q_{1,nj}(c)-\rho_{nj}\Prob(X_1>c)\big|&\xrightarrow{p}0,\\
 \max_{j\in J_n}\sup_{c\in\mathcal X}
 \big|M_{1,nj}(c)-\rho_{nj}\E[X_1;X_1>c]\big|&\xrightarrow{p}0,
\end{align*}
with the analogous two statements for the lower camp, replacing
$\rho_{nj}$ by $1-\rho_{nj}$.

This uniformity permits evaluation at the endogenous cutoff without an
independence argument. Since $c_n\xrightarrow{p}c^*$ and the population
tail functions are continuous, listener numerators and denominators
converge uniformly to the numerator and denominator defining
$y(\rho_{nj};c^*)$. The limiting denominator is bounded below uniformly
over $\rho\in[0,1]$ by
$\min\{\Prob(X_1>c^*),\Prob(X_2<c^*)\}>0$. Thus the probability that any
listener has a zero denominator vanishes, and uniform ratio convergence
proves \eqref{eq:uniform-personalization}.

Write $E_{1}=\E_{\theta}[X_1\mid X_1>c^{*}]$ and
$E_{2}=\E_{\theta}[X_2\mid X_2<c^{*}]$, with $E_{1}>c^{*}>E_{2}$
since both tails are nondegenerate. Then $y$ is a weighted average of
$E_{1}$ and $E_{2}$ with weight on $E_{1}$ strictly increasing in $\rho$;
differentiation gives
$\partial y/\partial\rho=
\Prob(X_1>c^{*})\Prob(X_2<c^{*})(E_{1}-E_{2})/(\text{denominator})^{2}>0$.
Part (a) follows. For (b): at $r=r^{*}$, $c^{*}=\theta$
(Theorem~\ref{thm:design}), and $y(\rho;\theta)=\theta$ if and only if
$\rho\,\Gamma_1(\theta)=(1-\rho)\Gamma_2(\theta)$, i.e.\
$\rho=r^{*}(\theta)$. Let
$d_n=\max_j|\rho_{nj}-r^*(\theta)|$. If $d_n\to0$, uniform continuity of
$y$ and \eqref{eq:uniform-personalization} give uniform receiver accuracy.
Conversely, for every
$\delta\in(0,\max\{r^*(\theta),1-r^*(\theta)\}]$, strict monotonicity and
compactness imply
\[
 m_\delta
 =\min_{\rho\in[0,1]:\,|\rho-r^*(\theta)|\geq\delta}
   |y(\rho;\theta)-\theta|>0.
\]
If $d_n\not\to0$, some subsequence has $d_n\geq\delta$; on that subsequence
\eqref{eq:uniform-personalization} implies
$\Pr(\max_j|a_{nj}-\theta|>m_\delta/2)\to1$, contradicting uniform accuracy.
This proves the reverse implication. \qed

\subsection{Proof of Proposition~\ref{prop:neutral-backfill}}
\label{app:neutral-backfill}

\begin{proof}
\emph{Part (i): finite equilibrium.}
For \(\lambda<1\), use \(\sum_i v_{ni}=1\) to rewrite
\eqref{eq:partial-action} as
\begin{equation}
 a_{n,\lambda}(S)
 =
 \frac{(\eta_n+\lambda)a_0
       +\sum_{i\in S}v_{ni}(X_i-\lambda a_0)}
      {\eta_n+\lambda
       +(1-\lambda)\sum_{i\in S}v_{ni}}.
 \label{eq:partial-rewrite}
\end{equation}
Fix an opponent's disclosure set and suppose the current action is \(y\).
Adding an item of value \(x\) and weight \(v>0\) raises the ratio if and
only if
\[
 x-\lambda a_0>(1-\lambda)y,
 \qquad\text{equivalently}\qquad
 x>t(y):=\lambda a_0+(1-\lambda)y.
\]
Removing a disclosed item has the reverse implication. Hence every upward
best response is an upper tail and every downward best response is a lower
tail, both around the cutoff \(t(y)\), up to items exactly at the cutoff.

The function \(K_{n,\lambda}\) is continuous and strictly decreasing. Put
\[
 t_L=\lambda a_0+(1-\lambda)\underline x,
 \qquad
 t_U=\lambda a_0+(1-\lambda)\overline x.
\]
At \(t_L\), the anchor term in \eqref{eq:partial-Kn} equals
\((\eta_n+\lambda)(a_0-\underline x)\), while the entire negative-tail term is
at most \(\lambda(a_0-\underline x)\); the positive-tail term is
nonnegative. Thus \(K_{n,\lambda}(t_L)>0\). Symmetrically,
\(K_{n,\lambda}(t_U)<0\). Its unique zero \(t_{n,\lambda}\) therefore lies
in \((t_L,t_U)\), and the corresponding
\(y_{n,\lambda}=(t_{n,\lambda}-\lambda a_0)/(1-\lambda)\) is interior.

At the proposed tail profile, subtracting \(y_{n,\lambda}\) from
\eqref{eq:partial-rewrite} gives a numerator proportional to
\[
 \frac{\eta_n+\lambda}{1-\lambda}(a_0-t_{n,\lambda})
 +\sum_{i\in I_1}v_{ni}(X_i-t_{n,\lambda})^+
 -\sum_{i\in I_2}v_{ni}(t_{n,\lambda}-X_i)^+
 =0.
\]
Thus the induced action is \(y_{n,\lambda}\), and the preceding marginal
comparison makes the two tails mutual global best responses. Conversely,
in any pure-strategy equilibrium the same comparison forces both disclosure
sets to be tails around \(t(y)\); the induced-action identity then implies
\(K_{n,\lambda}(t(y))=0\). Uniqueness of the zero gives
\(t(y)=t_{n,\lambda}\) and \(y=y_{n,\lambda}\).

If \(\lambda=1\), equation \eqref{eq:partial-action} becomes
\[
 a_{n,1}(S)
 =a_0+\frac{\sum_{i\in S}v_{ni}(X_i-a_0)}{1+\eta_n}.
\]
Sender 1 therefore includes exactly its items above \(a_0\), and
Sender 2 exactly its items below \(a_0\). Substitution gives
\eqref{eq:full-imputation-action}, with uniqueness up to ties at \(a_0\).

\emph{Part (ii): large pools and comparative statics.}
For fixed \(\lambda<1\), the grid argument of
Lemma~\ref{lem:uniform} gives
\[
 \sup_{t\in\mathcal X}
 |K_{n,\lambda}(t)-K_\lambda(t;r,\theta)|
 \xrightarrow{p}0.
\]
The population function is continuous and strictly decreasing, is positive
at \(\underline x\), and is negative at \(\overline x\); it therefore has a
unique interior zero \(t_\lambda^*(r,\theta)\). Uniform convergence and
strict crossing imply convergence of the finite zero, and
\eqref{eq:partial-y} gives convergence of the action.

Write
\[
 U(t)=\E_\theta(X_1-t)^+,\qquad
 L(t)=\E_\theta(t-X_2)^+.
\]
At the population zero,
\[
 \frac{\partial K_\lambda}{\partial t}
 =-\frac{\lambda}{1-\lambda}
  -D_\lambda(t;r,\theta)<0,
 \qquad
 \frac{\partial K_\lambda}{\partial r}=U(t)+L(t)>0.
\]
The implicit-function theorem gives
\eqref{eq:partial-cutoff-cs}; division by \(1-\lambda\) gives
\eqref{eq:partial-action-cs}. Positive densities then imply that the upward
disclosure probability falls, the downward disclosure probability rises,
and the upper-tail conditional mean rises with \(r\).

For the boundary, write
\(t_\lambda^*=a_0+(1-\lambda)z_\lambda\). The root equation becomes
\[
 -\lambda z_\lambda
 +rU\bigl(a_0+(1-\lambda)z_\lambda\bigr)
 -(1-r)L\bigl(a_0+(1-\lambda)z_\lambda\bigr)=0.
\]
Bounded support makes \(z_\lambda\) bounded for \(\lambda\) near one.
Consequently
\[
 z_\lambda\longrightarrow rU(a_0)-(1-r)L(a_0),
 \qquad
 t_\lambda^*\longrightarrow a_0,
\]
and \(y_\lambda^*=a_0+z_\lambda\) converges to the displayed formula for
\(y_1^*\). Equation \eqref{eq:partial-cutoff-cs} shows that the cutoff
response converges to zero. At \(\lambda=1\), the ordinary weighted law of
large numbers applied to \eqref{eq:full-imputation-action} gives the stated
action limit, whose derivative in \(r\) is \(U(a_0)+L(a_0)>0\).

\emph{Part (iii): implementation.}
For \(\lambda<1\), the condition
\(y_\lambda^*(r,\theta)=\theta\) is equivalent to
\(t_\lambda^*=\bar t_\lambda(\theta)\). Substituting that cutoff into
\eqref{eq:partial-K} gives
\[
 0=\lambda(a_0-\theta)
   +r\Gamma_{1,\lambda}(\theta)
   -(1-r)\Gamma_{2,\lambda}(\theta).
\]
Solving for \(r\) gives \eqref{eq:partial-implementing-share}. The same
equation follows directly from the formula for \(y_1^*\) when
\(\lambda=1\). The resulting share is strictly between zero and one exactly
under \eqref{eq:partial-feasibility}. Finally,
\eqref{eq:partial-action-cs} (and \(U(a_0)+L(a_0)>0\) at \(\lambda=1\)) makes
the action strictly increasing in attention, so a feasible implementing
share is unique. The reductions at \(\lambda=0\) and under \(a_0=\theta\)
follow by substitution.
\end{proof}

\section*{Code availability}
Replication code for the numerical calculations and figures is included with
the submission.

\section*{Data availability}
No data were used for the research described in this article.

\section*{Funding}
Yifei Sun gratefully acknowledges financial support from the National Natural
Science Foundation of China (Grant No.~72273029). Qian Cao gratefully
acknowledges financial support from the Fundamental Research Funds for the
Central Universities at the University of International Business and Economics
(Grant No.~7625040381).
The funders had no role in the analysis, interpretation, preparation of the
manuscript, or decision to submit the article.

\section*{Declaration of competing interest}
The authors declare that they have no known competing financial interests or
personal relationships that could have appeared to influence the work reported
in this article.

\section*{Declaration of generative AI use}
During the preparation of this work, the authors used OpenAI Codex to assist
with manuscript organization, exposition, literature searches, LaTeX
preparation, and checks of mathematical derivations and numerical calculations.
After using this tool, the authors reviewed and edited the content as needed and
take full responsibility for the content of the article.


\begin{thebibliography}{99}

\bibitem[Acemoglu et al.(2022)]{acemoglu2022}
Acemoglu, Daron, Ali Makhdoumi, Azarakhsh Malekian, and Asuman Ozdaglar.
2022.
``Learning from Reviews: The Selection Effect and the Speed of Learning.''
\emph{Econometrica} 90(6): 2857--2899.

\bibitem[Au and Kawai(2020)]{aukawai2020}
Au, Pak Hung, and Keiichi Kawai.
2020.
``Competitive Information Disclosure by Multiple Senders.''
\emph{Games and Economic Behavior} 119: 56--78.

\bibitem[B\'enabou and Vellodi(2025)]{benabou2025}
B\'enabou, Roland, and Nikhil Vellodi.
2025.
``(Pro-)Social Learning and Strategic Disclosure.''
\emph{American Economic Journal: Microeconomics} 17(4): 102--125.

\bibitem[Besbes and Scarsini(2018)]{besbesscarsini2018}
Besbes, Omar, and Marco Scarsini.
2018.
``On Information Distortions in Online Ratings.''
\emph{Operations Research} 66(3): 597--610.

\bibitem[Bowen et al.(2023)]{bowen2023}
Bowen, T. Renee, Danil Dmitriev, and Simone Galperti.
2023.
``Learning from Shared News: When Abundant Information Leads to Belief
Polarization.''
\emph{Quarterly Journal of Economics} 138(2): 955--1000.

\bibitem[Casner(2020)]{casner2020}
Casner, Ben.
2020.
``Seller Curation in Platforms.''
\emph{International Journal of Industrial Organization} 72: 102659.

\bibitem[Chen(2011)]{chen2011}
Chen, Ying.
2011.
``Perturbed Communication Games with Honest Senders and Naive Receivers.''
\emph{Journal of Economic Theory} 146(2): 401--424.

\bibitem[Dewatripont and Tirole(1999)]{dewatriponttirole1999}
Dewatripont, Mathias, and Jean Tirole.
1999.
``Advocates.''
\emph{Journal of Political Economy} 107(1): 1--39.

\bibitem[Di Tillio et al.(2021)]{ditillioottavianisorensen2021}
Di Tillio, Alfredo, Marco Ottaviani, and Peter Norman S{\o}rensen.
2021.
``Strategic Sample Selection.''
\emph{Econometrica} 89(2): 911--953.

\bibitem[Ehm et al.(2016)]{ehm2016}
Ehm, Werner, Tilmann Gneiting, Alexander Jordan, and Fabian Kr\"uger.
2016.
``Of Quantiles and Expectiles: Consistent Scoring Functions, Choquet
Representations, and Forecast Rankings.''
\emph{Journal of the Royal Statistical Society: Series B} 78(3): 505--562.

\bibitem[Eyster and Rabin(2005)]{eysterrabin2005}
Eyster, Erik, and Matthew Rabin.
2005.
``Cursed Equilibrium.''
\emph{Econometrica} 73(5): 1623--1672.

\bibitem[Farina et al.(2026)]{farina2024}
Farina, Agata, Guillaume Fr\'echette, Alessandro Ispano, Alessandro
Lizzeri, and Jacopo Perego.
2026.
``The Selective Disclosure of Evidence: An Experiment.''
\emph{Review of Economic Studies}, corrected proof, rdag076.

\bibitem[Gao(2025)]{gao2025}
Gao, Ying.
2025.
``Inference from Selectively Disclosed Data.''
Working paper.

\bibitem[Gentzkow and Kamenica(2017)]{gentzkowkamenica2017}
Gentzkow, Matthew, and Emir Kamenica.
2017.
``Competition in Persuasion.''
\emph{Review of Economic Studies} 84(1): 300--322.

\bibitem[Grossman(1981)]{grossman1981}
Grossman, Sanford J.
1981.
``The Informational Role of Warranties and Private Disclosure about Product
Quality.''
\emph{Journal of Law and Economics} 24(3): 461--483.

\bibitem[Hirshleifer and Teoh(2003)]{hirshleiferteoh2003}
Hirshleifer, David, and Siew Hong Teoh.
2003.
``Limited Attention, Information Disclosure, and Financial Reporting.''
\emph{Journal of Accounting and Economics} 36(1--3): 337--386.

\bibitem[Jin et al.(2021)]{jin2021}
Jin, Ginger Zhe, Michael Luca, and Daniel Martin.
2021.
``Is No News (Perceived As) Bad News? An Experimental Investigation of
Information Disclosure.''
\emph{American Economic Journal: Microeconomics} 13(2): 141--173.

\bibitem[Kartik et al.(2026)]{kartikleesuen2026}
Kartik, Navin, Frances Xu Lee, and Wing Suen.
2026.
``Multi-Sender Disclosure with Costs.''
arXiv preprint arXiv:2601.10048.

\bibitem[Lee et al.(2019)]{leeullahwang2019}
Lee, Tae-Hwy, Aman Ullah, and He Wang.
2019.
``The Second-Order Asymptotic Properties of Asymmetric Least Squares
Estimation.''
\emph{Sankhya B} 81(1): 201--233.

\bibitem[Lichtig and Weksler(2023)]{lichtigweksler2023}
Lichtig, Avi, and Ran Weksler.
2023.
``Information Transmission in Voluntary Disclosure Games.''
\emph{Journal of Economic Theory} 210: 105653.

\bibitem[Lipman and Seppi(1995)]{lipmanseppi1995}
Lipman, Barton L., and Duane J. Seppi.
1995.
``Robust Inference in Communication Games with Partial Provability.''
\emph{Journal of Economic Theory} 66(2): 370--405.

\bibitem[Lu and Song(2025)]{lusong2025}
Lu, Zhuoran, and Yangbo Song.
2025.
``Network-Based Peer Monitoring Design.''
\emph{Journal of Economic Theory} 224: 105969.

\bibitem[Milgrom(1981)]{milgrom1981}
Milgrom, Paul R.
1981.
``Good News and Bad News: Representation Theorems and Applications.''
\emph{Bell Journal of Economics} 12(2): 380--391.

\bibitem[Milgrom and Roberts(1986)]{milgromroberts1986}
Milgrom, Paul, and John Roberts.
1986.
``Relying on the Information of Interested Parties.''
\emph{RAND Journal of Economics} 17(1): 18--32.

\bibitem[Milgrom and Weber(1985)]{milgromweber1985}
Milgrom, Paul R., and Robert J. Weber.
1985.
``Distributional Strategies for Games with Incomplete Information.''
\emph{Mathematics of Operations Research} 10(4): 619--632.

\bibitem[Mohseni et al.(2022)]{mohseni2022}
Mohseni, Aydin, Cailin O'Connor, and James Owen Weatherall.
2022.
``The Best Paper You'll Read Today: Media Biases and the Public
Understanding of Science.''
\emph{Philosophical Topics} 50(2): 127--153.

\bibitem[Mullainathan and Shleifer(2005)]{mullainathanshleifer2005}
Mullainathan, Sendhil, and Andrei Shleifer.
2005.
``The Market for News.''
\emph{American Economic Review} 95(4): 1031--1053.

\bibitem[Newey and Powell(1987)]{neweypowell1987}
Newey, Whitney K., and James L. Powell.
1987.
``Asymmetric Least Squares Estimation and Testing.''
\emph{Econometrica} 55(4): 819--847.

\bibitem[Ottaviani and Squintani(2006)]{ottavianisquintani2006}
Ottaviani, Marco, and Francesco Squintani.
2006.
``Naive Receiver and Communication Bias.''
\emph{International Journal of Game Theory} 35(1): 129--150.

\bibitem[Rappoport(2025)]{rappoport2025}
Rappoport, Daniel.
2025.
``Evidence and Skepticism in Verifiable Disclosure Games.''
\emph{Theoretical Economics} 20: 1213--1246.

\bibitem[Serra-Garcia(2026)]{serragarcia2026}
Serra-Garcia, Marta.
2026.
``The Attention-Information Trade-Off.''
\emph{American Economic Review} 116(5): 1579--1610.

\bibitem[Shin(1998)]{shin1998}
Shin, Hyun Song.
1998.
``Adversarial and Inquisitorial Procedures in Arbitration.''
\emph{RAND Journal of Economics} 29(2): 378--405.

\bibitem[Song(2025)]{song2025opinionleaders}
Song, Yangbo.
2025.
``Social Learning among Opinion Leaders.''
\emph{Games and Economic Behavior} 153: 451--473.

\bibitem[van der Vaart and Wellner(1996)]{vanderVaartWellner1996}
van der Vaart, Aad W., and Jon A. Wellner.
1996.
\emph{Weak Convergence and Empirical Processes: With Applications to
Statistics}. New York: Springer.

\end{thebibliography}
\end{document}